\newif\ifThesis
\newif\ifVerboseICIntro
\newif\ifTITJournal
\def\3To1BC{$3-$to$-1$}
\def\define{:{=}~}
\def\naturals{\mathbb{N}}
\def\reals{\mathbb{R}}
\def\Var{\mbox{Var}}
\def\UHK{\mathscr{U}\!\mathcal{S}\mathcal{B}-}
\def\underlinem{\underline{m}}
\def\underlineY{\underline{Y}}
\def\underliney{\underline{y}}
\def\underlined{\underline{d}}
\def\underlineR{\underline{R}}
\def\GroupSum{\oplus_{4}}
\def\underlineSetY{\underline{\OutputAlphabet}}
\def\MessageSetM{\mathcal{M}}
\def\underlineMessageSetM{\underline{\MessageSetM}}
\def\hatm{\hat{m}}
\def\cl{\mbox{cl}}
\def\cocl{\mbox{cocl}}
\def\SetOfDistributions{\mathbb{D}}
\def\TimeSharingRVSet{\mathcal{Q}}
\def\TimeSharingRV{Q}
\def\SemiPrivateRVSet{\mathcal{U}}
\def\underlineSemiPrivateRV{\underline{\SemiPrivateRV}}
\def\underlineSemiPrivateRVSet{\underline{\SemiPrivateRVSet}}
\def\InputRV{X}
\def\SemiPrivateRV{U}
\def\Expectation{\mathbb{E}}
\def\fieldpij{\mathcal{F}_{\Prime_{j}}}
\def\fieldpii{\mathcal{F}_{\Prime_{i}}}
\def\fieldpik{\mathcal{F}_{\Prime_{k}}}
\def\integers{\mathbb{Z}}
\def\fieldpi{\mathcal{F}_{\Prime}}
\def\underlineX{\underline{X}}
\def\underlinee{\underline{e}}
\def\Prime{\theta}
\def\threeIC{$3-$IC }
\def\2IC{$2-$IC}
\def\decoder{d}
\def\ulined{\underline{\decoder}}
\def\messsagem{m}
\def\ulinem{\underline{\messsagem}}
\def\underlineMessageSet{\underline{\MessageSet}}
\def\cost{\tau}
\def\ulinecost{\underline{\cost}}
\def\timeshare{q}
\def\nlettercost{\bar{\costfn}^{n}}
\def\three-1{3\mbox{-}1}
\def\\3to1IC{$3-$to$-1$ IC}
\def\threeto1{$3-$to$-1$}
\def\ulineMessageSet{\underline{\MessageSet}}
\def\encoder{e}
\def\ulinee{\underline{\encoder}}
\def\ulineInputAlphabet{\underline{\InputAlphabet}}
\def\ulineOutputAlphabet{\underline{\OutputAlphabet}}
\def\2IC{$2-$IC}
\def\OutputRV{Y}
\def\ulineOutputRV{\underline{\OutputRV}}
\def\ulineInputRV{\underline{\InputRV}}
\def\ulineoutput{\underline{y}}
\def\ulineinput{\underline{x}}
\def\OutRV{\OutputRV}
\def\InRV{\InputRV}
\def\output{y}
\def\inp{x}
\def\costfn{\kappa}
\def\ulinecostfn{\underline{\costfn}}
\def\MessageSet{\mathcal{M}}
\def\msg{m}
\newcommand{\BinaryField}{\mathbb{F}_{2}}
\newcommand{\InputAlphabet}{\mathcal{X}}
\newcommand{\OutputAlphabet}{\mathcal{Y}}
\newtheorem{thm}{\textbf{Theorem}}
\newtheorem{corollary}{\textbf{Corollary}}
\newtheorem{definition}{\textbf{Definition}}
\newtheorem{lemma}{\textbf{Lemma}}
\newtheorem{example}{\textbf{Example}}
\newtheorem{prop}{\textbf{Proposition}}
\newcommand{\comment}[1]{}
\newcommand{\msout}[1]{\text{\sout{\ensuremath{#1}}}}
\begin{document}

\title{An Achievable rate region for the $3-$user interference channel based on coset codes}

\author {Arun Padakandla, Aria G. Sahebi and
    S. Sandeep Pradhan,
~\IEEEmembership{Member,~IEEE}%
\thanks{The authors are with the Department of Electrical
and Computer Engineering, University of Michigan, Ann Arbor
48109-2122, USA.}
\thanks{This work was supported by NSF grant CCF-1116021.}}

\maketitle

\begin{abstract}
We consider the problem of communication over a three user discrete memoryless interference channel ($3-$IC). The current known coding techniques for communicating over an arbitrary $3-$IC are based on message splitting, superposition coding and binning using independent and identically distributed (iid) random codebooks. In this work, we propose a new ensemble of codes - partitioned coset codes (PCC) - that possess an appropriate mix of empirical and algebraic closure properties. We develop coding techniques that exploit algebraic closure property of PCC to enable interference alignment over general $3-$IC. We analyze the performance of the proposed coding technique to derive an achievable rate region for the general discrete $3-$IC. Additive and non-additive examples are identified for which the derived achievable rate region is the capacity, and moreover, strictly larger than current known largest achievable rate regions based on iid random codebooks.
\end{abstract}

\section{Introduction}
An interference channel (IC) is a model for communication between multiple transmitter receiver (Tx-Rx) pairs that share a common communication medium. Each transmitter wishes to communicate specific information to its corresponding receiver. Since the Tx-Rx pairs share a common communication medium, every user's signal causes interference to every other user. Communication over an IC is therefore facilitated by a coding technique that manages interference efficiently, in addition to combating channel noise.

Carleial proposed the technique of \textit{message splitting via
  superposition coding} \cite{197801TIT_Car} to manage
interference. Carleial's technique is based on each receiver decoding
a part of the interferer's signal and peeling it off to enhance its
ability to decode the desired signal. Han and Kobayashi
\cite{198101TIT_HanKob} enhanced Carleial's technique with joint
decoding and derived an achievable rate region for the IC with two
receivers ($2-$IC) that is the current known largest. This coding
technique and its corresponding achievable rate region will be
referred to as CHK-technique and CHK rate region, respectively. 

More recently, a newer technique of \textit{aligning interference} has
been proposed for managing interference over additive IC with three or
more receivers. The technique of aligning interference is based on
carefully choosing codebooks such that the interfering signals
\textit{align} and appear as if they were coming from a single
user. This technique was proposed for the MIMO X-channel by Maddah
Ali et. al. \cite{200808TIT_AliMotKha}, and for the multi-user IC by
Jafar and Cadambe \cite{200808TIT_CadJaf}. The technique of aligning
interference has subsequently been proposed in several settings
\cite{201009TIT_BreParTse}, \cite{201408TIT_HonCai},
\cite{201407TIT_KriJaf} \cite{6516907} using algebraic codes. 

Our current understanding of interference alignment techniques is
limited in several aspects. Firstly, these techniques are applicable
only to additive IC's. Secondly, from an information theoretic point
of view, the single-letter distributions induced by the codes are uniform, resulting
in achievability of rates corresponding to only uniform
distributions. Thirdly, the particular form of (i) encoding, decoding
(syndrome or lattice) and (ii) the information theoretic tools
constrains us to analyze performance only of additive IC's. 

It is natural to ask whether the technique of interference alignment
is applicable to only additive IC's? More generally, do codes endowed
with structure enable alignment and thereby facilitate communication
over IC's that are not additive? This article addresses these
questions. In particular, we develop a coding technique based on a new
ensemble of codes - partitioned coset codes (PCC) - possessing
algebraic and empirical properties to enable alignment over arbitrary
discrete memoryless IC's with three receivers ($3-$IC). We analyze the
performance of the proposed coding technique to derive a new
achievable rate region for the $3-$IC.

How does the proposed coding technique and the corresponding achievable rate region compare with the current known best? We employ the current known techniques of message \textbf{s}plitting, \textbf{s}uperposition coding and \textbf{b}inning based on \textbf{u}nstructured codes to derive a characterization of $\UHK$region, the current known largest achievable rate region for the general $3-$IC. An important contribution of this article is the identification of additive as well as non-additive instances of $3-$IC for which the proposed coding technique based on PCC yields a strictly larger achievable rate region than the $\UHK$region. We emphasize that our findings for the non-additive instance validates the utility of the theory developed in this article.

The new elements of our work are the following. Firstly, we employ joint typicality encoding and decoding of coset codes to propose alignment techniques for arbitrary $3-$IC's. Secondly, we employ the technique of binning of coset codes to induce arbitrary distributions over corresponding alphabet sets and thereby prove achievability of rates corresponding to arbitrary distributions. Thirdly, we develop coding techniques over looser algebraic objects such as Abelian groups. These elements enable us to derive a new achievable rate region for the general $3-$IC in terms of single-letter information theoretic quantities.

The technique of employing structured codes to obtain larger
achievable rate regions was initiated in the context of a distributed
source coding (DSC) problem by K\"orner and Marton
\cite{197903TIT_KorMar}. Recently, this approach has been employed for
several problem settings.  Philosof and Zamir \cite{200906TIT_PhiZam}
employ coset codes for efficient communication over doubly dirty MACs
and Gaussian version of this problem was studied using lattice codes
in \cite{201108TIT_PhiZamEreKhi}. \cite{200812IEEEI_KocKhiEreZam} and
\cite{200911TIT_KocZam} propose lattice-based schemes for
communicating over Gaussian multi-terminal networks. An achievable
rate region based on Abelian group codes was provided for 
the general DSC problem in \cite{201103TIT_KriPra}.
Linear codes have been employed for
efficient computation over multiple access channels (MAC) in
\cite{200710TIT_NazGas}. In the context of
the interference channel, Maddah-Ali
et. al. \cite{200808TIT_AliMotKha, 200607ISIT_AliMotKha}, and Cadambe
and Jafar \cite{200808TIT_CadJaf} propose the technique of
\textit{interference alignment}, wherein interference is restricted to
a subspace and thereby harness the available of degrees of freedom in
an IC with several Tx-Rx pairs more efficiently. Bresler, Parekh and
Tse \cite{201009TIT_BreParTse} employ lattice codes to align
interference and thereby characterize the capacity of Gaussian ICs
within a constant number of bits. The use of lattice codes has also
been proposed in \cite{200809Allerton_SriJafVisJafSha,
  201109TITarXiv_JafVis,6283726,6516907} for efficient interference management over
Gaussian ICs with three or more Tx-Rx pairs. \cite{201105TIT_BanGam}
considers saturation technique for general ICs. 

This article is organized as follows. In section
\ref{SubSec:NaturalExtensionOfHanKobayashiTo3to1IC}, we characterize
the current known largest achievable rate region based on unstructured
codes for the general
$3-$IC and prove its strict sub-optimality in section
\ref{Sec:StrictSub-optimalityOfHanKobayashiRateRegionFor3-To-1IC}. We
provide new achievable rate regions based on PCC built over fields and groups in
sections \ref{Sec:AchievableRateRegionsFor3To1ICUsingNestedCosetCodes}
and \ref{Sec:AchievableRateRegionsFor3To1ICUsingAbelianGroups},
respectively. We begin with preliminaries in section
\ref{Sec:Preliminaries}. 

\section{Preliminaries: notation and definitions}
\label{Sec:Preliminaries}

\subsection{Notation}
\label{SubSec:Notation}
We let $\naturals, \reals$ denote the set of natural numbers and real numbers, respectively. Calligraphic letters such as $\InputAlphabet,\OutputAlphabet$ exclusively denote finite sets. For $K \in \naturals$, we let $[K]\define \left\{ 1,2\cdots,K \right\}$. In this article, we will need to define multiple objects, mostly triples, of the same type. In order to reduce clutter, we use an \underline{underline} to
denote aggregates of objects of similar type. For example, (i) if
$\OutputAlphabet_{1},\OutputAlphabet_{2}, \OutputAlphabet_{3}$ denote (finite) sets, we
let $\underlineSetY$ either denote the Cartesian product
$\OutputAlphabet_{1} \times \OutputAlphabet_{2} \times \OutputAlphabet_{3}$ or
abbreviate the collection $(\OutputAlphabet_{1},\OutputAlphabet_{2},\OutputAlphabet_{3})$ of sets, the particular reference being clear from context, (ii) if $y_{j} \in \OutputAlphabet_{j}:j\in [3]$, we let $\ulineoutput \in
\underlineSetY$ abbreviate $(y_{1},y_{2},y_{3}) \in \ulineOutputAlphabet$, (iii) if
$d_{j}:\OutputAlphabet_{j}^{n} \rightarrow \MessageSetM_{j}:j\in [3]$ denote (decoding)
maps, then we let $\underlined(\underliney^{n})$ denote $(d_{1}(y_{1}^{n}),
d_{2}(y_{2}^{n}), d_{3}(y_{3}^{n}))$. If $j \in \left\{1,2\right\}$, then $\msout{j} \in \left\{1,2\right\} \setminus \left\{j \right\}$ is the other index. Unless otherwise mentioned, we let $\Prime$ denote an integral power of a prime. Throughout, $\fieldpi$ will denote the finite field of cardinality $\Prime$. $\oplus$ denotes the addition operation in the corresponding finite field. We employ the notion of typicality as in \cite{201301arXivMACDSTx_PadPra}. In particular, if $U,V$ are random variables distributed with respect to $p_{UV}$, then $T_{\eta}(U,V) \in \mathcal{U}^{n}\times \mathcal{V}^{n}$ denotes the typical set with respect to $p_{UV}$ and deviation parameter $\eta$. For any $v^{n} \in \mathcal{V}^{n}$, $T_{\eta}(U|v^{n}) = \left\{ u^{n}:(u^{n},v^{n}) \in  T_{\eta}(U,V)\right\}$ denotes the conditional typical set. $*$ denotes binary convolution, i.e., $\alpha * \beta = \alpha (1-\beta) + (1-\alpha)\beta$. $|a|^{+}$ is defined as $\max\{ 0,a\}$.
\subsection{Definitions: $3-$IC, $3-$to$-1$IC, achievability, capacity region}
\label{Sec:3UserICDefinitions}
A \threeIC consists of three finite input alphabet
sets $\InputAlphabet_{1},\InputAlphabet_{2},\InputAlphabet_{3}$ and three finite output alphabet sets $\OutputAlphabet_{1}, \OutputAlphabet_{2}, \OutputAlphabet_{3}$. The discrete time channel is (i) time invariant, (ii) memoryless, and (iii) used without feedback. Let
$W_{\ulineOutputRV|\ulineInputRV}(\ulineoutput|\ulineinput)=W_{\OutRV_{1}\OutRV_{2}\OutRV_{3}|\InRV_{1}\InRV_{2}\InRV_{3}}(\output_{1},\output_{2},\output_{3}|\inp_{1},\inp_{2},\inp_{2})$ denote
probability of observing symbol $\output_{j} \in \OutputAlphabet_{j}$ at output $j$, given $\inp_{j} \in \InputAlphabet_{j}$ is input by encoder $j$. Inputs are constrained with respect to bounded cost functions $\costfn_{j} : \InputAlphabet_{j} \rightarrow [0, \infty):j \in [3]$. The cost function
is assumed to be additive, i.e., cost of transmitting vector $\inp_{j}^{n} \in \InputAlphabet_{j}^{n}$
is $\nlettercost_{j}(\inp_{j}^{n}) \define \frac{1}{n}\sum_{t=1}^{n}\costfn_{j}(\inp_{jt})$. We refer to this \threeIC as
$(\ulineInputAlphabet,\ulineOutputAlphabet,W_{\ulineOutputRV|\ulineInputRV},\ulinecostfn)$.

\begin{definition}
 \label{Defn:3ICCode}
 A \threeIC code $(n,\underline{\mathcal{M}},\ulinee,\ulined)$ consist of (i) index
sets $\MessageSet_{1}, \MessageSet_{2},\MessageSet_{3}$ of messages, (ii) encoder maps
$\encoder_{j}:\MessageSet_{j} \rightarrow
\InputAlphabet_{j}^{n}:j \in [3]$, and (iii) decoder maps $\decoder_{j} : \OutputAlphabet_{j}^{n}
\rightarrow \MessageSet_{j}:j \in [3]$.
\end{definition}

\begin{definition}
 \label{Defn:ICCodeErrorProbability}
 The error probability of a \threeIC code
$(n,\underline{\mathcal{M}},\ulinee,\ulined)$ conditioned on message triple
$(m_{1},m_{2},m_{3}) \in \ulineMessageSet$ is
\begin{equation}
 \label{Eqn:ErrorProbabilityOf3ICCode}
 \xi(\ulinee,\ulined|\ulinem) \define 1-\sum_{\ulineoutput^{n} : \ulined
(\ulineoutput^{n}) = \ulinem}
W^{n}_{\ulineOutputRV|\ulineInputRV}(\ulineoutput^{n}|\encoder_{1}(\msg_{1}),\encoder_{2}
(\msg_{2}),\encoder_{3}(\msg_{3})).\nonumber
\end{equation}
The average error probability of
a \threeIC code $(n,\underline{\mathcal{M}},\underlinee,\underlined)$ is
$\bar{\xi}(\ulinee,\underlined)\define \sum_{\underlinem
 \in \underlineMessageSet}
\frac{1}{|\underline{\mathcal{M}}|}
\xi(\ulinee,\underlined|\underlinem)$. Average cost per symbol of transmitting message $\underlinem \in
\underlineMessageSetM$ is $\ulinecost(\ulinee|\underlinem) \define
\left(\nlettercost_{j}(\encoder_{j}(\msg_{j})):j\in [3]\right)$ and
average cost per symbol of \threeIC code $(n,\underline{\mathcal{M}},\ulinee,\underlined)$ is $\ulinecost(\ulinee)
\define \frac{1}{|\underline{\mathcal{M}}|}\sum_{\underlinem
 \in \underlineMessageSetM}\ulinecost(\ulinee|\underlinem)$.
\end{definition}

\begin{definition}
\label{Defn:3ICAchievabilityAndCapacity}
A rate-cost sextuple
$(R_{1},R_{2},R_{3},\cost_{1},\cost_{2},\cost_{3})\in [0,\infty)^{6}$
is said to be 
achievable if for every $\eta > 0$, there exists $N(\eta)\in \naturals$ such that for all
$n > N(\eta)$, there exists a \threeIC code $(n, \underline{\mathcal{M}}^{(n)},\ulinee^{(n)},\underlined^{(n)})$ such that (i)
$\frac{\log|\MessageSet_{j}^{(n)}|}{n} \geq R_{j}-\eta:j \in [3]$, (ii)
$\bar{\xi}(\ulinee^{(n)},\underlined^{(n)})
\leq \eta$, and (iii) average cost $\ulinecost(e^{(n)})_{j} \leq \cost_{j}+\eta$. The capacity region is
$\mathbb{C}(\ulinecost) \define {\left\{ \underlineR \in \reals^{3}:
(\underlineR,\ulinecost)\mbox{ is achievable} \right\}}$.
\end{definition}

We now consider \\3to1IC, a class of $3-$IC's that was studied in
\cite{200912arXiv_CadJaf}. \\3to1IC enables us to prove strict
sub-optimality of coding techniques based on unstructured codes. A
\\3to1IC is a \threeIC wherein two of the users enjoy interference
free point-to-point links. Formally, a \threeIC
$(\ulineInputAlphabet,\ulineOutputAlphabet,W_{\ulineOutputRV|\ulineInputRV},\ulinecost)$
is a \\3to1IC if (i)
$W_{Y_{2}|\ulineInputAlphabet}(y_{2}|\ulineinput)\define
\sum_{(y_{1},y_{3}) \in \OutputAlphabet_{1} \times
  \OutputAlphabet_{3}}W_{\ulineOutputRV|\ulineInputRV}(\ulineoutput|\ulineinput)$
is independent of $(x_{1},x_{3}) \in \InputAlphabet_{1} \times
\InputAlphabet_{3}$, and (ii)
$W_{Y_{3}|\ulineInputAlphabet}(y_{3}|\ulineinput)\define
\sum_{(y_{1},y_{2}) \in \OutputAlphabet_{1} \times
  \OutputAlphabet_{2}}W_{\ulineOutputRV|\ulineInputRV}(\ulineoutput|\ulineinput)$
is independent of $(x_{1},x_{2}) \in \InputAlphabet_{1} \times
\InputAlphabet_{2}$ for every collection of input and output symbols $(\ulineinput,\ulineoutput) \in \ulineInputAlphabet \times \ulineOutputAlphabet$. For a \\3to1IC, the channel transition probabilities factorize as \[W_{\ulineOutputRV|\ulineInputRV}(\ulineoutput|\ulineinput)=W_{\OutputRV_{1}|\ulineInputRV}(y_{1}|\ulineinput)W_{\OutputRV_{2}|\InputRV_{2}}(y_{2}|x_{2})W_{\OutputRV_{3}|\InputRV_{3}}(y_{3}|x_{3})\] for some conditional probability mass functions (pmfs) $W_{\OutputRV_{1}|\ulineInputRV}$, $W_{\OutputRV_{2}|\InputRV_{2}}$ and $W_{\OutputRV_{3}|\InputRV_{3}}$. We also note that $X_{1}X_{3}-X_{2}-Y_{2}$ and $X_{1}X_{2}-X_{3}-Y_{3}$ are Markov chains for any distribution $p_{X_{1}}p_{X_{2}}p_{X_{3}}W_{\ulineOutputRV|\ulineInputRV}$.\footnote{Any interference channel wherein only one of the users is subjected to interference is a $3-$to$-1$ IC by a suitable permutation of the user indices.}

In the following section, we describe the coding technique of message splitting and superposition using unstructured codes and employ this to derive the $\UHK$region for $3-$to$-1$ IC.
\ifVerboseICIntro{Our first task is to derive the $\UHK$region which is an achievable rate region for a $3-$to$-1$ IC using all current known coding techniques. The reader will recognize that the structure of a $3-$to$-1$ IC lends the technique of precoding via binning superfluous. 

The currently known largest achievable rate region for \threeIC is obtained by a natural extension of that proved achievable by Han and Kobayashi for \2IC \cite{198101TIT_HanKob}. We begin with a brief review of Han-Kobayashi coding technique for the \2IC in section \ref{SubSec:Han-KobayashiTechniqueFor2IC} and, as a first step, state its natural extension to a \\3to1IC in section \ref{SubSec:NaturalExtensionOfHanKobayashiTo3to1IC}.}\fi
\section{Message splitting and superposition using unstructured codes}
\label{SubSec:NaturalExtensionOfHanKobayashiTo3to1IC}
Before we consider the case of a $3-$to$-1$ IC, it is appropriate to state how does one optimally stitch together current known coding techniques - message splitting, \textbf{s}uperposition coding and precoding via \textbf{b}inning - for communicating over $3-$IC? Each encoder must make available parts of its signal to each user it interferes with. Specifically, encoder $j$ splits its signal into four parts - one public, two semi-private and one private. The corresponding decoder $j$ decodes all of these parts. The other two decoders, say $i$ and $k$, for which encoder $j$'s signal is interference, decode the public part of user $j$'s signal. The public part is decoded by all receivers, and is therefore encoded using a cloud center codebook at the base layer. Moreover, each semi-private part of encoder $j$'s signal is decoded by exactly one among the decoders $i$ and $k$. The semi-private parts are encoded at the intermediate level using one codebook each. These codebooks, referred to as semi-satellite codebooks, are conditionally coded over the cloud center codebook. The semi-satellite codebooks are precoded for each other via binning. The private part is encoded at the top layer using a satellite codebook. The satellite codebook is conditionally coded over the cloud center and semi-satellite codebooks. Each decoder decodes the eight parts using a joint typicality decoder. Finally, the encoders and decoders share a time sharing sequence to enable them to synchronize the choice of codebooks at each symbol interval. We henceforth refer to the above coding technique as the $\UHK$technique.

One can characterize $\UHK$region - an achievable rate region corresponding to the above coding technique - via random coding. Indeed, such a characterization is quite involved. Since our objective is to illustrate sub-optimality of $\UHK$technique, it suffices to obtain a characterization of $\UHK$region for $3-$to$-1$ ICs.

For the case of \\3to1IC, user $1$'s signal does not cause interference to users $2$ and $3$, and therefore will not need it to split its message. This can be proved using the Markov chains $X_{1}X_{3}-X_{2}-Y_{2}$ and $X_{1}X_{2}-X_{3}-Y_{3}$. Moreover, signal of user $2$ does not interfere with user $3$'s reception and vice versa. Therefore, users $2$ and $3$ will only need to split their messages into two parts - a private part and a semi-private part that is decoded by user $1$. Using this approach we obtain the following achievable rate region.

\begin{definition}
 \label{Defn:HanKobayashiTestChannelsFor3To1IC}
Let $\SetOfDistributions_{u}(\ulinecost)$ denote the collection of pmfs
$p_{\TimeSharingRV\SemiPrivateRV_{2}\SemiPrivateRV_{3}\ulineInputRV\ulineOutputRV}$
defined on $\TimeSharingRVSet \times \SemiPrivateRVSet_{2} \times \SemiPrivateRVSet_{3}
\times \ulineInputAlphabet \times \ulineOutputAlphabet$, where $\TimeSharingRVSet,
\SemiPrivateRVSet_{2},\SemiPrivateRVSet_{3}$ are finite sets, such that (i)
$p_{\ulineOutputRV|\ulineInputRV\SemiPrivateRV_{2}\SemiPrivateRV_{3}\TimeSharingRV}=W_{
\ulineOutputRV|\ulineInputRV}$, (ii) the triplet $\InputRV_{1},
(\SemiPrivateRV_{2},\InputRV_{2})$ and $(\SemiPrivateRV_{3},\InputRV_{3})$ are
conditionally mutually independent given $\TimeSharingRV$, (iii) $\mathbb{E}\left\{
\costfn_{j}(X_{j}) \right\} \leq \cost_{j}:j \in [3]$. For
$p_{\TimeSharingRV\SemiPrivateRV_{2}\SemiPrivateRV_{3}\ulineInputRV\ulineOutputRV} \in
\SetOfDistributions_{u}(\ulinecost)$, let
$\alpha_{u}(p_{\TimeSharingRV\SemiPrivateRV_{2}\SemiPrivateRV_{3}
\ulineInputRV\ulineOutputRV})$ 
denote the set of rate triples $(R_{1},R_{2},R_{3}) \in [0,\infty)^{3}$ that satisfy
\begin{eqnarray}
 \label{Eqn:HanKobayashiRateRegion2ICForParticularTestChannel}
\label{Eqn:3To1ICUpperBoundOnR1}&&0 \leq R_{1} < I(X_{1};Y_{1}|\TimeSharingRV,
\SemiPrivateRV_{2},\SemiPrivateRV_{3}), ~~~
0 \leq R_{j} < I(\SemiPrivateRV_{j}X_{j};Y_{j}|\TimeSharingRV):j =2,3 \\
&&R_{1}+R_{2} < I(\SemiPrivateRV_{2}X_{1};Y_{1}|\TimeSharingRV
\SemiPrivateRV_{3})+I(X_{2};Y_{2}|\TimeSharingRV \SemiPrivateRV_{2}),~~~
R_{1}+R_{3} < I(\SemiPrivateRV_{3}X_{1};Y_{1}|\TimeSharingRV
\SemiPrivateRV_{2})+I(X_{3};Y_{3}|\TimeSharingRV \SemiPrivateRV_{3})\nonumber\\
\label{Eqn:SumRateBoundOn3To1IC}&&R_{1}+R_{2}+R_{3} <
I(\SemiPrivateRV_{2}\SemiPrivateRV_{3}X_{1};Y_{1}|\TimeSharingRV)+I(X_{2};Y_{2}
|\TimeSharingRV \SemiPrivateRV_{2})+I(X_{3};Y_{3}|\TimeSharingRV \SemiPrivateRV_{3}),
\end{eqnarray}
and
\begin{equation}
 \label{Eqn:HanKobayashiRateRegionFor3to1IC}
\alpha_{u}(\ulinecost) = \cl \left(
\underset{\substack{p_{\TimeSharingRV\SemiPrivateRV_{2}\SemiPrivateRV_{3}
\ulineInputRV\ulineOutputRV} ~\in ~\SetOfDistributions_{u}(\ulinecost)}
}{\bigcup}\alpha_{u}(p_{\TimeSharingRV\SemiPrivateRV_{2}\SemiPrivateRV_{3}
\ulineInputRV\ulineOutputRV}) \right).\nonumber
\end{equation}
\end{definition}
\begin{thm}
\label{Thm:HanKobayashiRateRegionFor3To1IC}
For \\3to1IC $(\ulineInputAlphabet,\ulineOutputAlphabet,W_{\ulineOutputRV|\ulineInputRV},\ulinecostfn)$,
$\alpha_{u}(\ulinecost)$ is achievable, i.e., $\alpha_{u}(\ulinecost) \subseteq \mathbb{C}(\ulinecost)$.
\end{thm}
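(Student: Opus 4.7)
The plan is to prove Theorem \ref{Thm:HanKobayashiRateRegionFor3To1IC} via a standard random coding argument that instantiates the \UHK technique specialized to a \\3to1IC, followed by a Fourier-Motzkin elimination of the split-rate auxiliaries. Fix any pmf $p_{Q U_2 U_3 \ulineInputRV \ulineOutputRV} \in \SetOfDistributions_{u}(\ulinecost)$. Because user $1$'s signal does not interfere with users $2,3$ (the Markov chains $X_1 X_3 - X_2 - Y_2$ and $X_1 X_2 - X_3 - Y_3$ hold), it is without loss of optimality to assign user $1$ a single (unsplit) codebook and to split only messages $m_2, m_3$ into a semi-private part of rate $L_j$ and a private part of rate $R_j - L_j$, for $j=2,3$. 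Binning between semi-satellite codebooks is not needed either (again by the Markov structure), which further streamlines the construction.

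First, I would generate the codebooks in the usual layered fashion: a time-sharing sequence $q^n$ iid $\sim p_Q$; a cloud codebook $\{x_1^n(m_1): m_1 \in \MessageSet_1\}$ of $2^{nR_1}$ sequences drawn iid $\sim \prod_t p_{X_1|Q}(\cdot|q_t)$; for each $j\in\{2,3\}$, a semi-satellite codebook $\{u_j^n(l_j): l_j \in [2^{nL_j}]\}$ drawn iid conditional on $q^n$, and, superposed on each $u_j^n(l_j)$, a satellite codebook $\{x_j^n(l_j,t_j): t_j \in [2^{n(R_j-L_j)}]\}$ drawn iid $\sim \prod_t p_{X_j|U_j Q}$. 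Encoder $j\in\{2,3\}$ parses its message as $m_j = (l_j,t_j)$ and transmits $x_j^n(l_j,t_j)$; encoder $1$ transmits $x_1^n(m_1)$. Decoder $1$ looks for the unique triple $(\hatm_1,\hat{l}_2,\hat{l}_3)$ such that $(q^n, x_1^n(\hatm_1), u_2^n(\hat{l}_2), u_3^n(\hat{l}_3), y_1^n) \in T_\eta$, treating the private parts of $X_2,X_3$ as noise through the conditional marginal. Decoder $j\in\{2,3\}$ looks for the unique $(\hat{l}_j,\hat{t}_j)$ such that $(q^n, u_j^n(\hat{l}_j), x_j^n(\hat{l}_j,\hat{t}_j), y_j^n) \in T_\eta$.

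Next I would bound the average probability of error by the union of the usual error events: (a) the transmitted codewords are atypical, whose probability vanishes by the weak law of large numbers and conditional typicality lemmas as in the typicality framework of \cite{201301arXivMACDSTx_PadPra}; and (b) at each decoder, some incorrect index tuple is jointly typical with the channel output. For decoder $1$, partitioning event (b) according to which subset of $(m_1,l_2,l_3)$ is decoded incorrectly yields the standard MAC-type constraints with $(X_1,U_2,U_3)$ as inputs and $Y_1$ as output, conditioned on $Q$; applying the packing lemma to each subset gives seven inequalities (one per nonempty subset). For decoder $j\in\{2,3\}$, the analogous analysis with $(U_j,X_j)$ treated as a MAC-like pair to $Y_j$, using the superposition structure, gives $L_j < I(U_j;Y_j|Q)$, $R_j - L_j < I(X_j;Y_j|Q U_j)$, and $R_j < I(U_j X_j;Y_j|Q)$ (the last being redundant with the sum). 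Driving $\eta \downarrow 0$ and invoking a standard expurgation step converts the average error bound to maximum error and per-message cost control.

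Finally, the rate region in the theorem is obtained by Fourier-Motzkin elimination of the auxiliary rates $L_2,L_3 \in [0,R_2]\times[0,R_3]$ from the conjunction of the decoder-$1$ MAC bounds (with $R_1$ plus any subset of $\{L_2,L_3\}$) and the decoder-$j$ bounds above. I expect the main piece of technical bookkeeping, rather than a conceptual obstacle, to be this elimination: carefully tracking which combinations of the seven decoder-$1$ inequalities and the decoder-$2,3$ inequalities survive produces exactly the bounds (\ref{Eqn:3To1ICUpperBoundOnR1})--(\ref{Eqn:SumRateBoundOn3To1IC}), while the remaining combinations are either redundant or imply only $R_j \geq 0$. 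Because the construction and analysis are textbook applications of superposition coding with joint typicality decoding, the argument above establishes that every rate triple in $\alpha_u(p_{Q U_2 U_3 \ulineInputRV \ulineOutputRV})$ is achievable; closing under union over $\SetOfDistributions_{u}(\ulinecost)$ and taking the closure yields $\alpha_u(\ulinecost) \subseteq \mathbb{C}(\ulinecost)$, as claimed.
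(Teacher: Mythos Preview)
Your proposal is correct and follows the standard superposition-coding/joint-typicality route that the paper implicitly relies on; note that the paper itself states Theorem~\ref{Thm:HanKobayashiRateRegionFor3To1IC} without proof, treating it as the routine specialization of the Han--Kobayashi technique to the \\3to1IC. One small inaccuracy: at decoder $j\in\{2,3\}$ the superposition analysis yields only the two constraints $R_j - L_j < I(X_j;Y_j|U_jQ)$ and $R_j < I(U_jX_j;Y_j|Q)$; the bound $L_j < I(U_j;Y_j|Q)$ does not arise from any error event there (though including it is harmless, since after Fourier--Motzkin it only reproduces $R_j < I(U_jX_j;Y_j|Q)$).
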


\section{Strict sub-optimality of $\UHK$region for $3-$to$-1$ IC}
\label{Sec:StrictSub-optimalityOfHanKobayashiRateRegionFor3-To-1IC}
This section contains our first main finding of this {\ifThesis chapter\fi}{\ifTITJournal article\fi} - strict sub-optimality of $\UHK$technique. In particular, we identify a binary additive \\3to1IC for which we prove strict sub-optimality of $\UHK$technique. We begin with the description of the \\3to1IC. It maybe noted that a similar example was studied in \cite{201009TIT_BreParTse}, wherein CHK technique restricted to Gaussian test channels were shown to be strictly sub-optimal. While our finding is in a similar spirit, our proof takes into account all possible test channels under the CHK technique. In \cite{197903TIT_KorMar}, it was proven that linear codes are strictly more efficient than unstructured codes for the DSC problem.

\begin{example}
\label{Ex:A3To1ICForWhichLinearCodesOutperformUnstructuredCodes}
Consider a binary additive \\3to1IC illustrated in figure \ref{Fig:A3To1ICForWhichLinearCodesOutperformUnstructuredCodes} with $\InputAlphabet_{j}=\OutputAlphabet_{j}=\left\{ 0,1 \right\}:j \in [3]$ with channel transition probabilities $W_{\ulineOutputRV|\ulineInputRV}(\ulineoutput|\ulineinput)=BSC_{\delta_{1}}(y_{1}|x_{1}\oplus x_{2} \oplus x_{3})BSC_{\delta_{2}}(y_{2}|x_{2})BSC_{\delta_{3}}(y_{3}|x_{3})$, where $BSC_{\eta}(0|1)=BSC_{\eta}(1|0)=1-BSC_{\eta}(0|0)=1-BSC_{\eta}(1|1)=\eta$ denotes the transition probabilities of a BSC with cross over probability $\eta \in [0,\frac{1}{2}]$. Inputs of users $2$ and $3$ are not constrained, i.e., $\costfn_{j}(0)=\costfn_{j}(1)=0$ for $j=2,3$. User $1$'s input is constrained with respect to a Hamming cost function, i.e., $\costfn_{1}(x)=x$ for $x \in \left\{ 0,1 \right\}$ to an average cost of $\tau \in (0,\frac{1}{2})$ per symbol. Let $\mathbb{C}(\tau)$ denote the capacity region of this $3-$to$-1$ IC.
\end{example}
\begin{figure}
\centering
\includegraphics[height=1.8in,width=1.9in]{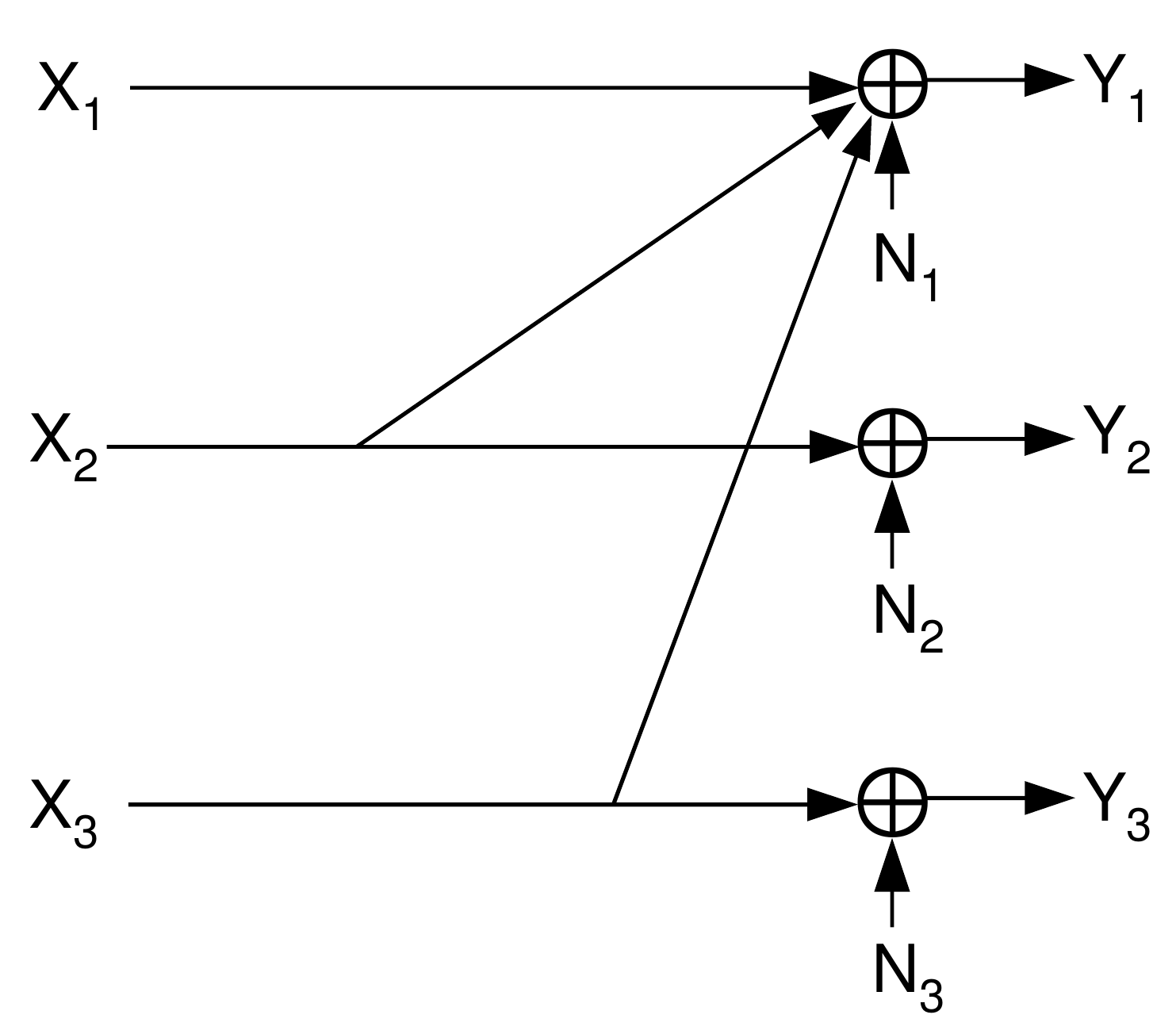}
\caption{A binary additive $3-$to$-1$ IC described in example \ref{Ex:A3To1ICForWhichLinearCodesOutperformUnstructuredCodes}.}
\label{Fig:A3To1ICForWhichLinearCodesOutperformUnstructuredCodes}
\end{figure}

Clearly, $\mathbb{C}(\tau) \subseteq \beta(\tau,\frac{1}{2},\frac{1}{2},\underline{\delta})$, where
\begin{eqnarray}
 \label{Eqn:StrictSubOptimalityOfUSBTechnique3To1ICOuterBound}
 \beta (\underline{\tau},\underline{\delta} ) \define \left\{ (R_{1},R_{2},R_{3}) \in [0,\infty)^{3} : R_{j} \leq h_{b}(\delta_{j}*\tau_{j})-h_{b}(\delta_{j}):j=1,2,3  \right\}.
\end{eqnarray}Let us focus on achievability. We begin with a few simple observations for the above channel. Let us begin with the assumption $\delta\define \delta_{2}=\delta_{3}$. As illustrated in figure \ref{Fig:A3To1ICForWhichLinearCodesOutperformUnstructuredCodes}, users $2$ and $3$ enjoy interference free unconstrained binary symmetric channels (BSC) with cross over probability $\delta=\delta_{2}=\delta_{3}$. They can therefore communicate at their respective capacities $1-h_{b}(\delta)$. Constrained to average Hamming weight of $\tau$, user $1$ cannot hope to achieve a rate larger than $h_{b}(\tau * \delta_{1})-h_{b}(\delta_{1})$.\footnote{If receiver $1$ is provided with the codewords transmitted by users $2$ and $3$, the effective channel it sees is a BSC with cross over probability $\delta_{1}$.} What is the maximum rate achievable by user $1$ while users $2$ and $3$ communicate at their respective capacities?

User $1$ cannot achieve rate $h_{b}(\tau * \delta_{1})-h_{b}(\delta_{1})$
\textit{and} decode the pair of codewords transmitted by user $2$ and $3$ if $h_{b}(\tau *
\delta_{1})-h_{b}(\delta_{1})+2(1-h_{b}(\delta)) > 1-h_{b}(\delta_{1})$ or equivalently
$1+h_{b}(\tau * \delta_{1}) > 2h_{b}(\delta)$. Under this condition, $\UHK$technique forces decoder $1$ to be contented to decoding univariate components -
represented through semi-private random variables $\SemiPrivateRV_{2},\SemiPrivateRV_{3}$
- of user $2$ and $3$'s signals. We state that as long as the univariate components
leave residual uncertainty in the interfering signal, i.e., $H(X_{2}\oplus
X_{3}|U_{2},U_{3}) > 0$, the rate achievable by user 1 is strictly smaller than its
maximum $h_{b}(\tau * \delta_{1})-h_{b}(\delta_{1})$.\footnote{The reader will be
able to reason this by relating this situation to a point-to-point (PTP) channel with partial
state observed at the receiver.} 

We now describe a simple linear coding technique, based on the works of \cite{200912arXiv_CadJaf, 201408TIT_HonCai, 200809Allerton_SriJafVisJafSha}, that enables user $1$ to achieve its maximum rate $h_{b}(\tau * \delta_{1})-h_{b}(\delta_{1})$ even under the condition $1+h_{b}(\tau * \delta_{1}) > 2h_{b}(\delta)$! Let us assume $\tau*\delta_{1} \leq \delta$. We choose a linear code, or a coset thereof, that achieves the capacity of a BSC with cross over probability $\delta$. We equip users $2$ and $3$ with the same code, thereby constraining the sum of their transmitted codewords to this linear code, or a coset thereof, of rate $1-h_{b}(\delta)$. Since $\tau*\delta_{1} \leq \delta$, decoder $1$ can first decode the interfering signal - sum of codewords transmitted by encoders $2$ and $3$ - treating the rest as noise, peel it off, and then decode the desired signal. User $1$ can therefore achieve its maximum rate $h_{b}(\tau * \delta_{1})-h_{b}(\delta_{1})$ if $\tau*\delta_{1} \leq \delta$. \ifThesis

Are the two conditions $1+h_{b}(\tau * \delta_{1}) > 2h_{b}(\delta)$ and $\tau*\delta_{1} \leq \delta$ mutually exclusive? The two conditions are satisfied if $h_{b}(\tau * \delta_{1}) \leq h_{b}(\delta) < \frac{1+h_{b}(\tau * \delta_{1})}{2}$. If $\tau * \delta_{1}<\frac{1}{2}$, then $h_{b}(\tau * \delta_{1}) < \frac{1+h_{b}(\tau * \delta_{1})}{2}<1$ and $\delta$ can be chosen appropriately to ensure the two conditions are satisfied. For example, the choice $\delta_{1} = 0.01$, $\tau=\frac{1}{8}$ and $\delta \in (0.1325,0.21)$ proves these two conditions are indeed \textit{not} mutually exclusive.

Let us now consider the general case with respect to $\delta_{2}, \delta_{3}$ and assume without loss of generality $\delta_{2} \leq \delta_{3}$. The linear coding scheme generalizes naturally. We employ a capacity achieving linear code, or a coset thereof, that achieves capacity of BSC of user $2$. This code, or a coset thereof, is sub-sampled uniformly at random to build a capacity achieving code for BSC of user $3$. The sum of user $2$ and user $3$'s signals is contained within a coset of user $2$'s code and can therefore be decoded by user $1$ as long as $\tau *\delta_{1} \leq \delta_{2}$. % If $\delta_{3} \leq \delta_{2}$, the roles of user $2$ and $3$'s codes are reversed, i.e., user $2$'s code (or a coset thereof) is obtained by sub-sampling a linear code, or a coset thereof, that achieves capacity on BSC of user $3$. The sum of user $2$ and $3$'s signal is again contained to a coset of rate $1-h_{b}(\delta_{3})$ and can be decoded by user $1$ as long as $1-h_{b}(\delta_{3}) \leq 1-h_{b}(\tau * \delta_{1})$, or equivalently, $\tau * \delta_{1} \leq \delta_{3}$.
The above arguments are summarized in the following lemma.\fi

In proposition \ref{Prop:StrictSub-OptimalityOfHanKobayashi}, we prove that if $1+h_{b}(\delta_{1} * \tau) > h_{b}(\delta_{2})+h_{b}(\delta_{3})$, then $(h_{b}(\tau*\delta_{1})-h_{b}(\delta_{1}),1-h_{b}(\delta_{2}),1-h_{b}(\delta_{3})) \notin \alpha_{u}((\tau,0,0))$. We therefore conclude that if $\tau,\delta_{1},\delta_{2},\delta_{3}$ are such that $1+h_{b}(\delta_{1} * \tau) > h_{b}(\delta_{2})+h_{b}(\delta_{3})$ and $\min\left\{ \delta_{2},\delta_{3}\right\}\geq \delta_{1} * \tau$, then $\UHK$technique is strictly suboptimal for the \\3to1IC presented in example \ref{Ex:A3To1ICForWhichLinearCodesOutperformUnstructuredCodes}.

\begin{prop}
 \label{Prop:StrictSub-OptimalityOfHanKobayashi}
For the \\3to1IC of example \ref{Ex:A3To1ICForWhichLinearCodesOutperformUnstructuredCodes}, if $\tau*\delta_{1} \leq \min\left\{\delta_{2},\delta_{3}\right\}$, then $\mathbb{C}(\tau)= \beta(\tau,\frac{1}{2},\frac{1}{2},\underline{\delta})$, where $\beta(\underline{\tau},\underline{\delta})$ is given by (\ref{Eqn:StrictSubOptimalityOfUSBTechnique3To1ICOuterBound}). If $h_{b}(\delta_{2})+h_{b}(\delta_{3}) < 1+h_{b}(\tau * \delta_{1})$, then $(h_{b}(\tau*\delta_{1})-h_{b}(\delta_{1}),1-h_{b}(\delta_{2}),1-h_{b}(\delta_{3})) \notin \alpha_{u}(\tau,0,0)$.
\end{prop}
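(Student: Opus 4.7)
The plan is to prove the two halves separately. For Part 1 (the capacity characterization under $\tau*\delta_1 \leq \min\{\delta_2,\delta_3\}$), the outer bound $\mathbb{C}(\tau) \subseteq \beta(\tau,\frac{1}{2},\frac{1}{2},\underline{\delta})$ follows from the point-to-point genie bound in (\ref{Eqn:StrictSubOptimalityOfUSBTechnique3To1ICOuterBound}). For the matching achievability, I use the linear/coset coding scheme sketched in the preceding discussion: equip user $2$ with a capacity-achieving linear coset code for BSC$(\delta_2)$ and let user $3$ use a uniformly random sub-sampling achieving capacity of its BSC$(\delta_3)$. Then $X_2 \oplus X_3$ lies in a coset of a linear code of rate $1-h_b(\min\{\delta_2,\delta_3\})$; since $\tau*\delta_1 \leq \min\{\delta_2,\delta_3\}$, this coset code is decodable at receiver $1$ while $X_1$ is treated as noise, so receiver $1$ first decodes $X_2 \oplus X_3$, cancels it, then decodes $X_1$ at its full rate $h_b(\tau*\delta_1)-h_b(\delta_1)$.

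For Part 2 (strict sub-optimality of $\alpha_u$), I argue by contradiction. Suppose the target triple lies in $\alpha_u(\tau,0,0)$. By the closure definition, pick a sequence $\{p^{(\eta)}\}_{\eta \downarrow 0} \subset \SetOfDistributions_u((\tau,0,0))$ whose induced regions contain rates within $\eta$ of the target. Since $R_j^{(\eta)} \leq I(U_j X_j;Y_j|Q) = I(X_j;Y_j|Q) \leq 1-h_b(\delta_j)$ for $j=2,3$ (using the Markov chain $U_j-X_j-Y_j$) and $R_j^{(\eta)} \to 1-h_b(\delta_j)$, the uniqueness of the capacity-achieving input for a BSC forces $X_j^{(\eta)}$ to be asymptotically uniform given $Q^{(\eta)}$; hence $Y_1^{(\eta)}$ is asymptotically uniform given $Q^{(\eta)}$. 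For user $1$, conditioned on $(Q,U_2,U_3)=(q,u_2,u_3)$, the effective channel $X_1 \to Y_1$ is BSC$(\zeta_{q u_2 u_3} * \delta_1)$, where $\zeta_{q u_2 u_3} \define P(X_2 \oplus X_3 = 1|Q=q,U_2=u_2,U_3=u_3)$; as this BSC is degraded w.r.t.\ BSC$(\delta_1)$, the data-processing inequality and concavity of $h_b(\cdot*\delta_1)$ (together with $\Expectation[X_1] \leq \tau$) give $I(X_1;Y_1|Q U_2 U_3) \leq h_b(\tau*\delta_1)-h_b(\delta_1)$. Since (\ref{Eqn:3To1ICUpperBoundOnR1}) bounds $R_1^{(\eta)}$ by this quantity and $R_1^{(\eta)} \to h_b(\tau*\delta_1)-h_b(\delta_1)$, the inequality is asymptotically tight, which forces $\zeta_{q u_2 u_3} \in \{0,1\}$ on the support. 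Combined with $X_2 \perp X_3 | (Q,U_2,U_3)$ (derived from $(X_2,U_2) \perp (X_3,U_3)|Q$), an elementary algebraic check gives $P(X_j=1|Q,U_j) \in \{0,1\}$ asymptotically; i.e., $X_j$ becomes a deterministic function of $(Q,U_j)$ and $H(X_j|Q U_j) \to 0$.

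With this structural consequence, the remaining mutual informations simplify in the limit: $I(U_j;Y_j|Q) \to 1-h_b(\delta_j)$ (so $I(X_j;Y_j|QU_j) = R_j^{(\eta)} - I(U_j;Y_j|Q) \to 0$ for $j=2,3$), and $I(U_2 U_3 X_1;Y_1|Q) \to I(X_1 X_2 X_3;Y_1|Q) = 1-h_b(\delta_1)$ since $X_j$ is a function of $(Q,U_j)$ and $Y_1$ is uniform given $Q$. Passing the sum-rate bound (\ref{Eqn:SumRateBoundOn3To1IC}) to the limit yields
\begin{equation*}
[h_b(\tau*\delta_1)-h_b(\delta_1)] + [1-h_b(\delta_2)] + [1-h_b(\delta_3)] \leq 1-h_b(\delta_1),
\end{equation*}
which rearranges to $1+h_b(\tau*\delta_1) \leq h_b(\delta_2)+h_b(\delta_3)$, contradicting the hypothesis.

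The principal technical obstacle is the quantitative bookkeeping in the middle paragraph: one must quantify how close $\zeta_{q u_2 u_3}$ is to $\{0,1\}$---and hence how small $H(X_j|Q U_j)$ is---as a function of the gap $h_b(\tau*\delta_1)-h_b(\delta_1)-I(X_1;Y_1|Q U_2 U_3)$, and then propagate the resulting error terms through the sum-rate inequality. A standard Fenchel--Eggleston--Carath\'eodory cardinality bound on the auxiliary alphabets $|\SemiPrivateRVSet_j|$ places the test channels in a compact set, after which continuity of the mutual informations closes the argument.
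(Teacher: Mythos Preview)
Your proof is correct and follows essentially the same route as the paper: the core dichotomy is whether $H(X_j\mid Q,U_j)$ vanishes (in which case the sum-rate bound (\ref{Eqn:SumRateBoundOn3To1IC}) yields the contradiction) or not (in which case the $R_1$ bound (\ref{Eqn:3To1ICUpperBoundOnR1}) is strictly below target), and your degradation/DPI step is exactly the paper's Lemma~\ref{Lem:AddingAnIndependentRandomVariableReducesDifferenceInEntropies}. Your sequence-plus-Carath\'eodory treatment of the closure in the definition of $\alpha_u$ is a point the paper's Appendix~\ref{AppSec:SubOptAdditiveIC} leaves implicit, so your write-up is, if anything, slightly more careful on that front.
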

Please refer to appendix \ref{AppSec:SubOptAdditiveIC} for a proof.
In particular, if $\delta_{1}=0.01$ and $\delta_{2} \in (0.1325,0.21)$, then $\alpha_{u}(\frac{1}{8},0,0) \neq \mathbb{C}(\frac{1}{8})$.

\section{Achievable rate region using PCC built over finite fields}
\label{Sec:AchievableRateRegionsFor3To1ICUsingNestedCosetCodes}
\ifThesis{In this section, we present our second main finding - a new
  achievable rate region for a general discrete $3-$IC based on
  partitioned coset codes (PCC). This rate region, referred to PCC
  rate region, in conjunction with the $\UHK$region strictly enlarges
  upon the latter, which is the current known largest for a general
  $3-$IC. We derive PCC rate region in three pedagogical steps. In the
  first step, presented in section
  \ref{SubSec:DecodingSumOfTransmittedCodewordsUsingNestedCosetCodes},
  we employ PCC to manage interference seen by only one of the
  receivers. This simplified setting aids the reader recognize and
  absorb all the key elements of the framework proposed herein. For
  this step, we provide a complete and elaborate proof of
  achievability. In this section, we also identify a
  \textit{non-additive} $3-$to$-1$ IC (Example \ref{Ex:A3To1-OR-IC})
  for which we \textit{analytically} prove (i) strict sub-optimality
  of $\UHK$technique and (ii) optimality of PCC rate region. We
  provide several examples that illustrate the central theme of this thesis - codes
  endowed with algebraic closure properties enable efficient
  communication over arbitrary general multi-terminal communication
  channels, not just additive, symmetric instances - and thereby
  justifies the framework developed herein. 

In the second step, presented in section \ref{Sec:AchievableRateRegion3ICUsingNestedCosetCodes}, we employ PCC to manage interference seen by every receiver and thereby provide a characterization of PCC rate region. In section \ref{Sec:UnifiedAchievableRateRegionGulingTogetherStructuredAndUnstructuredCodes}, we indicate a coding technique that incorporates PCC and unstructured independent codes for managing interference over a $3-$IC. Any characterization of the corresponding rate region being quite involved, we refrain from providing the same.
}\fi
\ifVerboseICIntro{\subsection{Limitations of unstructured codes}
\label{SubSec:LimitationsOfUnstructuredCodes}
As observed in example \ref{Ex:A3To1ICForWhichLinearCodesOutperformUnstructuredCodes}, \textit{reception at each receiver in a \threeIC is plagued, in general, by a bivariate function of signals intended for the other users. It is therefore desirable to enable each user to decode as large a component of the relevant bivariate interfering signal, as possible.} Does the natural extension of Han Kobayashi using unstructured codes enable efficient decoding components of relevant bivariate interfering signals? Traditional unstructured independent coding does not enable decoding a bivariate component of signals without decoding the arguments in their entirety. The latter strategy, in general being inefficient, lends the natural extension of Han Kobayashi technique sub-optimal for communicating over \threeIC.

This inherent limitation of traditional unstructured independent codes has been observed in quite a few other problem settings. Aptly describing this phenomenon, the problem of reconstructing modulo-2 sum of distributed binary sources \cite{197903TIT_KorMar} brings to light the limitations of traditional independent unstructured coding. Even after three decades, we are unaware of a coding technique based on traditional unstructured independent codes that achieves rates promised by K\"orner and Marton. More recently, other problem instances \cite{200906TIT_PhiZam}, \cite{200710TIT_NazGas}, \cite{201103TIT_KriPra}, \cite{200809Allerton_SriJafVisJafSha},\cite{201207ISIT_PadSahPra} have been identified, wherein the need to decode a bivariate function has been efficiently met by employing structured codebooks.\footnote{That all of the above works are centered 
around an additive problem - reconstruction of modulo-2 sum of binary sources or communicating over additive channels corroborates the phenomenon under question. Decoding a bivariate function without decoding its arguments is more efficient than decoding the individual arguments if the bivariate function is compressive, i.e., the entropy of the function is significantly lower when compared to the joint entropy of the arguments. Indeed, the simplest such bivariate function of binary sources is an addition. Furthermore, a good understanding of linear codes that facilitated decoding the sum has constrained much of the attention thus far to additive problems. It is our belief that these examples, being only the tip of an iceberg, point to more efficient encoding and decoding techniques for multi-terminal communication problems based on structured codebooks.}}\fi

\begin{comment}{
How do structured codes enable one to decode the bivariate interference component more efficiently? This is best described by the use of linear codes in decoding the sum interference component. Let us assume that codebooks of users $2$ and $3$ are built over the the binary field $\BinaryField$ and the sum of user $2$ and $3$ codewords is the interference component at receiver $1$. If user $2$ and $3$ build independent codebooks of rate $R_{2}$ and $R_{3}$ respectively, the range of interference patterns has rate $R_{2}+R_{3}$. Enabling user $1$ decode the sum interference pattern constrains the sum $R_{2}+R_{3}$. Instead if codebooks of users $2$ and $3$ are sub-codes of a common linear code, then enabling user $1$ to decode the sum will only constrain $\max\left\{ R_{2},R_{3}\right\}$.
}\end{comment}

\ifTITJournal{In this section we present our second main finding - a
  new achievable rate region for $3-$IC - in the context of finite
  fields. In other words, we propose a coding technique based on PCC
  built over finite fields. Characterizing its information-theoretic
  performance enables us to derive an achievable rate region,
  henceforth referred to as PCC-region.\footnote{We employ the same
    terminology for the rate region achievable using PCC built over
    Abelian groups in section
    \ref{Sec:AchievableRateRegionsFor3To1ICUsingAbelianGroups}.} We
  derive PCC rate region in three pedagogical steps. In the first
  step, presented in section
  \ref{SubSec:DecodingSumOfTransmittedCodewordsUsingNestedCosetCodes},
  we employ PCC to manage interference seen by only one of the
  receivers. This simplified setting aids the reader recognize and
  absorb all the key elements of the framework proposed herein. For
  this step, we provide a complete proof of achievability. In this
  section, we also identify a \textit{non-additive} $3-$to$-1$ IC
  (Example \ref{Ex:A3To1-OR-IC}) for which we \textit{analytically}
  prove (i) strict sub-optimality of $\UHK$technique and (ii)
  optimality of PCC rate region. We provide several examples that illustrate the
  central theme of this article - codes endowed with algebraic closure
  properties enable interference alignment over arbitrary $3-$ICs, not just additive, symmetric instances - and thereby justifies the framework developed herein.

In the second step, presented in section \ref{Sec:AchievableRateRegion3ICUsingNestedCosetCodes}, we employ PCC to manage interference seen by every receiver and thereby provide a characterization of PCC rate region. In the third step we provide a unification of PCC rate region and $\UHK$ rate region along the lines of \cite[Section VI]{198305TIT_AhlHan}.}\fi

\ifThesis{\subsection{Step I : Decoding sum of codewords chosen from PCC over an arbitrary $3-$IC}\label{SubSec:DecodingSumOfTransmittedCodewordsUsingNestedCosetCodes}}\fi
\ifTITJournal{\subsection{Step I : Managing interference seen by one receiver using PCC built over fields}\label{SubSec:DecodingSumOfTransmittedCodewordsUsingNestedCosetCodes}}\fi

%The linear coding technique proposed for example
%\ref{Ex:A3To1ICForWhichLinearCodesOutperformUnstructuredCodes} seems
%to hinge on the additive nature of the channel therein. One of our
%main contributions is in being able to generalize this technique to
%arbitrary channels. 

\begin{definition}
 \label{Eqn:TestChannelsCodingOver3To1ICUsingNestedCosetCodes}
Let $\SetOfDistributions_{f}(\ulinecost)$ denote the collection of
distributions
$p_{\TimeSharingRV\SemiPrivateRV_{2}\SemiPrivateRV_{3}\ulineInputRV\ulineOutputRV} \in
\SetOfDistributions_{u}(\ulinecost)$ defined over $\TimeSharingRVSet \times \SemiPrivateRVSet_{2} \times \SemiPrivateRVSet_{3}  \times \underline{\InputAlphabet} \times \underline{\OutputAlphabet}$, such that $\SemiPrivateRVSet_{2}=\SemiPrivateRVSet_{3}$ is a finite field. For
$p_{\TimeSharingRV\SemiPrivateRV_{2}\SemiPrivateRV_{3}\ulineInputRV\ulineOutputRV} \in
\SetOfDistributions_{f}(\ulinecost)$, let
$\alpha^{\three-1}_{f}(p_{\TimeSharingRV\SemiPrivateRV_{2}\SemiPrivateRV_{3}
\ulineInputRV\ulineOutputRV})$ be defined as the set of rate triples $(R_{1},R_{2},R_{3})
\in [0,\infty)^{3}$ that satisfy
\begin{eqnarray}
R_1 &<& \min \{0, H(U_{j}|\TimeSharingRV)-H(U_{2} \oplus U_{3}|\TimeSharingRV
Y_1):j=2,3\} + I(X_1;U_{2} \oplus U_{3},Y_1|\TimeSharingRV ),\nonumber\\R_j &<&
I(U_{j},X_j;Y_j|\TimeSharingRV ):j=2,3,   \nonumber\\
R_1+R_j &<& I(X_j;Y_j|\TimeSharingRV U_{j}) +I(X_1;U_{2}\oplus
U_{3},Y_1|\TimeSharingRV )+H(U_{j}|\TimeSharingRV )   - H(U_{2}\oplus
U_{3}|\TimeSharingRV Y_1):j=2,3,\nonumber
\end{eqnarray}
and
\begin{equation}
 \label{Eqn:AchievableRateRegion3to1ICUsingNestedCosetCodes}
\alpha^{\three-1}_{f}(\ulinecost) = \cocl \left(
\underset{\substack{p_{\TimeSharingRV\SemiPrivateRV_{2}\SemiPrivateRV_{3}
\ulineInputRV\ulineOutputRV} \in \SetOfDistributions_{f}(\ulinecost)}
}{\bigcup}\alpha_{f}^{\three-1}(p_{\TimeSharingRV\SemiPrivateRV_{2}\SemiPrivateRV_{3}
\ulineInputRV\ulineOutputRV}) \right).\nonumber
\end{equation}
\end{definition}
\begin{thm}
\label{Thm:AchievableRateRegionFor3To1ICUsingCosetCodes}
For $3-$IC $(\ulineInputAlphabet,\ulineOutputAlphabet,W_{\ulineOutputRV|\ulineInputRV},\ulinecostfn)$,
$\alpha^{\three-1}_{f}(\ulinecost)$ is achievable, i.e., $\alpha_{f}^{\three-1}(\ulinecost) \subseteq \mathbb{C}(\ulinecost)$.
\end{thm}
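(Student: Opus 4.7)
The plan is a random-coding argument that combines partitioned coset codes (PCC) for users 2 and 3 with iid random codes for user 1 and for the refinement of $X_{j}$ above $U_{j}$. Fix a test pmf $p_{\TimeSharingRV \SemiPrivateRV_{2}\SemiPrivateRV_{3}\underlineInputRV\underlineOutputRV} \in \SetOfDistributions_{f}(\ulinecost)$, so $\SemiPrivateRVSet_{2}=\SemiPrivateRVSet_{3}=\fieldq$, and for $j\in\{2,3\}$ split the rate $R_{j}=R_{j}'+R_{j}''$ (with $m_{j}=(m_{j}',m_{j}'')$); $m_{1}$ is left unsplit. Generate $q^{n}$ iid $p_{Q}$. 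Sample nested coset codes $\mathcal{C}_{2},\mathcal{C}_{3}$ over $\fieldq$ from a common uniform random generator matrix $G$ (the $j$-th code using its first $k_{j}$ rows) and independent uniform dithers $d_{j}^{n}$, giving rates $\rho_{j}=(k_{j}/n)\log q$; randomly partition $\mathcal{C}_{j}$ into $2^{nR_{j}'}$ bins indexed by $m_{j}'$. For each $(m_{j}', u_{j}^{n}\in\mathcal{C}_{j})$, generate a satellite codebook $\{X_{j}^{n}(m_{j}',m_{j}'',u_{j}^{n}): m_{j}''\in[2^{nR_{j}''}]\}$ conditionally iid $\sim p_{X_{j}|QU_{j}}$ given $(q^{n},u_{j}^{n})$, and generate $\{X_{1}^{n}(m_{1})\}$ iid $\sim p_{X_{1}|Q}$ given $q^{n}$.

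Encoding: for $j\in\{2,3\}$, find $u_{j}^{n}$ in bin $m_{j}'$ jointly typical with $q^{n}$, then transmit $X_{j}^{n}(m_{j}',m_{j}'',u_{j}^{n})$; user 1 transmits $X_{1}^{n}(m_{1})$. Decoding: receiver $j\in\{2,3\}$ jointly typicality-decodes $(\hat{m}_{j}',\hat{m}_{j}'')$ under $p_{QU_{j}X_{j}Y_{j}}$; receiver 1 declares $\hat{m}_{1}$ to be the unique index such that $(q^{n}, X_{1}^{n}(\hat{m}_{1}), \hat{v}^{n}, Y_{1}^{n}) \in T_{\eta}(Q X_{1} V Y_{1})$ for some $\hat{v}^{n}\in \mathcal{C}_{2}\oplus\mathcal{C}_{3}$, where $V=U_{2}\oplus U_{3}$ and by the nested construction the sum code $\mathcal{C}_{2}\oplus\mathcal{C}_{3}$ is itself a coset code of rate $\max(\rho_{2},\rho_{3})$ whose distinct codewords are pairwise independent.

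The error analysis, based on pairwise independence of distinct coset codewords and adapted from \cite{197903TIT_KorMar, 201103TIT_KriPra}, produces the following inequalities: (i) covering for each $j\in\{2,3\}$: $\rho_{j}-R_{j}' > \log q - H(U_{j}|Q)$; (ii) receiver $j\in\{2,3\}$ decodes correctly provided $\rho_{j} < I(U_{j};Y_{j}|Q)+\log q - H(U_{j}|Q)$ and $R_{j}'' < I(X_{j};Y_{j}|QU_{j})$; (iii) receiver 1's error event decomposes over whether $\hat{v}^{n}=v^{n}$, yielding $R_{1} < I(X_{1};VY_{1}|Q)$ from the $\hat{v}^{n}=v^{n}$ subcase and $R_{1}+\max(\rho_{2},\rho_{3}) < I(X_{1};VY_{1}|Q)+\log q - H(V|QY_{1})$ from the $\hat{v}^{n}\neq v^{n}$ subcase. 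Eliminating $\rho_{2},\rho_{3},R_{j}',R_{j}''$ by Fourier-Motzkin: combining (i) and (ii) gives $R_{j}<I(U_{j}X_{j};Y_{j}|Q)$; combining (i) with the second inequality of (iii) gives $R_{1}+R_{j}' < I(X_{1};VY_{1}|Q)+H(U_{j}|Q)-H(V|QY_{1})$ for each $j$, and adding $R_{j}'' < I(X_{j};Y_{j}|QU_{j})$ produces the theorem's sum-rate constraint; specializing to $R_{j}'=0$ (covering still imposes $\rho_{j}\geq\log q - H(U_{j}|Q)$) and intersecting with $R_{1}<I(X_{1};VY_{1}|Q)$ yields the $R_{1}$-bound in the $\min\{0, H(U_{j}|Q)-H(V|QY_{1}):j=2,3\}$ form. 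Closed convex hull arises from standard time-sharing through $Q$.

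The main obstacle lies in the coset-code covering and packing lemmas. Unlike iid codes, distinct coset codewords are only pairwise independent (as functions of the random generator $G$ and the dithers), so packing bounds follow from first-moment union arguments but covering requires a second-moment analysis conditioned on $q^{n}$. A further subtlety at receiver 1 is the mixed error event coupling the iid randomness of $X_{1}^{n}$ with the coset structure of $\hat{v}^{n}\in\mathcal{C}_{2}\oplus\mathcal{C}_{3}$; proving that distinct codewords of the sum code inherit pairwise independence and marginal uniformity from the nesting and the uniform dithers is the key structural observation that legitimizes the union bound on spurious $\hat{v}^{n}$. Once these lemmas are in hand, the Fourier-Motzkin elimination of $\rho_{2},\rho_{3},R_{j}',R_{j}''$ and the time-sharing convexification are mechanical.
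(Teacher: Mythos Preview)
Your proposal is correct and takes essentially the same route as the paper: nested coset codes over $\fieldq$ for the $U_{j}$-layer with a shared random generator and independent dithers, an iid code for $X_{1}$, and receiver~$1$ decoding into the sum coset code $\mathcal{C}_{2}\oplus\mathcal{C}_{3}$ of rate $\max(\rho_{2},\rho_{3})$. The only architectural difference is that you superpose the $X_{j}$-satellite conditionally on $u_{j}^{n}$, whereas the paper generates $X_{j}^{n}\sim p_{X_{j}|Q}$ marginally, adds an extra binning parameter $K_{j}$, and has encoder~$j$ search for a jointly typical pair $(u_{j}^{n},x_{j}^{n})$; after Fourier--Motzkin both collapse to the same region, and your shortcut saves a layer of binning (it also makes your decoder-$j$ constraints (ii) look like sequential decoding of $u_{j}^{n}$ then $m_{j}''$, which is sufficient here because $\rho_{j}$ is free).

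One step you have not called out, and which the paper isolates as a separate error event (its $\epsilon_{2}$, handled in a dedicated appendix): you must show that the \emph{chosen} cloud-center words $(u_{2}^{n},u_{3}^{n})$ are jointly typical with each other and with $X_{1}^{n}$, so that the true $v^{n}=u_{2}^{n}\oplus u_{3}^{n}$ is typical with $Y_{1}^{n}$ under $p_{QX_{1}VY_{1}}$ before you even begin the packing analysis. This is not free from independent-codebook intuition, because $\mathcal{C}_{2}$ and $\mathcal{C}_{3}$ share the generator $G$ and the encoders' index selections are themselves functions of the random code. The engine is again the dither-based joint-uniformity you already cite for packing (for fixed indices $a,b$, the pair $(u_{2}^{n}(a),u_{3}^{n}(b))$ is uniform on $\fieldq^{2n}$), combined with the same list-size/second-moment conditioning you use for covering; the paper spells this coupling out carefully, and your write-up should too.
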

Note that $\alpha_{f}^{3-1}(\underline{\tau})$ is a continuous function of the $3-$IC
$(\underline{\mathcal{X}},\underline{\mathcal{Y}},W_{\underline{Y}|\underline{X}},\kappa)$.
A proof is provided in appendix \ref{AppSec:ProofOfAch}. 
Here we provide a simplified description of the coding technique. Towards that end, consider a pmf $p_{\TimeSharingRV \SemiPrivateRV_{2}
\SemiPrivateRV_{3}\ulineInputRV \ulineOutputRV} \in
\SetOfDistributions_{f}(\ulinecost)$ with $\TimeSharingRVSet=\phi$\footnote{Since the time sharing random variable $\TimeSharingRV$ is employed in a standard way, we choose to omit it in this description.} and $\SemiPrivateRVSet_{2}=\SemiPrivateRVSet_{3}=\fieldpi$. \ifThesis Except for key differences, the coding technique proposed
herein is identical to $\UHK$technique for the \\3to1IC (section \ref{SubSec:NaturalExtensionOfHanKobayashiTo3to1IC}). Let us revisit the linear coding technique proposed for example \ref{Ex:A3To1ICForWhichLinearCodesOutperformUnstructuredCodes} to identify these key differences. Note that the structure and encoding rule of user $1$ in example \ref{Ex:A3To1ICForWhichLinearCodesOutperformUnstructuredCodes} and section \ref{SubSec:NaturalExtensionOfHanKobayashiTo3to1IC} are identical. We therefore employ the same code structure and encoding rules for user $1$. In particular, e\fi\ifTITJournal{E}\fi ncoder $1$ builds a single codebook $\mathcal{C}_{1}=(x_{1}^{n}(\msg_{1}):\msg_{1} \in \MessageSet_{1})$ of rate $R_{1}$ over $\InputAlphabet_{1}$ and the codeword indexed by the message is transmitted on the channel. 

The structure and encoding rules for users $2$ and $3$ are identical and we describe it
using a generic index $j \in \left\{2,3\right\}$. As in section
\ref{SubSec:NaturalExtensionOfHanKobayashiTo3to1IC}, we employ a two layer - cloud center
and satellite - code for user $j$ and split its message $M_{j} \in \MessageSet_{j}$ into
two parts. Let (i) $M_{j1} \in \MessageSet_{j1} \define [\Prime^{t_{j}}]$ denote its
semi-private part, and (ii) $M_{j\InputRV} \in \MessageSet_{j \InputRV} \define
[\exp\{nL_{j}\}]$ denote its private part. While in section \ref{SubSec:NaturalExtensionOfHanKobayashiTo3to1IC} user $1$ decoded the pair of cloud
center codewords, the first
key difference we propose is that user $1$ decodes the sum of user $2$ and $3$ cloud center
codewords. Let a coset $\lambda_{j} \subseteq \SemiPrivateRVSet_{j}^{n}$ of a linear code
$\overline{\lambda}_{j} \subseteq \SemiPrivateRVSet_{j}^{n}$ denote user $j$'s cloud
center codebook.\footnote{The use of a coset code instead of a linear
  code enables ease of analysis. In particular, the key property of
  statistical pairwise independence \cite{Gal-ITRC68} of distinct codewords of randomly
  chosen coset codes is facilitated by choosing a random bias
  shift. This is employed in the many proof elements, for example that
  of lemma
  \ref{Lem:3to1ICAnyPairOfCodewordsInTheTwoCloudCenterCodewordsAreIndependent}.}
In particular, let $g_{j} \in \SemiPrivateRVSet_{j}^{s_{j} \times n}$ 
denote generator matrix of $\overline{\lambda}_{j}$ and coset $\lambda_{j}$ correspond to
shift $b_{j}^{n} \in \SemiPrivateRVSet_{j}^{n}$. \ifThesis{The second key difference we propose is
that }\fi\ifTITJournal{We let the }\fi cloud center codebooks of users' $2$ and $3$ overlap, i.e., the larger of
$\overline{\lambda}_{2},\overline{\lambda}_{3}$ contains the other. For
example, if $s_{j_{2}} \leq s_{j_{3}}$,
then $\overline{\lambda}_{j_{2}}\subseteq \overline{\lambda}_{j_{3}}$. We therefore let
$g_{j_{3}}^{T} = \left[ g_{j_{2}}^{T} ~~g_{j_{3}/j_{2}}^{T} \right]$.

Since codewords of a uniformly distributed coset code are uniformly distributed, we need
to partition the coset code $\lambda_{j}$ into $\Prime^{t_{j}}$ bins to induce a non-uniform distribution over the auxiliary alphabet $\SemiPrivateRVSet_{j}$. \ifThesis{We therefore employ partitioned coset codes (section \ref{Sec:PartitionedCosetPTP_STxCodes}). The third key difference is therefore a
partition of $\lambda_{j}$ into $\Prime^{t_{j}}$ bins to enable one to induce a non-uniform
distribution. For the benefit of a reader who has not studied through section \ref{Sec:PartitionedCosetPTP_STxCodes}, we describe and define partitioned coset codes again. }\fi In particular, for each codeword $u_{j}^{n}(a^{s_{j}})\define
a^{s_{j}}g_{j}\oplus b_{j}^{n}$, where $a^{s_{j}} \in \SemiPrivateRVSet_{j}^{s_{j}}$, a binning function $i_{j}(a^{s_{j}}) \in [\Prime^{t_{j}}]$ is defined that indexes the bin containing
$u_{j}^{n}(a^{s_{j}})$. We let $c_{j1}(m_{j1}) = \{ a^{s_{j}} \in \mathcal{U}_{j}^{s_{j}}:i_{j}(a^{s_{j}})=m_{j1} \}$ denote the set containing indices corresponding to message $m_{j1}$. The structure of the cloud center codebook plays an important role \ifThesis{in this chapter }\fi and we formalize the same through the following definition.

\begin{definition}
 \label{Defn:PartitionedCosetCodes}
{\ifThesis{Recall that a coset code $\lambda \subseteq \fieldpi^{n}$
    is a coset of a linear code $\overline{\lambda} \subseteq
    \fieldpi^{n}$.}\fi} A coset code $\lambda$ is completely specified
by the generator matrix $g \in \fieldpi^{k \times n}$ and a bias
vector $b_{j}^{n} \in \fieldpi^{n}$. Consider a partition of $\lambda$
into $\Prime^{l}$ bins. Each codeword $a^{k}g \oplus b^{n}$ is
assigned an index $i(a^{k}) \in [\Prime^{l}]$. This coset code
$\lambda$ with its partitions is referred to as an $(n,k,l,g,b^{n},i)$ \textit{partitioned coset code} (PCC) or succinctly as an $(n,k,l)$ PCC. For each $m \in [\Prime^{l}]$, let $c(m) \define \left\{ a^{k} \in \fieldpi^{k}:i(a^{k})=m\right\}$.
\end{definition}

User $j$'th satellite codebook $\mathcal{C}_{j}$, built over $\InputAlphabet_{j}$, consists of $\exp\{ nL_{j} \}$ bins, one for each private message $m_{j\InputRV} \in \mathcal{M}_{jX} \define [\exp\{ nL_{j} \}]$. Let $(x_{j}^{n}(\msg_{j\InputRV},b_{jX}) \in \InputAlphabet^{n}_{j}:b_{jX}
\in  [\exp\{nK_{j}\}])$ denote bin corresponding to message $m_{j\InputRV} \in \mathcal{M}_{jX}$ and let $c_{jX} \define [\exp\{nK_{j}\}]$. Having received message $M_{j}=(M_{j1},M_{jX})$, the encoder identifies all pairs
$ (u_{j}^{n}(a^{s_{j}}),x_{j}^{n}(M_{j\InputRV},b_{jX}))$ of jointly typical codewords with $(a^{s_{j}},b_{jX}) \in c_{j1}(M_{j1})\times c_{jX}$. If it finds one or more such pairs, one of them is chosen and the corresponding satellite codeword is fed as input on the channel. Otherwise, an error is declared.

We now describe the decoding rule. Predictably, the decoding rules of users
$2$ and $3$ are identical and we describe this through a generic index $j \in \left\{
2,3 \right\}$. \ifThesis{Except for a slight modification to handle the bins in the codebooks, decoder $j$'s operation is identical in spirit to a point to point decoder
described in section \ref{SubSec:NaturalExtensionOfHanKobayashiTo3to1IC}. Specifically,
d}\fi\ifTITJournal{D}\fi ecoder $j$ identifies all
$(\hat{\msg}_{j1},\hat{\msg}_{j\InputRV})$ for which there exists
$(a^{s_{j}},b_{jX}) \in  c_{j1}(\hat{m}_{j1}) \times c_{jX}$ such that
$(u_{j}^{n}(a^{s_{j}}),x_{j}^{n}(\hat{\msg}_{j\InputRV},b_{jX}),Y_{j}^{n})$ is jointly
typical with
respect to $p_{\SemiPrivateRV_{j}\InputRV_{j},\OutputRV_{j}}$. If
there is exactly one such pair
$(\hat{\msg}_{j1},\hat{\msg}_{j\InputRV})$, this is declared as
the message of user $j$. Otherwise an error is signaled.

Decoder $1$ constructs the sum $\lambda_{2}\oplus \lambda_{3} \define \left\{ u_{2}^{n}
\oplus u_{3}^{n}:u_{j}^{n} \in \lambda_{j} ,j =2,3\right\}$ of the cloud center codebooks.
Having received $Y_{1}^{n}$, it looks for all potential message $\hat{m}_{1}$ for which
there exists a $u_{\oplus}^{n} \in \lambda_{2} \oplus \lambda_{3}$ such that
$(u_{\oplus}^{n},x_{1}^{n}(\hat{m}_{1}),Y_{1}^{n})$ is jointly typical with
respect to
$p_{\SemiPrivateRV_{2}\oplus \SemiPrivateRV_{3},X_{1},Y_{1}}$. If it finds exactly
one such message $\hat{m}_{1}$, it declares this as the decoded message of user $1$.
Otherwise, it declares an error.

We characterize the performance of the proposed coding technique in the proof by averaging over the ensemble of codebooks. Since the distribution induced on the codebooks is such that codebooks of users $2$ and $3$ are statistically correlated and moreover, contain correlated codewords, this involves new elements.

The coding technique proposed in the proof of theorem \ref{Thm:AchievableRateRegionFor3To1ICUsingCosetCodes} is indeed a generalization of that proposed for example \ref{Ex:A3To1ICForWhichLinearCodesOutperformUnstructuredCodes}, and moreover capacity achieving for the same. We formalize this through the following corollary.

\begin{corollary}
 \label{Cor:PCCAchievesCapacityForAdditive3-To-1IC}
 For the $3-$to$-1$ IC in example \ref{Ex:A3To1ICForWhichLinearCodesOutperformUnstructuredCodes}, if $\tau * \delta_{1} < \min \{ \delta_{2},\delta_{3}\}$, then $\alpha_{f}^{3-1}(\tau,\frac{1}{2},\frac{1}{2})=\mathbb{C}(\tau)$.\ifThesis Moreover, if $\delta\define\delta_{2}=\delta_{3}$ and $h_{b}(\tau*\delta_{1}) \leq h_b(\delta) < \frac{1+h_b(\delta_{1}*\tau)}{2}$, then $\alpha_{u}(\tau,\frac{1}{2},\frac{1}{2}) \neq \mathbb{C}(\tau)$ and $ \mathbb{C}(\tau)= \alpha_{f}^{3-1}(\tau,\frac{1}{2},\frac{1}{2})$.\fi
\end{corollary}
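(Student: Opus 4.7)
The inclusion $\alpha_{f}^{3-1}(\tau,\frac{1}{2},\frac{1}{2})\subseteq\mathbb{C}(\tau)$ is a direct specialization of Theorem \ref{Thm:AchievableRateRegionFor3To1ICUsingCosetCodes}, so the work lies in the reverse inclusion. The plan is to first invoke Proposition \ref{Prop:StrictSub-OptimalityOfHanKobayashi}---whose hypothesis is implied by the strict inequality $\tau*\delta_{1}<\min\{\delta_{2},\delta_{3}\}$---to identify $\mathbb{C}(\tau)$ with the rectangle $\beta(\tau,\frac{1}{2},\frac{1}{2},\underline{\delta})$. Since every $\alpha_{f}^{3-1}(p)$ is defined by upper-bound inequalities on non-negative linear combinations of the $R_{i}$'s (so it is downward-closed), and since $\alpha_{f}^{3-1}(\tau,\frac{1}{2},\frac{1}{2})$ is already a closure, it suffices to exhibit a single pmf $p\in\SetOfDistributions_{f}(\tau,\frac{1}{2},\frac{1}{2})$ whose polytope contains the top corner $(h_{b}(\tau*\delta_{1})-h_{b}(\delta_{1}),\,1-h_{b}(\delta_{2}),\,1-h_{b}(\delta_{3}))$ of that rectangle.

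The natural choice is $\TimeSharingRV$ trivial, $\SemiPrivateRVSet_{2}=\SemiPrivateRVSet_{3}=\BinaryField$, mutually independent $X_{1}\sim\mathrm{Ber}(\tau)$, $X_{2},X_{3}\sim\mathrm{Ber}(\frac{1}{2})$, and $U_{j}:=X_{j}$ for $j=2,3$. Under this pmf $U_{2}\oplus U_{3}$ is uniform on $\BinaryField$ and independent of $X_{1}$, while $Y_{1}=X_{1}\oplus U_{2}\oplus U_{3}\oplus N_{1}$ with $N_{1}\sim\mathrm{Ber}(\delta_{1})$ independent of $(X_{1},U_{2},U_{3})$. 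A short computation using the identity $H(A\mid A\oplus B)=H(A)+H(B)-H(A\oplus B)$ then yields $H(U_{j})=1$, $H(U_{2}\oplus U_{3}\mid Y_{1})=h_{b}(\tau*\delta_{1})$, $I(X_{1};U_{2}\oplus U_{3},Y_{1})=h_{b}(\tau*\delta_{1})-h_{b}(\delta_{1})$, $I(U_{j}X_{j};Y_{j})=1-h_{b}(\delta_{j})$, and $I(X_{j};Y_{j}\mid U_{j})=0$ since $U_{j}=X_{j}$. Substituting into Definition \ref{Eqn:TestChannelsCodingOver3To1ICUsingNestedCosetCodes} and noting $\min\{0,1-h_{b}(\tau*\delta_{1})\}=0$, the active bounds reduce to $R_{1}<h_{b}(\tau*\delta_{1})-h_{b}(\delta_{1})$, $R_{j}<1-h_{b}(\delta_{j})$, and $R_{1}+R_{j}<1-h_{b}(\delta_{1})$ for $j=2,3$.

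The one non-routine verification is the pairwise sum bound at the target corner: the corner satisfies $R_{1}+R_{j}=1+h_{b}(\tau*\delta_{1})-h_{b}(\delta_{1})-h_{b}(\delta_{j})$, which lies strictly below $1-h_{b}(\delta_{1})$ precisely when $h_{b}(\tau*\delta_{1})<h_{b}(\delta_{j})$. Since $h_{b}$ is strictly increasing on $[0,\tfrac{1}{2}]$ and both $\tau*\delta_{1}$ and $\delta_{j}$ lie in that interval, this is equivalent to the strict hypothesis $\tau*\delta_{1}<\delta_{j}$ for $j=2,3$. This is the one delicate step in which the strictness of the hypothesis is essential; the per-user bounds match the outer bound with equality, and the $<$ versus $\le$ gap on those faces is recovered by the closure operation in the definition of $\alpha_{f}^{3-1}(\underline{\tau})$. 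Combining the two inclusions with the identification $\mathbb{C}(\tau)=\beta(\tau,\frac{1}{2},\frac{1}{2},\underline{\delta})$ supplied by Proposition \ref{Prop:StrictSub-OptimalityOfHanKobayashi} completes the argument.
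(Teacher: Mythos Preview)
Your proof is correct and follows essentially the same approach as the paper: both select the test channel with $\TimeSharingRV$ trivial, $U_{j}=X_{j}\sim\mathrm{Ber}(\tfrac{1}{2})$, $X_{1}\sim\mathrm{Ber}(\tau)$, evaluate the bounds defining $\alpha_{f}^{3-1}(p)$, and verify that they reduce to (the interior of) the outer rectangle $\beta(\tau,\tfrac{1}{2},\tfrac{1}{2},\underline{\delta})$ once $h_{b}(\tau*\delta_{1})<h_{b}(\delta_{j})$ renders the pairwise sum constraints redundant. Your write-up is simply more explicit than the paper's one-line ``it can be verified''; the only minor remark is that invoking Proposition~\ref{Prop:StrictSub-OptimalityOfHanKobayashi} for the converse is unnecessary, since the containment $\mathbb{C}(\tau)\subseteq\beta$ is the trivial single-user outer bound already noted immediately after Example~\ref{Ex:A3To1ICForWhichLinearCodesOutperformUnstructuredCodes}.
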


It can be verified that $\beta(\tau,\frac{1}{2},\frac{1}{2},\underline{\delta}) = \alpha_{f}^{3-1}(p_{QU_{2}U_{3}\underline{X}\underline{Y}})$ where $P(U_{j}=X_{j}=0)=P(U_{j}=X_{j}=1)=\frac{1}{2}$, $P(X_{1}=1)=\tau$ and $\mathcal{Q}=\phi$, the empty set, where $\beta(\underline{\tau},\underline{\delta})$ is given in (\ref{Eqn:StrictSubOptimalityOfUSBTechnique3To1ICOuterBound}).

In the sequel, we illustrate through three examples the central claim
of this \ifThesis{thesis }\fi\ifTITJournal{article }\fi that the utility of codes endowed with algebraic structure, and in particular coset codes, are not restricted to particular symmetric and additive problems. Furthermore, these examples establish the need (i) to achieve rates corresponding to non-uniform distributions which is accomplished via the technique of binning, (ii) to build coset codes over larger fields, and (iii) to analyze decoding of sums of transmitted codewords over arbitrary channels using typical set decoding.

\begin{example}
\label{Ex:A3To1-OR-IC}
Consider a binary \\3to1IC illustrated in figure \ref{Fig:A3To1ORIC} with $\InputAlphabet_{j}=\OutputAlphabet_{j}=\left\{ 0,1 \right\}:j \in [3]$ with channel transition probabilities $W_{\ulineOutputRV|\ulineInputRV}(\ulineoutput|\ulineinput)=BSC_{\delta_{1}}(y_{1}|x_{1}\oplus (x_{2} \vee x_{3}))BSC_{\delta_{2}}(y_{2}|x_{2})BSC_{\delta_{3}}(y_{3}|x_{3})$, where $\vee$ denotes logical OR.\footnote{$BSC(\cdot|\cdot)$ has been defined in example \ref{Ex:A3To1ICForWhichLinearCodesOutperformUnstructuredCodes}.} Users' inputs are constrained with respect to a Hamming cost function, i.e., $\costfn_{j}(x)=x$ for $x \in \left\{ 0,1 \right\}$, and user $j$th input is constrained to an average cost per symbol of $\tau_{j} \in (0,\frac{1}{2})$ for $j \in [3]$.
\end{example}
% \begin{figure}
% \centering
% \includegraphics[height=1.8in,width=1.9in]{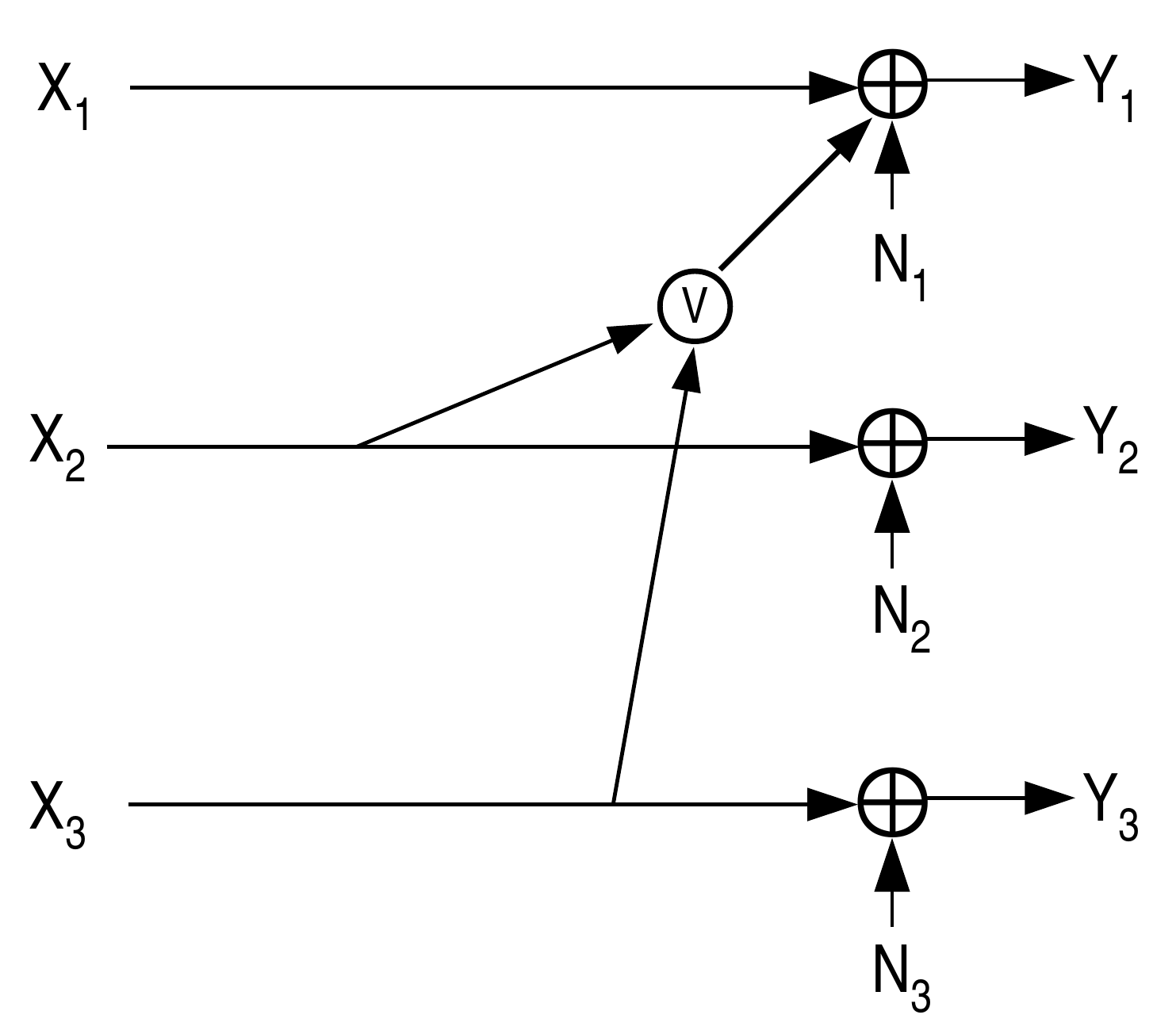}
% \caption{A binary $3-$to$-1$ IC described in example \ref{Ex:A3To1-OR-IC}.}
% \label{Fig:A3To1ORIC}
% \end{figure}

Clearly, for the above example, $X_{2}\vee X_{3}$ is the interfering pattern. If $X_{2}$ and $X_{3}$ are viewed as elements in the ternary field, then observe that  $H(X_{2}\vee X_{3}|X_{2}\oplus_{3} X_{3})=0$. The decoder $1$ can reconstruct the interfering pattern after having decoded the ternary sum of the codewords. This motivates the use of \textit{coset} codes for decoding of non-additive interference. 

\begin{prop}
 \label{Prop:ORICStrictSub-OptimalityOfHK}
Consider the \\3to1IC described in example \ref{Ex:A3To1-OR-IC} with $\delta \define \delta_{2}=\delta_{3} \in (0,\frac{1}{2})$ and $\tau \define \tau_{2}=\tau_{3} \in (0,\frac{1}{2})$. Let $\beta \define \delta_{1}*(2\tau -\tau^{2})$. If
\begin{eqnarray}
 \label{Eqn:3To1-OR-ICConditionForAchievabilityInProp}
h_{b}(\tau * \delta)-h_{b}(\delta)\leq  \theta ,
\end{eqnarray}
where $\theta =h_{b}(\tau)-h_{b}((1-\tau)^{2})-(2\tau-\tau^{2})h_{b}(\frac{\tau^{2}}{2\tau-\tau^{2}})-h_{b}(\tau_{1} * \delta_{1})+h_{b}(\tau_{1}*\beta)$, then $\beta(\underline{\tau},\underline{\delta}) = \mathcal{C}(\underline{\tau})=\alpha^{3-1}_{f}(\underline{\tau})$. Moreover, the rate triple $(h_{b}(\tau_{1} * \delta_{1})-h_{b}(\delta_{1}),h_{b}(\tau * \delta)-h_{b}(\delta),h_{b}(\tau * \delta)-h_{b}(\delta)) \notin \alpha_{u}(\ulinecost)$ if
\begin{equation}
 \label{Eqn:3To1-OR-ICConditionForStrictSubOptimalityOfHKInProp}
h_{b}(\tau_{1} * \delta_{1})-h_{b}(\delta_{1})+2(h_{b}(\tau * \delta)-h_{b}(\delta))> h_{b}(\tau_{1}(1-\beta)+(1-\tau_{1})\beta)-h_{b}(\delta_{1}).
\end{equation}
Therefore, if (\ref{Eqn:3To1-OR-ICConditionForAchievabilityInProp}) and (\ref{Eqn:3To1-OR-ICConditionForStrictSubOptimalityOfHKInProp}) hold, $\alpha_{u}(\ulinecost) \subsetneq \alpha^{3-1}_{f}(\underline{\tau}) = \mathcal{C}(\underline{\tau})$.\footnote{The reader is reminded that $\alpha_{u}(\ulinecost)$ is defined in definition \ref{Defn:HanKobayashiTestChannelsFor3To1IC}.}
\end{prop}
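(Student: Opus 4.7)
The proposition has three assertions; the third follows immediately from the first two. For the first assertion (achievability of the corner of $\beta$ via PCC under condition~(\ref{Eqn:3To1-OR-ICConditionForAchievabilityInProp})) I would apply Theorem~\ref{Thm:AchievableRateRegionFor3To1ICUsingCosetCodes} to the specific test channel $\mathcal{Q}=\phi$, $\mathcal{U}_2=\mathcal{U}_3=\mathbb{F}_3$, with $U_j$ identified with $X_j$ via the embedding $\{0,1\}\hookrightarrow\mathbb{F}_3$ (so $P(X_j=1)=\tau$) for $j=2,3$, and $X_1\sim\mathrm{Bern}(\tau_1)$ independent. The algebraic crux is $V:=X_2\vee X_3=\mathds{1}\{U_2\oplus_3 U_3\neq 0\}$: once decoder~1 decodes the ternary sum it recovers the interference $V$ exactly and faces a residual $\mathrm{BSC}(\delta_1)$. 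Direct entropy computations yield $I(U_jX_j;Y_j)=h_b(\tau*\delta)-h_b(\delta)$, $I(X_j;Y_j\mid U_j)=0$, $I(X_1;U_2\oplus_3 U_3,Y_1)=h_b(\tau_1*\delta_1)-h_b(\delta_1)$, and---using $I(V;Y_1)=h_b(\tau_1*\beta)-h_b(\tau_1*\delta_1)$ together with the chain-rule expansion of $H(U_2\oplus_3 U_3\mid Y_1)$ through $V$---the identity $H(U_j)-H(U_2\oplus_3 U_3\mid Y_1)=\theta$. Plugging these into Definition~\ref{Eqn:TestChannelsCodingOver3To1ICUsingNestedCosetCodes}, the $R_j$ bounds are tight at the corner and both the $R_1$ and the $R_1+R_j$ bounds admit the corner precisely when $\theta\geq h_b(\tau*\delta)-h_b(\delta)$, which is condition (\ref{Eqn:3To1-OR-ICConditionForAchievabilityInProp}) (and implies $\theta\geq 0$, so the $\min\{0,\theta\}$ term vanishes). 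Since $\alpha_f^{3-1}\subseteq\mathbb{C}\subseteq\beta$ and $\beta$ is the orthant cut off by the single-user bounds, realizing its corner in $\alpha_f^{3-1}$ forces the three sets to coincide.

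For the second assertion, suppose for contradiction the corner lies in $\alpha_u$. Standard cardinality bounds on $\mathcal{U}_2,\mathcal{U}_3$ render $\mathbb{D}_u$ compact, so continuity of mutual information furnishes a test channel $p^{\star}\in\mathbb{D}_u$ at which every bound of Definition~\ref{Defn:HanKobayashiTestChannelsFor3To1IC} is met non-strictly by the corner. Saturating $R_j\leq I(X_j;Y_j\mid Q)\leq h_b(\tau*\delta)-h_b(\delta)$ (the last step uses concavity of $p\mapsto h_b(p*\delta)$ and the cost constraint $E[X_j]\leq\tau$) forces $P(X_j=1\mid Q)=\tau$ a.s.\ for $j=2,3$, and analogously $P(X_1=1\mid Q)=\tau_1$ a.s.; the conditional independence $(U_2,X_2)\perp(U_3,X_3)\mid Q$ built into $\mathbb{D}_u$ then yields $P(V=1\mid Q)=2\tau-\tau^2$ a.s. The pivotal step is showing that $R_1$-saturation forces $V$ to be a deterministic function of $(Q,U_2,U_3)$: writing $I(X_1;Y_1\mid QU_2U_3)=\mathbb{E}[f(\pi*\delta_1)]$ with $\pi=P(V=1\mid Q,U_2,U_3)$ and $f(p)=h_b(\tau_1*p)-h_b(p)=I(A;A\oplus B)$ for $A\sim\mathrm{Bern}(\tau_1),B\sim\mathrm{Bern}(p)$, the data-processing strict monotonicity of $f$ on $[0,1/2]$ and the symmetry $f(p)=f(1-p)$ show that $f$ attains its maximum $h_b(\tau_1*\delta_1)-h_b(\delta_1)$ on $[\delta_1,1-\delta_1]$ only at the endpoints, so $\pi\in\{0,1\}$ almost surely.

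Once $V$ is a function of $(Q,U_2,U_3)$, the factorization $P(V=0\mid Q,U_2,U_3)=P(X_2=0\mid Q,U_2)\,P(X_3=0\mid Q,U_3)\in\{0,1\}$ together with $E[X_j\mid Q]=\tau\in(0,1)$ a.s.\ (which rules out any factor being identically zero in $u_j$) forces each conditional probability $P(X_j=0\mid Q,U_j)$ to be $\{0,1\}$-valued; hence $X_j$ is a deterministic function of $(Q,U_j)$ and $I(X_j;Y_j\mid QU_j)=0$. The sum-rate bound~(\ref{Eqn:SumRateBoundOn3To1IC}) then collapses to $R_1+R_2+R_3\leq I(U_2U_3X_1;Y_1\mid Q)\leq H(Y_1\mid Q)-h_b(\delta_1)\leq h_b(\tau_1*\beta)-h_b(\delta_1)$, the final step using $P(Y_1=1\mid Q)=\tau_1*\beta$ a.s.\ and concavity of $h_b$. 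This directly contradicts~(\ref{Eqn:3To1-OR-ICConditionForStrictSubOptimalityOfHKInProp}); the third assertion follows by combining the first two.

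The principal obstacle is the twin structural deduction---that $R_1$-saturation first forces $V$ to be a function of $(Q,U_2,U_3)$, and then, exploiting the binary-valued product form of the logical OR together with the cost constraint, that each $X_j$ is itself a function of $(Q,U_j)$. Both steps depend essentially on the specific binary/conjunctive structure of the interference and on the tight single-letter characterization of $f$ as the mutual information of a BSC; they would not carry over to a generic bivariate interference function.
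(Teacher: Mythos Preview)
Your proof is correct and follows the paper's approach closely: the identical ternary test channel $U_j=X_j\hookrightarrow\mathbb{F}_3$ for achievability, and for sub-optimality the same chain of ideas (pinning down $P(X_j=1\mid Q)$, the strict monotonicity of $p\mapsto h_b(t*p)-h_b(p)$---which the paper isolates as a separate lemma---the OR-structure link between $H(V\mid QU_2U_3)$ and $H(X_j\mid QU_j)$, and the sum-rate bound), only with the contradiction run in the contrapositive order. One caution: your ``analogously'' for $P(X_1=1\mid Q)=\tau_1$ is not literally analogous to the $j=2,3$ case because $Y_1$ carries the OR interference; you must first invoke the very monotonicity of $f$ (with $\tau_1$ replaced by the as-yet-unknown $\tau_{1,q}=P(X_1=1\mid Q=q)$) to bound $I(X_1;Y_1\mid QU_2U_3)\le\mathbb{E}_Q[h_b(\tau_{1,Q}*\delta_1)]-h_b(\delta_1)$ before applying Jensen and the cost constraint, exactly as the paper does in its second claim.
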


Please refer to appendix \ref{AppSec:UpperBoundOnUSBRateRegionFor3To1ORIC} for a proof. \ifThesis{\begin{proof}
 We prove this by contradiction. Suppose $(h_{b}(\tau_{1} * \delta_{1})-h_{b}(\delta_{1}),h_{b}(\tau * \delta)-h_{b}(\delta),h_{b}(\tau * \delta)-h_{b}(\delta)) \in \alpha^{\three-1}_{f}(p_{QU_{2}U_{3}\ulineInputRV\ulineOutputRV})$ for some $p_{QU_{2}U_{3}\ulineInputRV\ulineOutputRV} \in \mathbb{D}_{3-1}(\tau_{1},\tau,\tau)$. Our first claim is that $p_{X_{2}|Q}(1|q)=p_{X_{3}|Q}(1|q)=\tau$ for all $q \in \TimeSharingRVSet$.

From (\ref{Eqn:3To1ICUpperBoundOnR1}) we have 
\begin{eqnarray}
 \label{Eqn:PluggingInBoundOnRjFrom3To1ICHanKobayashiConditions}
&&R_{j} \leq I(U_{j}X_{j};Y_{j}|Q) =H(Y_{j}|Q)-H(Y_{j}|X_{j}U_{j}Q) = H(Y_{j}|Q)-h_{b}(\delta) = \sum_{q \in \TimeSharingRVSet}p_{Q}(q)H(Y_{j}|Q=q) - h_{b}(\delta)\nonumber\\
\label{Eqn:BoundOnRateOfUserjAsASum}
&&= \sum_{q \in \TimeSharingRVSet}p_{Q}(q)H(X_{j}\oplus N_{j}|Q=q) - h_{b}(\delta) \mbox{ for }j=2,3.
\end{eqnarray}
If $\tau_{q}\define p_{X_{j}|Q}(1|q)$, then independence of the pair $N_{j}$ and $(X_{j},Q)$ implies $p_{X_{j}\oplus N_{j}|Q}(1|q)=\tau_{q}(1-\delta)+(1-\tau_{q})\delta=\tau_{q}(1-2\delta)+\delta$. Substituting the same in (\ref{Eqn:BoundOnRateOfUserjAsASum}), we have
\begin{eqnarray}
 \label{Eqn:PluggingInBoundOnRjFrom3To1ICHanKobayashiConditions}
&&R_{j} \leq \sum_{q \in \TimeSharingRVSet}p_{Q}(q)h_{b}(\tau_{q}(1-2\delta)+\delta) - h_{b}(\delta) \leq h_{b}(\sum_{q \in \TimeSharingRVSet} p_{Q}(q)[\tau_{q}(1-2\delta)+\delta] )-h_{b}(\delta) \nonumber\\
&&=h_{b}([p_{X_{j}}(1)(1-2\delta)+\delta])-h_{b}(\delta)\nonumber
\end{eqnarray}
from Jensen's inequality. Since $p_{X_{j}}(1) \leq \tau <\frac{1}{2}$, we have $p_{X_{j}}(1)(1-2\delta)+\delta \leq \tau(1-2\delta)+\delta < \frac{1}{2}(1-2\delta)+\delta = \frac{1}{2}$.\footnote{Here we have used the positivity of $(1-2\delta)$, or equivalently $\delta$ being in the range $(0,\frac{1}{2})$.} The term $h_{b}([p_{X_{j}}(1)(1-2\delta)+\delta])$ is therefore strictly increasing in $p_{X_{j}}(1)$ and is at most $h_{b}(\tau * \delta)$.\footnote{This is consequence of $p_{X_{j}}(1)\leq \tau$.} Moreover, the condition for equality in Jensen's inequality implies $R_{j} = h_{b}(\tau * \delta)-h_{b}(\delta)$ if and only if $p_{X_{j}|Q}(1|q)=\tau$ for all $q \in \TimeSharingRVSet$ that satisfies $p_{\TimeSharingRV}(\timeshare)>0$. We have therefore proved our first claim.

Our second claim is an analogous statement for $p_{X_{1}|Q}(1|q)$. In particular, our second claim is that 
$p_{X_{1}|Q}(1|q)=\tau_{1}$ for each $q \in \TimeSharingRVSet$ of positive probability. We begin with the upper bound on $R_{1}$ in (\ref{Eqn:3To1ICUpperBoundOnR1}). As in proof of proposition \ref{Prop:StrictSub-OptimalityOfHanKobayashi}, we let $\tilde{\TimeSharingRVSet} \define \TimeSharingRVSet \times \SemiPrivateRVSet_{2} \times \SemiPrivateRVSet_{3}$, $\tilde{q} = (q,u_{2},u_{3}) \in \tilde{\TimeSharingRVSet}$ denote a generic element and $\tilde{Q} \define (\TimeSharingRV,\SemiPrivateRV_{2},\SemiPrivateRV_{3})$. The steps we employ in proving the second claim borrows steps from proof of theorem \ref{Prop:StrictSub-OptimalityOfHanKobayashi} and the proof of the first claim presented above. Note that
\begin{eqnarray}
 \label{Eqn:3To1ORICProvingTheSecondClaimOnR1}
\lefteqn{R_{1} \leq I(X_{1};Y_{1}|\tilde{\TimeSharingRV}) = H(Y_{1}|\tilde{\TimeSharingRV}) -
H(Y_{1}|\tilde{\TimeSharingRV}X_{1})}\nonumber\\
&=&\sum_{\tilde{\timeshare}}p_{\tilde{\TimeSharingRV}}(\tilde{\timeshare})H(Y_{1}|\tilde{\TimeSharingRV}=\tilde{\timeshare}
)-\sum_{x_{ 1 },\tilde{\timeshare}}p_{\tilde{\TimeSharingRV}X_{1}}(\tilde{\timeshare},x_{1})
H(Y_{1}|X_{1}=x_{1},\tilde{\TimeSharingRV}=\tilde{\timeshare})\nonumber\\
&=&\sum_{\tilde{\timeshare}}p_{\tilde{\TimeSharingRV}}(\tilde{\timeshare})H(X_{1}\oplus
N_{1}\oplus  (X_{2}\vee X_{3}) |\tilde{\TimeSharingRV}=\tilde{\timeshare}
)-\sum_{x_{ 1
},\tilde{\timeshare}}p_{X_{1}\tilde{\TimeSharingRV}}(x_{1,}\tilde{\timeshare})
H(x_{1}\oplus N_{1}\oplus  (X_{2}\vee X_{3})|X_{1}=x_{1},\tilde{\TimeSharingRV}=\tilde{\timeshare})\nonumber\\
\label{Eqn:3To1ORICX2X3N1IndependentOfX1GivenQ}
&=&\sum_{\tilde{\timeshare}}p_{\tilde{\TimeSharingRV}}(\tilde{\timeshare})H(X_{1}\oplus
N_{1}\oplus  (X_{2}\vee X_{3}) |\tilde{\TimeSharingRV}=\tilde{\timeshare})-\sum_{x_{ 1
},\tilde{\timeshare}}p_{X_{1}\tilde{\TimeSharingRV}}(x_{1,}\tilde{\timeshare})
H(N_{1}\oplus  (X_{2}\vee X_{3})
|\tilde{\TimeSharingRV}=\tilde{\timeshare})\\
&=&\sum_{\tilde{\timeshare}}p_{\tilde{\TimeSharingRV}}(\tilde{\timeshare})H(X_{1}\oplus
N_{1}\oplus  (X_{2}\vee X_{3}) |\tilde{\TimeSharingRV}=\tilde{\timeshare}
)-\sum_{\tilde{\timeshare}}p_{\tilde{\TimeSharingRV}}(\tilde{\timeshare})
H(N_{1}\oplus (X_{2}\vee X_{3})
|\tilde{\TimeSharingRV}=\tilde{\timeshare})\nonumber\\
&\leq&\sum_{\tilde{\timeshare}}p_{\tilde{\TimeSharingRV}}(\tilde{\timeshare})H(X_{1}\oplus
N_{1}|\tilde{\TimeSharingRV}=\tilde{\timeshare}
)\label{Eqn:3To1ORICInequalityDueToUncertainityInX2PlusX3GivenU2U3}-\sum_{\tilde{\timeshare}} p_ { \tilde{\TimeSharingRV} } (\tilde{\timeshare} )
H(N_{1}|\tilde{\TimeSharingRV}=\tilde{\timeshare})=\sum_{\timeshare}p_{\tilde{\TimeSharingRV}}(\tilde{\timeshare})H(X_{1}\oplus N_{1}|\tilde{\TimeSharingRV}=\tilde{\timeshare})-h_{b}(\delta_{1})\\
\label{Eqn:3To1ORICJensen'sInequalityAndCostConstraint}
&=&\sum_{\tilde{\timeshare}}p_{\tilde{\TimeSharingRV}}(\tilde{\timeshare})h_{b}(\tau_{1\tilde{q}}*\delta_{1})- h_{b}(\delta_{1})\leq h_{b}(\Expectation_{\tilde{\TimeSharingRV}}\left\{ \tau_{1\tilde{q}}*\delta_{1} \right\})- h_{b}(\delta_{1})= h_{b}(p_{X_{1}}(1)*\delta_{1})-h_{b}(\delta_{1}),
\end{eqnarray}
where (i) (\ref{Eqn:3To1ORICX2X3N1IndependentOfX1GivenQ}) follows from independence of $(N_{1},X_{2},X_{3})$ and $X_{1}$ conditioned on realization of $\tilde{\TimeSharingRV}$, (ii) (\ref{Eqn:3To1ORICInequalityDueToUncertainityInX2PlusX3GivenU2U3}) follows from substituting $p_{X_{1}\oplus N_{1}|\tilde{\TimeSharingRV}}(\cdot|\tilde{\timeshare})$ for $p_{Z_{1}}$,
$p_{N_{1}|\tilde{\TimeSharingRV}}(\cdot|\tilde{\timeshare})$ for $p_{Z_{2}}$ and $p_{X_{2}\vee
X_{3}|\tilde{\TimeSharingRV}}(\cdot|\tilde{\timeshare})$ for $p_{Z_{3}}$ in lemma \ref{Lem:AddingAnIndependentRandomVariableReducesDifferenceInEntropies}, (iii) the first inequality in (\ref{Eqn:3To1ORICJensen'sInequalityAndCostConstraint}) follows from Jensen's inequality. Since $p_{X_{1}}(1)\leq \tau_{1}< \frac{1}{2}$, we have $p_{X_{1}}(1)*\delta_{1}=p_{X_{1}}(1-\delta_{1})+(1-p_{X_{1}}(1))\delta_{1} =p_{X_{1}}(1)(1-2\delta_{1})+\delta_{1} \leq \tau_{1}(1-2\delta_{1})+\delta_{1} \leq \frac{1}{2}(1-2\delta_{1})+\delta_{1}=\frac{1}{2}$. Therefore $h_{b}(p_{X_{1}}(1)*\delta_{1})$ is increasing\footnote{This also employs the positivity of $1-2\delta_{1}$, or equivalently $\delta_{1}$ being in the range $(0,\frac{1}{2})$.} in $p_{X_{1}}(1)$ and is bounded above by $h_{b}(\tau_{1}*\delta_{1})$.\footnote{This is consequence of $p_{X_{1}}(1)\leq \tau_{1}$.} Moreover, the condition for equality in Jensen's inequality implies $R_{1} = h_{b}(\tau_{1} * \delta_{1})-h_{b}(\delta_{1})$ if and only if $p_{X_{1}|\tilde{\TimeSharingRV}}(1|\tilde{q})=\tau_{1}$ for all $\tilde{q} \in \tilde{\TimeSharingRVSet}$. We have therefore proved our second claim.\footnote{We have only proved $p_{X_{1}|QU_{2}U_{3}}(1|q,u_{2},u_{3}=\tau_{1})$ for all $(q,u_{2},u_{3}) \in \TimeSharingRVSet \times \SemiPrivateRVSet_{2} \times \SemiPrivateRVSet_{3}$ of positive probability. The claim now follows from conditional independence of $X_{1}$ and $U_{2},U_{3}$ given $\TimeSharingRV$.}

Our third claim is that either $H(X_{2}|\TimeSharingRV,\SemiPrivateRV_{2}) >0$ or $H(X_{3}|\TimeSharingRV,\SemiPrivateRV_{3}) >0$. Suppose not, i.e., $H(X_{2}|\TimeSharingRV,\SemiPrivateRV_{2})=H(X_{3}|\TimeSharingRV,\SemiPrivateRV_{3}) =0$. In this case, the upper bound on $R_{1}+R_{2}+R_{3}$ in (\ref{Eqn:SumRateBoundOn3To1IC}) is 
\begin{eqnarray}
\label{Eqn:3To1ORICUpperBoundOnSumOfTheUseRates}
R_{1}+R_{2}+R_{3} &\leq& I(X_{2},X_{3},X_{1};Y_{1}|Q)=H(Y_{1}|Q)-H(Y_{1}|Q,X_{1},X_{2},X_{3})\nonumber\\
&=&H(X_{1}\oplus (X_{2}\vee X_{3})\oplus N_{1}|Q)-H(X_{1}\oplus (X_{2}\vee X_{3})\oplus N_{1}|Q,X_{1},X_{2},X_{3})\nonumber\\
&=&h_{b}(\tau_{1}(1-\beta)+(1-\tau_{1})\beta)-h_{b}(\delta_{1}),\nonumber
\end{eqnarray}
where the last equality follows from substituting $p_{X_{j}|Q}:j=1,2,3$ derived in the earlier two claims.\footnote{$\beta \define (1-\tau)^{2}\delta_{1}+(2\tau-\tau^{2})(1-\delta_{1})$ is as defined in the statement of the lemma.} The hypothesis (\ref{Eqn:3To1-OR-ICConditionForStrictSubOptimalityOfHanKobayashi}) therefore precludes $(h_{b}(\tau_{1} * \delta_{1})-h_{b}(\delta_{1}),h_{b}(\tau * \delta)-h_{b}(\delta),h_{b}(\tau * \delta)-h_{b}(\delta)) \in \alpha^{\three-1}_{f}(p_{QU_{2}U_{3}\ulineInputRV\ulineOutputRV})$ if $H(X_{2}|\TimeSharingRV,\SemiPrivateRV_{2})=H(X_{3}|\TimeSharingRV,\SemiPrivateRV_{3}) =0$. This proves our third claim.

Our fourth claim is $H(X_{2}\vee X_{3}|\TimeSharingRV,\SemiPrivateRV_{2},\SemiPrivateRV_{3})>0$. The proof of this claim rests on each of the earlier three claims. Note that we have either $H(X_{2}|\TimeSharingRV,\SemiPrivateRV_{2}) >0$ or $H(X_{3}|\TimeSharingRV,\SemiPrivateRV_{3}) >0$. Without loss of generality, we assume $H(X_{2}|\TimeSharingRV,\SemiPrivateRV_{2}) >0$. Note that \begin{eqnarray}H(X_{2}|QU_{2})=\sum_{q \in \TimeSharingRVSet} p_{Q}(q)\sum_{u_{2} \in \SemiPrivateRVSet_{2}}p_{U_{2}|Q}(u_{2}|q)H(X_{2}|U_{2}=u_{2},Q=q)>0.\nonumber
\end{eqnarray}
There exists $q^{*} \in \TimeSharingRVSet$ such that $p_{Q}(q^{*})>0$ and $H(X_{2}|U_{2},Q=q^{*}) = \sum_{u_{2} \in \SemiPrivateRVSet_{2}}p_{U_{2}|Q}(u_{2}|q^{*})H(X_{2}|U_{2}=u_{2},Q=q^{*})>0$. We therefore have a $u_{2}^{*} \in \SemiPrivateRVSet_{2}$ such that $p_{U_{2}|Q}(u_{2}^{*}|q^{*})>0$ and $H(X_{2}|U_{2}=u_{2}^{*},Q=q^{*})>0$. This implies $p_{X_{2}|U_{2}Q}(x_{2}|u_{2}^{*},q^{*}) \notin \{0,1\}$ for each $x_{2} \in \{0,1\}$.

Since $p_{Q}(q^{*})>0$, from the first claim we have \begin{eqnarray}0<1-\tau=p_{X_{3}|Q}(0|q^{*})= \sum_{u_{3} \in \SemiPrivateRVSet_{3}}p_{X_{3}U_{3}|Q}(0,u_{3}|q^{*}).\nonumber\end{eqnarray}
This guarantees existence of $u_{3}^{*} \in \SemiPrivateRVSet_{3}$ such that $p_{X_{3}U_{3}|Q}(0,u_{3}^{*}|q^{*})>0$. We therefore have $p_{U_{3}|Q}(u_{3}^{*}|q^{*})>0$ and $1\geq p_{X_{3}|U_{3}Q}(0|u_{3}^{*},q^{*})>0$.

We have therefore identified $(q^{*},u_{2}^{*},u_{3}^{*}) \in \TimeSharingRVSet \times \SemiPrivateRVSet_{2} \times \SemiPrivateRVSet_{3}$ such that $p_{Q}(q^{*})>0$, $p_{U_{2}|Q}(u_{2}^{*}|q^{*})>0$, $p_{U_{3}|Q}(u_{3}^{*}|q^{*})>0$, $p_{X_{2}|U_{2}Q}(x_{2}|u_{2}^{*},q^{*}) \notin \{0,1\}$ for each $x_{2} \in \{0,1\}$ and $1\geq p_{X_{3}|U_{3}Q}(0|u_{3}^{*},q^{*})>0$. By conditional independence of the pairs $(X_{2},U_{2})$ and $(X_{3},U_{3})$ given $Q$, we also have $p_{X_{2}|U_{2}U_{3}Q}(x_{2}|u_{2}^{*},u_{3}^{*},q^{*}) \notin \{0,1\}$ for each $x_{2} \in \{0,1\}$ and $1\geq p_{X_{3}|U_{2}U_{3}Q}(0|u_{2}^{*},u_{3}^{*},q^{*})>0$. The reader may now verify $p_{X_{2}\vee X_{3}|U_{2}U_{3}Q}(x|u_{2}^{*},u_{3}^{*},q^{*}) \notin \{0,1\}$ for each $x \in \{0,1\}$. Since $p_{QU_{2}U_{3}}(q^{*},u_{2}^{*},u_{3}^{*})=p_{Q}(q^{*})p_{U_{2}|Q}(u_{2}^{*}|q^{*})p_{U_{3}|Q}(u_{3}^{*}|q^{*})>0$, we have proved the fourth claim.

Our fifth and final claim is $R_{1}< h_{b}(\tau_{1}*\delta_{1})-h_{b}(\delta_{1})$. This follows from a sequence of steps employed in proof of the second claim or in the proof of theorem. Denoting $\tilde{\TimeSharingRV} \define
(\TimeSharingRV,\SemiPrivateRV_{2},\SemiPrivateRV_{3})$ and a generic element
$\tilde{\timeshare} \define (\timeshare,u_{2},u_{3}) \in \tilde{\TimeSharingRVSet}
\define \TimeSharingRVSet \times \SemiPrivateRVSet_{2} \times \SemiPrivateRVSet_{3}$, we
observe that
\begin{eqnarray}
 \label{Eqn:3To1ORICSecondProvingTheSecondClaimOnR1}
\lefteqn{R_{1} \leq I(X_{1};Y_{1}|\tilde{\TimeSharingRV}) =\sum_{\tilde{\timeshare}}p_{\tilde{\TimeSharingRV}}(\tilde{\timeshare})H(Y_{1}|\tilde{\TimeSharingRV}=\tilde{\timeshare}
)-\sum_{x_{ 1 },\tilde{\timeshare}}p_{\tilde{\TimeSharingRV}X_{1}}(\tilde{\timeshare},x_{1})
H(Y_{1}|X_{1}=x_{1},\tilde{\TimeSharingRV}=\tilde{\timeshare})}\nonumber\\
&=&\sum_{\tilde{\timeshare}}p_{\tilde{\TimeSharingRV}}(\tilde{\timeshare})H(X_{1}\oplus
N_{1}\oplus  (X_{2}\vee X_{3}) |\tilde{\TimeSharingRV}=\tilde{\timeshare}
)-\sum_{x_{ 1
},\tilde{\timeshare}}p_{X_{1}\tilde{\TimeSharingRV}}(x_{1,}\tilde{\timeshare})
H(x_{1}\oplus N_{1}\oplus  (X_{2}\vee X_{3})|X_{1}=x_{1},\tilde{\TimeSharingRV}=\tilde{\timeshare})\nonumber\\
\label{Eqn:3To1ORICSecondX2X3N1IndependentOfX1GivenQ}
&=&\sum_{\tilde{\timeshare}}p_{\tilde{\TimeSharingRV}}(\tilde{\timeshare})H(X_{1}\oplus
N_{1}\oplus  (X_{2}\vee X_{3}) |\tilde{\TimeSharingRV}=\tilde{\timeshare})-\sum_{x_{ 1
},\tilde{\timeshare}}p_{X_{1}\tilde{\TimeSharingRV}}(x_{1,}\tilde{\timeshare})
H(N_{1}\oplus  (X_{2}\vee X_{3})
|\tilde{\TimeSharingRV}=\tilde{\timeshare})\\
&=&\sum_{\tilde{\timeshare}}p_{\tilde{\TimeSharingRV}}(\tilde{\timeshare})H(X_{1}\oplus
N_{1}\oplus  (X_{2}\vee X_{3}) |\tilde{\TimeSharingRV}=\tilde{\timeshare}
)-\sum_{\tilde{\timeshare}}p_{\tilde{\TimeSharingRV}}(\tilde{\timeshare})
H(N_{1}\oplus (X_{2}\vee X_{3})
|\tilde{\TimeSharingRV}=\tilde{\timeshare})\nonumber\\
&<&\sum_{\tilde{\timeshare}}p_{\tilde{\TimeSharingRV}}(\tilde{\timeshare})H(X_{1}\oplus
N_{1}|\tilde{\TimeSharingRV}=\tilde{\timeshare}
)\label{Eqn:3To1ORICSecondInequalityDueToUncertainityInX2PlusX3GivenU2U3}-\sum_{\tilde{\timeshare}} p_ { \tilde{\TimeSharingRV} } (\tilde{\timeshare} )
H(N_{1}|\tilde{\TimeSharingRV}=\tilde{\timeshare})=\sum_{\timeshare}p_{\tilde{\TimeSharingRV}}(\tilde{\timeshare})H(X_{1}\oplus N_{1}|\tilde{\TimeSharingRV}=\tilde{\timeshare})-h_{b}(\delta_{1})\\
\label{Eqn:3To1ORICSecondJensen'sInequalityAndCostConstraint}
&=&\sum_{\tilde{\timeshare}}p_{\tilde{\TimeSharingRV}}(\tilde{\timeshare})h_{b}(\tau_{1\tilde{q}}*\delta_{1})- h_{b}(\delta_{1})\leq h_{b}(\Expectation_{\tilde{\TimeSharingRV}}\left\{ \tau_{1\tilde{q}}*\delta_{1} \right\})- h_{b}(\delta_{1})= h_{b}(p_{X_{1}}(1)*\delta_{1})-h_{b}(\delta_{1}),
\end{eqnarray}
where (i) (\ref{Eqn:3To1ORICSecondX2X3N1IndependentOfX1GivenQ}) follows from independence of $(N_{1},X_{2},X_{3})$ and $X_{1}$ conditioned on realization of $\tilde{\TimeSharingRV}$, (ii) (\ref{Eqn:3To1ORICSecondInequalityDueToUncertainityInX2PlusX3GivenU2U3}) follows from existence of a $\tilde{\timeshare}^{*} \in \tilde{\TimeSharingRVSet}$ for which $H(X_{2}\vee X_{3}|\tilde{\TimeSharingRV}=\tilde{\timeshare}^{*})>0$ and substituting $p_{X_{1}\oplus N_{1}|\tilde{\TimeSharingRV}}(\cdot|\tilde{\timeshare}^{*})$ for $p_{Z_{1}}$,
$p_{N_{1}|\tilde{\TimeSharingRV}}(\cdot|\tilde{\timeshare}^{*})$ for $p_{Z_{2}}$ and $p_{X_{2}\vee
X_{3}|\tilde{\TimeSharingRV}}(\cdot|\tilde{\timeshare}^{*})$ for $p_{Z_{3}}$ in lemma \ref{Lem:AddingAnIndependentRandomVariableReducesDifferenceInEntropies}, (iii) the first inequality in (\ref{Eqn:3To1ORICSecondJensen'sInequalityAndCostConstraint}) follows from Jensen's inequality. Since $p_{X_{1}}(1)*\delta_{1}=p_{X_{1}}(1-\delta_{1})+(1-p_{X_{1}}(1))\delta_{1} =p_{X_{1}}(1)(1-2\delta_{1})+\delta_{1} \leq \tau_{1}(1-2\delta_{1})+\delta_{1} \leq \frac{1}{2}(1-2\delta_{1})+\delta_{1}=\frac{1}{2}$. Therefore $h_{b}(p_{X_{1}}(1)*\delta_{1})$ is increasing\footnote{This also employs the positivity of $1-2\delta_{1}$, or equivalently $\delta_{1}$ being in the range $(0,\frac{1}{2})$.} in $p_{X_{1}}(1)$ and is bounded above by $h_{b}(\tau_{1}*\delta_{1})$. We therefore have $R_{1}< h_{b}(\tau_{1}*\delta_{1})-h_{b}(\delta_{1})$.\end{proof}}\fi Conditions (\ref{Eqn:3To1-OR-ICConditionForAchievabilityInProp}) and (\ref{Eqn:3To1-OR-ICConditionForStrictSubOptimalityOfHKInProp}) are \textit{not} mutually exclusive. It maybe verified that the choice $\tau_{1}=\frac{1}{90}$, $\tau=0.15$, $\delta_{1}=0.01$ and $\delta=0.067$ satisfies both conditions, thereby establishing the utility of structured codes for examples well beyond particular additive ones.

A skeptical reader will wonder whether the utility of PCC depends crucially on the additive multiple access channel (MAC) $Y_{1}=X_{1}\oplus (X_{2}\vee X_{3})\oplus N_{1}$. The following example provides conclusive evidence that this is indeed not the case.

\begin{example}
 \label{Ex:3To1ORICCoupledThroughNonAdditiveMAC}
Consider a binary \\3to1IC illustrated in figure
\ref{Fig:3To1ORICWithNonAdditiveMAC} with
$\InputAlphabet_{j}=\OutputAlphabet_{j}=\left\{ 0,1 \right\}:j \in
[3]$ with channel transition probabilities
$W_{\ulineOutputRV|\ulineInputRV}(\ulineoutput|\ulineinput)=MAC(y_{1}|x_{1},
x_{2} \vee x_{3})BSC_{\delta}(y_{2}|x_{2})BSC_{\delta}(y_{3}|x_{3})$,
where $MAC(0|0,0)=0.989, MAC(0|0,1)=0.01, MAC(0|1,0)=0.02,
MAC(0|1,1)=0.993$ and $MAC(0|b,c)+MAC(1|b,c)=1$ for each $(b,c) \in \{
0,1 \}^{2}$. Users' inputs are constrained with respect to a Hamming
cost function, i.e., $\costfn_{j}(x)=x$ for $x \in \left\{ 0,1
\right\}$. Assume that user $j$th input is constrained to an average cost per symbol of $\tau_{j} \in (0,\frac{1}{2})$, where $\tau\define\tau_{2}=\tau_{3}$.
\end{example}
\begin{figure}
\begin{minipage}{.5\textwidth}
\centering\includegraphics[height=1.8in,width=1.9in]{3To1ORIC}
\caption{A binary $3-$to$-1$ IC described in example \ref{Ex:A3To1-OR-IC}.}
\label{Fig:A3To1ORIC}
\end{minipage}
\begin{minipage}{.5\textwidth}
\centering\includegraphics[height=1.8in,width=1.9in]{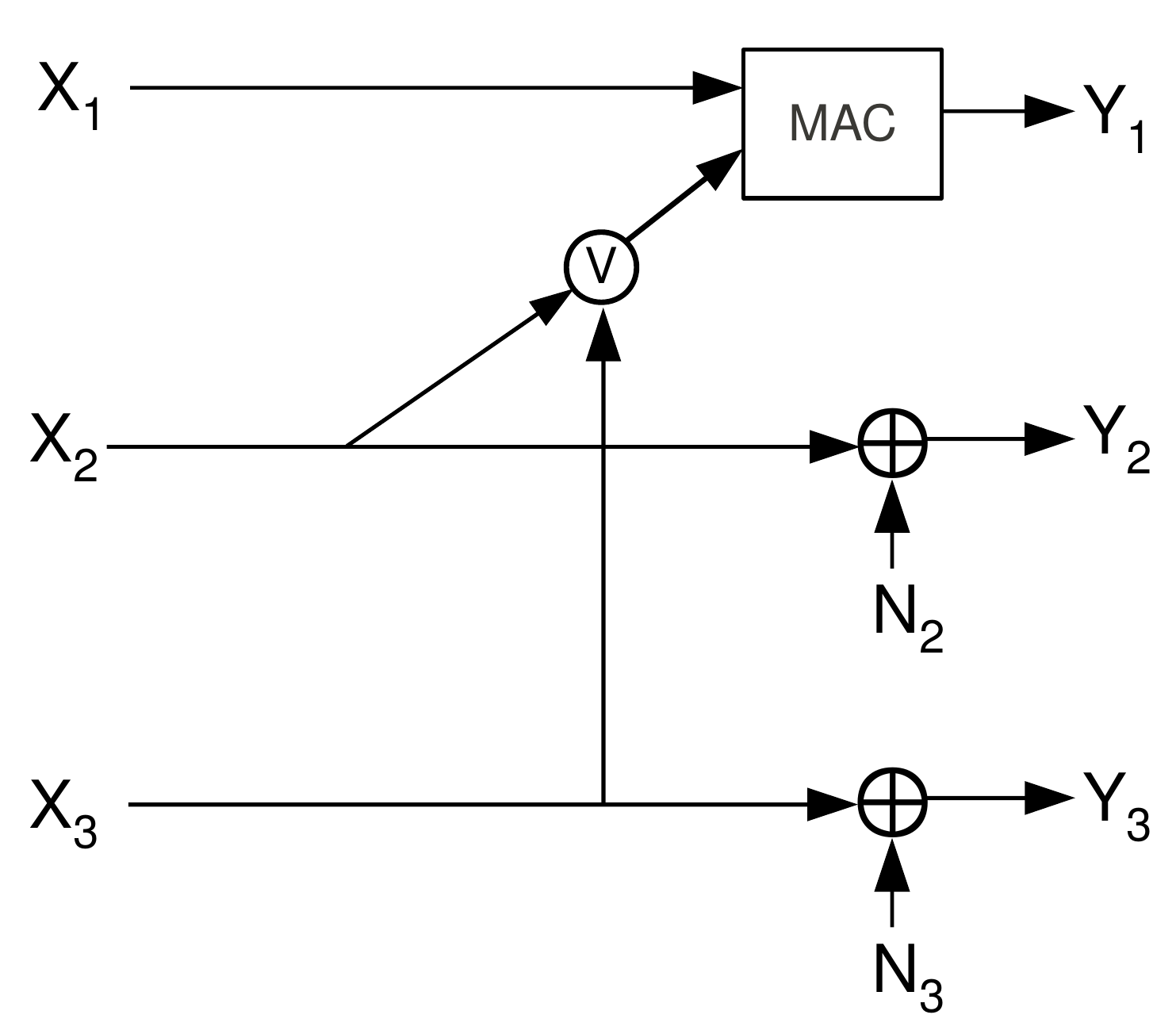}
\caption{A binary $3-$to$-1$ IC described in example \ref{Ex:3To1ORICCoupledThroughNonAdditiveMAC}.}
\label{Fig:3To1ORICWithNonAdditiveMAC}
\end{minipage}
\end{figure}
% \begin{table}\centering
% \begin{tabular}{|c|c|} \hline
% $MAC(0|0,0) = $ & $MAC(1|0,0) = $ \\
% \hline
% $MAC(0|0,1) = $ & $MAC(1|0,1) = $ \\
% \hline
% $MAC(0|1,0) = $ & $MAC(1|1,0) = $ \\
% \hline
% $MAC(0|1,1) = $ & $MAC(1|1,1) = $ \\
% \hline
% \end{tabular}\caption{Transition probabilities of non-additive MAC present in example \ref{Ex:3To1ORICCoupledThroughNonAdditiveMAC}} \label{Table:TransProbsOfNonAdditiveMAC} 
% \end{table}
Our study of example \ref{Ex:3To1ORICCoupledThroughNonAdditiveMAC} closely mimics that of example \ref{Ex:A3To1-OR-IC}. In particular, we derive conditions under which $\underline{C}^{*}\define(C_{1},h_{b}(\tau * \delta))-h_{b}(\delta),h_{b}(\tau * \delta))-h_{b}(\delta)) \in \alpha_{f}^{3-1}(\ulinecost)$ and $\underline{C}^{*} \notin \alpha_{u}(\ulinecost)$, where $\ulinecost \define (\tau_{1},\tau,\tau)$,
\begin{eqnarray}
 \label{Eqn:TestChannelsThatEnsureUsers2And3AchieveCapacity}
\lefteqn{C_{1} \define \underset{p_{\underline{X}\ulineOutputRV} \in \mathcal{D}(\ulinecost )}{\sup} I(X_{1};Y_{1}|X_{2}\vee X_{3}),}\\ \!\!\!\!\!\!\!\!\label{Eqn:TestChannelsNonAdditive3To1IC}&\!\!\!\!\!\!\!\!\mathcal{D}(\ulinecost) \define \left\{ \begin{array}{c}p_{\ulineInputRV\ulineOutputRV} \mbox{ is a pmf on }\ulineInputAlphabet \times \ulineOutputAlphabet\mbox{ such that (i) } p_{\ulineOutputRV|\ulineInputRV}=W_{\ulineOutputRV|\ulineInputRV}\mbox{ is the channel transition probabilities}\\\mbox{of example \ref{Ex:3To1ORICCoupledThroughNonAdditiveMAC}, (ii) }p_{\ulineInputRV}=p_{X_{1}}p_{X_{2}}p_{X_{3}}, p_{X_{j}}(1)=\tau\mbox{ for }j=2,3 \mbox{ and (iii) }p_{X_{1}}(1)\leq \tau_{1}\end{array}\right\}.
\end{eqnarray}
By strict concavity of $I(X_{1};Y_{1}|X_{2}\vee X_{3})$ in $p_{X_{1}}$, and the compactness of $\mathcal{D}(\ulinecost)$, there exists a unique $p^{*}_{\ulineInputRV\ulineOutputRV}$ with respect to which $I(X_{1};Y_{1}|X_{2}\vee X_{3})=C_{1}$. 

\begin{prop}
 \label{Prop:ResultForOrExampleWithMACAsAProp}
Consider example \ref{Ex:3To1ORICCoupledThroughNonAdditiveMAC} and let $\underline{C}^{*}, C_{1}, \mathcal{D}(\ulinecost),p^{*}_{\ulineInputRV\ulineOutputRV}$ be defined as above. If
\begin{eqnarray}
 \label{Eqn:3To1OrNonAddICUnstructuredCodesSubOptimalInProp}
C_{1}+2(h_{b}(\tau*\delta)-h_{b}(\delta))=I(X_{1};Y_{1}|X_{2}\vee X_{3})+2(h_{b}(\tau*\delta)-h_{b}(\delta)) > I(\ulineInputRV ;Y_{1}),
\end{eqnarray}
where $I(X_{1};Y_{1}|X_{2}\vee X_{3})$, and $I(\ulineInputRV ;Y_{1})$ are evaluated with respect to $p^{*}_{\ulineInputRV\ulineOutputRV}$, then $\underline{C}^{*} \notin \alpha_{u}(\ulinecost)$.
If
$h_{b}(\tau^{2})+(1-\tau^{2})h_{b}(\frac{(1-\tau)^{2}}{1-\tau^{2}})+H(Y_{1}|X_{2}\vee
X_{3})-H(Y_{1}) \leq \min\{ H(X_{2}|Y_{2})H(X_{3}|Y_{3})\}$, where the
entropy terms are evaluated with respect to $p^{*}_{\ulineInputRV\ulineOutputRV}$, then $\underline{C}^* \in \alpha_{f}^{3-1}(\ulinecost)$.
\end{prop}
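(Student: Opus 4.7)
The plan is to adapt the two-part strategy of Proposition \ref{Prop:StrictSub-OptimalityOfHanKobayashi} and Corollary \ref{Cor:PCCAchievesCapacityForAdditive3-To-1IC} to the non-additive MAC of Example \ref{Ex:3To1ORICCoupledThroughNonAdditiveMAC}, replacing the additive-noise identities used earlier by convexity/concavity arguments for mutual information with respect to channel and input distributions respectively.

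For the first assertion, suppose $\underline{C}^* \in \alpha_u^{3-1}(p_{\TimeSharingRV U_2 U_3 \underline{X} \underline{Y}})$ for some $p \in \SetOfDistributions_u(\underline{\tau})$, and derive a contradiction in four steps. First, Jensen applied to $R_j \leq I(U_jX_j; Y_j|\TimeSharingRV) = H(Y_j|\TimeSharingRV) - h_b(\delta)$ together with the cost constraint $p_{X_j}(1)\leq \tau$ forces $p_{X_j|\TimeSharingRV}(1|q) = \tau$ for $j=2,3$ and every $q$ in the support of $\TimeSharingRV$. Second, writing $V = X_2 \vee X_3$, the chain
\[R_1 \leq I(X_1; Y_1|\TimeSharingRV, U_2, U_3) \leq I(X_1; Y_1|\TimeSharingRV, V) \leq C_1\]
uses convexity of mutual information in the channel (first inequality: the channel from $X_1$ to $Y_1$ given $(\TimeSharingRV, U_2, U_3)$ is a mixture over $V$) followed by concavity in the input together with the cost constraint $p_{X_1}(1) \leq \tau_1$ and the definition of $C_1$; equality $R_1 = C_1$ forces both $H(V|\TimeSharingRV, U_2, U_3)=0$ and $p_{X_1|\TimeSharingRV}(\cdot|q) = p^*_{X_1}$ for every $q$ in support. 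Third, the conditional independence of $(X_2, U_2)$ and $(X_3, U_3)$ given $\TimeSharingRV$ yields $P(V=0|q, u_2, u_3) = P(X_2=0|q, u_2)\, P(X_3=0|q, u_3)$; this product must lie in $\{0,1\}$ on the support, which combined with the non-degenerate marginal $p_{X_j}(1) = \tau \in (0,1)$ forces $P(X_j = 0|q, u_j) \in \{0,1\}$, i.e., $H(X_j|\TimeSharingRV, U_j) = 0$. Fourth, the vanishing $I(X_j; Y_j|\TimeSharingRV, U_j) = 0$ collapses the sum rate bound (\ref{Eqn:SumRateBoundOn3To1IC}) to $R_1 + R_2 + R_3 \leq I(U_2 U_3 X_1; Y_1|\TimeSharingRV) \leq I(\underline{X}; Y_1)$ (the final inequality by concavity in the input, using that the conditional marginals of $\underline{X}$ given $\TimeSharingRV$ all equal $p^*_{\underline{X}}$); substituting $R_1 + R_2 + R_3 = C_1 + 2(h_b(\tau*\delta) - h_b(\delta))$ contradicts (\ref{Eqn:3To1OrNonAddICUnstructuredCodesSubOptimalInProp}).

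For the second assertion, I would exhibit an explicit test channel in $\SetOfDistributions_f(\underline{\tau})$ and verify the three rate bounds of Theorem \ref{Thm:AchievableRateRegionFor3To1ICUsingCosetCodes}. Take $\TimeSharingRVSet = \phi$, $\SemiPrivateRVSet_2 = \SemiPrivateRVSet_3 = \mathbb{F}_3$, set $U_j = X_j$ with $X_j \in \{0,1\} \subset \mathbb{F}_3$ and $p_{X_j}(1) = \tau$ for $j=2,3$, and $p_{X_1} = p^*_{X_1}$. The key algebraic observation is that the ternary sum $X_2 \oplus_3 X_3$ functionally determines $V = X_2 \vee X_3$ (namely, $V = 0$ iff $X_2 \oplus_3 X_3 = 0$), so $Y_1 - V - (X_2 \oplus_3 X_3)$ is Markov, which gives $I(X_1; U_2 \oplus U_3, Y_1) = I(X_1; Y_1|V) = C_1$ and $I(X_2 \oplus_3 X_3; Y_1) = I(V; Y_1)$. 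Computing $H(X_2 \oplus_3 X_3) = h_b(\tau^2) + (1-\tau^2) h_b((1-\tau)^2/(1-\tau^2))$ from the ternary probabilities $((1-\tau)^2, 2\tau(1-\tau), \tau^2)$ reveals that $H(U_2 \oplus U_3|Y_1) = H(X_2 \oplus_3 X_3) - I(V; Y_1)$ is precisely the left-hand side of the stated hypothesis. The three constraints then reduce to: (i) $H(U_2 \oplus U_3|Y_1) \leq h_b(\tau)$ (so the $\min\{0,\cdot\}$ term vanishes and the $R_1$-bound becomes $R_1 < C_1$); (ii) the trivial $R_j < h_b(\tau*\delta) - h_b(\delta)$; and (iii) $R_1 + R_j < C_1 + h_b(\tau) - H(U_2 \oplus U_3|Y_1)$. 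Using $H(X_j|Y_j) = h_b(\tau) - (h_b(\tau*\delta) - h_b(\delta))$, both (i) and (iii) reduce to the hypothesis $H(U_2 \oplus U_3|Y_1) \leq \min\{H(X_2|Y_2), H(X_3|Y_3)\}$, and closure of $\alpha_f^{3-1}(\underline{\tau})$ then yields $\underline{C}^* \in \alpha_f^{3-1}(\underline{\tau})$.

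The principal obstacle will be step three of the first assertion, which lifts $H(V|\TimeSharingRV, U_2, U_3) = 0$ to the stronger $H(X_j|\TimeSharingRV, U_j) = 0$. This step has no counterpart in the additive setting of Proposition \ref{Prop:StrictSub-OptimalityOfHanKobayashi}, where the additive structure of the MAC directly bounds the residual uncertainty in the interference pattern; here one must instead exploit the product factorization of $P(V=0|\cdot)$ induced by the conditional independence structure of the test channel, together with non-degeneracy of the input marginals.
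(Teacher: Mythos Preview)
Your proposal is correct and uses the same ingredients as the paper's proof (Appendix \ref{AppSec:3To1ORNonAddICStrictSubOptimalityOfUnstructuredCodes}): the same test channel $U_j=X_j\in\mathbb{F}_3$ for achievability, and the same convexity/concavity of mutual information together with the product factorization $P(X_2\vee X_3=0\mid q,u_2,u_3)=P(X_2=0\mid q,u_2)P(X_3=0\mid q,u_3)$ for the converse.

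The organization of the first assertion differs. The paper argues forward: (claim~3) the hypothesis forces $H(X_j\mid Q,U_j)>0$ for some $j$; (claim~4) this gives $H(X_2\vee X_3\mid Q,U_2,U_3)>0$ by exhibiting a specific $(q^*,u_2^*,u_3^*)$; (claim~5) strict convexity in the channel then yields $R_1<C_1$. You run the contrapositive: $R_1=C_1$ forces $H(X_2\vee X_3\mid Q,U_2,U_3)=0$ directly from the equality condition in convexity (merging the paper's claims~2 and~5), then your step~3 lifts this to $H(X_j\mid Q,U_j)=0$ for \emph{both} $j$, and step~4 is the paper's claim~3. Your step~3 argument---that if some $P(X_2=0\mid q,u_2)\in(0,1)$ then every $P(X_3=0\mid q,u_3)=0$, contradicting $p_{X_3\mid Q}(0\mid q)=1-\tau>0$---is exactly the contrapositive of the paper's claim~4 construction, so the two routes are equivalent; yours is arguably more streamlined. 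One small point you leave implicit (the paper does too): the strict inequality in the convexity step requires $p^*_{X_1}W_0\neq p^*_{X_1}W_1$, where $W_v=p_{Y_1\mid X_1,V=v}$, not merely $W_0\neq W_1$; this holds for the specific MAC and the $p^*_{X_1}$ in question.
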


Please refer to appendix \ref{AppSec:3To1ORNonAddICStrictSubOptimalityOfUnstructuredCodes} for a proof. For example \ref{Ex:3To1ORICCoupledThroughNonAdditiveMAC}, with $\tau_{1}=0.01,\tau=\tau_{2}=\tau_{3}=0.1525,\delta = 0.067$, the conditions stated in proposition \ref{Prop:ResultForOrExampleWithMACAsAProp} hold simultaneously. For this channel, $p^{*}_{X_{1}}(1) = 0.99$,
\[
 C_{1}+2(h_{b}(\tau*\delta)-h_{b}(\delta))- I(\ulineInputRV ;Y_{1}) = 0.0048,
\]
and
\[
\min\{ H(X_{2}|Y_{2})H(X_{3}|Y_{3})\} - [h_{b}(\tau^{2})+(1-\tau^{2})h_{b}(\frac{(1-\tau)^{2}}{1-\tau^{2}})+H(Y_{1}|X_{2}\vee X_{3})-H(Y_{1})] = 0.0031.
\]
A note on our choice of the MAC that relates $(X_{1},X_{2}\vee X_{3})$
and $Y_{1}$ is in order. The reader will recognize that the MAC is `quite close' to the additive scenario $Y_{1}=X_{1}\oplus(X_{2}\vee X_{3})\oplus N_{1}$ studied in example \ref{Ex:A3To1-OR-IC}. In order for coset codes to outperform unstructured codes, we do not need the MAC to be so `close' to the additive MAC. The need for the MAC to be `so close' is a consequence of our desire to provide an \textit{analytical} proof for strict sub-optimality of unstructured codes. Note that since we (i) do not resort to outer bounds, (ii) wish to provide analytical upper bounds to the rates achievable using unstructured codes, and (iii) cannot compute any of the associated rates in a reasonable time, we demand the MAC to be such that coset codes achieve the maximum possible rate for user $1$, with users $2$ and $3$ constrained to achieve their PTP capacities,\footnote{Note that we are demanding the channel to permit user $1$ communicate at a rate as though the receiver knew all of the non-linear interference.} and unstructured codes to be strictly sub-optimal. Finally, the above findings indicate that if structured codes yield gains for a particular channel, then one can reason out the presence of such gains for a slightly perturbed channel simply by appealing to the continuity of rate regions in the channel parameters.

In the achievable rate region presented in Theorem
\ref{Thm:AchievableRateRegionFor3To1ICUsingCosetCodes} for a 
given $3$-IC, there is a union over finite fields. Suppose we want
to maximize $\mu_1R_1+\mu_2 R_2+\mu_3R_3$ for some non-negative vector
$\underline{\mu}$ such that $\|\underline{\mu}\|=1$. The finite field
that maximizes this objective function depends on the channel in a
complicated way. It turns out that for a channel with a fixed
interference pattern,  as we change the cost functions
$\underline{\kappa}$, and the noise distributions, the optimizing
finite field also changes.  This is illustrated in the following example.

\begin{example}
\label{Ex:MultipleGroups}
Consider a quaternary \\3to1IC with
$\mathcal{X}_j=\mathcal{Y}_j=\{0,1,2,3\}: j \in [3]$ with transition
probabilities given by $Y_1=X_1 +_4 X_2 +_4 X_3 +_4 N_1$,
$Y_2=X_2 +_4 N_2$ and $Y_3=X_3 +_4 N_3$. $N_1,N_2$ and $N_3$
are mutually independent, and independent of the inputs, and $+_4$ denote addition modulo-$4$. Let $N_2$ and
$N_3$ have the same pmf. Note that the bivariate function
characterizing the interference in the channel is addition modulo-$4$, which
is not a finite field. 
Our \emph{objective} is to enable each user to attain the
corresponding point-to-point capacity. Note that we have not yet
specified the pmfs $P_{\underline{N}}$ of the noise vector $\underline{N}$, the cost function vector  $\underline{\kappa}$, and the cost constraint $\underline{\tau}$. For every triple $(P_{\underline{N}},\underline{\kappa},\underline{\tau})$, using Theorem \ref{Thm:AchievableRateRegionFor3To1ICUsingCosetCodes} 
(and its extension to Abelian groups given in Section \ref{Sec:AchievableRateRegionsFor3To1ICUsingAbelianGroups}),
one can find whether it is possible to attain our objective, and, if so,
one can find what is the `winning' finite field, or in general abelian group.
We will restrict our attention to the following two finite fields and an abelian
group: $\mathcal{F}_7$, $\mathcal{F}_8$ and $\mathbb{Z}_4$. This
requires appropriate maps from $\mathcal{F}_{7}$ and $\mathcal{F}_{8}$
to $\mathbb{Z}_{4}$. By doing a computer search, we have obtained the
following sample data (see table \ref{table:bestdistribs}). The rates
for the case of $\mathbb{Z}_4$ is obtained by using theorem
\ref{Thm:AchievableRateRegionFor3To1ICUsingGroupCosetCodes} from
Section
\ref{Sec:AchievableRateRegionsFor3To1ICUsingAbelianGroups}. For example, for the distribution in the first row, all
users achieve their respective capacities only with PCCs built on
$\mathcal{F}_7$.  Similarly PCCs built on $\mathcal{F}_8$ and
$\mathbb{Z}_4$ achieve optimality for the distributions of the second
and third rows respectively.  Note that
even though the interference pattern is fixed, the optimizing
algebraic structure depends on the cost function and the noise
distribution.

\begin{table} 
\begin{center}
\begin{tabular}{|c|c|c|p{0.35in}|p{0.35in}|p{0.35in}|c|} \hline \hline
Cost Functions ($\kappa_1,\kappa_2$) & Cost ($\tau_1,\tau_2)$ & Noise pmfs ($P_{N_1},P_{N_2})$ & $R_2 (\mathcal{F}_7)$ &
$R_2 (\mathcal{F}_8)$ & $R_2 (\mathbb{Z}_4)$ & $R_1$ \\
\hline \hline 
$[7.7572,    0.3170,    4.9891,  2.2048]$ & $0.8449$ & $[0.0011,    0.0094,    
0.0010,    0.9886]$ 
& $0.3300$ & $0.1489$ & $0.2556$ & $0.8449$ \\
$[0.2787,    0.3818,    0.3236,    0.6227]$ & $0.3300$ & $[0.5777,    0.1423,    0.1002,
    0.1798]$
& & & &  \\
\hline
$[6.1610,    1.1621,    5.0165,  0.0283]$ & $0.2245$ & $[0.8229,    0.0025,    0.1647,
    0.0099]$
& $0.0006$ & $0.2179$ & $0.0000$ & $0.2245$ \\
$[0.1357,    0.2906,    0.3514,    0.2344]$ & $0.2179$ &  $[0.1255,    0.1043,    0.3293,
    0.4409]$ 
&&&& \\
\hline
$[5.3368,    4.1262,    3.7326, 0.0100]$ & $0.1491$ & $[0.0132,    0.0285,    
0.0327,    0.9256]$
& $0.6241$ & $0.2952$ & $1.2832$ & $0.1491$ \\
$[1.4115, 1.9947, 1.1876, 0.9993]$ & $1.2832$ &  $[0.8752, 0.0290,
0.0034, 0.0924]$ 
&&&& \\
\hline
\hline
\end{tabular} \end{center}
\vspace{0.1in}
\caption{Examples of cost functions and noise distributions with
  $\kappa_2=\kappa_3$, $P_{N_2}=P_{N_3}$,  $\tau_2=\tau_3$ and $R_2=R_3$.} \label{table:bestdistribs}
\end{table}

\end{example}

\subsection{Step II: PCC rate region for a general discrete $3-$IC using codes built over finite fields}
\label{Sec:AchievableRateRegion3ICUsingNestedCosetCodes}
In this section, we employ PCC to manage interference seen by every
receiver. We describe the coding technique and provide
a characterization of the corresponding achievable rate region. In the
interest of brevity, we omit the proof of achievability. All the
non-trivial elements have been detailed in the
proof of theorem
\ref{Thm:AchievableRateRegionFor3To1ICUsingCosetCodes}. 

User $j$ splits its message $M_{j}$ of rate
$R_{j}=L_{j}+T_{ji}+T_{jk}$ into three parts
$(M_{ji}^{U},M_{jk}^{U},M_{j}^{X})$, where $i,j,k$ are distinct
indices in $\left\{ 1,2,3 \right\}$. Let
$\SemiPrivateRVSet_{ji}=\fieldpii, \SemiPrivateRVSet_{jk}=\fieldpik$
be finite fields. Let $\lambda_{ji} \subseteq
\SemiPrivateRVSet_{ji}^{n}$ denote an $(n,s_{ji},t_{ji})$ PCC and
$\lambda_{jk} \subseteq \SemiPrivateRVSet_{jk}^{n}$ denote an
$(n,s_{jk},t_{jk})$ PCC. If we let $S_{ji} \define
\frac{s_{ji}}{n}\log\Prime_{i}, T_{ji} \define
\frac{t_{ji}}{n}\log\Prime_{i}$ and $S_{jk} \define
\frac{s_{jk}}{n}\log\Prime_{k}, T_{jk} \define
\frac{t_{jk}}{n}\log\Prime_{k}$, then recall that ${\lambda_{ji}},
{\lambda_{jk}}$ are coset codes of rates $S_{ji}, S_{jk}$ partitioned
into $\exp\{ nT_{ji} \}, \exp \{nT_{jk}\}$ bins respectively. Observe
that cosets $\lambda_{ji}$ and $\lambda_{ki}$ are built over the same
finite field $\fieldpii$. To contain the range the sum of these
cosets, the larger of $\lambda_{ji}$, $\lambda_{ki}$ contains the
other. A codebook $\mathcal{C}_{j}$ of rate $K_{j}+L_{j}$ is built
over $\mathcal{X}_{j}$.  
Codewords of $\mathcal{C}_{j}$ are partitioned into $\exp\left\{ nL_{j}\right\}$ bins. 
$M_{ji}^{U}$,$M_{jk}^{U}$ and $M_{j}^{X}$ index bins in
$\lambda_{ji}$, $\lambda_{jk}$ and $\mathcal{C}_{j}$
respectively. Encoder looks for a triplet of codewords from the
indexed bins that are jointly typical with respect to a pmf
$p_{U_{ji}U_{jk}X_{j}}$ defined on $\mathcal{U}_{ji}\times
\mathcal{U}_{jk} \times \mathcal{X}_{j}$. The corresponding codeword
chosen from $\mathcal{C}_{j}$ is transmitted on the channel. Decoder
$j$ receives $Y_{j}^{n}$ and looks for all triples
$(u_{ji}^{n},u_{jk}^{n},x_{j}^{n})$ of codewords in $\lambda_{ji}
\times \lambda_{jk} \times \mathcal{C}_{j}$ for which there exists a
$u^{n}_{\oplus} \in (\lambda_{ij} \oplus \lambda_{kj})$ such that
$(u_{\oplus}^{n},u_{ji}^{n},u_{jk}^{n},x_{j}^{n},Y_{j}^{n})$ are
jointly typical with respect to $p_{U_{ij}\oplus
  U_{kj},U_{ji},U_{jk},X_{j},Y_{j}}$. If it finds all such triples in
a unique triple of bins, the corresponding triple of bin indices is
declared as decoded message of user $j$. Otherwise, an error is
declared. 

The distribution induced on the ensemble of codebooks is a simple generalization of that employed in proof of theorem \ref{Thm:AchievableRateRegionFor3To1ICUsingCosetCodes}. In particular, the codewords of $\mathcal{C}_{j}$ are chosen independently according to $\underset{t=1}{\overset{n}{\prod}}p_{X_{j}|Q}(\cdot|q^{t})$, where $q^{n}$ is an appropriately chosen time sharing sequence. The three pairs $(\Lambda_{12},\Lambda_{32}), (\Lambda_{21},\Lambda_{31}), (\Lambda_{13},\Lambda_{23})$ of random PCC are mutually independent.  Within each such pair, (i) the generator matrix of the smaller PCC is obtained by choosing each of its rows uniformly and independently, and (ii) the generator matrix of the larger is obtained by appending the generator matrix of the smaller with an appropriately chosen number of mutually independent and uniformly distributed rows. All the vectors specifying the coset shifts are chosen independently and uniformly. Moreover, partitioning of all codes into their bins is effected uniformly and independently.\footnote{The reader is encouraged to confirm that the distribution induced herein is a simple generalization of that employed in proof of theorem \ref{Thm:AchievableRateRegionFor3To1ICUsingCosetCodes}.} Deriving an upper bound on the average probability of error of this random collection of codebooks coupled with the above coding technique yields the following rate region.
\begin{definition}
 \label{Defn:CollectionOfTestChannelsForCommunicatingOverGeneral3BCUsingNCC}
Let $\mathbb{D}_{f}(\underline{\tau})$ denote the collection of probability mass functions $(p_{\TimeSharingRV\underlineSemiPrivateRV \underline{X} \underlineY})$ defined on $\TimeSharingRVSet\times\underlineSemiPrivateRVSet \times \underline{\mathcal{X}} \times \underlineSetY$, where 
(i)  $\mathcal{Q}$ is an arbitrary finite set,
(ii) $\SemiPrivateRVSet_{ij}=\fieldpij$\footnote{Recall $\fieldpij$ is the finite field of cardinality $\Prime_{j}$.} for each $1\leq  i,j\leq3$, and $\underlineSemiPrivateRVSet \define \SemiPrivateRVSet_{12} \times \SemiPrivateRVSet_{13} \times \SemiPrivateRVSet_{21} \times \SemiPrivateRVSet_{23} \times \SemiPrivateRVSet_{31} \times \SemiPrivateRVSet_{32}$,
(iii) $\underlineSemiPrivateRV\define
(\SemiPrivateRV_{12},\SemiPrivateRV_{13},\SemiPrivateRV_{21},\SemiPrivateRV_{23},\SemiPrivateRV_{31},\SemiPrivateRV_{32})$,
such that (i) the three quadruples $(U_{12},U_{13},X_{1})$, $(U_{23},U_{21},X_{2})$ and $(U_{31},U_{32},X_{3})$ are conditionally mutually independent given $Q$, (ii) $p_{\underlineY| \underlineX \underlineSemiPrivateRV Q}= p_{\underlineY |\underlineX}=W_{\underlineY |\underlineX}$, (iii) $\Expectation\left\{ \kappa_{j}(X_{j}) \right\} \leq \tau_{j}$ for $j=1,2,3$.

For $p_{\TimeSharingRV\underlineSemiPrivateRV \underline{X} \underlineY} \in \mathbb{D}_{f}(\underline{\tau})$, let $\alpha_{f}(p_{\TimeSharingRV\underlineSemiPrivateRV \underline{X} \underlineY})$ be defined as the set of rate triples $(R_{1},R_{2},R_{3}) \in [0,\infty)^{3}$ for which there exists non-negative numbers $S_{ij}:ij \in \left\{12,13,21,23,31,32 \right\}, T_{jk}:jk \in \left\{12,13,21,23,31,32 \right\}, K_{j}:j \in  \left\{ 1,2,3\right\}, L_{j}:j \in \left\{1,2,3 \right\}$ that satisfy $R_{1}=T_{12}+T_{13}+L_{1}, R_{2}=T_{21}+T_{23}+L_{2}, R_{3}=T_{31}+T_{32}+L_{3}$ and
\begin{eqnarray}
 \label{Eqn:3ICManyToManySourceCodingBound1}
S_{A_{j}}-T_{A_{j}}+K_{j} &>& \sum_{a_{j} \in A_{j}}\log |\mathcal{U}_{a_{j}}| + H(X_{j}|Q) - H(U_{A_{j}},X_{j}|Q),\\
\label{Eqn:3ICManyToManySourceCodingBound2}
S_{A_{j}}-T_{A_{j}} &>& \sum_{a_{j} \in A_{j}}\log |\mathcal{U}_{a_{j}}| - H(U_{A_{j}}|Q),
\end{eqnarray}
\begin{equation}
\begin{aligned}\label{Eqn:ManyToManyICChannelCodingBounds}
\lefteqn{S_{A_{j}} < \sum_{a \in A_{j}}\log |\mathcal{U}_{a}| - H(U_{A_{j}}|Q,U_{A_{j}^{c}},U_{ij}\oplus U_{kj},X_{j},Y_{j})}
\\
\lefteqn{S_{A_{j}}+S_{ij} < \sum_{a \in A_{j}}\log |\mathcal{U}_{a}| + \log \Prime_{j} - H(U_{A_{j}},U_{ij}\oplus
U_{kj}|Q,U_{A_{j}^{c}},X_{j},Y_{j})}  \\
\lefteqn{S_{A_{j}}+S_{kj} < \sum_{a \in A_{j}}\log |\mathcal{U}_{a}| + \log \Prime_{j} - H(U_{A_{j}},U_{ij}\oplus
U_{kj}|Q,U_{A_{j}^{c}},X_{j},Y_{j})} \\
\lefteqn{
S_{A_{j}}+K_{j}+L_{j} < \sum_{a \in A_{j}}\log |\mathcal{U}_{a}|+H(X_{j}) - H(U_{A_{j}},X_{j}|Q,U_{A_{j}^{c}},U_{ij}\oplus
U_{kj},Y_{j}) } \\
\lefteqn{S_{A_{j}}+K_{j}+L_{j}+S_{ij} < \sum_{a \in A_{j}}\log |\mathcal{U}_{a}| + \log \Prime_{j}+H(X_{j}) -
H(U_{A_{j}},X_{j},U_{ij}\oplus U_{kj}|Q,U_{A_{j}^{c}},Y_{j})} \\
&S_{A_{j}}+K_{j}+L_{j}+S_{kj} < \sum_{a \in A_{j}}\log |\mathcal{U}_{a}| + \log \Prime_{j}+H(X_{j}) - H(U_{A_{j}},X_{j},U_{ij}\oplus
U_{kj}|Q,U_{A_{j}^{c}},Y_{j}), 
\end{aligned}
\end{equation}
for every $A_{j} \subseteq \left\{ ji,jk\right\}$ with distinct indices $i,j,k$ in $\left\{ 1,2,3 \right\}$, where $S_{A_{j}} \define \sum_{a_{j} \in A_{j}}S_{a_{j}},U_{A_{j}} = (U_{a_{j}}:a_{j} \in A_{j})$. Let
\begin{eqnarray}
 \label{Eqn:AchievableRateRegionFor3BCUsingNestedCosetCodes}
 \alpha_{f}(\underline{\tau}) = \cocl\left(\underset{\substack{p_{\TimeSharingRV\underlineSemiPrivateRV \underline{X} \underlineY} \in\\ \mathbb{D}_{f}(\underline{\tau})}
}{\bigcup}\alpha_{f}(p_{\TimeSharingRV\underlineSemiPrivateRV \underline{X} \underlineY})\right).\nonumber
\end{eqnarray}
\end{definition}
\begin{thm}
 \label{Thm:AchievableRateRegionFor3BCUsingNestedCosetCodes}
For 3-IC $(\underline{\mathcal{X}},\underlineSetY,W_{\underlineY|\underlineX},\kappa)$, $\alpha_{f}(\underline{\tau})$ is achievable, i.e., $\alpha_{f}(\underline{\tau}) \subseteq \mathbb{C}(\underline{\tau})$.
\end{thm}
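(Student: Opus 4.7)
The plan is to follow the same three-layer template used in Theorem \ref{Thm:AchievableRateRegionFor3To1ICUsingCosetCodes} (see appendix \ref{AppSec:ProofOfAch}), with the added bookkeeping required by the fact that \emph{each} of the three receivers now decodes a sum of two coset codewords, so that three pairs of nested random PCCs, rather than a single pair, enter the analysis. Fix $p_{\TimeSharingRV\underlineSemiPrivateRV \underline{X}\underlineY} \in \mathbb{D}_{f}(\underline{\tau})$ and a time-sharing sequence $q^{n}$ of type close to $p_{\TimeSharingRV}$. For each pair of semi-private auxiliaries destined for the same receiver, i.e., $(\SemiPrivateRV_{ij},\SemiPrivateRV_{kj})$ for each $j$ with $\{i,j,k\}=\{1,2,3\}$, independently generate a random $(n,s_{ij},t_{ij})$ PCC $\Lambda_{ij}$ and $(n,s_{kj},t_{kj})$ PCC $\Lambda_{kj}$ over $\fieldpij$ whose generator matrices are nested (rows of the smaller are a prefix of the rows of the larger), whose bias vectors are drawn uniformly and independently, and whose partitions into bins are chosen uniformly. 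The three pairs $(\Lambda_{12},\Lambda_{32})$, $(\Lambda_{21},\Lambda_{31})$, $(\Lambda_{13},\Lambda_{23})$ are mutually independent. Finally, for each $j$ generate $\mathcal{C}_{j}$ with codewords drawn conditionally independently according to $\prod_{t}p_{\InputRV_{j}|\TimeSharingRV}(\cdot|q_{t})$ and uniformly partitioned into $\exp\{nL_{j}\}$ bins.

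The encoder at user $j$, on receiving $(M_{ji}^{U},M_{jk}^{U},M_{j}^{X})$, searches for a triple $(a^{s_{ji}},a^{s_{jk}},b_{jX})$ in the indexed bins such that $(q^{n},u_{ji}^{n}(a^{s_{ji}}),u_{jk}^{n}(a^{s_{jk}}),x_{j}^{n}(M_{j}^{X},b_{jX}))$ is jointly typical with respect to $p_{\TimeSharingRV\SemiPrivateRV_{ji}\SemiPrivateRV_{jk}\InputRV_{j}}$; I would analyse the probability that no such triple exists by a second-moment (covering) computation over the random ensemble, in which the algebraic dependencies among codewords in $\Lambda_{ji}$, and separately in $\Lambda_{jk}$, are handled exactly as in the two-PCC argument of Theorem \ref{Thm:AchievableRateRegionFor3To1ICUsingCosetCodes}, but now for three bins instead of two. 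Standard inclusion-exclusion over subsets $A_{j}\subseteq\{ji,jk\}$ of codes whose codeword is \emph{fixed} versus \emph{averaged over its bin} yields the covering bounds (\ref{Eqn:3ICManyToManySourceCodingBound1})--(\ref{Eqn:3ICManyToManySourceCodingBound2}); these play the same role that the Marton-type binning bounds play in Han--Kobayashi, with the extra $\sum_{a_{j}\in A_{j}}\log|\mathcal{U}_{a_{j}}|$ terms arising because each uniformly-distributed coset codebook must over-cover the desired non-uniform marginal.

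For the decoder at user $j$ I would consider the standard list of error events: (i) the transmitted triple $(u_{ji}^{n},u_{jk}^{n},x_{j}^{n})$ together with some $u_{\oplus}^{n}\in\Lambda_{ij}\oplus\Lambda_{kj}$ fails to be jointly $\eta$-typical with $Y_{j}^{n}$ (vanishing by weak typicality and correct-decoding of the sum from the nested-coset structure), and (ii) some \emph{incorrect} triple is jointly typical with an appropriate $u_{\oplus}^{n}$. Event (ii) is decomposed according to which subset $A_{j}\subseteq\{ji,jk\}$ of the two semi-private codewords is in error, whether $X_{j}^{n}$ is in error, and whether the sum codeword $u_{\oplus}^{n}$ from $\Lambda_{ij}\oplus \Lambda_{kj}$ is in error; the six lines of (\ref{Eqn:ManyToManyICChannelCodingBounds}) correspond exactly to these subcases, with the terms $\log\Prime_{j}$ accounting for the size of the sum-code $\Lambda_{ij}\oplus\Lambda_{kj}\subseteq\fieldpij^{n}$. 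Each bound is obtained via the pairwise-independence lemma for randomly chosen cosets (cf.\ lemma \ref{Lem:3to1ICAnyPairOfCodewordsInTheTwoCloudCenterCodewordsAreIndependent}) applied within each of the three independent pairs of codes.

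The main obstacle, as in Theorem \ref{Thm:AchievableRateRegionFor3To1ICUsingCosetCodes}, is controlling the error event in which an incorrect sum codeword $u_{\oplus}^{n}\in\Lambda_{ij}\oplus\Lambda_{kj}$ is typical with the output, because codewords of the sum code are \emph{not} pairwise independent of codewords in $\Lambda_{ji},\Lambda_{jk}$ nor of $x_{j}^{n}$; the nesting of $\Lambda_{ij},\Lambda_{kj}$ forces careful conditioning on which rows of the generator matrices are shared before applying the pairwise-independence argument. Once this is handled separately for each receiver (the three pairs being independent by construction, no new cross-receiver coupling arises), a Fourier--Motzkin elimination of the auxiliary rates $S_{ij},T_{ij},K_{j},L_{j}$ subject to $R_{j}=T_{ji}+T_{jk}+L_{j}$ gives the region $\alpha_{f}(p_{\TimeSharingRV\underlineSemiPrivateRV\underline{X}\underline{Y}})$ stated in Definition \ref{Defn:CollectionOfTestChannelsForCommunicatingOverGeneral3BCUsingNCC}; the union over $\mathbb{D}_{f}(\underline{\tau})$ and the convex closure follow from a standard time-sharing argument using $Q$.
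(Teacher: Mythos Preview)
Your proposal is correct and matches the paper's approach. The paper in fact omits the detailed proof of this theorem, stating only that ``all the non-trivial elements have been detailed in the proof of theorem~\ref{Thm:AchievableRateRegionFor3To1ICUsingCosetCodes},'' and describes exactly the construction you outline: three mutually independent pairs of nested random PCCs $(\Lambda_{12},\Lambda_{32})$, $(\Lambda_{21},\Lambda_{31})$, $(\Lambda_{13},\Lambda_{23})$, satellite codebooks $\mathcal{C}_{j}$ drawn i.i.d.\ from $p_{X_{j}|Q}$, encoding by searching the indexed bins for a jointly typical triple, and decoding at receiver $j$ into $\lambda_{ji}\times\lambda_{jk}\times\mathcal{C}_{j}\times(\lambda_{ij}\oplus\lambda_{kj})$.

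One small correction: in your last step you say a Fourier--Motzkin elimination of $S_{ij},T_{ij},K_{j},L_{j}$ yields $\alpha_{f}(p_{\TimeSharingRV\underlineSemiPrivateRV\underline{X}\underline{Y}})$, but the region in Definition~\ref{Defn:CollectionOfTestChannelsForCommunicatingOverGeneral3BCUsingNCC} is already stated in pre-elimination form (as the set of $(R_{1},R_{2},R_{3})$ for which \emph{there exist} auxiliary rates satisfying (\ref{Eqn:3ICManyToManySourceCodingBound1})--(\ref{Eqn:ManyToManyICChannelCodingBounds})). So no elimination is required; the inequalities you derive from the covering and packing analyses are precisely those listed, and the region is their projection by definition.
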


Although the rate region given in Theorem \ref{Thm:AchievableRateRegionFor3BCUsingNestedCosetCodes} has many auxilliary random variables, we illustrate the key ideas by applying it to a carefully constructed channel and avoiding direct computation. The above coding technique presents an approach to simultaneously manage interference at all of the receivers. It is natural to question whether the use of structured codes to manage interference comes at a cost of respective individual communication. We now provide a simple generalization of example \ref{Ex:A3To1ICForWhichLinearCodesOutperformUnstructuredCodes} that requires managing interference at two receivers. In contrast to \cite{200809Allerton_SriJafVisJafSha}, wherein the benefit of interference alignment can be exploited at all receivers, channels equipped with finite alphabets, in general, present a fundamental trade-off in managing interference and enabling individual respective communication. 

\begin{example}
 \label{Ex:Additive3ICWithInterferenceAt2Rxs}
Consider a binary additive \\3to1IC illustrated in figure \ref{Fig:Additive3ICWithInterferenceAt2Rxs} with $\InputAlphabet_{j}=\OutputAlphabet_{j}=\left\{ 0,1 \right\}:j \in [3]$ with channel transition probabilities $W_{\ulineOutputRV|\ulineInputRV}(\ulineoutput|\ulineinput)=BSC_{\delta_{1}}(y_{1}|x_{1}\oplus x_{2} \oplus x_{3})BSC_{\delta_{2}}(y_{2}|x_{2}\oplus x_{3})BSC_{\delta_{3}}(y_{3}|x_{3})$. Inputs of users $2$ and $3$ are not constrained, i.e., $\costfn_{j}(0)=\costfn_{j}(1)=0$ for $j=2,3$. User $1$'s input is constrained with $\costfn_{1}(x)=x$ for $x \in \left\{ 0,1 \right\}$ to an average cost of $\tau \in (0,\frac{1}{2})$ per symbol. Let $\mathbb{C}(\tau)$ denote the capacity region of this $3-$to$-1$ IC.
\end{example}

In order to illustrate the trade-off, let us consider the case $\delta \define \delta_{2}=\delta_{3}$ is arbitrarily close to, but greater than $\tau * \delta_{1}$. For example, one can choose $\delta_{1}=0.01, \tau=\frac{1}{8}$ and $\delta = 0.1326$. If receiver $1$ desires communication at $h_{b}(\delta_{1}*\tau)-h_{b}(\delta_{1})$, it needs to decode $X_{2}\oplus X_{3}$. To satisfy user $1$'s desire, users $2$ and $3$ have two options. Either employ codes of rates $R_{2}$ and $R_{3}$ such that $R_{2}+R_{3}< 1-h_{b}(\delta_{1}*\tau)$, or employ cosets of the same code with a hope to boost individual rates. In the latter case, user $2$ is hampered by the interference caused to it by user $3$. While we do not provide a detailed analysis, we encourage the reader to contrast this to the Gaussian IC studied in \cite{200809Allerton_SriJafVisJafSha}, wherein the richness of the real field enables each receiver to exploit the benefits of alignment. We conjecture an inherent trade-off in the ability to manage interference over finite valued channels using coset codes, and enable individual respective communication. The reader is referred to \cite{201110TIT_NazGas, 201304arXivTIT_KriZaf, 201408TIT_HonCai} wherein a similar trade-off is discussed.

\begin{figure}
\begin{minipage}{.5\textwidth}
\centering
\includegraphics[height=1.8in,width=1.9in]{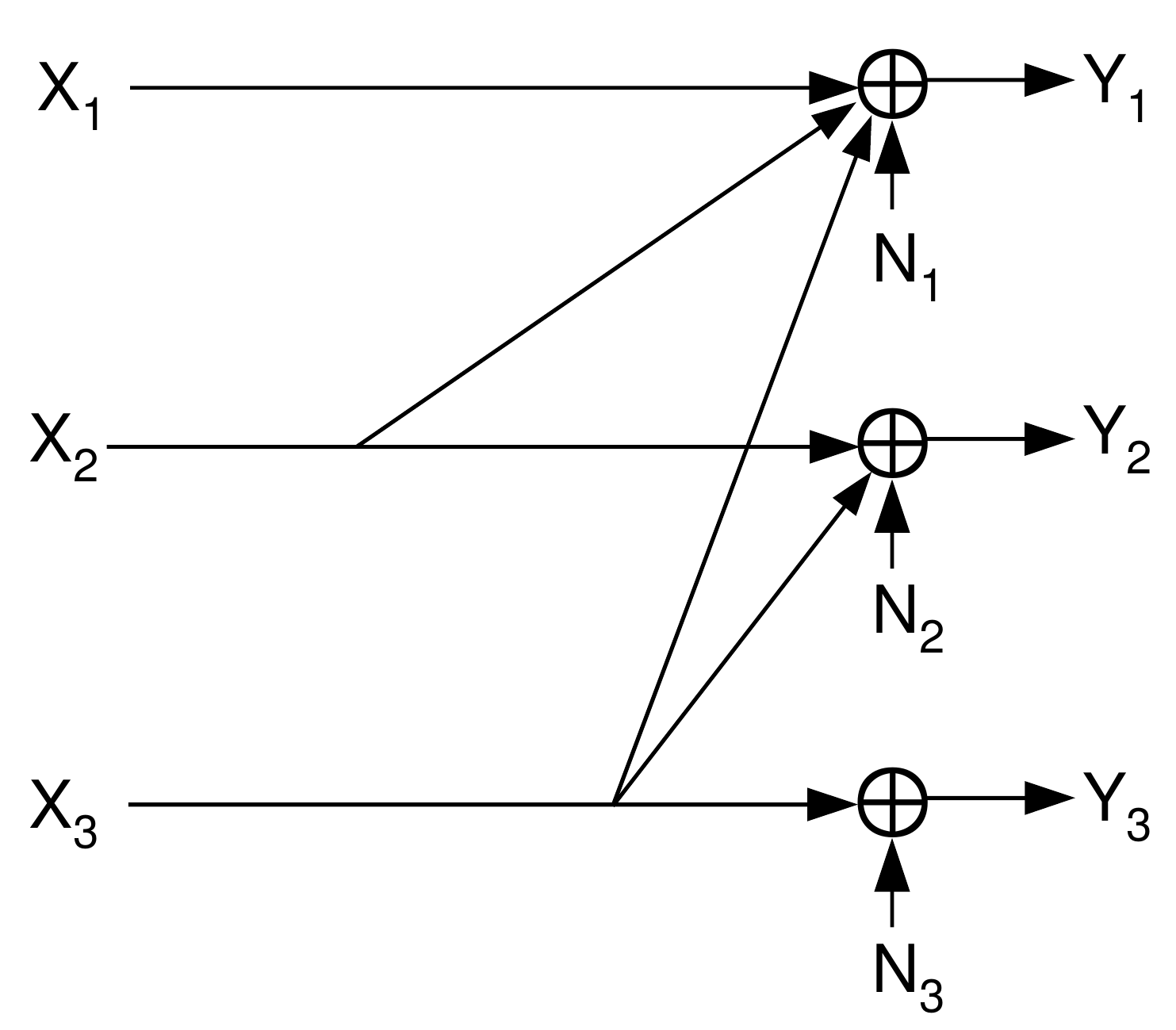}
\caption{A binary additive $3-$to$-1$ IC described in example \ref{Ex:Additive3ICWithInterferenceAt2Rxs}.}
\label{Fig:Additive3ICWithInterferenceAt2Rxs}\end{minipage}\begin{minipage}{.5\textwidth}
\centering
\includegraphics[height=1.5in]{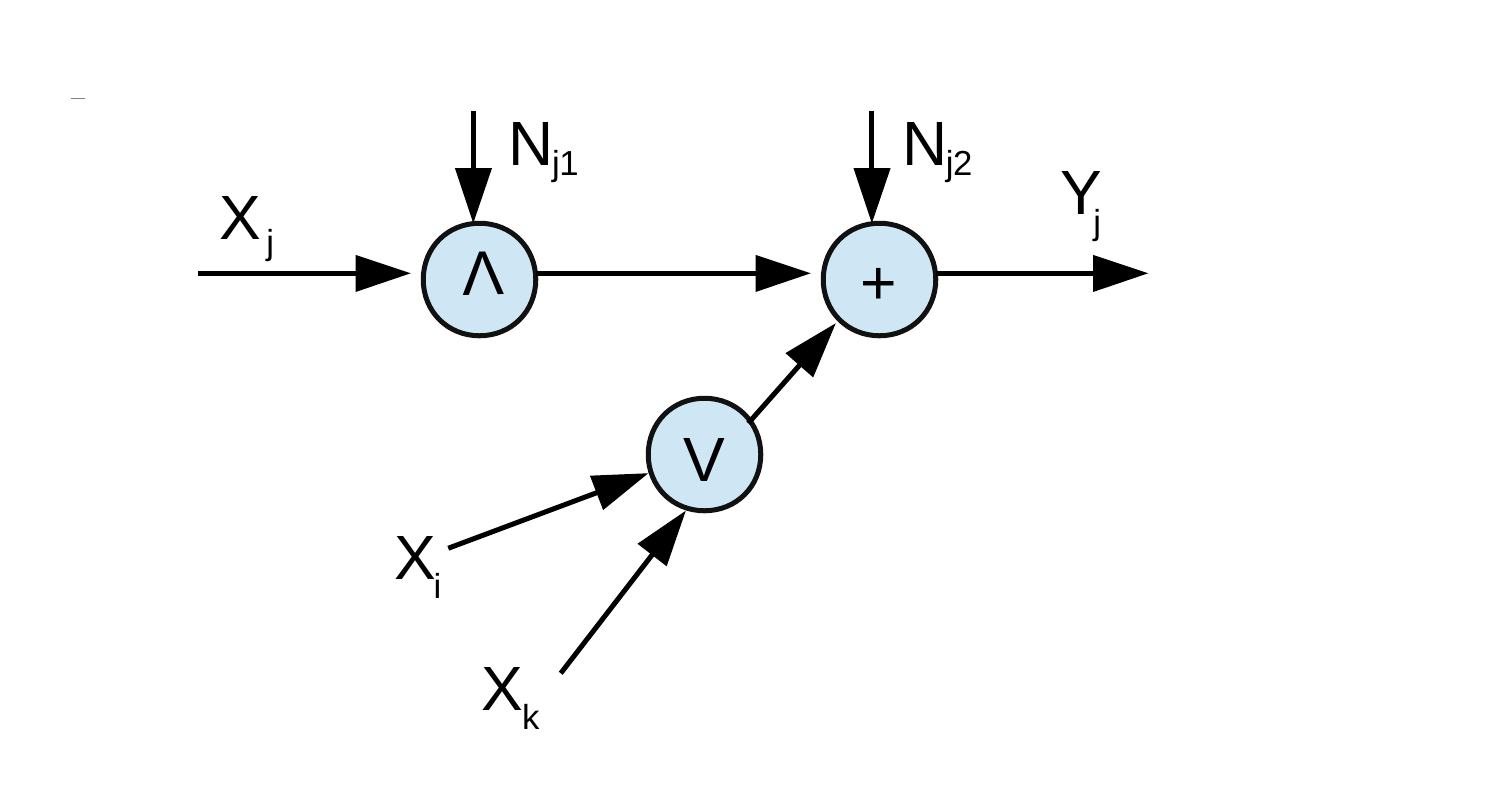}
\caption{A binary non-additive 3-IC in example \ref{Ex:3-to-3}: each user suffers from non-linear interference from other two users.}
\label{Fig:3-to-3}
\end{minipage}
\end{figure}

In the following we consider a 3-IC that is non-additive and 
uses non-uniform input distributions and all three users use structured
codes to facilitate decoding of interference at all receivers.

% \begin{figure}
% \centering
% \includegraphics[height=1.5in]{3-to-3}
% \caption{A binary non-additive 3-IC: each user suffers from non-linear interference from other two users. 
% Each user employs PCC built on $\mathcal{F}_3$ to decode the interference at the receiver.  }
% \label{Fig:3-to-3}
% \end{figure}

\begin{example}
\label{Ex:3-to-3}
Consider a binary 3-IC  with
$\mathcal{X}_j=\mathcal{Y}_j=\{0,1\}: j \in [3]$ with transition
probabilities given by $Y_j=(X_j \land N_{j1}) \oplus (X_{i} \lor X_{k}) \oplus N_{j2}$ 
for $i,j,k \in[3]$, and $i,j$ and $k$ are distinct. This is depicted in figure \ref{Fig:3-to-3}.
$N_{ji}$, $j \in [3]$, $i \in [2]$ are mutually independent and independent of the inputs. 
The cost functions are given by $\kappa_{j}(i)=i$ for $j \in [3]$, $i \in \{0,1\}$. 
$P(N_{j1}=1)=\beta$ and $P(N_{j2}=1)=\delta$ for $j \in [3]$. We let $\Expectation\{ \kappa_{j}(X_{j})\} \leq \tau$. In this channel every user suffers from 
non-linear interference. Moreover all inputs are constrained by a cost function. 
To make the example tractable we wish to operate in the high interference regime, and hence 
we have chosen a $Z$-channel in the signal path from the transmitter
to the respective receiver. 
We consider the projection of the capacity region along the line $R_1=R_2=R_3=R$, and constrain each user to achieve the corresponding PTP capacity. 
We employ PCC built on $\mathcal{F}_3$ as was done before. Using the
rate region given in theorem
\ref{Thm:AchievableRateRegionFor3BCUsingNestedCosetCodes} for this example, we get the following:
\begin{equation}
R \leq \frac{1}{2} I(X_1;Y_1|X_2 \lor X_3) + \frac{1}{2} \min \{I(X_1;Y_1|X_2 \lor X_3), \ \ 
H(X_2)-H(X_2 \oplus_3 X_3|Y_1)  \}
\end{equation}
All the three users can achieve their respective PTP capacities if $I(X_1;Y_1|X_2 \lor X_3) 
\leq H(X_2)-H(X_2 \oplus_3 X_3|Y_1)$.  
It can be verified that the choice $\delta=0.1$, $\tau=0.1284$ and $\beta=0.2210$ satisfies the condition. Hence it is possible for all the users to attain interference alignment and thus achieve their respective capacities using PCC built on $\mathcal{F}_3$.
\end{example}

In the following we consider an example that illustrates the trade-off
between the rates of two users who suffer from interference with the
third user helping one of them. This is referred to as $3-$to$-2$ IC.

\begin{example}
\label{Ex:3-to-2}
Consider a binary 3-IC  with
$\mathcal{X}_j=\mathcal{Y}_j=\{0,1\}: j \in [3]$ with transition
probabilities given by $Y_1=(X_1 \land N_{11}) \oplus (X_{2} \oplus
X_{3}) \oplus N_{12}$, $Y_2=(X_2 \land N_{21}) \oplus (X_{1} \lor
X_{3}) \oplus N_{22}$, and $Y_3=X_3 \oplus N_3$. All noise
components are mutually independent and independent of the
inputs. $\kappa_j(i)=i$ for $j \in [3]$ and $i \in \{0,1\}$. 
$P(N_{12}=1)=P(N_{22}=1)=P(N_3=1)=\delta$, and
$P(N_{11}=1)=P(N_{21}=1)=\beta$. We let $E(\kappa_j(X_j)) \leq \tau$
for $j \in [3]$.  Note that
user 1 and 2 suffer from XOR and logical-OR interference from the other two users,
respectively. The dilemma of user 3 is that it can choose to help (i) user 1 by using  PCCs
built on $\mathcal{F}_2$ and by collaborating with user 2 or (ii) 
user 2 by using  PCCs built on $\mathcal{F}_3$ and by collaborating
with user 1, but not both.  As in the previous example, to operate in the high interference regime
we have chosen the Z-channel between $X_j$ and $Y_j$ for $j=1,2$.
We evaluate the rates of the users at these two ends of the spectrum of
this trade-off. Applying theorem \ref{Thm:AchievableRateRegionFor3BCUsingNestedCosetCodes} on this
example, we get constraints on the rates of the three users. We state
these in the following only for the first operating point for conciseness: 
\begin{align}
\max\{R_2,R_3\} \leq h(\tau)+h(\tau * \tau * \delta * \tau \beta)-
h(\tau * \tau)- h(\tau\beta * d) \\
R_3 \leq h(\tau * \delta) - h(\delta), \ \ R_1 \leq h(\tau \beta *
\delta)-(1-\tau) h(\delta)- \tau h(\beta *\delta) \\
R_2  \leq h(\tau \beta * \delta * (2\tau- \tau^2)) - (1-\tau)h((2\tau- \tau^2) *
\delta) - \tau h( \beta * (2\tau- \tau^2) * \delta)
\end{align}
We provide the following data as a function of $\tau$ (see Table \ref{table:3-to-2}). In the first
operating point, we look at the corner point when $R_1$ is maximized. 
In the second, we look at the corner point when $R_2$ is
maximized. One can see the trade-off between $R_1$ and $R_2$. One can
also contrast between XOR and logical-OR interference. It is much
harder to tackle the latter as can be seen from the rates of user 3. 

\begin{table} 
\begin{center}
\begin{tabular}{|c|c|c|} \hline
$\tau$ & user 2 and 3 help user 1  & user 1 and 3 help user 2  \\
\hline \hline 
0.1 & $\underline{R}=[0.0383, 0.0012,  0.3295]$ &
$\underline{R}=[0.0021, 0.0383, 0.1360]$ \\
\hline 
0.2 & $\underline{R}=[0.0477,  0.0000, 0.4067]$ &
$\underline{R}=[0.0005, 0.0477, 0.0570]$ \\
\hline  
0.3 & $\underline{R}=[0.0520,  0.0018, 0.4364]$ & $\underline{R}=[0.0000,
0.0520, 0.0000]$ \\
\hline 
\end{tabular} \end{center}
\vspace{0.1in}
\caption{$3-$to$-2$ IC: trade-off among the rates of the three users} \label{table:3-to-2}
\end{table}

\end{example}

\subsection{Step III: Enlarging the PCC rate region using unstructured codes}
\label{Sec:UnifiedAchievableRateRegionGulingTogetherStructuredAndUnstructuredCodes}
% \textcolor{red}{Write the first line describing what the main result of the section is.}
% \input{../PhDThesis/3ICSectionAchievableRateRegions3ICWithPCCAndUnstrctrdCodes}
Let us describe a coding technique that unifies both unstructured and partitioned coset codes. We follow the approach of Ahlswede and Han \cite[Section VI]{198305TIT_AhlHan}. Refer to figure \ref{Fig:3ICMapOfRandomVariables} for an illustration of the random variables involved. Each user splits its message into $5$ parts. The $W-$random variable is decoded by all users. In addition, each user decodes a univariate component of the message of the other users. This is represented by the random variable $V$. Furthermore, it decodes a bivariate interference component denoted using $U$. Lastly, each decoder decodes all parts of its intended message.\ifThesis{ The reader may refer to section \ref{Sec:GeneralTechnique} unstructured and coset codes are glued together to derive an achievable rate region for the computation over MAC problem.}\fi
\begin{figure}
\centering
\includegraphics[width=6.45in]{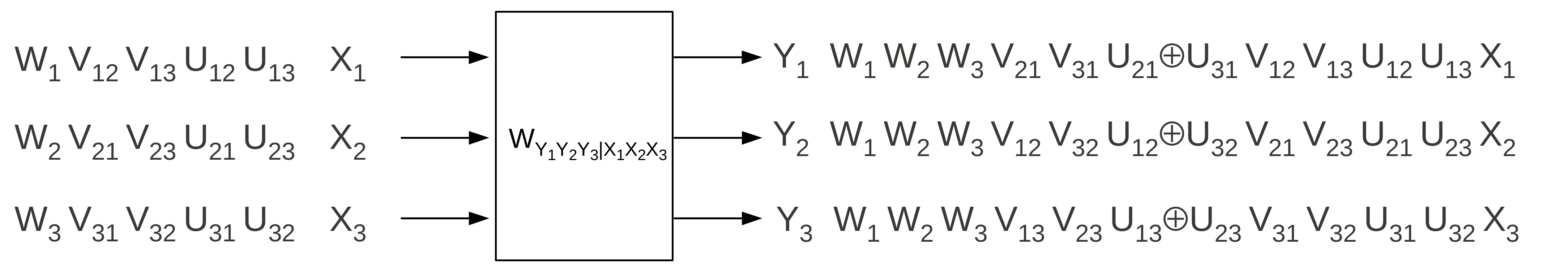}
\caption{Collection of random variables associated with coding technique that incorporates unstructured and partitioned coset codes}
\label{Fig:3ICMapOfRandomVariables}
\end{figure}
\ifTITJournal{
Clearly, a description of the above rate region is involved. In the sequel, we illustrate the key elements via a simplified achievable rate region. In particular, we employ PCC and unstructured codes to manage interference seen by only one receiver, say receiver $1$ and state the corresponding achievable rate region. We begin with a description of the same.

\begin{definition}
 \label{Eqn:TestChannelsStructuredPlusUnstructured}
Consider a $3-$IC $(\underline{\mathcal{X}},\underline{\mathcal{Y}},W_{\underline{Y}|\underline{X}},\underline{\kappa})$. Let $\SetOfDistributions_{uf}(\ulinecost)$ denote the collection of
distributions
$p_{\TimeSharingRV U_{2}V_{2}U_{3}V_{3}\underline{X}\underline{Y}}$ defined over $\TimeSharingRVSet \times \SemiPrivateRVSet_{2} \times \mathcal{V}_{2} \times \SemiPrivateRVSet_{3}  \times\mathcal{V}_{3}\times \underline{\InputAlphabet} \times \underline{\OutputAlphabet}$, where $\SemiPrivateRVSet_{2}=\SemiPrivateRVSet_{3}$ is a finite field and $\mathcal{V}_{2}$ and $\mathcal{V}_{3}$ are finite sets, such that (i) $p_{\ulineOutputRV | \ulineInputRV U_{2}V_{2}U_{3}V_{3}}= W_{\ulineOutputRV | \ulineInputRV}$, (ii) $X_{1}$, $(U_{2},V_{2},X_{2})$ and $(U_{3},V_{3},X_{3})$ are conditionally independent given $Q$, (iii) $\Expectation\{ \kappa_{j}(X_{j}) \} \leq \tau_{j}$ for $j=1,2,3$. For
$p_{\TimeSharingRV U_{2}V_{2}U_{3}V_{3}\underline{X}\underline{Y}} \in
\SetOfDistributions_{uf}(\ulinecost)$, let
$\alpha^{\three-1}_{uf}(p_{\TimeSharingRV U_{2}V_{2}U_{3}V_{3}\underline{X}\underline{Y}})$ be defined as the set of rate triples $(R_{1},R_{2},R_{3})
\in [0,\infty)^{3}$ for which $\mathcal{S}_{uf}(p_{\TimeSharingRV U_{2}V_{2}U_{3}V_{3}\underline{X}\underline{Y}},\underlineR)$ is non-empty, where $\mathcal{S}_{uf}(p_{\TimeSharingRV U_{2}V_{2}U_{3}V_{3}\underline{X}\underline{Y}},\underlineR)$ is defined as the vectors $(S_{j1},T_{j1},S_{j2},T_{j2},L_{j}:j=2,3) \in [0,\infty)^{10}$ that satisfy
\begin{eqnarray}
\label{Eqn:UnstructuredPlusFieldSourceCodingBound}
&S_{j2}-T_{j2} > \log\Prime - H(U_{j}|V_{j},Q), ~~R_{j}=T_{j1}+T_{j2}+L_{j}: j=2,3\\
\label{Eqn:UnstructuredPlusFieldUserjChnlCodingBound1}
\!\!\!\!\!\!\!&\!\!\!\!\!\!\!\!\!L_{j}\!+\!S_{j2} \!<\! \log \Prime \!-\! H(U_{j}|V_{j},Q) \!+\!I(U_{j},X_{j};Y_{j}|V_{j},Q), ~ T_{j1}\!+\!L_{j} \!<\! I(U_{j};V_{j}|Q)\!+\!I(V_{j},X_{j};Y_{j}|U_{j},Q)\!:\!j=2,3,\\
\label{Eqn:UnstructuredPlusFieldUserjChnlCodingBound2}
&L_{j} < I(X_{j};Y_{j}|U_{j},V_{j},Q), ~~ T_{j1}+S_{j2}+L_{j} < \log \Prime - H(U_{j}|V_{j},Q) + I(U_{j},V_{j},X_{j};Y_{j}|Q):j=2,3 \\
\label{Eqn:UnstructuredPlusFieldUser1ChnlCodingBound1}
\!\!\!\!\!\!\!&\!\!\!\!\!\!\!\!\!R_{1} \!<\! I(X_{1};Y_{1},V_{2},V_{3},U_{2}\oplus U_{3}|Q),~  R_{1} \!+\! S_{j2} \!<\! \log\Prime - H(U_{2}\oplus U_{3}|Q) \!+\!I(X_{1},U_{2}\oplus U_{3}; V_{2},V_{3},Y_{1}|Q) \!:\! j\!=\!2,3 \\
\label{Eqn:UnstructuredPlusFieldUser1ChnlCodingBound2}
&R_{1}+T_{j1} < I(X_{1},V_{j};V_{\msout{j}},U_{2}\oplus U_{3},Y_{1}|Q) : j=2,3 ,\nonumber~~
T_{21}+T_{31} + R_{1} < I(V_{2},V_{3},X_{1};U_{2}\oplus U_{3},Y_{1}|Q) \\
\label{Eqn:UnstructuredPlusFieldUser1ChnlCodingBound3}
&R_{1}+T_{j1}+S_{k2} < \log \Prime -H(U_{2}\oplus U_{3}|V_{j},Q)+I(X_{1},V_{j},U_{2}\oplus U_{3};V_{\msout{j}},Y_{1}|Q):j=2,3\mbox{ and }k=2,3\\
\label{Eqn:UnstructuredPlusFieldUser1ChnlCodingBound4}
&T_{21}+T_{31} +S_{j2}+ R_{1} < \log \Prime - H(U_{2}\oplus U_{3}|X_{1},V_{2},V_{3},Q)+I(X_{1},V_{2},V_{3},U_{2}\oplus U_{3};Y_{1}|Q)
\end{eqnarray}
where $\theta = |\mathcal{U}_{2}|= |\mathcal{U}_{3}|$. Let
\begin{equation}
 \label{Eqn:AchievableRateRegionStructuredPlusUnstructured}
\alpha^{\three-1}_{uf}(\ulinecost) = \cocl \left(
\underset{\substack{p_{\TimeSharingRV U_{2}V_{2}U_{3}V_{3}\underline{X}\underline{Y}} \in\\ \SetOfDistributions_{uf}(\ulinecost)}
}{\bigcup}\alpha_{uf}^{\three-1}(p_{\TimeSharingRV U_{2}V_{2}U_{3}V_{3}\underline{X}\underline{Y}}) \right).\nonumber
\end{equation}
\end{definition}
\begin{thm}
\label{Thm:AchievableRateRegionStructuredPlusUnstructured}
For $3-$IC $(\ulineInputAlphabet,\ulineOutputAlphabet,W_{\ulineOutputRV|\ulineInputRV},\ulinecostfn)$,
$\alpha^{\three-1}_{uf}(\ulinecost)$ is achievable, i.e., $\alpha_{uf}^{\three-1}(\ulinecost) \subseteq \mathbb{C}(\ulinecost)$.
\end{thm}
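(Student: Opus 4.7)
The plan is to extend the PCC-based coding technique from the proof of Theorem \ref{Thm:AchievableRateRegionFor3To1ICUsingCosetCodes} by inserting an Ahlswede-Han \cite[Section VI]{198305TIT_AhlHan}-style unstructured univariate layer carrying $V_{j}$ between the PCC layer carrying $U_{j}$ and the satellite layer carrying $X_{j}$. Conditioned on a time-sharing sequence $q^{n}$, for each $j \in \{2,3\}$ I would generate: (i) an iid public codebook of $\exp\{nT_{j1}\}$ codewords $v_{j}^{n}(m_{j1})$ distributed as $\prod_{t} p_{V_{j}|Q}(\cdot|q_{t})$; (ii) a random PCC $\lambda_{j} \subseteq \fieldpi^{n}$ of rate $S_{j2}$ uniformly partitioned into $\exp\{nT_{j2}\}$ bins, with $\lambda_{2}$ and $\lambda_{3}$ nested as in the proof of Theorem \ref{Thm:AchievableRateRegionFor3To1ICUsingCosetCodes} so that $\lambda_{2} \oplus \lambda_{3}$ lies in a coset of the larger PCC; and (iii) a satellite codebook of rate $K_{j}+L_{j}$ with $\exp\{nL_{j}\}$ message bins, drawn conditionally iid given $(q^{n},v_{j}^{n},u_{j}^{n})$ according to $p_{X_{j}|QV_{j}U_{j}}$. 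User 1 uses a single iid codebook of rate $R_{1}$ for $X_{1}$. Encoder $j$ splits its message as $(m_{j1},m_{j2},m_{jL})$, picks $v_{j}^{n}(m_{j1})$, searches the $m_{j2}$-th bin of $\lambda_{j}$ for a $u_{j}^{n}$ jointly typical with $(q^{n},v_{j}^{n})$, and selects $x_{j}^{n}$ from the $m_{jL}$-th bin jointly typical with $(q^{n},v_{j}^{n},u_{j}^{n})$. The conditional PCC covering argument from the proof of Theorem \ref{Thm:AchievableRateRegionFor3To1ICUsingCosetCodes} (which exploits pairwise independence of coset-code codewords under a uniform random bias shift), combined with the usual iid covering for the satellite stage, makes these encoding searches succeed with vanishing probability of error whenever (\ref{Eqn:UnstructuredPlusFieldSourceCodingBound}) and a standard covering bound on $K_{j}$ hold.

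For decoding, receivers $2$ and $3$ perform joint typicality decoding of their own $(V_{j},U_{j},X_{j})$; since the algebraic structure of $\lambda_{j}$ plays no role at the intended receiver, this reduces to a standard Han-Kobayashi joint-decoding computation and produces (\ref{Eqn:UnstructuredPlusFieldUserjChnlCodingBound1})-(\ref{Eqn:UnstructuredPlusFieldUserjChnlCodingBound2}). Receiver 1 jointly decodes $(X_{1}, V_{2}, V_{3}, U_{2}\oplus U_{3})$ by searching for the unique $\hat{m}_{1}$ such that some tuple $(x_{1}^{n}(\hat{m}_{1}), v_{2}^{n}, v_{3}^{n}, u_{\oplus}^{n})$ with $u_{\oplus}^{n} \in \lambda_{2} \oplus \lambda_{3}$ is jointly typical with $Y_{1}^{n}$ under $p_{X_{1}V_{2}V_{3}(U_{2}\oplus U_{3})Y_{1}|Q}$. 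I would enumerate error events at receiver 1 by the nonempty subset $\mathcal{E} \subseteq \{X_{1},V_{2},V_{3},U_{2}\oplus U_{3}\}$ whose decoded value differs from the transmitted one, and bound each event either by the standard iid packing lemma (when $\mathcal{E}$ avoids $U_{2}\oplus U_{3}$) or by the PCC-sum packing lemma from the proof of Theorem \ref{Thm:AchievableRateRegionFor3To1ICUsingCosetCodes} (when $\mathcal{E}$ includes $U_{2}\oplus U_{3}$). The exhaustive union bound over these events yields the constraints (\ref{Eqn:UnstructuredPlusFieldUser1ChnlCodingBound1})-(\ref{Eqn:UnstructuredPlusFieldUser1ChnlCodingBound4}).

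The principal difficulty will be the mixed error events at receiver 1 in which both some $V_{j}$ codeword and the sum $U_{2} \oplus U_{3}$ are incorrect simultaneously; these require a hybrid packing bound combining the iid and PCC ensembles. The key tractability observation is that, under the random PCC ensemble with uniform coset shifts, any incorrect $u_{\oplus}^{n} \in \lambda_{2} \oplus \lambda_{3}$ has uniform marginal on $\fieldpi^{n}$ and is pairwise independent of the transmitted sum, while the iid $V_{j}^{n}$ codebooks are drawn independently of both the coset shifts and the bin assignments. The joint-typicality probability for an erroneous tuple therefore factorizes into an iid-style term for the unstructured components and a $\Prime^{-n}\exp\{nH(U_{2}\oplus U_{3}|\cdot)\}$ term for the sum, paralleling the factorization developed in the proof of Theorem \ref{Thm:AchievableRateRegionFor3To1ICUsingCosetCodes}. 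Coupling this factorization with a union bound over erroneous tuples (whose count reaches $\exp\{n(R_{1}+T_{21}+T_{31}+S_{j2})\}$ in the worst case) and requiring the resulting exponent to be negative yields the sharpest bound (\ref{Eqn:UnstructuredPlusFieldUser1ChnlCodingBound4}); restricting $\mathcal{E}$ to smaller subsets yields (\ref{Eqn:UnstructuredPlusFieldUser1ChnlCodingBound1})-(\ref{Eqn:UnstructuredPlusFieldUser1ChnlCodingBound3}), which together with the covering bound (\ref{Eqn:UnstructuredPlusFieldSourceCodingBound}) establishes the theorem.
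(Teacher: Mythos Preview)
Your proposal matches the paper's sketch: an unstructured $V_{j}$-layer of rate $T_{j1}$, nested PCCs for $U_{j}$ of rate $S_{j2}$ partitioned into $\exp\{nT_{j2}\}$ bins, a conditionally generated $X_{j}$-satellite, and joint decoding of $(X_{1},V_{2},V_{3},U_{2}\oplus U_{3})$ at receiver~1 via exactly the hybrid iid/PCC packing you describe. The one superfluous element is the extra satellite binning rate $K_{j}$: since you already draw the $X_{j}$-codebook conditionally on $(q^{n},v_{j}^{n},u_{j}^{n})$ according to $p_{X_{j}|QV_{j}U_{j}}$, no covering search is needed at that stage, and indeed the paper's parameter set $\mathcal{S}_{uf}$ carries no $K_{j}$; taking $K_{j}=0$ recovers the paper's construction exactly.
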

We provide a brief sketch of achievability. For simplicity, user $1$ builds an unstructured independent code of rate $R_{1}$ over $\mathcal{X}_{1}$ by choosing codewords independently and identically according to $p_{X_{1}}^{n}$. For $j=2,3$, user $j$ builds three random codebooks - one each over $\mathcal{V}_{j},\mathcal{U}_{j}, \mathcal{X}_{j}$ respectively. An unstructured and independent codebook of rate $T_{j1}$ is built over $\mathcal{V}_{j}$ by choosing codewords independently and identically according to $p^{n}_{V_{j}}$. A random PCC $(n,\frac{nS_{j2}}{\log\Prime},\frac{nT_{j2}}{\log\Prime},G_{j},B_{j}^{n},I_{j} )$, denoted $\Lambda_{j}$, is built over $\mathcal{U}_{j}$. As before the PCC's of users $2$ and $3$ overlap, i.e., if $j_{1} \leq j_{2}$, then $g_{j_{2}}^{T} = [ g_{j_{1}}^{T}~g_{j_{2}/j_{1}}^{T} ]$. Consider a codeword in $\mathcal{V}_{j}-$codebook and a bin in the PCC. For every such pair, a random unstructured independent codebook is constructed over $\mathcal{X}_{j}$.

User $j$th message is split into three parts - \textit{univariate} part, \textit{bivariate} part and \textit{private} part. The univariate part indexes a codeword, say $V_{j}^{n}(M_{jV})$ in $\mathcal{V}_{j}-$codebook. The bivariate part indexes a bin in the PCC. A codeword, say $U_{j}^{n}(M_{jU})$ is chosen in the indexed bin such that $(V_{j}^{n}(M_{jV}), U_{j}^{n}(M_{jU}))$ is jointly typical according to the probability distribution $p_{QV_{j}U_{j}}$, the marginal of $p_{\TimeSharingRV U_{2}V_{2}U_{3}V_{3}\underline{X}\underline{Y}} \in
\SetOfDistributions_{uf}(\ulinecost)$ in question. The codewords of the codebook built over $\mathcal{X}_{j}$, corresponding to $(M_{jV},M_{jU})$, are independently and identically distributed according to $p_{X_{j}|V_{j}U_{j}}^{n}(\cdot|V_{j}^{n}(M_{jV}),U_{j}^{n}(M_{jU}) )$. The private part $M_{jX}$ indexes a codeword in this codebook. This codeword is input on the channel by user $j$.  User $1$ inputs the codeword from its $\mathcal{X}_{1}-$codebook that is indexed by its message. It can be verified that the inequality in (\ref{Eqn:UnstructuredPlusFieldSourceCodingBound}) ensures users $2$ and $3$ find jointly typical triples of codewords.

Users $2$ and $3$ employ a simple point-to-point decoding technique. However, note that the codebook over $\mathcal{X}_{j}$ is conditionally built. Therefore, an error in decoding the correct $\mathcal{U}_{j}-$ or $\mathcal{V}_{j}-$codeword is interpreted as an error even in decoding the $\mathcal{X}_{j}-$codeword. It can be verified that (\ref{Eqn:UnstructuredPlusFieldUserjChnlCodingBound1}), (\ref{Eqn:UnstructuredPlusFieldUserjChnlCodingBound2}) ensure the probability of decoding error at receiver $j$ decays exponentially with block length $n$.

User $1$ constructs the sum codebook $\Lambda_{2}\oplus \Lambda_{3} \define \{ u_{2}^{n}\oplus u_{3}^{n}: u_{j}^{n} \in \Lambda_{j}:j=2,3 \}$ and decodes into $\mathcal{V}_{2}, \mathcal{V}_{3}, \Lambda_{2}\oplus \Lambda_{3},\mathcal{X}_{1}$ codebooks. In particular it looks for a quadruple of codewords in these codebooks that are jointly typical with the received vector $Y_{1}^{n}$ according to $p_{QV_{2},V_{3},U_{2}\oplus U_{3}|Y_{1}}$. It can be verified that (\ref{Eqn:UnstructuredPlusFieldUser1ChnlCodingBound1}) - (\ref{Eqn:UnstructuredPlusFieldUser1ChnlCodingBound4}) imply the probability of decoding error at receiver $1$ decays exponentially with block length.

\begin{example}We briefly describe an example wherein the above coding technique can yield larger achievable rate regions than ones based exclusively either on PCC or on unstructured based codes. Consider the $3-$IC depicted in figure \ref{Fig:3ICWithMultipleInputDigits}. For each $j=1,2,3$, the input alphabet $\mathcal{X}_{j} = \underset{k=1}{\overset{3}{\times}} \mathcal{X}_{jk}$ and output alphabet is $\mathcal{Y}_{j} = \underset{k=1}{\overset{3}{\times}} \mathcal{Y}_{jk}$ where $\mathcal{X}_{jk}=\mathcal{Y}_{jk}= \{ 0,1\}$. Essentially, each user can input three binary digits on the channel and each receiver observes three binary digits per channel use. Let $X_{jk}:k=1,2,3$ denote the three binary digits input by transmitter $j$ and $Y_{jk}:k=1,2,3$ denote the three digits observed by receiver $j$. Figure \ref{Fig:3ICWithMultipleInputDigits} depicts the input-output relationship. Let us also assume the Bernoulli noise processes $N_{jk}:j=1,2,3, k=1,2,3$ are mutually independent. Users $2$ and $3$ enjoy complete free point-to-point links for each of the digits. They are only constrained by noise that is modeled by the corresponding Bernoulli noise processes. Receiver $1$'s digit $Y_{11}$ experiences bivariate interference. Its $2$nd the $3$rd digits experience univariate interference.\footnote{The IC depicted in figure \ref{Fig:3ICWithMultipleInputDigits} can be used to model a scenario wherein Tx-Rx pair $1$ is assigned frequency bands around carrier frequencies $f_{1},f_{2}, f_{3}$, Tx-Rx pair $2$ is assigned frequency bands around carrier frequencies $f_{1},f_{2}, f_{4}$, Tx-Rx pair $3$ is assigned frequency bands around carrier frequencies $f_{1},f_{3}, f_{5}$ respectively. If the powers transmitted by users $2$ and $3$ are large, then user $1$ does not cause any appreciable interference to users $2$ and $3$. The interference caused by signals of Txs $2$ and $3$ on each other in frequency band around $f_{1}$ has been ignored by this model.} The reader will recognize the need for receiver $1$ to decode univariate and bivariate parts of user $2$ and $3$'s signals. The above coding technique enables the same.
\end{example}
\begin{figure}
\centering
\includegraphics[width=6.45in]{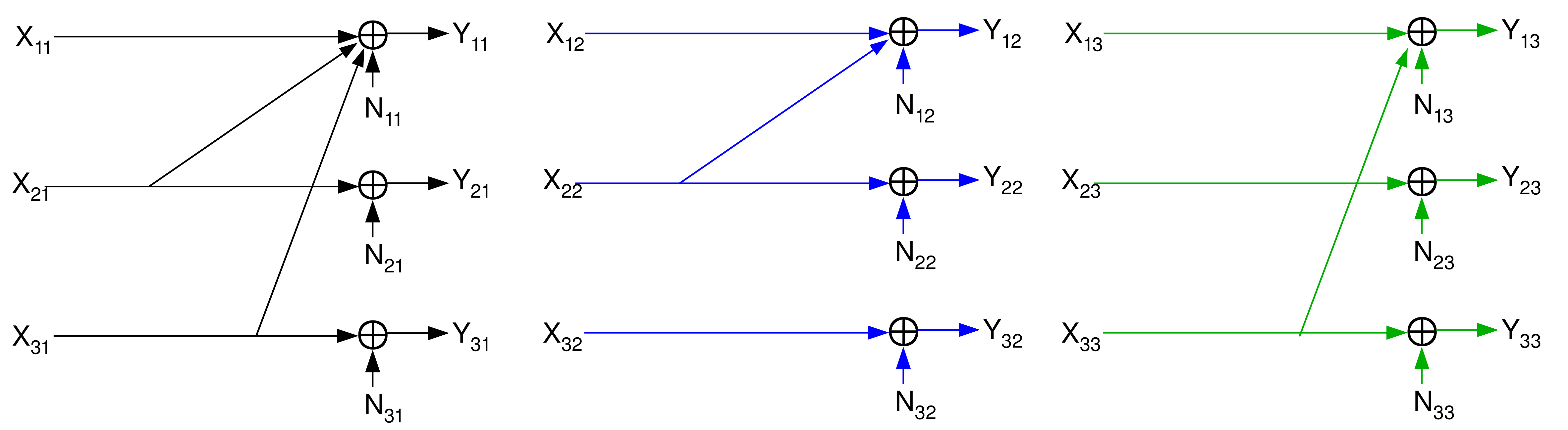}
\caption{A $3-$IC with univariate and bivariate interference components.}
\label{Fig:3ICWithMultipleInputDigits}
\end{figure}

We conclude this section with a discussion, wherein, we employ the notion of common information to argue, more fundamentally, the need to decode bivariate interference components. Let us view the above coding technique from the perspective of
common information in the sense of Gacs, K\"orner and Witsenhausen
\cite{1972MMPCT_GacKor} \cite{197501SJAM_Wit}.  Let $K(A;B)$ denote the common information of two random
variables $A$ and $B$. Let $\tilde{X}_j$ denote the collection of
random variables decoded at decoder $j$. 
The CHK scheme for $2$-IC \cite{198101TIT_HanKob} can be interpreted as inducing non-trivial
common information between $\tilde{X}_1$ and $\tilde{X}_2$, and 
$K(\tilde{X}_1;\tilde{X}_2)=H( W_1,W_2 )$. The question that comes next
is how to extend common information to $3$ random variables? We can
consider the following vector as the common information among three 
random variables $A$, $B$ and $C$: 
\[
[K(A;B;C), K(A;B), K(B;C), K(C;A)],
\]
where $K(A;B;C)$ is defined in a natural way. 
We refer to this as univariate common information as they are
characterized using univariate function of the random variables. The $\UHK$technique
induces non-trivial univariate common information among $\tilde{X}_1$,
$\tilde{X}_2$ and $\tilde{X}_3$, and 
\[
K(\tilde{X}_1;\tilde{X}_2;\tilde{X}_3)=H(W_1,W_2,W_3), \ \ 
K(\tilde{X}_j;\tilde{X}_k)=H(V_{kj},V_{jk}).
\]

The common information captured via univariate functions can be
enhanced with the following components captured via \emph{bivariate}
functions. Define
\begin{equation}
\tilde{K}(A,B;C) := \sup_{h,g_{3}}~~ \inf_{f_{1},f_{2},g_{1},g_{2}} \left\{ H(V_{3}|V_{1},V_{2}): \substack{V_{1}= f_{1}(A)= g_{1}(C) , V_{2}=f_{2}(B)=g_{2}(C), V_{3}=h(A,B)=g_{3}(C)\mbox{ where }f_{1}: \mathcal{A} \rightarrow \mathcal{V},\\  f_{2}:\mathcal{B}\rightarrow \mathcal{V}, g_{i}:\mathcal{C}\rightarrow \mathcal{V}:i=1,2, h:\mathcal{A} \times \mathcal{B} \rightarrow \mathcal{V}\mbox{ are maps into a finite set } \mathcal{V}}\right\}.\nonumber
\end{equation}
We define  common information among three random variables as a
seven-dimensional vector as follows:
\[
[K(A;B;C), K(A;B), K(B;C), K(C;A), \tilde{K}(A,B;C), \tilde{K}(B,C;A), \tilde{K}(C,A;B)].
\]
We refer to the last three components as bivariate common
information. Note that the $\UHK$technique induces trivial bivariate common
information among $\tilde{X}_1, \tilde{X}_2$ and $\tilde{X}_3$. The
PCC technique induces non-trivial bivariate common information among
them, and $\tilde{K}(\tilde{X}_i,\tilde{X}_j;\tilde{X}_k)=H(U_{ik}
\oplus U_{jk})$ for all distinct $i, j, k$.

}\fi

\section{Step IV: Achievable rate region using PCC built over Abelian groups}
\label{Sec:AchievableRateRegionsFor3To1ICUsingAbelianGroups}
In this section, we present PCC scheme using codes built on Abelian groups. The rate region we get can be interpreted as an algebraic extension (from finite fields to Abelian groups) of that given in theorem \ref{Thm:AchievableRateRegionFor3To1ICUsingCosetCodes}. 

\subsection{Definitions}
\label{SubSec:PreliminariesAboutGroups}

For an Abelian group $G$, let $\mathcal{P}(G)$ denote the set of all
distinct primes which divide $|G|$ and for a prime
$p\in\mathcal{P}(G)$ let $S_p(G)$ be the corresponding Sylow subgroup
of $G$. It is known \cite[Theorem 3.3.1]{Hal-TTOG59} that any Abelian
group $G$ can be decomposed in the following manner
\begin{align}\label{eqn:G}
G \cong \bigoplus_{p\in\mathcal{P}(G)} S_p(G)=
\bigoplus_{p\in\mathcal{P}(G)} \bigoplus_{r\in\mathcal{R}_p(G)} \mathbb{Z}_{p^r}^{M_{p,r}}= \bigoplus_{p\in\mathcal{P}(G)} 
\bigoplus_{r\in\mathcal{R}_p(G)} \bigoplus_{m=1}^{M_{p,r}} \mathbb{Z}_{p^r}^{(m)}
= \bigoplus_{(p,r,m)\in\mathcal{G}(G)} \mathbb{Z}_{p^r}^{(m)}, 
\end{align}
where $\mathcal{R}_p(G)\subseteq \mathbb{Z}^+$ and for
$r\in\mathcal{R}_p(G)$, $M_{p,r}$ is a positive integer, 
$\mathcal{G}(G)\subseteq \mathbb{P}\times \mathbb{Z}^+\times\mathbb{Z}^+$ is defined as:
\begin{align*}
\mathcal{G}(G)=\{(p,r,m)\in\mathbb{P}\times \mathbb{Z}^+\times\mathbb{Z}^+|
p\in\mathcal{P}(G), 
r\in\mathcal{R}_p(G), m\in\{1,2,\cdots,M_{p,r}\}  \}
\end{align*}
With a slight abuse of notation, we
represent an element $a$ of $G$ as 
\begin{align*}
a=\bigoplus_{(p,r,m)\in\mathcal{G}(G)} a_{p,r,m}
\end{align*}
We will need to define information theoretic quantities in relation to groups. Define
\begin{align}\label{eqn:Q_G}
\mathcal{Q}(G)=\{(p,r)|p\in\mathcal{P}(G),r\in\mathcal{R}_p(G)\}
\end{align}
Consider vectors $\hat{\theta}$, $w$ and $\theta$, with
components, indexed by $(p,r) \in\mathcal{Q}(G)$, given by 
$\hat{\theta}_{p,r}$, $w_{p,r}$ and $\theta_{p,r}$ respectively. $w$ is a
pmf on $\mathcal{Q}(G)$, $\hat{\theta}_{p,r}$ is
a non-negative integer with $0 \leq \hat{\theta}_{p,r} \leq r$,
and $\theta$ is defined as 
\begin{align*}
\pmb{\theta}(\hat{\theta})_{(p,r)\in \mathcal{Q}(G)} = \min_{s: (p,s)\in\mathcal{Q}(G)}
|r-s|^+ +\hat{\theta}_{p,s}. 
\end{align*}
It turns out that only certain subgroups of $G$ become important in the
achievable rate region when we use Abelian group codes. Define
\begin{align*}
\Theta=\left\{\pmb{\theta}(\hat{\theta})|(\hat{\theta}_{q,s})_{(q,s)\in \mathcal{Q}(G)}:0\le \hat{\theta}_{q,s}\le s\right\}.
\end{align*}
For $\theta\in\Theta$, define 
\begin{align*}
\omega_{\theta}=\frac{\displaystyle{\sum_{(p,r)\in\mathcal{Q}(G)}} 
\theta_{p,r} w_{p,r} \log p}
{\displaystyle{\sum_{(p,r)\in\mathcal{Q}(G)}} r w_{p,r} \log p}, \ \ \
\ \ 
H_{\theta}=\bigoplus_{(p,r,m)\in\mathcal{G}(G)} p^{\theta_{p,r}}
\mathbb{Z}_{p^r}^{(m)} \leq G.
\end{align*}
We give an example in the sequel.
Let $X$ and $Y$ be two  random variables with $X$ taking  values over
$G$  and let $[X]_{\theta}=X+H_{\theta}$ be the random
variable taking values from the cosets of $H_{\theta}$ in $G$ that contains $X$. We define the source coding group mutual information between $X$ and $Y$ as 
\begin{align*}
S_{w}^{G}(X;Y)&=H(X)- \log |G| + 
%\min_{\substack{w_{q,s},
%(q,s)\in\mathcal{Q}(G)\\\sum
%w_{q,s}=1}} 
\max_{\substack{\theta\in\Theta\\\theta\ne \pmb{0}}} \frac{1}{\omega_{\theta}}
\left[ \log|G:H_{\theta}| - H([X]_{\theta}|Y) \right]
\end{align*}
where $\pmb{0}$ is a vector whose components are indexed by
$(p,r)\in\mathcal{Q}(G)$ and whose $(p,r)\textsuperscript{th}$
component is equal to $0$, and $G:H_{\theta}$ is the quotient group.   We
define the channel coding group mutual information between $X$
and $Y$ as
\begin{align}\label{eqn:Rate_Channel}
C_{w}^{G}(X;Y)=H(X)- \log |G|+ 
\min_{\substack{\theta\in\Theta\\\theta\ne \pmb{r}}} \frac{1}{1-\omega_{\theta}}
\left[  \log |H_{\theta}| - H(X|[X]_{\theta},Y) \right] 
\end{align}
where $\pmb{r}$ is a vector whose components are indexed by
$(p,r)\in\mathcal{Q}(G)$ and whose $(p,r)\textsuperscript{th}$
component is equal to $r$.

For example, let $G=\mathds{Z}_2\bigoplus \mathds{Z}_8 \bigoplus
\mathds{Z}_3$. In this case, we have $\mathcal{P}(G)=\{2,3\}$,  
$\mathcal{R}_2(G)=\{1,3\}$, $\mathcal{R}_3(G)=\{1\}$ and $\mathcal{Q}(G)=\{(2,1),(2,3),(3,1)\}$. The vectors 
$w$, $\hat{\theta}$ and $\theta$ are represented by $w=(w_{2,1},w_{2,3},w_{3,1})$, $\hat{\theta}=(\hat{\theta}_{2,1},
\hat{\theta}_{2,3},\hat{\theta}_{3,1})$ and $\theta=(\theta_{2,1},\theta_{2,3},\theta_{3,1})$ and the function $\pmb{\theta}(\cdot)$ is given by
\begin{align*}
\pmb{\theta}(\hat{\theta})=\Big(\min(\hat{\theta}_{2,1},\hat{\theta}_{2,3}),\min(2+\hat{\theta}_{2,1},\hat{\theta}_{2,3}),\hat{\theta}_{3,1}\Big)
\end{align*}
The set $\Theta$ turns out to be equal to
\begin{align*}
\Theta=\Big\{(0,\!0,\!0),(0,\!0,\!1),(0,\!1,\!0),(0,\!1,\!1),(0,\!2,\!0),(0,\!2,\!1),(1,\!1,\!0),(1,\!1,\!1),(1,\!2,\!0),
(1,\!2,\!1),(1,\!3,\!0),(1,\!3,\!1)\Big\}
\end{align*}
and we have $\pmb{0}=(0,0,0)$ and $\pmb{r}=(1,3,1)$. For $\theta=(1,1,0)$, we have 
$\omega_{\theta}=\frac{w_{2,1}+w_{2,3}}{w_{2,1}+3w_{2,3}+w_{3,1}\log
  3}$ and $H_{\theta}=0 \bigoplus 2\mathds{Z}_8 \bigoplus  \mathds{Z}_3$.
so that the random variable $[X]_{\theta}$ takes values from the set
of cosets 
$0 \bigoplus   2\mathds{Z}_8 \bigoplus \mathds{Z}_3, 
0 \bigoplus (1+2\mathds{Z}_8) \bigoplus \mathds{Z}_3, 1 \bigoplus
2\mathds{Z}_8 \bigoplus \mathds{Z}_3, 
1 \bigoplus (1+ 2\mathds{Z}_8)\bigoplus   \mathds{Z}_3\Big\}$. 
Furthermore, for this choice of $\theta$, we have $|H_{\theta}|=12$ and $|G:H_{\theta}|=4$.

When $G$ is cyclic, i.e., $G=\mathbb{Z}_{p^r}$, then $w=1$ and it can be shown
that  
\[
S_w^G(X;Y)= H(X)- \min_{1 \leq \theta \leq r} \frac{r}{\theta}
H([X]_{\theta}|Y), \ \  \ \ 
C_w^G(X;Y)= H(X)- \max_{0 \leq \theta \leq (r-1)} \frac{r}{r-\theta} H(X|[X]_{\theta},Y), 
\]
When $G$ is a primary
field, i.e., $G=\mathbb{Z}_{p}$, then it follows that 
$S_w^G(X;Y)= I(X;Y)= C_w^G(X;Y)$.

\subsection{Managing interference seen by one receiver using PCC built over Abelian groups}
\label{SubSec:DecodingGroupSumOfTransmittedCodewordsUsingPCC}

In this section, we employ PCC built over Abelian groups to manage
interference seen by only receiver $1$. As the reader might have
guessed, receiver $1$ decodes the group sum of codewords chosen by
receivers $2$ and $3$.  In the following, we characterize an achievable rate region
using codes built over groups. 

\begin{definition}
 \label{Eqn:TestChannelsCodingOver3To1ICUsingNestedCosetCodes}
Let $\SetOfDistributions_{g}(\ulinecost)$ denote the collection of pairs consisting of 
a distribution $p_{QU_2U_3\ulineInputRV\ulineOutputRV}$ defined over
$\mathcal{Q} \times \mathcal{U}_2 \times \mathcal{U}_3 
\times \underline{\InputAlphabet} \times \underline{\OutputAlphabet}$, where
$\mathcal{U}_{2}=\mathcal{U}_{3}$ is an Abelian group $G$, and  a  
distribution  $w$ on $\mathcal{Q}(G)$ satisfying the following conditions:
(i) $p_{\underline{Y}|\underline{X}}=W_{\underline{Y}|\underline{X}}$,
(ii) $X_{1},(U_2,X_{2})$ and $(U_3,X_{3})$ are conditionally mutually
independent given $Q$ and (iii)
$\Expectation\{\kappa_{j}(X_{j})\} \leq \tau_{j}:j \in [3]$ and (iv)
$I(X_j;Y_j|Q,U_j)+ C_{w}^G(U_{j};Y_{j}|Q)-S_{w}^G(U_j;0|Q) \geq 0$ for
$j=2,3$. For 
$(p_{QU_2U_3\ulineInputRV\ulineOutputRV},w) \in
\SetOfDistributions_{g}(\ulinecost)$, let
$\alpha^{\three-1}_{g}(p_{QU_{2}U_{3}\ulineInputRV\ulineOutputRV},w)$ be defined as the set of rate triples $(R_{1},R_{2},R_{3})
\in [0,\infty)^{3}$ that satisfy
\begin{align}
R_1 &< I(X_1;Y_1|QZ)-H(Z|Q)+
\min\{ H(Z|Q), H(U_j|Q)+C_w^G(Z;Y_1|Q)-S_{w}^G(U_j;0|Q): j=2,3\} \nonumber\\
R_j &< I(X_j;Y_j|QU_j) + C_{w}^G(U_j;Y_j|Q) \ \ :j=2,3,   \nonumber\\
R_1+R_j &< I(X_1;Y_1|QZ)+ C_{w}^G(Z;Y_1|Q) +H(U_j|Q) -H(Z|Q)
+I(X_j;Y_j|QU_j) \nonumber \\
& \hspace{2in} + \min\{0, C_{w}^G(U_j;Y_j|Q)-S_{w}^G(U_j;0|Q) \} \ :j=2,3,\nonumber
\end{align}
where $Z=U_2 \oplus U_3$, and
\begin{equation}
 \label{Eqn:AchievableRateRegion3to1ICUsingNestedCosetCodes}
\alpha^{\three-1}_{g}(\ulinecost) = \cocl \left(
\underset{\substack{(p_{QU_{2}U_{3}\ulineInputRV\ulineOutputRV},w) \in \SetOfDistributions_{g}(\ulinecost)}
}{\bigcup}\alpha_{g}^{\three-1}(p_{\ulineInputRV\ulineOutputRV}) \right).\nonumber
\end{equation}
\end{definition}
\begin{thm}
\label{Thm:AchievableRateRegionFor3To1ICUsingGroupCosetCodes}
For $3-$IC
$(\ulineInputAlphabet,\ulineOutputAlphabet,W_{\ulineOutputRV|\ulineInputRV},\ulinecostfn)$, 
the set  $\alpha^{\three-1}_{g}(\ulinecost)$ is achievable, 
i.e., $\alpha_{g}^{\three-1}(\ulinecost) \subseteq \mathbb{C}(\ulinecost)$.
\end{thm}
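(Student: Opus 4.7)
The proof generalizes that of Theorem \ref{Thm:AchievableRateRegionFor3To1ICUsingCosetCodes} by replacing partitioned coset codes over a finite field with partitioned coset codes built over the Abelian group $G$. For $j=2,3$ one draws a random group code $\Lambda_{j} \subseteq G^{n}$ whose generator profile respects the pmf $w$ on $\mathcal{Q}(G)$: the number of generators corresponding to each primary component $\mathbb{Z}_{p^{r}}^{(m)}$ in the decomposition (\ref{eqn:G}) is proportional to $w_{p,r}$, scaled by a total rate parameter. The two codes are nested, $\Lambda_{j_{1}} \subseteq \Lambda_{j_{2}}$, by appending additional generator rows to the smaller one, and each is partitioned uniformly at random into bins whose indices carry the semi-private messages $m_{j1}$. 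User $1$ uses an i.i.d.\ unstructured codebook drawn from $p_{X_{1}|\TimeSharingRV}$; on top of $\Lambda_{j}$ a conditional satellite codebook over $\mathcal{X}_{j}$ is built according to $p_{X_{j}|\SemiPrivateRV_{j}\TimeSharingRV}$ and partitioned to carry the private message $m_{jX}$. Encoding and decoding mirror the field case \emph{verbatim}: each of encoders $j=2,3$ performs joint-typicality encoding within its indexed bins; decoders $j=2,3$ do standard joint-typicality decoding; decoder $1$ decodes into $\mathcal{X}_{1} \times (\Lambda_{2} \oplus \Lambda_{3})$ to recover $m_{1}$ together with $z^{n} = u_{2}^{n} \oplus u_{3}^{n}$.

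The crucial structural observation is that $\Lambda_{2} \oplus \Lambda_{3}$ is itself a random group code, whose effective per-component rate is the \emph{maximum} of the two constituents; this is the origin of the alignment gain, reflected in the theorem by $R_{1}$ depending on $H(Z|\TimeSharingRV)$ and $C_{w}^{G}(Z;Y_{1}|\TimeSharingRV)$ rather than on $H(\SemiPrivateRV_{2}|\TimeSharingRV)+H(\SemiPrivateRV_{3}|\TimeSharingRV)$. The main obstacle, absent in the field case, is that random group codes do \emph{not} have pairwise independent codewords: for each subgroup $H_{\theta} \leq G$ indexed by $\theta \in \Theta$, the probability that two distinct codewords differ by an element lying in a specific coset of $H_{\theta}^{n}$ depends on $|H_{\theta}|$ and on the generator profile $w$. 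Consequently, every packing or covering bound in the error analysis must be optimized over $\Theta$, and it is precisely this optimization that produces the operational quantities $C_{w}^{G}$ and $S_{w}^{G}$ in place of ordinary mutual informations and entropies.

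Concretely, the source-coding step at encoder $j$ yields a covering constraint of the form ``rate of $\Lambda_{j}$ is at least $S_{w}^{G}(\SemiPrivateRV_{j};0|\TimeSharingRV)$'', expressing the cost of reliably hitting the $p_{\SemiPrivateRV_{j}|\TimeSharingRV}$-typical set with a group code of profile $w$; the packing step at decoder $j$ contributes $C_{w}^{G}(\SemiPrivateRV_{j};Y_{j}|\TimeSharingRV)$, and at decoder $1$ it contributes $C_{w}^{G}(Z;Y_{1}|\TimeSharingRV)$. The feasibility condition $I(X_{j};Y_{j}|\TimeSharingRV \SemiPrivateRV_{j})+C_{w}^{G}(\SemiPrivateRV_{j};Y_{j}|\TimeSharingRV)-S_{w}^{G}(\SemiPrivateRV_{j};0|\TimeSharingRV)\geq 0$ imposed on $\SetOfDistributions_{g}(\ulinecost)$ is exactly what reconciles covering at encoder $j$ with packing at decoder $j$, and this same quantity reappears, negated inside $\min\{0,\,\cdot\,\}$, in the joint bound on $R_{1}+R_{j}$ whenever the error event couples decoders $1$ and $j$. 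A subtle point is to verify that conditioning on the nested pair $(\Lambda_{2},\Lambda_{3})$ does not spoil the distributional properties used in the sum-code analysis; this is dispatched by the ``independent appended rows'' construction already used in the proof of Theorem \ref{Thm:AchievableRateRegionFor3To1ICUsingCosetCodes}. Applying Markov's inequality and a union bound over the constant number of error events, and invoking the group-coding lemmas of \cite{201103TIT_KriPra} adapted to conditional typicality with a time-sharing sequence $Q^{n}$, yields the inequalities in the theorem statement; a final $\cocl$ over admissible $(p_{QU_{2}U_{3}\ulineInputRV\ulineOutputRV},w)$ produces $\alpha_{g}^{\three-1}(\ulinecost)$.
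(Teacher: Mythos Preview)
Your proposal is correct and follows essentially the same approach as the paper: build nested group PCCs whose generator profile matches $w$, replicate the encoding/decoding of Theorem~\ref{Thm:AchievableRateRegionFor3To1ICUsingCosetCodes}, and carry out the error analysis by stratifying over the subgroup chain $\{H_{\theta}:\theta\in\Theta\}$, which is exactly what produces $S_{w}^{G}$ and $C_{w}^{G}$ in place of the Shannon quantities. The only minor discrepancies are that the paper makes the construction explicit via a random homomorphism $\phi:J\to G^{n}$ (with $J=\bigoplus_{(p,r)}\mathbb{Z}_{p^{r}}^{sw_{p,r}}$) and invokes the counting lemma $|T_{J,\theta}(a)|\le 2^{n(1-\omega_{\theta})R_{g}}$ from \cite{201305TITarXiv_SahPra} rather than \cite{201103TIT_KriPra}, and it passes through an explicit pre-Fourier--Motzkin parameterization before arriving at the stated bounds.
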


The proof is given in Appendix \ref{AppSec:abeliangroupsachievability}.
We now illustrate the need to build codes over appropriate algebraic
objects to enable interference management. In other words, we provide
an example where codes built over groups outperform unstructured codes
as well as codes built over finite fields.\footnote{While, we do not
provide a proof of the statement that codes built over groups
outperform PCC built over finite fields, this can be recognized
through standard arguments.} 

\begin{example}
\label{Ex:3To1ICGroupAdditiveExample}
Consider a quaternary $3-$to$-1$ IC with input and output alphabets
$\InputAlphabet_{j} = \OutputAlphabet_{j} = \integers_{4}=
\left\{0,1,2,3 \right\}$ being the Abelian group of cardinality
$4$. Let $\GroupSum$ denote the group operation, i.e., addition
mod$-4$ in $\integers_{4}$. The channel transition probabilities are
described through the relation $Y_{1}=X_{1}\GroupSum X_{2} \GroupSum
X_{3} \GroupSum N_{1}$, $Y_{j}=X_{j} \GroupSum N_{j}$ for $j=2,3$ such
that (i) $N_{1},N_{2},N_{3}$ are independent random variables taking
values in $\integers_{4}$
with
$P(N_{j}=0)=1-\delta_{j}$ and $P(N_j=i)=\frac{\delta_{j}}{3}$
for $i,j=1,2,3$.
Inputs $X_{2},X_{3}$ of users $2$ and $3$ are not constrained, i.e.,
$\kappa_{j}(x_{j})=0$ for $j=2,3$ and any $x_{j} \in
\InputAlphabet_{j}$, whereas $\kappa_{1}(x_{1})=1$ if $x_{1} \in
\{1,2,3\}$ and $\kappa_{1}(0)=0$. User $1$'s input is constrained to a average cost of $\tau$ per symbol.
\end{example}
The reader will recognize that the $3-$to$-1$ IC described in example
\ref{Ex:3To1ICGroupAdditiveExample} is analogous to that in example
\ref{Ex:A3To1ICForWhichLinearCodesOutperformUnstructuredCodes} with
the binary field replaced by Abelian group $\integers_{4}$. 
For simplicity, let us henceforth assume
$\delta_{2}=\delta_{3}=\delta$. Since users $2$ and $3$ enjoy
interference free point-to-point links, we let them communicate at
their respective capacities. This is possible  even while using PCC built on
$\mathbb{Z}_4$ because if we choose  $U_j=X_j$ and
put a uniform distribution on $X_j$ for $j=2,3$, we get the group capacity as
\begin{align}
C_w^G(X_j;Y_j) &= \min\{2-h_b(\delta)-\delta \log_2(3),
2+2h_b(2\delta/3)-2h_b(\delta)-2\delta \log_2(3) \}=
2-h_b(\delta)-\delta \log_2(3), \nonumber
\end{align}
where the last equality follows from the concavity of entropy. 
Clearly, user $1$ can achieve a rate not
greater than $C^{*} \define
\underset{p_{X_{1}}:p_{X_{1}}(1)\leq
  \tau}{\sup}~I(X_{1};Y_{1}|X_{2}\GroupSum X_{3})$. 
The following proposition states that $C^*$ is achievable by group codes but
not by unstructured codes.  Our
approach is similar to that of section
\ref{Sec:StrictSub-optimalityOfHanKobayashiRateRegionFor3-To-1IC}. 
The proof is provided in Appendix \ref{AppSec:abeliangroupoptimality}.
\begin{prop}
 \label{thm:3To1ICGroupAddition}
 Consider the $3-$to$-1$ IC described in example
 \ref{Ex:3To1ICGroupAdditiveExample} with
 $\delta_{2}=\delta_{3}=\delta \in (0,\frac{1}{4})$, $\delta_{1} \in
 (0,\frac{1}{4})$ and $\tau < \frac{3}{4}$. If
 $\delta_{1},\tau$ and $\delta$ are such that 
 \begin{equation}
  \label{Eqn:3To1GroupICUpperBoundUnstructuredCodes}
  C^{*} +2(2-h_{b}(\delta)-\delta\log_{2}3) > 2-h_{b}(\delta_{1})-\delta_{1}\log_{2}3,
 \end{equation}
then the rate triple $(C^{*},2-h_{b}(\delta)-\delta\log_{2}3,2-h_{b}(\delta)-\delta\log_{2}3) \notin \alpha_{u}(\tau,0,0)$.
Moreover, if in addition $\beta \triangleq \delta_1+\tau-\frac{4 \delta_1 \tau}{3} 
\leq \delta$, then group codes achieve capacity, i.e., 
$(C^{*},2-h_{b}(\delta)-\delta\log_{2}3,
2-h_{b}(\delta)-\delta\log_{2}3) \in \alpha_{g}(\tau,0,0)=\mathbb{C}(\tau,0,0)$.
%  \begin{equation}
%   \label{Eqn:3To1GroupICrateTripleNotAchievable}
%   (C^{*},2-h_{b}(\delta)-\delta\log_{2}3,2-h_{b}(\delta)-\delta\log_{2}3) \notin \alpha_{u}(\tau,1,1). \nonumber
%  \end{equation}
\end{prop}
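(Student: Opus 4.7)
The plan is to prove the two halves separately, adapting to the cyclic group $\mathbb{Z}_4$ the approach used for the binary field in Proposition \ref{Prop:StrictSub-OptimalityOfHanKobayashi} and Corollary \ref{Cor:PCCAchievesCapacityForAdditive3-To-1IC}. Write $C_2 := 2-h_b(\delta)-\delta\log_2 3$ for the PTP capacity of users $2$ and $3$ and $Z := X_2 \oplus_4 X_3$ for the aligned interference.

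For the strict sub-optimality of $\UHK$region, I would argue by contradiction: suppose $(C^{*}, C_2, C_2) \in \alpha_u(p_{QU_2U_3\underline{X}\underline{Y}})$ for some pmf in $\mathbb{D}_u(\tau,0,0)$. The bound $R_j \le I(U_j X_j; Y_j|Q) = H(Y_j|Q)-H(N_j) \le C_2$ is tight only when $H(Y_j|Q=q)=2$ for every $q$ of positive probability, which, since every nontrivial character of $N_j$ on $\mathbb{Z}_4$ evaluates to $1-\tfrac{4\delta}{3}\neq 0$, forces $p_{X_j|Q}(\cdot|q)$ to be uniform on $\mathbb{Z}_4$ for $j=2,3$. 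A short character-theoretic calculation using the conditional independence of $(X_2,U_2)$ and $(X_3,U_3)$ given $Q$ then shows that $H(Z|Q,U_2,U_3)=0$ if and only if $H(X_j|Q,U_j)=0$ for both $j$. In that case the sum-rate bound \eqref{Eqn:SumRateBoundOn3To1IC} degenerates to $R_1+R_2+R_3 \le I(X_1X_2X_3;Y_1|Q) = 2-h_b(\delta_1)-\delta_1\log_2 3$ (using the uniformity of $X_2,X_3$ given $Q$), directly contradicting \eqref{Eqn:3To1GroupICUpperBoundUnstructuredCodes}. Otherwise $H(Z|Q,U_2,U_3)>0$, and data processing applied to the Markov chain $X_1 \to X_1+N_1 \to X_1+N_1+Z$, together with the nondegeneracy of $p_{N_1}$ on $\mathbb{Z}_4$ and the positivity of $C^{*}$, yields a strict inequality $I(X_1;Y_1|Q,U_2,U_3) < C^{*}$, again contradicting $R_1=C^{*}$.

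For achievability, I would apply Theorem \ref{Thm:AchievableRateRegionFor3To1ICUsingGroupCosetCodes} with $G=\mathbb{Z}_4$, $\mathcal{Q}=\phi$, $U_j=X_j$ uniformly distributed on $\mathbb{Z}_4$ for $j=2,3$, and $p_{X_1}(0)=1-\tau$, $p_{X_1}(i)=\tau/3$ for $i\in\{1,2,3\}$ (the $C^{*}$-achieving distribution by symmetry). With this choice $H(Z)=H(U_j)=2$, $S_w^G(U_j;0)=0$, and the cyclic-group minima in $C_w^G(X_j;Y_j)$ and $C_w^G(Z;Y_1)$ are both attained at $\theta=0$ in the regime $\beta,\delta<1/4$, giving $C_w^G(X_j;Y_j)=C_2$ and $C_w^G(Z;Y_1)=2-h_b(\beta)-\beta\log_2 3$; the latter holds because the effective noise $X_1+N_1$ on the induced channel $Z \mapsto Y_1$ is symmetric on $\mathbb{Z}_4$ with total Hamming mass exactly $\beta=\delta_1+\tau-\tfrac{4\delta_1\tau}{3}$. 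Substituting, the individual-rate constraints yield $R_j<C_2$ and $R_1<C^{*}$ (using $C_w^G(Z;Y_1)\ge 0$), and the pairwise sum-rate bound $R_1+R_j<C^{*}+C_w^G(Z;Y_1)$ is no tighter than $R_1+R_j<C^{*}+C_2$ precisely because $\beta\le\delta$ implies $C_w^G(Z;Y_1)\ge C_2$ by monotonicity of $x\mapsto h_b(x)+x\log_2 3$ on $(0,3/4)$. Taking closure places $(C^{*},C_2,C_2)\in\alpha_g(\tau,0,0)$; by time-sharing with the origin the full rectangle $\{(R_1,R_2,R_3): R_1 \le C^{*}, R_j\le C_2\}$ lies in $\alpha_g$, and matching genie-aided converses (reveal $(X_2,X_3)$ to receiver $1$ for the $R_1$ bound and use interference-free PTP bounds for the $R_j$ bounds) then confirm $\alpha_g(\tau,0,0)=\mathbb{C}(\tau,0,0)$.

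The main obstacle is the group-theoretic evaluation of $C_w^G$ and $S_w^G$ on $\mathbb{Z}_4$: one must verify carefully that, in the regime $\beta,\delta<1/4$, the minimum over $\theta\in\{0,1\}$ in the cyclic-group formula is attained at $\theta=0$, so that $C_w^G$ reduces to ordinary mutual information and the hypothesis $\beta\le\delta$ acquires its clean interpretation that the aligned-interference channel is less noisy than the direct channel. A secondary subtlety is promoting the weak data-processing inequality to a strict one in the final step of the sub-optimality argument; this mirrors the binary proof but now requires the non-vanishing of all nontrivial $\mathbb{Z}_4$-characters of $N_1$ to handle the non-binary alphabet.
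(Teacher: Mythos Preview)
Your proposal is correct and follows essentially the same two-part strategy as the paper: the same case split on $H(X_j|Q,U_j)$ for sub-optimality and the same test channel for achievability. The only notable presentational difference is that for the strict inequality in the second case you invoke strict data-processing on $X_1\to X_1{+}N_1\to X_1{+}N_1{+}Z$, whereas the paper phrases the same step as strict convexity of $I(p_{X_1};W)$ in the channel $W$ (alluding to the fifth claim in Appendix~\ref{AppSec:3To1ORNonAddICStrictSubOptimalityOfUnstructuredCodes}); via the chain rule these are equivalent, so neither is more general, though the convexity phrasing makes the strictness condition transparent without a $\mathbb{Z}_4$ analog of Lemma~\ref{Lem:AddingAnIndependentRandomVariableReducesDifferenceInEntropies}. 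Your ``main obstacle''---showing the cyclic-group minimum in $C_w^G$ is attained at $\theta=0$---the paper resolves in one line by subadditivity of entropy: writing $B=X_1{+}N_1$ and $B=2B_{\mathrm{high}}+B_{\mathrm{low}}$, both bits have marginal entropy $h_b(2\beta/3)$, so $2h_b(2\beta/3)\ge H(B)=h_b(\beta)+\beta\log_2 3$, which is exactly the required inequality (valid for all $\beta\in(0,3/4)$).
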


It can be shown that there exists a non-empty set of parameters
$(\delta,\delta_1,\tau)$ that satisfy these conditions. An  example is
given by  $\delta=\frac{1}{8}$,
$\delta_1=\tau=\frac{3}{4}-\frac{\sqrt{30}}{8}$. 

\section*{Acknowledgement}
 We thank the anonymous reviewers for their insightful comments that
 have helped us to improve our manuscript considerably.
\appendices
\section{Proof of proposition \ref{Prop:StrictSub-OptimalityOfHanKobayashi}}
\label{AppSec:SubOptAdditiveIC}
We only need to prove the second statement. If $H(X_{j}|\TimeSharingRV, U_{j})=0$ for $j=2,3$, then the upper bound in
(\ref{Eqn:SumRateBoundOn3To1IC}) reduces to $R_{1}+R_{2}+R_{3}\leq I(X_{2}X_{3}X_{1};Y_{1}|Q) \leq
1-h_{b}(\delta_{1})$. From the hypothesis, we have
$h_{b}(\tau*\delta_{1})-h_{b}(\delta_{1})+1-h_{b}(\delta_{2})+1-h_{b}(\delta_{3}) >
1-h_{b}(\delta_{1})$ which violates the above upper bound and hence the theorem statement
is true.

Henceforth, we assume $H(X_{j}|\TimeSharingRV, U_{j})>0$ for $j=2$ or $j=3$. Let us
assume $j,\msout{j}$ are distinct elements in $\left\{ 2,3 \right\}$ and
$H(X_{j}|Q,\SemiPrivateRV_{j})>0$. Since $(U_{2},X_{2})$ and $(U_{3},X_{3})$ are
conditionally independent given $Q$, we have
\begin{eqnarray}
\label{Eqn:EntropyOfInterferingSignalIsPositive}
0 < H(X_{j}|\TimeSharingRV, U_{j}) = H(X_{j}| X_{\msout{j}},\TimeSharingRV,U_{2},U_{3})
= H(X_{2}\oplus X_{3}|X_{\msout{j}},\TimeSharingRV,U_{2},U_{3}) \leq H(X_{2}\oplus
X_{3}|\TimeSharingRV,U_{2}U_{3}).\nonumber
\end{eqnarray}
The univariate components $\SemiPrivateRV_{2},\SemiPrivateRV_{3}$ leave residual
uncertainty in the interfering signal and imply the existence of a $\tilde{\timeshare}^{*} =
(\timeshare^{*},u_{2}^{*},u_{3}^{*}) \in \tilde{\TimeSharingRVSet} \define
\TimeSharingRVSet \times \SemiPrivateRVSet_{2} \times \SemiPrivateRVSet_{3}$ for which
$H(X_{2}\oplus X_{3}|(\TimeSharingRV,U_{2}U_{3})=\tilde{\timeshare}^{*}) > 0$. Under
this condition, we prove that the upper bound (\ref{Eqn:3To1ICUpperBoundOnR1}) on $R_{1}$ is
strictly smaller than $h_{b}(\tau*\delta_{1})-h_{b}(\delta_{1})$. Towards that end, we
prove a simple observation based on strict concavity of binary entropy function.
\begin{lemma}
 \label{Lem:AddingAnIndependentRandomVariableReducesDifferenceInEntropies}
If $Z_{j}:j\in [3]$ are binary random variables such that (i) $H(Z_{1}) \geq H(Z_{2})$, (ii) $Z_{3}$ is independent of $(Z_{1},Z_{2})$, then $H(Z_{1})-H(Z_{2}) \geq |H(Z_{1}\oplus Z_{3}) - H(Z_{2}\oplus Z_{3})|$. Moreover, if $H(Z_{1}) > H(Z_{2})$ and $H(Z_{3})>0$, then the inequality is strict, i.e., $H(Z_{1})-H(Z_{2}) > |H(Z_{1}\oplus Z_{3}) - H(Z_{2}\oplus Z_{3})|$.
\end{lemma}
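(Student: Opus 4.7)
The plan is to parametrize the problem by $p_j := P(Z_j = 1)$ for $j \in [3]$ and reduce the claim to a one-variable monotonicity statement about $h_b$. First, using the symmetry $h_b(p) = h_b(1-p)$ together with the identity $(1-Z_j) \oplus Z_3 = 1 \oplus (Z_j \oplus Z_3)$, I would argue that we may assume without loss of generality that $p_j \in [0,1/2]$ for each $j$, since flipping any single $Z_j$ preserves both $H(Z_j)$ and $H(Z_j \oplus Z_3)$ as well as the independence hypothesis.

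With $p_1, p_2, p_3 \in [0,1/2]$, the hypothesis $H(Z_1) \geq H(Z_2)$ becomes $p_1 \geq p_2$. The map $p \mapsto p * p_3 = p_3 + (1-2p_3)p$ is monotone non-decreasing on $[0,1/2]$ and takes values in $[p_3,1/2]$, so $p_1 * p_3 \geq p_2 * p_3$ with both in $[0,1/2]$. Combined with monotonicity of $h_b$ on $[0,1/2]$, this gives $H(Z_1 \oplus Z_3) \geq H(Z_2 \oplus Z_3)$, so the absolute value can be dropped. The claim reduces to showing that $\psi(p) := h_b(p) - h_b(p * p_3)$ is non-decreasing on $[0,1/2]$.

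The main step is then an elementary derivative estimate: $\psi'(p) = h_b'(p) - (1-2p_3)\, h_b'(p * p_3)$. Two facts finish the argument. First, $h_b'$ is non-negative and non-increasing on $[0,1/2]$ by concavity of $h_b$. Second, $p \leq p * p_3 \leq 1/2$ for $p, p_3 \in [0,1/2]$. Therefore $h_b'(p) \geq h_b'(p * p_3) \geq (1-2p_3)\, h_b'(p * p_3)$, so $\psi'(p) \geq 0$, and integrating over $[p_2, p_1]$ yields the non-strict inequality. For the strict statement: when $H(Z_3) > 0$ we have $p_3 \in (0,1/2]$; if $p_3 \in (0, 1/2)$ then $(1-2p_3) < 1$ and $h_b'(p * p_3) > 0$ for $p \in [0,1/2)$, giving $\psi'(p) > 0$ on $[p_2, p_1)$, while in the extreme case $p_3 = 1/2$ both $H(Z_j \oplus Z_3)$ equal $1$, so the right-hand side is $0$ while the left is strictly positive. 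The only subtlety in the whole argument is tracking signs in the derivative chain, which is precisely what the initial WLOG reduction to $p_j \in [0,1/2]$ is designed to eliminate; once that reduction is in place, monotonicity and non-negativity of $h_b'$ on $[0,1/2]$ make the estimate routine.
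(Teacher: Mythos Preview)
Your proof is correct and follows essentially the same approach as the paper: reduce to parameters in $[0,1/2]$, then establish the inequality via a one-variable derivative computation exploiting that $h_b'$ is non-negative and decreasing on $[0,1/2]$. The only cosmetic difference is that you differentiate in the signal parameter $p$ (showing $\psi(p)=h_b(p)-h_b(p*p_3)$ is non-decreasing), whereas the paper differentiates in the noise parameter $t$ (showing $f(t)=h_b(\delta_1*t)-h_b(\delta_2*t)$ is non-increasing); both lead to the same comparison of products of the form $(1-2\cdot)\,h_b'(\cdot)$.
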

\begin{proof}
Note that, if either $H(Z_{1}) = H(Z_{2})$ or $H(Z_{3})=0$, then $H(Z_{1})-H(Z_{2}) = H(Z_{1}\oplus Z_{3}) - H(Z_{2}\oplus Z_{3})$. We therefore assume $H(Z_{1}) > H(Z_{2})$ and $H(Z_{3})>0$ and prove the case of strict inequality. For $j \in [3]$, let $\left\{ p_{Z_{j}}(0),p_{Z_{j}}(1) \right\}=\left\{ \delta_{j},1-\delta_{j} \right\}$ with $\delta_{j} \in [0,\frac{1}{2}]$, $\delta_{3}>0$. Define $f:[0,\frac{1}{2}] \rightarrow [0,1]$ as $f(t)=h_{b}(\delta_{1}*t)-h_{b}(\delta_{2}*t)$. It suffices to prove $f(0) > f(\delta_{3})$. By the Taylor series, $f(\delta_{3})=f(0) + \delta_{3}f'(\zeta)$ for some $\zeta \in [0,\delta_{3}]$ and therefore it suffices to prove $f'(t) < 0$ for $t \in (0,\frac{1}{2}]$.

It may be verified that 
\begin{equation}
\label{Eqn:DerivativeDifferenceInEntropies}
 f'(t) = (1-2\delta_{1})\log \frac{1-\bar{\delta}_{1}}{\bar{\delta}_{1}}-(1-2\delta_{2})\log \frac{1-\bar{\delta}_{2}}{\bar{\delta}_{2}}, \mbox{ where }\bar{\delta}_{j}= \delta_{j}+t(1-2\delta_{j}):j \in [2].\nonumber
\end{equation}
Note that (i) $0 \leq (1-2\delta_{1}) < (1-2\delta_{2}) \leq 1$, (ii) $\bar{\delta}_{j} \leq \delta_{j}+\frac{1}{2}(1-2\delta_{j}) \leq \frac{1}{2}$, (iii) since $\delta_{1}> \delta_{2}$ and $t\leq \frac{1}{2}$, $\bar{\delta}_{1}-\bar{\delta}_{2}=(\delta_{1}-\delta_{2})(1-2t) \geq 0$. We therefore have $0 \leq \bar{\delta}_{2} \leq \bar{\delta}_{1} \leq \frac{1}{2}$ and thus $\log \frac{1-\bar{\delta}_{2}}{\bar{\delta}_{2}} \geq \log \frac{1-\bar{\delta}_{1}}{\bar{\delta}_{1}}$. Combining this with the first observation, we conclude $(1-2\delta_{2})\log \frac{1-\bar{\delta}_{2}}{\bar{\delta}_{2}} > (1-2\delta_{1})\log \frac{1-\bar{\delta}_{1}}{\bar{\delta}_{1}}$ which implies $f'(t) < 0$ for $t \in (0,\frac{1}{2}]$.
\end{proof}
We are now equipped to work with the upper bound (\ref{Eqn:3To1ICUpperBoundOnR1}) on
$R_{1}$. Denoting $\tilde{\TimeSharingRV} \define
(\TimeSharingRV,\SemiPrivateRV_{2},\SemiPrivateRV_{3})$ and a generic element
$\tilde{\timeshare} \define (\timeshare,u_{2},u_{3}) \in \tilde{\TimeSharingRVSet}
\define \TimeSharingRVSet \times \SemiPrivateRVSet_{2} \times \SemiPrivateRVSet_{3}$, we
observe that
 \ifThesis{\begin{eqnarray}
 \lefteqn{ I(X_{1};Y_{1}|\tilde{\TimeSharingRV}) = H(Y_{1}|\tilde{\TimeSharingRV}) -
H(Y_{1}|\tilde{\TimeSharingRV}X_{1})}\nonumber\\
&=&\sum_{\tilde{\timeshare}}p_{\tilde{\TimeSharingRV}}(\tilde{\timeshare})H(Y_{1}|\tilde{\TimeSharingRV}=\tilde{\timeshare}
)-\sum_{x_{ 1 },\tilde{\timeshare}}p_{\tilde{\TimeSharingRV}X_{1}}(\tilde{\timeshare},x_{1})
H(Y_{1}|X_{1}=x_{1},\tilde{\TimeSharingRV}=\tilde{\timeshare})\nonumber\\
&=&\sum_{\tilde{\timeshare}}p_{\tilde{\TimeSharingRV}}(\tilde{\timeshare})H(X_{1}\oplus
N_{1}\oplus  X_{2}\oplus X_{3} |\tilde{\TimeSharingRV}=\tilde{\timeshare}
)-\sum_{x_{ 1
},\tilde{\timeshare}}p_{X_{1}\tilde{\TimeSharingRV}}(x_{1,}\tilde{\timeshare})
H(x_{1}\oplus N_{1}\oplus  X_{2}\oplus X_{3}|X_{1}=x_{1},\tilde{\TimeSharingRV}=\tilde{\timeshare})\nonumber\\
\label{Eqn:X2X3N1IndependentOfX1GivenQ}
&=&\sum_{\tilde{\timeshare}}p_{\tilde{\TimeSharingRV}}(\tilde{\timeshare})H(X_{1}\oplus
N_{1}\oplus  X_{2}\oplus X_{3} |\tilde{\TimeSharingRV}=\tilde{\timeshare})-\sum_{x_{ 1
},\tilde{\timeshare}}p_{X_{1}\tilde{\TimeSharingRV}}(x_{1,}\tilde{\timeshare})
H(N_{1}\oplus  X_{2}\oplus X_{3}
|\tilde{\TimeSharingRV}=\tilde{\timeshare})\\
&=&\sum_{\tilde{\timeshare}}p_{\tilde{\TimeSharingRV}}(\tilde{\timeshare})H(X_{1}\oplus
N_{1}\oplus  X_{2}\oplus X_{3} |\tilde{\TimeSharingRV}=\tilde{\timeshare}
)-\sum_{\tilde{\timeshare}}p_{\tilde{\TimeSharingRV}}(\tilde{\timeshare})
H(N_{1}\oplus X_{2}\oplus X_{3}
|\tilde{\TimeSharingRV}=\tilde{\timeshare})\nonumber\\
&\leq&\sum_{\tilde{\timeshare}}p_{\tilde{\TimeSharingRV}}(\tilde{\timeshare})H(X_{1}\oplus
N_{1}|\tilde{\TimeSharingRV}=\tilde{\timeshare}
)\label{Eqn:InequalityDueToUncertainityInX2PlusX3GivenU2U3}-\sum_{\tilde{\timeshare}} p_ { \tilde{\TimeSharingRV} } (\tilde{\timeshare} )
H(N_{1}|\tilde{\TimeSharingRV}=\tilde{\timeshare})=\sum_{\timeshare}p_{\TimeSharingRV}(\timeshare)H(X_{1}\oplus N_{1}|\TimeSharingRV=\timeshare)-h_{b}(\delta_{1})\\
\label{Eqn:Jensen'sInequalityAndCostConstraint}
&=&\sum_{\timeshare}p_{\TimeSharingRV}(\timeshare)h_{b}(p_{X_{1}|Q}(1|q)*\delta_{1})- h_{b}(\delta_{1})\leq h_{b}(\Expectation_{\TimeSharingRV}\left\{ p_{X_{1}|Q}(1|q)*\delta_{1} \right\})- h_{b}(\delta_{1})\leq h_{b}(\tau*\delta_{1})-h_{b}(\delta_{1}),
 \end{eqnarray}}\fi\ifTITJournal{\begin{eqnarray}
 \lefteqn{ \label{Eqn:X2X3N1IndependentOfX1GivenQ} I(X_{1};Y_{1}|\tilde{\TimeSharingRV}) = \sum_{\tilde{\timeshare}}p_{\tilde{\TimeSharingRV}}(\tilde{\timeshare})H(X_{1}\oplus
N_{1}\oplus  X_{2}\oplus X_{3} |\tilde{\TimeSharingRV}=\tilde{\timeshare})-\sum_{x_{ 1
},\tilde{\timeshare}}p_{X_{1}\tilde{\TimeSharingRV}}(x_{1,}\tilde{\timeshare})
H(N_{1}\oplus  X_{2}\oplus X_{3}
|\tilde{\TimeSharingRV}=\tilde{\timeshare})}\\
&=&\sum_{\tilde{\timeshare}}p_{\tilde{\TimeSharingRV}}(\tilde{\timeshare})H(X_{1}\oplus
N_{1}\oplus  X_{2}\oplus X_{3} |\tilde{\TimeSharingRV}=\tilde{\timeshare}
)-\sum_{\tilde{\timeshare}}p_{\tilde{\TimeSharingRV}}(\tilde{\timeshare})
H(N_{1}\oplus X_{2}\oplus X_{3}
|\tilde{\TimeSharingRV}=\tilde{\timeshare})\nonumber\\
&<&\sum_{\tilde{\timeshare}}p_{\tilde{\TimeSharingRV}}(\tilde{\timeshare})H(X_{1}\oplus
N_{1}|\tilde{\TimeSharingRV}=\tilde{\timeshare}
)\label{Eqn:InequalityDueToUncertainityInX2PlusX3GivenU2U3}-\sum_{\tilde{\timeshare}} p_ { \tilde{\TimeSharingRV} } (\tilde{\timeshare} )
H(N_{1}|\tilde{\TimeSharingRV}=\tilde{\timeshare})=\sum_{\timeshare}p_{\TimeSharingRV}(\timeshare)H(X_{1}\oplus N_{1}|\TimeSharingRV=\timeshare)-h_{b}(\delta_{1})\\
\label{Eqn:Jensen'sInequalityAndCostConstraint}
&=&\sum_{\timeshare}p_{\TimeSharingRV}(\timeshare)h_{b}(p_{X_{1}|Q}(1|q)*\delta_{1})- h_{b}(\delta_{1})\leq h_{b}(\Expectation_{\TimeSharingRV}\left\{ p_{X_{1}|Q}(1|q)*\delta_{1} \right\})- h_{b}(\delta_{1})\leq h_{b}(\tau*\delta_{1})-h_{b}(\delta_{1}),
 \end{eqnarray}}\fi
where (i) (\ref{Eqn:X2X3N1IndependentOfX1GivenQ}) follows from independence of $(N_{1},X_{2},X_{3})$ and $X_{1}$ conditioned on realization of $\TimeSharingRV$, (ii) (\ref{Eqn:InequalityDueToUncertainityInX2PlusX3GivenU2U3}) follows from the existence of a $\tilde{\timeshare}^{*} \in \tilde{\TimeSharingRVSet}$ for which $H(X_{2}\oplus X_{3}|\tilde{\TimeSharingRV}=\tilde{\timeshare}^{*})>0$ and substituting $p_{X_{1}\oplus N_{1}|\tilde{\TimeSharingRV}}(\cdot|\tilde{\timeshare}^{*})$ for $p_{Z_{1}}$,
$p_{N_{1}|\tilde{\TimeSharingRV}}(\cdot|\tilde{\timeshare}^{*})$ for $p_{Z_{2}}$ and $p_{X_{2}\oplus
X_{3}|\tilde{\TimeSharingRV}}(\cdot|\tilde{\timeshare}^{*})$ for $p_{Z_{3}}$ in lemma \ref{Lem:AddingAnIndependentRandomVariableReducesDifferenceInEntropies}, and noting that $p_{X_{1}\oplus N_{1}|\tilde{Q}}(1|\tilde{q}^{*})> p_{N_{1}|\tilde{Q}}(1|q^{*})$, (iii) the first inequality in (\ref{Eqn:Jensen'sInequalityAndCostConstraint}) follows from Jensen's inequality and the second follows from the cost constraint that any test channel in $\SetOfDistributions_{u}(\tau,0,0)$ must satisfy. 

\section{Proof of achievability}
\label{AppSec:ProofOfAch}

 Let $p_{\TimeSharingRV \SemiPrivateRV_{2}\SemiPrivateRV_{3}\ulineInputRV \ulineOutputRV} \in \SetOfDistributions_{f}(\ulinecost)$, $\underlineR \in \alpha^{\three-1}_{f}(p_{\TimeSharingRV\SemiPrivateRV_{2}\SemiPrivateRV_{3}
\ulineInputRV\ulineOutputRV})$ and $\tilde{\eta} > 0$. Let us assume $\SemiPrivateRVSet_{2}=\SemiPrivateRVSet_{3}=\fieldpi$ is the finite field of size $\Prime$. For each $n \in \naturals $ sufficiently large, we
prove existence of a \threeIC code $(n, \underline{\mathscr{M}},\underlinee,\underlined)$ for
which $\frac{\log \mathscr{M}_{k}}{n} \geq R_{k}-\tilde{\eta}$, $\cost_{k}(e_{k}) \leq
\tau_{k}+\tilde{\eta}$ for $k\in [3]$ and $\overline{\xi}(\ulinee,\ulined) \leq \tilde{\eta}$.

Taking a cue from the above coding technique, we begin with an alternative characterization of $\alpha^{\three-1}_{f}(p_{\TimeSharingRV\SemiPrivateRV_{2}\SemiPrivateRV_{3} \ulineInputRV\ulineOutputRV})$ in terms of the parameters of the code.
\begin{definition}
 \label{Defn:3To1ICAchRegionSimplifiedPreFourierMotzkinBounds}
 Consider $p_{\TimeSharingRV\SemiPrivateRV_{2}\SemiPrivateRV_{3}\ulineInputRV\ulineOutputRV} \in
\SetOfDistributions_{f}(\ulinecost)$ and let $\fieldpi \define \SemiPrivateRVSet_{2}=\SemiPrivateRVSet_{3}$. Let
$\tilde{\alpha}^{\three-1}_{f}(p_{\TimeSharingRV\SemiPrivateRV_{2}\SemiPrivateRV_{3}
\ulineInputRV\ulineOutputRV})$ be defined as the set of rate triples $(R_{1},R_{2},R_{3})
\in [0,\infty)^{3}$ for which $\underset{\delta > 0}{\cup}{\mathcal{S}}(\underlineR,p_{\TimeSharingRV\SemiPrivateRV_{2}\SemiPrivateRV_{3}
\ulineInputRV\ulineOutputRV},\delta)$ is non-empty, where ${\mathcal{S}}(\underlineR,p_{\TimeSharingRV\SemiPrivateRV_{2}\SemiPrivateRV_{3}
\ulineInputRV\ulineOutputRV},\delta )$ is defined as the collection of vectors $(S_{2},T_{2},K_{2},L_{2},S_{3},T_{3},K_{3},L_{3}) \in [0,\infty )^{8}$ that satisfy
\begin{eqnarray}
 \label{Eqn:3-to-1ICBoundsInTermsOfCodeParameters}
&R_{j}=T_{j}+L_{j},~K_{j} > \delta,~~(S_{j}-T_{j}) > \log \Prime- H(U_{j}|Q)+\delta,\nonumber\\ &(S_{j}-T_{j})+{K_{j}} > \log \Prime + {H(X_{j}|Q)-H(U_{j},X_{j}|Q)}+\delta \nonumber\\
&T_{j}>\delta,~L_{j}>\delta, ~K_{j}+L_{j} < I(X_{j};Y_{j},U_{j}|\TimeSharingRV)-\delta,~~S_{j} < \log \Prime -H(U_{j}|X_{j},Y_{j},\TimeSharingRV)-\delta,\nonumber\\
&S_{j}+K_{j}+L_{j} < \log \Prime
+H(X_{j}|\TimeSharingRV)-H(U_{j},X_{j}|Y_{j},\TimeSharingRV)-\delta, ~~
R_{1}  < I(X_{1};Y_{1},U_{2}\oplus U_{3}|\TimeSharingRV)-\delta\nonumber\\
&R_{1}+S_{j} < \log\Prime + H(X_{1}|\TimeSharingRV) -
H(X_{1},U_{2}\oplus U_{3}|Y_{1},\TimeSharingRV)-\delta\nonumber
 \end{eqnarray}
for $j=2,3$.
\end{definition}
\begin{lemma}
 \label{Lem:3To1ICSimplifiedPreFourierMotzkinBoundEquality}
$\tilde{{\alpha}}^{\three-1}_{f}(p_{\TimeSharingRV\SemiPrivateRV_{2}\SemiPrivateRV_{3}
\ulineInputRV\ulineOutputRV})=\alpha^{\three-1}_{f}(p_{\TimeSharingRV\SemiPrivateRV_{2}\SemiPrivateRV_{3}
\ulineInputRV\ulineOutputRV})$.
\end{lemma}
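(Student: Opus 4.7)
The plan is to prove mutual inclusion $\tilde{\alpha}^{\three-1}_{f}(p) = \alpha^{\three-1}_{f}(p)$ by Fourier-Motzkin elimination of the eight auxiliary code-parameter rates $(S_j,T_j,K_j,L_j:j=2,3)$ appearing in the definition of $\tilde{\alpha}^{\three-1}_{f}$. The first step is to substitute $L_j = R_j - T_j$ (from $R_j = T_j + L_j$) into every constraint involving $L_j$, then eliminate $K_j$, $S_j$, and $T_j$ in that order for each $j\in\{2,3\}$.

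Eliminating $K_j$ pairs each lower bound ($K_j>\delta$ and $(S_j-T_j)+K_j>\log\Prime+H(X_j|Q)-H(U_j,X_j|Q)+\delta$) with each upper bound ($K_j+L_j<I(X_j;Y_j,U_j|Q)-\delta$ and $S_j+K_j+L_j<\log\Prime+H(X_j|Q)-H(U_j,X_j|Y_j,Q)-\delta$); applying the identity $H(U_j,X_j|Q)-H(U_j,X_j|Y_j,Q) = I(U_j,X_j;Y_j|Q)$ to the pairing of the two `tight' bounds yields the per-user rate constraint $R_j<I(U_j,X_j;Y_j|Q)$, with residuals $R_j-T_j<I(X_j;Y_j,U_j|Q)$ and $(S_j-T_j)$ forced into an interval. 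Next, eliminating $S_j$ combines the surviving lower bound $S_j-T_j>\log\Prime-H(U_j|Q)+\delta$ with the coupling $R_1+S_j<\log\Prime+H(X_1|Q)-H(X_1,U_{2}\oplus U_{3}|Y_1,Q)-\delta$; the entropic identity $H(X_1|Q)-H(X_1,U_{2}\oplus U_{3}|Y_1,Q) = I(X_1;U_{2}\oplus U_{3},Y_1|Q)-H(U_{2}\oplus U_{3}|Y_1,Q)$ then produces the mixed bound $R_1+T_j<I(X_1;U_{2}\oplus U_{3},Y_1|Q)+H(U_j|Q)-H(U_{2}\oplus U_{3}|Y_1,Q)$. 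Finally, eliminating $T_j\in(\delta,R_j-\delta)$ gives two regimes: pushing $T_j\to R_j-\delta$ and combining with the residual $R_j-T_j<I(X_j;Y_j,U_j|Q)$ (which simplifies to $L_j<I(X_j;Y_j|U_j,Q)$ via $I(X_j;Y_j,U_j|Q)-I(U_j;X_j|Q)=I(X_j;Y_j|U_j,Q)$) recovers the sum-rate bound $R_1+R_j<I(X_j;Y_j|Q,U_j)+I(X_1;U_{2}\oplus U_{3},Y_1|Q)+H(U_j|Q)-H(U_{2}\oplus U_{3}|Q,Y_1)$, whereas pushing $T_j\to\delta$ gives $R_1<I(X_1;U_{2}\oplus U_{3},Y_1|Q)+[H(U_j|Q)-H(U_{2}\oplus U_{3}|Q,Y_1)]$. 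Combined with the direct bound $R_1<I(X_1;Y_1,U_{2}\oplus U_{3}|Q)$ (which survives eliminating nothing), these two instances of the $R_1$ constraint condense into exactly the $\min\{0,\cdot\}$ form in $\alpha^{\three-1}_{f}$.

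For the reverse inclusion, given $\underline{R}\in\alpha^{\three-1}_{f}(p)$ I would construct explicit auxiliary rates by saturating the source-coding lower bounds: set $S_j-T_j$ just above $\log\Prime-H(U_j|Q)$, $K_j$ just above $\max\{\delta,I(U_j;X_j|Q)\}$, and then choose $T_j$ in the interval $\big(\max\{\delta,R_j-I(X_j;Y_j|U_j,Q)\},\,\min\{R_j-\delta, I(U_j;X_j,Y_j|Q), I(X_1;U_{2}\oplus U_{3},Y_1|Q)+H(U_j|Q)-H(U_{2}\oplus U_{3}|Y_1,Q)-R_1\}\big)$, setting $L_j=R_j-T_j$. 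A direct check shows that the interval is non-empty precisely under the three families of bounds defining $\alpha^{\three-1}_{f}$, and that every one of the eight constraints of $\mathcal{S}(\underline{R},p,\delta')$ is then satisfied for some $\delta'>0$; letting $\delta'\downarrow 0$ completes both directions.

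The main obstacle is combinatorial bookkeeping rather than any conceptual difficulty: each user contributes four auxiliary variables and two source/channel-coding pairs, the coupling through $U_{2}\oplus U_{3}$ introduces two additional constraints $R_1+S_j$, and Fourier-Motzkin mechanically produces on the order of a dozen candidate projected inequalities per user. Verifying that all but the three families in $\alpha^{\three-1}_{f}$ are redundant requires systematic use of the entropy identities $I(U_j,X_j;Y_j|Q)=I(U_j;Y_j|Q)+I(X_j;Y_j|U_j,Q)$ and the Markov-chain structure implied by conditional mutual independence of $X_1,(U_2,X_2),(U_3,X_3)$ given $Q$ in $\SetOfDistributions_{f}(\ulinecost)$.
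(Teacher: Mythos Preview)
Your approach is correct and is precisely the one the paper takes: the paper's proof consists of the single observation that substituting $R_j = T_j + L_j$ and eliminating $S_j,T_j,K_j,L_j$ via Fourier--Motzkin yields the bounds of $\alpha^{\three-1}_{f}$, with the strict inequalities and the union over $\delta>0$ handling the reverse inclusion. Your write-up is in fact considerably more detailed than the paper's own one-sentence proof, spelling out the elimination order and the entropic identities that witness the redundancies.
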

\begin{proof}
The proof follows by substituting $R_{j}=T_{j}+L_{j}$ in the bounds characterizing ${\mathcal{S}}(\underlineR,p_{\TimeSharingRV\SemiPrivateRV_{2}\SemiPrivateRV_{3}
\ulineInputRV\ulineOutputRV})$ and eliminating $S_{j},T_{j},K_{j},L_{j}: j=2,3$ via the technique of Fourier Motzkin. The resulting characterization will be that of $\alpha^{\three-1}_{f}(p_{\TimeSharingRV\SemiPrivateRV_{2}\SemiPrivateRV_{3}
\ulineInputRV\ulineOutputRV})$. The presence of strict inequalities in the bounds characterizing $\alpha^{\three-1}_{f}(p_{\TimeSharingRV\SemiPrivateRV_{2}\SemiPrivateRV_{3}
\ulineInputRV\ulineOutputRV})$ and ${\mathcal{S}}(\underlineR,p_{\TimeSharingRV\SemiPrivateRV_{2}\SemiPrivateRV_{3}
\ulineInputRV\ulineOutputRV},\delta )$ enables one to prove $\underset{\delta > 0}{\cup}{\mathcal{S}}(\underlineR,p_{\TimeSharingRV\SemiPrivateRV_{2}\SemiPrivateRV_{3}
\ulineInputRV\ulineOutputRV},\delta )$ is non-empty for every $\underline{R} \in \alpha^{\three-1}_{f}(p_{\TimeSharingRV\SemiPrivateRV_{2}\SemiPrivateRV_{3}
\ulineInputRV\ulineOutputRV})$.
\end{proof}
Lemma \ref{Lem:3To1ICSimplifiedPreFourierMotzkinBoundEquality} provides us with $\delta > 0$ and parameters $(S_{j},T_{j},K_{j},L_{j},:j=2,3) \in \mathcal{S}(\underlineR,p_{QU_{2}U_{3}\underlineX,\underlineY},\delta)$ of the code whose existence we seek to prove. Define $\eta = \frac{1}{2^{d}}\min\{\delta,\tilde{\eta} \}$, where $d \in \naturals$ will be specified in due course. Let $q^{n} \in T_{\eta}(Q)$ denote the time sharing sequence. User $1$'s code contains $ \exp \{ nR_{1}\}$ codewords $(x_{1}^{n}(m_{1}) \in \InputAlphabet_{1}^{n}:m_{1} \in \MessageSet_{1} )$, where $\MessageSet_{1} \define [ \exp \{ nR_{1}\}]$. For $j \in \{ 2,3\}$, user $j$'th cloud center codebook $\lambda_{j}$ is the PCC $(n,s_{j},t_{j},g_{j},b_{j}^{n},i_{j})$
built over $\SemiPrivateRVSet_{j}^{n}=\fieldpi^{n}$ where $s_{j} \define  \frac{nS_{j}}{\log\Prime}$ and $t_{j} \define \frac{nT_{j}}{\log\Prime}$. We refer the reader to the coding technique described prior to the proof for the definitions of $u_{j}^{n}(a^{s_{j}})$ and $c_{j1}(m_{j1})$. The PCCs \textit{overlap}, and without loss of generality, we assume $s_{2} \leq s_{3}$ and therefore $g_{3}^{T} = [g_{2}^{T}~~g_{3/2}^{T}]$.

We now specify encoding rules. Encoder $1$ feeds codeword $x_{1}^{n}(M_{1})$ indexed by the message as input. For $j=2,3$, encoder $j$ populates
\begin{equation}
\label{Eqn:3to1ICListOfEncoderj}
\mathcal{L}_{j}(M_{j}) \define \{ (u^{n}_{j}(a^{s_{j}}),x_{j}^{n}(M_{jX},b_{jX})) \in T_{2\eta}(U_{j},X_{j}|q^{n}) : (a^{s_{j}},b_{jX}) \in c_{j1}(M_{j1})\times c_{jX}\}.\nonumber
\end{equation}
If $\mathcal{L}_{j}(M_{j})$ is non-empty, one of these pairs is chosen. Otherwise, one pair from $\lambda_{j}\times \mathcal{C}_{j}$ is chosen. Let $(U_{j}^{n}(A^{s_{j}})$, $X_{j}^{n}(M_{jX},B_{jX}))$ denote the chosen pair. $X_{j}^{n}(M_{jX},B_{jX})$ is fed as input on the channel.

Decoder $1$ constructs the sum $\lambda_{2}\oplus \lambda_{3} \define \left\{ u_{2}^{n}
\oplus u_{3}^{n}:u_{j}^{n} \in \lambda_{j} ,j =2,3\right\}$ of the cloud center codebooks. Let $u_{\oplus}^{n}(a^{s_{3}}) \define a^{s_{3}}g_{3}\oplus b_{2}^{n}\oplus b_{3}^{n}$ denote a generic codeword in $\lambda_{2} \oplus \lambda_{3} $. Note that $\lambda_{2} \oplus \lambda_{3} = \left\{ u_{\oplus}^{n}(a^{s_{3}}):a^{s_{3}} \in \SemiPrivateRVSet_{3}^{s_{3}} \right\}$.\footnote{Here we have used the assumption $s_{2} \leq s_{3}$. In general, if $s_{j_{1}} \leq s_{j_{2}}$, we have $\lambda_{2}\oplus \lambda_{3} = \left\{  u_{\oplus}^{n}(a^{s_{j_{2}}}): a^{s_{j_{2}}} \in \SemiPrivateRVSet_{j_{2}}^{s_{j_{2}}} \right\}$, where $u_{\oplus}^{n}(a^{s_{j_{2}}}) \define a^{s_{j_{2}}}g_{j_{2}}\oplus b_{2}^{n}\oplus b_{3}^{n}$ denotes a generic codeword.}
Having received $Y_{1}^{n}$, it looks for all potential message $\hat{m}_{1}$ for which
there exists a $a^{s_{3}} \in \mathcal{U}_{3}^{s_{3}}$ such that
$(q^{n},u_{\oplus}^{n}(a^{s_{3}}),$ $x_{1}^{n}(\hat{m}_{1}),Y_{1}^{n}) \in T_{2\eta_{1}}(Q,U_{2}\oplus U_{3},X_{1},Y_{1})$\footnote{The choice for $\eta_{1}$ is indicated at the end of the proof.}. If it finds exactly
one such message $\hat{m}_{1}$, it declares this as decoded message of user $1$.
Otherwise, it declares an error.

For $j\in \{2,3 \}$, decoder $j$ identifies all $(\hat{\msg}_{j1},\hat{\msg}_{j\InputRV})$ for which there exists $(a^{s_{j}},b_{jX}) \in c_{j1}(\hat{\msg}_{j1}) \times c_{jX}$ such that $(\timeshare^{n},u^{n}_{j}(a^{s_{j}}),x_{j}^{n}(\hat{\msg}_{j\InputRV},b_{j\InputRV}),Y_{j}^{n}) \in
T_{2\eta_{1}}(Q,U_{j},X_{j},Y_{j})$, where $Y_{j}^{n}$ is the received vector. If
there is exactly one such pair
$(\hat{\msg}_{j1},\hat{\msg}_{j\InputRV})$, this is declared as message of user $j$ . Otherwise an error is signaled.

The above encoding and decoding rules map every quintuple of codes
$(\mathcal{C}_{1},\lambda_{2},\lambda_{3},\mathcal{C}_{2},\mathcal{C}_{3})$ into a
corresponding \threeIC code $(n,\underline{\mathcal{M}},\ulinee,\ulined)$ of rate $\frac{\log
|\mathcal{M}_{1}|}{n} = R_{1}, \frac{\log|\mathcal{M}_{j}|}{n} =
\frac{t_{j}}{n}\log\Prime + L_{j} = T_{j}+L_{j}=R_{j}:j \in \{2,3\}$, thus characterizing an
ensemble of \threeIC codes, one for each $n \in \naturals$. We average error
probability over this ensemble of \threeIC codes by letting (i) the codewords
of $\mathcal{C}_{1}\define (X_{1}^{n}(m_{1}):m_{1} \in \MessageSet_{1})$, generator
matrices $G_{2}, G_{3/2}$\footnote{Recall, that we have assumed $s_{2} \leq s_{3}$.}, bias vectors $B_{1}^{n},B_{2}^{n}$, bin indices
$(I_{j}(a^{s_{j}}):a^{s_{j}} \in \SemiPrivateRVSet_{j}^{s_{j}}):j=2,3$ and codewords of $\mathcal{C}_{j}=(X_{j}^{n}(m_{jX},b_{jX}):(m_{jX},b_{jX}) \in \MessageSet_{jX}\times c_{jX}):j=2,3$ be mutually independent, (ii) the codewords of $\mathcal{C}_{j}:j=1,2,3$ are identically distributed according to $\prod_{t=1}^{n}p_{X_{j}|Q}(\cdot|q_{t})$, (iii) generator matrices $G_{j_{1}},
G_{j_{2}/j_{1}}$, bias vectors $B_{1}^{n},B_{2}^{n}$, bin indices $(I_{j}(a^{s_{j}}):a^{s_{j}}
\in \SemiPrivateRVSet_{j}^{s_{j}}):j=2,3$ be uniformly distributed over their respective
range spaces. We denote the random partitioned coset code $(n,s_{j},t_{j},G_{j},B_{j}^{n},I_{j})$ of user $j$ as $\Lambda_{j}$ and let (i) $U_{j}^{n}(a^{s_{j}}) \define a^{s_{j}}G_{j}\oplus B_{j}^{n}$ denote a generic random codeword in $\Lambda_{j}$, (ii) $U_{\oplus}^{n}(a^{s_{3}}) \define a^{s_{3}}G_{3} \oplus B_{2}^{n} \oplus B_{3}^{n}$ denote a generic codeword in $\Lambda_{2} \oplus \Lambda_{3}$, and (iii) $C_{j1}(M_{j1}) = \{ a^{s_{j}} \in \mathcal{U}_{j}^{s_{j}}:I_{j}(a^{s_{j}})=M_{j1} \}$ denote the random collection of indices corresponding to message $M_{j1}$.

We now proceed towards deriving an upper bound on the probability of error. Towards that end, we begin with a characterization of error events. Let
\begin{eqnarray}
\label{Eqn:3To1ICCharacterizationOfErrorEvents}
\lefteqn{~~~~~\epsilon_{11} \define \left\{ (q^{n},X_{1}^{n}(M_{1})) \notin T_{2\eta}(Q,X_{1}) \right\}}\nonumber\\
&&\epsilon_{1j} \define \bigcap_{\substack{(a^{s_{j}},b_{jX})\in\\ C_{j1}(M_{j1}) \times c_{jX}}}\left\{ (q^{n},U_{j}^{n}(a^{s_{j}}),X_{j}^{n}(M_{jX},b_{jX})) \notin T_{2\eta}(Q,U_{j},X_{j}) \right\},\mbox{ for }j=2,3 
  \nonumber\\
\label{Eqn:3To1ICPCCProofEpsilon2DefinitionInMainBody}
&&\epsilon_{2} \define \left\{ (q^{n},U_{2}^{n}(A^{s_{2}}),U_{3}^{n}(A^{s_{3}}),X_{1}^{n}(M_{1}),X_{2}^{n}(M_{2X},B_{2X}),X_{3}^{n}(M_{3X},B_{3X})) \notin T_{\eta_{1}}(Q,U_{2},U_{3},\underline{X}) \right\}\\
 \label{Eqn:3To1ICPCCProofEpsilon3DefinitionInMainBody}
&&\!\!\!\!\!\!\!\!\!\!\!\!\!\!\!\!\epsilon_{3}\define\!\left\{ (q^{n},U_{2}^{n}(A^{s_{2}}),U_{3}^{n}(A^{s_{3}}),X_{1}^{n}(M_{1}),X_{2}^{n}(M_{2X},B_{2X}),X_{3}^{n}(M_{3X},B_{3X}),\underline{Y}^{n}) \notin T_{2\eta_{1}}(Q,X_{1},U_{2},U_{3},\underline{X},\underline{Y}) \right\}\\
&&\epsilon_{41} \define \bigcup_{\hatm_{1} \neq M_{1}}\bigcup_{a^{s_{3}} \in \SemiPrivateRVSet_{3}^{s_{3}}}\left\{  (q^{n},U_{\oplus}^{n}(a^{s_{3}}),X_{1}^{n}(\hatm_{1}),Y_{1}^{n}) \in T_{2\eta_{1}}(Q,U_{2}\oplus U_{3},X_{1},Y_{1})\right\} 
 \nonumber\\
&&\epsilon_{4j} \define \bigcup_{\hatm_{j} \neq M_{j}} \bigcup_{\substack{a^{s_{j}} \in \\C_{j1}(\hat{m}_{j1})}} \bigcup_{b_{jX} \in c_{jX}} \left\{ (q^{n},U_{j}^{n}(a^{s_{j}}),X_{j}^{n}(\hatm_{jX},b_{jX}),Y_{j}^{n}) \in T_{2\eta_{1}}(Q,U_{j},V_{j},Y_{j}) \right\} \mbox{ for }j=2,3.
  \nonumber
\end{eqnarray}
Note that $\epsilon \define \underset{j=1}{\overset{3}{\bigcup}}\left(\epsilon_{1j}\cup \epsilon_{2} \cup \epsilon_{3} \cup
\epsilon_{4j}\right)$ contains the error event. We derive an upper bound on the probability of this event by partitioning it appropriately. The following events will aid us identify such a partition. Define $\epsilon_{l}\define\epsilon_{l_{2}}\cup\epsilon_{l_{3}}$, where \begin{eqnarray}
 \label{Eqn:3To1ICPCCAchievabilityListErrorEvent}
 \epsilon_{l_{j}} \define \left\{  \phi_{j}(q^{n},M_{j}) < \mathscr{L}_{j}(n) \right\}, \mbox{ and } \phi_{j}(q^{n},M_{j}) \define \sum_{\substack{(a^{s_{j}},b_{jX})\in\\ C_{j1}(M_{j1}) \times c_{jX}}}1_{\left\{ (q^{n},U_{j}^{n}(a^{s_{j}}),X_{j}^{n}(M_{jX},b_{jX})) \in T_{2\eta}(Q,U_{j},X_{j}) \right\}}. \nonumber
\end{eqnarray}
$\mathscr{L}_{j}(n)$ is half of the expected number of jointly typical pairs in the indexed pair of bins.\footnote{Since the precise value of $\mathscr{L}_{j}(n)$ is necessary only in the derivation of the upper bound, it is provided in appendix \ref{AppSec:3to1ICEncoderErrorEventCharacterizingVarianceAndExpectationOfPhi}.} For sufficiently large $n$, we prove $\mathscr{L}_{j}(n) > 2$. For such an $n$, $\epsilon_{1j} \subseteq \epsilon_{l_{j}}:j=2,3$. Since, we can choose $n$ sufficiently large, we will henceforth assume $\epsilon_{1j} \subseteq \epsilon_{l_{j}}:j=2,3$. It therefore suffices to derive upper bounds on $P(\epsilon_{11}), P(\epsilon_{l_{j}}):j=2,3, P(\tilde{\epsilon}_{1}^{c}\cap \epsilon_{2}), P((\tilde{\epsilon}_{1}\cup \epsilon_{2})^{c}\cap \epsilon_{3}), P((\tilde{\epsilon}_{1}\cup \epsilon_{2}\cup \epsilon_{3})^{c}\cap \epsilon_{4j}):j=1,2,3$ where $\tilde{\epsilon_{1}}\define \epsilon_{11} \cup \epsilon_{l}=\epsilon_{11} \cup \epsilon_{l_{2}}\cup \epsilon_{l_{3}}$.

\textit{Upper bound on $P(\epsilon_{11})$} :-- By\ifThesis{ lemma \ref{Lem:ConditionalTypicalSetOccursWithHighProbability}}\fi\ifTITJournal{ conditional frequency typicality \cite[Lemma 5]{201301arXivMACDSTx_PadPra}}\fi, for sufficiently large $n$, $P(\epsilon_{11})\leq \frac{\eta}{32}$.

\textit{Upper bound on $P(\epsilon_{l_{j}}) $} :-- Using a second moment method similar to
that employed in \cite[Appendix A]{201301arXivMACDSTx_PadPra}, we derive an upper bound on
$P( \epsilon_{l_{j}})$ in appendix
\ref{AppSec:3to1ICEncoderErrorEventCharacterizingVarianceAndExpectationOfPhi}. In particular, we prove
\begin{eqnarray}
 \label{Eqn:3to1ICUpperBoundOnEncoderErrorEvent}
 P(\epsilon_{1j})&\leq&  12\exp \left\{ -n\left( \delta-32\eta \right)  \right\}
\end{eqnarray}
for sufficiently large $n$. In deriving the above upper bound, we employed, among others, the bounds \begin{eqnarray}&K_{j}> \delta > 0,~~
 (S_{j}-T_{j}) -\left[ \log \Prime- H(U_{j}|Q)\right] > \delta >0\nonumber\\&(S_{j}-T_{j})+{K_{j}} -\left[ \log \Prime + {H(X_{j}|Q)-H(U_{j},X_{j}|Q)}\right] > \delta >0.\nonumber\end{eqnarray}

\textit{Upper bounds on $P(\tilde{\epsilon}_{1}^{c}\cap \epsilon_{2})$, $P((\tilde{\epsilon}_{1} \cup \epsilon_{2})^{c}\cap \epsilon_{3})$} :-- These events are related to the following two events. (i) The codewords chosen by the distributed encoders are \textit{not} jointly typical, and (ii) the channel produces a triple of outputs that is \textit{not} jointly typical with the chosen and input codewords. In deriving upper bounds on $P(\tilde{\epsilon}_{1}^{c}\cap \epsilon_{2})$, $P((\tilde{\epsilon}_{1} \cup \epsilon_{2})^{c}\cap \epsilon_{3})$, we employ (i) conditional mutual independence of the triplet $X_{1}, (U_{j},X_{j}):j=2,3$ given $Q$ and (ii) the Markov chain $(U_{j}:j=2,3)-\underline{X} - \underline{Y}$. For a technique based on unstructured and independent codes, the analysis of this event is quite standard. However, since our coding technique relies on codewords chosen from statistically correlated codebooks, we present the steps in deriving an upper bound in appendix \ref{AppSec:UpperBoundOnProbabilityOfEpsilon2ComplementIntersectEpsilon3}. In particular, we prove that for sufficiently large $n$, 
\begin{eqnarray}
 \label{Eqn:3To1ICPCCProofMarkovEventFinalBoundInMainLine}
 P(\tilde{\epsilon}_{1}^{c}\cap \epsilon_{2})+P((\tilde{\epsilon}_{1} \cup \epsilon_{2})^{c}\cap \epsilon_{3}) \leq  2\exp \{ -n(n^{2}\mu\eta_{1}^{2}-32\eta) \}+\frac{\eta}{32}.
\end{eqnarray}

\textit{Upper bound on $P((\tilde{\epsilon}_{1}\cup\epsilon_{2}\cup \epsilon_{3})^{c}\cap\epsilon_{41})$} :-- In appendix \ref{AppSec:3To1ICDecoder1PopularErrorEvent}, we prove
\begin{eqnarray}
 \label{Eqn:3To1ICDecoder1ErrorEventSubstitutingValueOfDeltaAndEtas}
 P((\tilde{\epsilon}_{1}\cup\epsilon_{2}\cup \epsilon_{3})^{c}\cap\epsilon_{41})  \leq 4\exp \left\{ -n\left[\delta-28\eta_{1}-12\eta \right]\right\}
\end{eqnarray}
for sufficiently large $n$. In deriving (\ref{Eqn:3To1ICDecoder1ErrorEventSubstitutingValueOfDeltaAndEtas}), we employed, among others, the bounds \begin{eqnarray}\log\Prime+H(X_{1}|Q)-H(X_{1},U_{2}\oplus U_{3}|Y_{1},Q)-(R_{1}+\max\{ S_{2},S_{3}\}) > \delta > 0 , I(X_{1};Y_{1},U_{2}\oplus U_{3}|Q)-R_{1}> \delta > 0.\nonumber\end{eqnarray}
\begin{comment}{
\begin{eqnarray}
 \label{Eqn:3to1ICDecoder1ErrorEventFinalStagesRepeated}
 P(\epsilon_{3}^{c} \cap \epsilon_{41}) \leq \frac{\exp \{ n(29\eta_{4}+\eta_{3}) \}}{\exp \{ n(-\eta_{1}-\eta_{3}\log\Prime) \}} \left( \substack{\exp\{ n(\frac{s_{3}}{n} \log\Prime +R_{1} -\log \Prime -H(X_{1}|Q)+H(X_{1},U_{2}\oplus U_{3}|Y_{1},Q)  ) \}\\+\exp \{ -n(I(X_{1};U_{2}\oplus U_{3},Y_{1}|Q)-R_{1}) \}  }\right)\nonumber\\
\label{Eqn:3To1ICDecoder1ErrorEventSubstitutingValueOfDeltaAndEtas}
\leq 2\exp \{ -n(\delta-29\eta_{4}-\frac{\eta(1+\log\Prime)}{2^{t-1}} ) \}
\end{eqnarray}
where (\ref{Eqn:3To1ICDecoder1ErrorEventSubstitutingValueOfDeltaAndEtas}) follows from substituting $\eta_{1}=\eta_{3}=\frac{\eta}{2^{t}}$ and $\delta$.
}\end{comment}
\textit{Upper bound on $P((\tilde{\epsilon}_{1}\cup  \epsilon_{3})^{c}\cap \epsilon_{4j})$} :-- In appendix \ref{AppSec:AnUpperBoundOnProbabilityOfDecoderErrorEvent}, we prove
\begin{equation}
\label{Eqn:3To1ICDecoder2ErrorEventFinalBoundCopiedFromAppendix}
P((\tilde{\epsilon}_{1}\cup \epsilon_{2}\cup\epsilon_{3})^{c}\cap\epsilon_{4j}) \leq 10 \exp \left\{ -n\left( \delta -\left(9\eta+16\eta_{1}\right)  \right) \right\}
\end{equation}
for sufficiently large $n$. In deriving (\ref{Eqn:3To1ICDecoder2ErrorEventFinalBoundCopiedFromAppendix}), we employed, among others, the bounds 
\begin{eqnarray}
&(\log \Prime -H(U_{j}|X_{j},Y_{j},Q))-S_{j} > \delta > 0 ,
~~~~(\log \Prime +H(X_{j}|Q)-H(U_{j},X_{j}|Y_{j},Q))-(S_{j} +K_{j}) > \delta > 0,
\nonumber\\
&I(X_{1};Y_{1},U_{2}\oplus U_{3}|Q)-R_{1}>\delta > 0,~~~~
(I(X_{j};U_{j},Y_{j}|Q))- (K_{j}+L_{j}) >  \delta > 0,
\nonumber\\
&\left( \log\Prime + H(X_{j}|Q)- H(X_{j},U_{j}|Y_{j},Q)\right) -\left( K_{j}+L_{j} +S_{j}\right) > \delta >0.
\nonumber
\end{eqnarray}
We now collect the derived upper bounds. From (\ref{Eqn:3to1ICUpperBoundOnEncoderErrorEvent}), (\ref{Eqn:3To1ICPCCProofMarkovEventFinalBoundInMainLine}), (\ref{Eqn:3To1ICDecoder1ErrorEventSubstitutingValueOfDeltaAndEtas}) and (\ref{Eqn:3To1ICDecoder2ErrorEventFinalBoundCopiedFromAppendix}), we have
\begin{eqnarray}
 \label{Eqn:3To1ICUpperBoundOnProbabilityOfErrorBySummingAllBounds}
 P(\underset{j=1}{\overset{3}{\cup}}\left(\epsilon_{1j} \cup \epsilon_{3j} \cup
\epsilon_{4j}\right)) \!\!\!\!&\leq& \!\!\!\frac{\eta}{32}+
24\exp \left\{ -n\left( \delta-32\eta \right)  \right\} + 2\exp \{ -n(n^{2}\mu\eta_{1}^{2}-32\eta) \}+\frac{\eta}{32}\nonumber\\\!\!\!\!\!&&\!\!\!\!\!\!\!\!\!\!\!\!\!\!\!\!+4\exp \left\{ -n\left[\delta-28\eta_{1}-12\eta \right]\right\} +20 \exp \left\{ -n\left( \delta -\left(9\eta+16\eta_{1}\right)  \right) \right\}
\nonumber
\end{eqnarray}
The reader may recall that we need $\eta= \frac{1}{2^{d}}\min \{ \tilde{\eta},\delta \}$ and that $\eta_{1} \geq 4\eta$ for the above bounds to hold. The reader may verify that, by choosing $d$ sufficiently large, one can choose $\eta$ and $\eta_{1}\geq 4\eta$ such that the upper bound above decays exponentially. This completes the derivation of an upper bound on the probability of error.

We only need to argue that the chosen input codewords satisfy the cost constraint. For sufficiently large $n$, we have proved that the chosen input codewords are jointly typical with respect to $p_{QU_{2}U_{3}\underline{X}\underline{Y}}$, a distribution that satisfies $\Expectation\left\{ \kappa_{j}(X_{j})\right\} \leq \tau_{j}$. Using standard typicality arguments and finiteness of $\max \left\{ \kappa_{k}(x_{k}):x_{k} \in \InputAlphabet_{k}:k \in [3] \right\}$, it is straight forward to show that the average cost of the codeword input by encoder $j$ is close to $\tau_{j}$ per symbol.

\section{Upper bound on $P(\epsilon_{l_{j}})$}
\label{AppSec:3to1ICEncoderErrorEventCharacterizingVarianceAndExpectationOfPhi}
Recall
\begin{eqnarray}
 \label{Eqn:3to1ICEventEpsilon1jDefnOfPhi}
 \phi_{j}(q^{n},M_{j}) \define \sum_{a^{s_{j}}\in \SemiPrivateRVSet^{s_{j}}}\sum_{b_{jX} \in c_{jX}}1_{\{I_{j}(a^{s_{j}})=M_{j1},(q^{n},U_{j}^{n}(a^{s_{j}}),X_{j}^{n}(M_{jX},b_{jX})  ) \in T_{2\eta}(Q,U_{j},X_{j})\}},~~
 \mathscr{L}_{j}(n) \define \frac{1}{2}\Expectation\left\{ \phi_{j}(q^{n},M_{j}) \right\}
 \nonumber
\end{eqnarray}
and $\epsilon_{l_{j}}=\left\{ \phi_{j}(q^{n},M_{j})< \mathscr{L}_{j}(n) \right\}$. Employing Cheybyshev's inequality, we have
\begin{eqnarray}
 \label{Eqn:3to1ICUsingCheybyshevInequalityForEncoderErrorEvent}
 P(\epsilon_{l_{j}}) = P(\phi_{j}(q^{n},M_{j})< \mathscr{L}_{j}(n)) \leq P(|\phi_{j}(q^{n},M_{j})-\Expectation \{ \phi_{j}(q^{n},M_{j}) \}|\geq \frac{1}{2}\Expectation \{ \phi_{j}(q^{n},M_{j}) \}) \leq \frac{4\mbox{Var}\{ \phi_{j}(q^{n},M_{j})\}}{\left(\Expectation\{ \phi_{j}(q^{n},M_{j}) \}\right)^{2}}.
 \nonumber
\end{eqnarray}
Note that $\mbox{Var}\left\{\phi_{j}(q^{n},M_{j})\right\} = \mathscr{T}_{0}+\mathscr{T}_{1}+\mathscr{T}_{2}+\mathscr{T}_{3}-\mathscr{T}_{0}^{2}$, where
\begin{eqnarray}
 \label{Eqn:3to1ICUpperBoundOnEncoderErrorEventPhiSquared}
\mathscr{T}_{0}&=& \sum_{a^{s_{j}}\in \SemiPrivateRVSet^{s_{j}}}\sum_{b_{jX} \in c_{jX}}\sum_{\substack{(u_{j}^{n},x_{j}^{n}) \in\\T_{2\eta}(U_{j},X_{j}|q^{n}) }}P\left(\substack{I_{j}(a^{s_{j}})=M_{j1},U_{j}^{n}(a^{s_{j}})=u_{j}^{n}\\X_{j}^{n}(M_{jX},b_{jX})=x_{j}^{n}}\right)=\Expectation\{ \phi_{j}(q^{n},M_{j}) \},
 \\
\mathscr{T}_{1}&=& \sum_{a^{s_{j}}\in \SemiPrivateRVSet^{s_{j}}}\sum_{\substack{b_{jX},\tilde{b}_{jX} \in c_{jX}\\b_{jX}\neq\tilde{b}_{jX}}}\sum_{\substack{(u_{j}^{n},x_{j}^{n}),(u_{j}^{n},\tilde{x}_{j}^{n}) \in\\T_{2\eta}(U_{j},X_{j}|q^{n}) }}P\left(\substack{I_{j}(a^{s_{j}})=M_{j1},X_{j}^{n}(M_{jX},b_{jX})=x_{j}^{n},\\U_{j}^{n}(a^{s_{j}})=u_{j}^{n},X_{j}^{n}(M_{jX},\tilde{b}_{jX})=\tilde{x}_{j}^{n}}\right),
\nonumber \\
\mathscr{T}_{2}&=&\sum_{\substack{a^{s_{j}},\tilde{a}^{s_{j}}\in \SemiPrivateRVSet^{s_{j}}\\a^{s_{j}}\neq \tilde{a}^{s_{j}}}}\sum_{b_{jX} \in c_{jX}}\sum_{\substack{(u_{j}^{n},x_{j}^{n}),(\tilde{u}_{j}^{n},x_{j}^{n}) \in\\T_{2\eta}(U_{j},X_{j}|q^{n}) }}P\left(\substack{I_{j}(a^{s_{j}})=M_{j1},I_{j}(\tilde{a}^{s_{j}})=M_{j1},U_{j}^{n}(a^{s_{j}})=u_{j}^{n},\\X_{j}^{n}(M_{jX},b_{jX})=x_{j}^{n},U_{j}^{n}(\tilde{a}^{s_{j}})=\tilde{u}_{j}^{n}}\right),
 \nonumber
\end{eqnarray}
\begin{eqnarray}
\mathscr{T}_{3}&=&\sum_{\substack{a^{s_{j}},\tilde{a}^{s_{j}}\in \SemiPrivateRVSet^{s_{j}}\\a^{s_{j}}\neq \tilde{a}^{s_{j}}}}\sum_{\substack{b_{jX},\tilde{b}_{jX} \in c_{jX}\\b_{jX}\neq\tilde{b}_{jX}}}\sum_{\substack{(u_{j}^{n},x_{j}^{n}),(\tilde{u}_{j}^{n},\tilde{x}_{j}^{n}) \in\\T_{2\eta}(U_{j},X_{j}|q^{n}) }}P\left(\substack{I_{j}(a^{s_{j}})=M_{j1},X_{j}^{n}(M_{jX},b_{jX})=x_{j}^{n},U_{j}^{n}(a^{s_{j}})=u_{j}^{n},\\I_{j}(\tilde{a}^{s_{j}})=M_{j1},X_{j}^{n}(M_{jX},\tilde{b}_{jX})=\tilde{x}_{j}^{n},U_{j}^{n}(\tilde{a}^{s_{j}})=\tilde{u}_{j}^{n}}\right).
 \nonumber
\end{eqnarray}
The codewords of PCC $\Lambda_{j}$ are pairwise independent \cite[Theorem 6.2.1]{Gal-ITRC68}, and therefore
\begin{eqnarray}
 P\left(\substack{I_{j}(a^{s_{j}})=M_{j1},X_{j}^{n}(M_{jX},b_{jX})=x_{j}^{n},U_{j}^{n}(a^{s_{j}})=u_{j}^{n},\\I_{j}(\tilde{a}^{s_{j}})=M_{j1},X_{j}^{n}(M_{jX},\tilde{b}_{jX})=\tilde{x}_{j}^{n},U_{j}^{n}(\tilde{a}^{s_{j}})=\tilde{u}_{j}^{n}}\right) = P\left(\substack{I_{j}(a^{s_{j}})=M_{j1},U_{j}^{n}(a^{s_{j}})=u_{j}^{n}\\X_{j}^{n}(M_{jX},b_{jX})=x_{j}^{n}}\right)P\left(\substack{I_{j}(\tilde{a}^{s_{j}})=M_{j1},U_{j}^{n}(\tilde{a}^{s_{j}})=\tilde{u}_{j}^{n},\\X_{j}^{n}(M_{jX},\tilde{b}_{jX})=\tilde{x}_{j}^{n}}\right).
 \nonumber
\end{eqnarray}
It can be verified that $\mathscr{T}_{3} \leq \mathscr{T}_{0}^{2}$, and therefore, $P(\epsilon_{1j}) \leq 4\frac{\mathscr{T}_{0}+\mathscr{T}_{1}+\mathscr{T}_{2}}{\mathscr{T}_{0}^{2}}$. For sufficiently large $n$, we employ upper bounds on conditional probability and the number of conditional typical sequences to conclude
\begin{eqnarray}
\label{Eqn:3to1ICUpperBoundOnEncoderErrorUpperBoundOnT0}
\mathscr{T}_{0} &\geq& \frac{\exp \left\{  -nH(X_{j}|Q)-4n\eta \right\}|c_{jX}||T_{2\eta}(U_{j},X_{j}|q^{n})|}{\Prime^{t_{j}+n-s_{j}}}
\\
\mathscr{T}_{1} &\leq& \frac{\exp \left\{  -2nH(X_{j}|Q)+8n\eta+nH(X_{j}|U_{j},Q)+8n\eta \right\}|c_{jX}|(|c_{jX}|-1)|T_{2\eta}(U_{j},X_{j}|q^{n})|}{\Prime^{t_{j}+n-s_{j}}}
\nonumber\\
\mathscr{T}_{2} &\leq& \frac{\exp \left\{ -nH(X_{j}|Q)+4n\eta+nH(U_{j}|X_{j}Q)+8n\eta \right\}|c_{jX}||T_{2\eta}(U_{j},X_{j}|q^{n})|}{\Prime^{2(t_{j}+n-s_{j})}}.
\nonumber
\nonumber
\end{eqnarray}
For sufficiently large $n$, $\exp\{-4n\eta \}\leq \exp\{-nH(U_{j},X_{j}|Q) \}|T_{2\eta}(U_{j},X_{j}|q^{n})|\leq \exp\{4n\eta\}$. Substituting $S_{j}=\frac{s_{j}\log\Prime}{n}, T_{j}=\frac{t_{j}\log\Prime}{n}$ and $|c_{jX}|=\exp \{nK_{j} \}$, it maybe verified that, for sufficiently large $n$,
\begin{eqnarray}
  P(\epsilon_{1j}) &\leq& 4\exp \left\{-n\left[ S_{j}-T_{j}+K_{j}- \left( \log\Prime+H(X_{j}|Q) -H(U_{j},X_{j}|Q) \right)-8\eta \right]  \right\} +
 \nonumber\\
 &&4\exp \left\{ -n\left[S_{j}-T_{j}-( \log\Prime-{H(U_{j}|Q)} )-{28\eta} \right] \right\} + 4\exp \left\{ -n\left[ K_{j}-32\eta \right] \right\}.\nonumber
\end{eqnarray}
Using the bounds on $S_{j},T_{j}$ and $K_{j}$ as given in definition \ref{Defn:3To1ICAchRegionSimplifiedPreFourierMotzkinBounds} in terms of $\delta$, we have
 \begin{eqnarray}
\label{Eqn:3to1ICSubstitutingUpperBoundsForT0T1T2} 
P(\epsilon_{1j})&\leq&  12\exp \left\{ -n\left( \delta-32\eta \right)  \right\}
\end{eqnarray}
for sufficiently large $n$. Before we conclude this appendix, let us confirm $\mathscr{L}_{j}(n)$ grows exponentially with $n$. This would imply $\epsilon_{1j} \subseteq \epsilon_{l_{j}}$ and therefore $\epsilon_{1j} \cap \epsilon_{l_{j}}^{c} = \phi$, the empty set. From (\ref{Eqn:3to1ICUpperBoundOnEncoderErrorEventPhiSquared}), (\ref{Eqn:3to1ICUpperBoundOnEncoderErrorUpperBoundOnT0}), we haven for sufficiently large $n$,
\begin{eqnarray}
 \mathscr{L}_{j}(n) &=& \frac{1}{2}\Expectation\left\{ \phi_{j}(q^{n},M_{j}) \right\} = \frac{\mathscr{T}_{0}}{2} \geq  \frac{\exp \left\{  -nH(X_{j}|Q)-4n\eta \right\}|c_{jX}||T_{2\eta}(U_{j},X_{j}|q^{n})|}{2\Prime^{t_{j}+n-s_{j}}} \nonumber\\
 \label{Eqn:3To1ICProofOfPCCListSizeIsLargerThan2UsingParametersOfCode}
 &\geq & \frac{1}{2}\exp \left\{n\left[ S_{j}-T_{j}+K_{j}- \left( \log\Prime+H(X_{j}|Q)-H(U_{j},X_{j}|Q)\right)-8\eta \right]  \right\} \geq  \frac{1}{2}\exp \left\{n\left[\delta-8\eta\right]  \right\},
\end{eqnarray}
where, as before, we have employed $S_{j}=\frac{s_{j}\log\Prime}{n}, T_{j}=\frac{t_{j}\log\Prime}{n}$ and $|c_{jX}|=\exp \{nK_{j} \}$, the lower bounds on $|T_{2\eta}(U_{j},X_{j}|q^{n})|$ and the definition of $\delta$.
\section{Upper bounds on $P(\tilde{\epsilon}_{1}^{c}\cap \epsilon_{2})$, $P((\tilde{\epsilon}_{1} \cup \epsilon_{2})^{c}\cap \epsilon_{3})$}
\label{AppSec:UpperBoundOnProbabilityOfEpsilon2ComplementIntersectEpsilon3}
In the first step, we derive an upper bound on $P(\tilde{\epsilon}_{1}^{c}\cap \epsilon_{2})$, where $\tilde{\epsilon}_{1} = \epsilon_{1} \cup \epsilon_{l}$, and
\begin{equation}
 \label{Eqn:3to1ICErrorEventCharacterizingMarkovLemma}
 \epsilon_{2} = \left\{ (q^{n},U_{2}^{n}(A^{s_{2}}),U_{3}^{n}(A^{s_{3}}),X_{1}^{n}(M_{1}),X_{2}^{n}(M_{2X},B_{2X}),X_{3}^{n}(M_{3X},B_{3X})) \notin T_{\eta_{1}}(Q,U_{2},U_{3},\underline{X}) \right\} .
\end{equation}
was defined in (\ref{Eqn:3To1ICPCCProofEpsilon2DefinitionInMainBody}). In the second step, we employ the result of conditional frequency typicality \cite[Lemma 4 and 5]{201301arXivMACDSTx_PadPra} to provide an upper bound on $P((\epsilon_{1}\cup\epsilon_{l_{2}}\cup\epsilon_{l_{3}}\cup \epsilon_{2})^{c}\cap (\epsilon_{31}\cup\epsilon_{32}\cup\epsilon_{33}))$.

As an astute reader might have guessed, the proof of first step will employ conditional independence of the triple $X_{1}, (U_{2},X_{2}),(U_{3},X_{3})$ given $Q$. The proof is non-trivial because of statistical dependence of the codebooks. We begin with the definition
\begin{equation}
 \label{Eqn:3to1ICSetOfVectorsThatAreIndividuallyTypicalButNotJointly}
 \Theta(q^{n}) \define \left\{
 \begin{array}{l}
 (u_{2}^{n},u_{3}^{n},\underline{x}^{n}) \in {\SemiPrivateRVSet}_{2}^{n} \times {\SemiPrivateRVSet}_{3}^{n} \times \underline{\InputAlphabet}^{n}: (q^{n},{u}_{j}^{n},x_{j}^{n}) \in T_{2\eta}(Q,{\SemiPrivateRV}_{j},\InputRV_{j}):j=2,3 \\(q^{n},x_{1}^{n}) \in T_{2\eta}(Q,X_{1}), (q^{n},{u}_{2}^{n},{u}_{3}^{n},\underline{x}^{n}) \notin T_{\eta_{1}}(Q,{U}_{2},{U}_{3},\underline{X})
 \end{array}
  \right\}.
 \nonumber
\end{equation}
Observe that
\begin{eqnarray}
 \label{Eqn:3to1ICStepsInProvingMarkovLemma}
 \lefteqn{P(\tilde{\epsilon}_{1}^{c}\cap \epsilon_{2}) =\sum_{\substack{({u}_{2}^{n},{u}_{3}^{n},\underline{x}^{n}) \\\in \Theta(q^{n})}}
 P\left(\substack{ I_{j}(A^{s_{j}})=M_{j1},{U}_{j}^{n}(A^{s_{j}})=u_{j}^{n},X_{j}^{n}(M_{jX},B_{jX})=x_{j}^{n}\\\phi_{j}(q^{n},M_{j})\geq \frac{1}{2}\Expectation\left\{ \phi_{j}(q^{n},M_{j}) \right\}:j=2,3,X_{1}^{n}(M_{1})=x_{1}^{n} }  \right)}
 \nonumber\\
&=& \!\!\!\!\!\!\!\!\sum_{\substack{({u}_{2}^{n},{u}_{3}^{n},\underline{x}^{n}) \\\in \Theta(q^{n})}}P\left(\bigcup_{a^{s_{2}}\in \SemiPrivateRVSet_{2}^{s_{2}}} \bigcup_{a^{s_{3}}\in \SemiPrivateRVSet_{3}^{s_{3}}} \bigcup_{\substack{ b_{2X}\in\\ c_{2X}}}\bigcup_{\substack{b_{3X}\in \\c_{3X}}}  \left\{ \substack{I_{j}(a^{s_{j}})=M_{j1},{U}_{j}^{n}(a^{s_{j}})=u_{j}^{n},X_{j}^{n}(M_{jX},b_{jX})=x_{j}^{n} , A^{s_{j}}=a^{s_{j}}\\\phi_{j}(q^{n},M_{j})\geq \frac{1}{2}\Expectation\left\{ \phi_{j}(q^{n},M_{j}) \right\},B_{jX}=b_{jX}:j=2,3,X_{1}^{n}(M_{1})=x_{1}^{n}  }  \right\}\right) \nonumber\\
 &\leq&\!\!\!\!\!\!\!\!\sum_{\substack{({u}_{2}^{n},{u}_{3}^{n},\underline{x}^{n}) \\\in \Theta(q^{n})}}\sum_{\substack{a^{s_{2}} \in \\\mathcal{U}_{2}^{s_{2}}}}\sum_{\substack{a^{s_{3}} \in \\\mathcal{U}_{3}^{s_{3}}}}\sum_{\substack{b_{2X}\in\\ c_{2X}}}\sum_{\substack{b_{3X}\in \\c_{3X}}}
 P\left(\substack{I_{j}(a^{s_{j}})=M_{j1},{U}_{j}^{n}(a^{s_{j}})=u_{j}^{n}\\X_{j}^{n}(M_{jX},b_{jX})=x_{j}^{n},2\phi_{j}(q^{n},M_{j})\geq \\\Expectation\left\{ \phi_{j}(q^{n},M_{j}) \right\} :j=2,3,X_{1}^{n}(M_{1})=x_{1}^{n}   } \right)P\left(\substack{A^{s_{j}}=a^{s_{j}}\\B_{jX}=b_{jX}\\:j=2,3}\middle|\substack{I_{j}(a^{s_{j}})=M_{j1},{U}_{j}^{n}(a^{s_{j}})=u_{j}^{n}\\X_{j}^{n}(M_{jX},b_{jX})=x_{j}^{n},2\phi_{j}(q^{n},M_{j})\geq \\\Expectation\left\{ \phi_{j}(q^{n},M_{j}) \right\} :j=2,3,X_{1}^{n}(M_{1})=x_{1}^{n}      } \right)\nonumber\end{eqnarray}\begin{eqnarray}
 \label{Eqn:3to1ICMarkovLemmaAfterApplyingTheUnionBound}
  &\leq&\!\!\!\!\!\!\!\!\sum_{\substack{({u}_{2}^{n},{u}_{3}^{n},\underline{x}^{n}) \\\in \Theta(q^{n})}}\sum_{\substack{a^{s_{2}} \in \\\mathcal{U}_{2}^{s_{2}}}}\sum_{\substack{a^{s_{3}} \in \\\mathcal{U}_{3}^{s_{3}}}}\sum_{\substack{b_{2X}\in\\ c_{2X}}}\sum_{\substack{b_{3X}\in \\c_{3X}}}
 P\left(\substack{I_{j}(a^{s_{j}})=M_{j1},{U}_{j}^{n}(a^{s_{j}})=u_{j}^{n}\\X_{j}^{n}(M_{jX},b_{jX})=x_{j}^{n}:j=2,3\\X_{1}^{n}(M_{1})=x_{1}^{n}   } \right)\prod_{j=2}^{3}P\left(\substack{A^{s_{j}}=a^{s_{j}}\\B_{jX}=b_{jX}}\middle|\substack{I_{j}(a^{s_{j}})=M_{j1}\\\phi_{j}(q^{n},M_{j})\geq \frac{1}{2}\Expectation\left\{ \phi_{j}(q^{n},M_{j}) \right\} } \right).
\end{eqnarray}
Let us now evaluate a generic term in the above sum (\ref{Eqn:3to1ICMarkovLemmaAfterApplyingTheUnionBound}). Since the codebooks $\mathcal{C}_{1},\mathcal{C}_{2},\mathcal{C}_{3},\Lambda_{2},\Lambda_{3}$ are mutually independent, the probability of the event in question factors as
\begin{equation}
 \label{Eqn:3to1ICMarkovLemmaGenericTermAfterUnionBoundFactorsAs}
P\left(\substack{ U_{j}^{n}(a^{s_{j}})=u_{j}^{n}, X_{j}^{n}(M_{jX},b_{jX})=x_{j}^{n},\\I_{j}(a^{s_{j}})=M_{j1}:j=2,3,X_{1}^{n}(M_{1})=x_{1}^{n}  } \right) = P(X_{1}^{n}(M_{1})=x_{1}^{n}) P\left(\substack{U_{j}^{n}(a^{s_{j}})=u_{j}^{n},\\I_{j}(a^{s_{j}})=M_{j1}}:j=2,3\right)\prod_{j=2}^{3}P(X_{j}^{n}(M_{jX},b_{jX})=x_{j}^{n})
\nonumber
\end{equation}
Furthermore, (i) mutual independence of $I_{j}(a^{s_{j}}):a^{s_{j}} \in \SemiPrivateRVSet_{j}^{s_{j}}:j=2,3, G_{3},B_{2}^{n},B_{3}^{n}$, (ii) uniform distribution of the indices $I_{j}(a^{s_{j}}):a^{s_{j}} \in \SemiPrivateRVSet_{j}^{s_{j}}:j=2,3$ and (iii) distribution of codewords in $\mathcal{C}_{j}:j=1,2,3$ imply 
\begin{equation}
 \label{Eqn:3to1ICMarkovLemmaSimplyfingTheGenericTermInTheSummation}
P\left(\substack{ U_{j}^{n}(a^{s_{j}})=u_{j}^{n}, X_{j}^{n}(M_{jX},b_{jX})=x_{j}^{n},\\I_{j}(a^{s_{j}})=M_{j1}:j=2,3,X_{1}^{n}(M_{1})=x_{1}^{n}  } \right) = P(U_{j}^{n}(a^{s_{j}})=u_{j}^{n}:j=2,3)\frac{\prod_{j=1}^{3}\prod_{t=1}^{n}p_{X_{j}|Q}(x_{jt}|q_{t}) }{\Prime^{t_{2}+t_{3}}}
\end{equation}
The following simple lemma enables us to characterize $P(U_{j}^{n}(a^{s_{j}})=u_{j}^{n}:j=2,3)$.
\begin{lemma}
 \label{Lem:3to1ICAnyPairOfCodewordsInTheTwoCloudCenterCodewordsAreIndependent}
Let $s_{2},s_{3},n \in \naturals$ be such that $s_{2} \leq s_{3}$. Let $G_{3}^{T} \define [ G_{2}^{T}~~ G_{3/2}^{T}]  \in \fieldpi^{s_{3} \times n}$ be a random matrix such that $G_{2} \in \fieldpi^{s_{2} \times n}$ and $B_{2}^{n},B_{3}^{n} \in \fieldpi^{n}$ be random vectors such that $G_{3},B_{2}^{n},B_{3}^{n}$ be mutually independent and uniformly distributed over their respective range spaces. For $j=2,3$ and any $a^{s_{j}} \in \fieldpi^{s_{j}}$, let $U(a^{s_{j}})\define a^{s_{j}}G_{j}\oplus B_{j}^{n}$ be a random vector in the corresponding coset. Then $P(U_{j}^{n}(a^{s_{j}})=u_{j}^{n}:j=2,3)=\frac{1}{\Prime^{2n}}$.
\end{lemma}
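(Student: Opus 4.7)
The plan is to exploit the uniform independent bias shifts $B_2^n, B_3^n$, which act as random translations that ``absorb'' all dependence coming from the nested structure of the generator matrices. The key observation is that, although $G_2$ is a sub-matrix of $G_3$ and therefore $a^{s_2}G_2$ and $a^{s_3}G_3$ are statistically dependent through the shared rows, the cosets $U_2^n(a^{s_2})$ and $U_3^n(a^{s_3})$ each equal the respective linear combination \emph{plus} an independent uniform bias on $\fieldpi^n$.

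First I would condition on the full generator matrix $G_3 = g$ (which of course determines its first $s_2$ rows $G_2 = g_2$). Under this conditioning the quantities $a^{s_2}G_2$ and $a^{s_3}G_3$ become deterministic vectors in $\fieldpi^n$, so the events $\{U_2^n(a^{s_2}) = u_2^n\}$ and $\{U_3^n(a^{s_3}) = u_3^n\}$ reduce to $\{B_2^n = u_2^n \ominus a^{s_2}g_2\}$ and $\{B_3^n = u_3^n \ominus a^{s_3}g\}$ respectively. Since $B_2^n$ and $B_3^n$ are mutually independent, independent of $G_3$, and each uniformly distributed over $\fieldpi^n$, the joint conditional probability factors as
\begin{equation}
P\bigl(U_2^n(a^{s_2})=u_2^n,\, U_3^n(a^{s_3})=u_3^n \,\big|\, G_3 = g\bigr) \;=\; \frac{1}{\Prime^n}\cdot \frac{1}{\Prime^n} \;=\; \frac{1}{\Prime^{2n}},\nonumber
\end{equation}
which is independent of $g$. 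Averaging over $G_3$ removes the conditioning and yields the claimed equality.

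There is no real obstacle here: the statement is essentially the observation that a random coset shift renders each codeword uniformly distributed, and that the two independent shifts $B_2^n, B_3^n$ also decouple the two codewords from one another despite the nested generators. The only point worth flagging in the write-up is to remark explicitly that conditioning on $G_3$ determines $G_2$ (since $G_2$ comprises the first $s_2$ rows of $G_3$), so that $a^{s_2}G_2$ is $G_3$-measurable, which is exactly what allows the conditional factorization above to go through cleanly.
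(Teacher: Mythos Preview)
Your proof is correct and essentially matches the paper's approach: the paper describes it as a ``simple counting argument'' (for each realization $g_3$ of $G_3$ there is a unique pair $(b_2^n,b_3^n)$ satisfying the two equations), which is exactly your conditioning argument phrased in counting language. The key step in both is that fixing $G_3$ determines $G_2$, after which the two independent uniform biases $B_2^n,B_3^n$ each hit their required value with probability $\Prime^{-n}$.
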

The proof follows from a simple counting argument and is omitted.
\begin{comment}{
\begin{proof}
The proof follows from a simple counting argument. It maybe verified that for every $g_{3} \in \fieldpi^{s_{3} \times n}$, there exists a unique pair of vectors $b_{2}^{n},b_{3}^{n} \in \fieldpi^{n}$ such that $a^{s_{j}}g_{j}\oplus b_{j}^{n}=u_{j}^{n}$ for $j=2,3$. Therefore 
\begin{equation} 
\label{Eqn:3to1ICSimpleCountingArgument}
|\left\{  (g_{3},b_{2}^{n},b_{3}^{n}) \in \fieldpi^{s_{3} \times n} \times \fieldpi^{n} \times \fieldpi^{n}: a^{s_{j}}g_{j}\oplus b_{j}^{n}=u_{j}^{n}\mbox{ for }j=2,3  \right\}|=\Prime^{ns_{3}}.
\nonumber
\end{equation}
Now employing the mutually independence and uniformly distribution of $G_{3},B_{2}^{n},B_{3}^{n}$, we have the probability of the event in question to be
\begin{equation} 
\label{Eqn:3to1ICSimpleCountingArgument}
\frac{|\left\{  (g_{3},b_{2}^{n},b_{3}^{n}) \in \fieldpi^{s_{3} \times n} \times \fieldpi^{n} \times \fieldpi^{n}: a^{s_{j}}g_{j}\oplus b_{j}^{n}=u_{j}^{n}\mbox{ for }j=2,3  \right\}|}{|\left\{  (g_{3},b_{2}^{n},b_{3}^{n}) \in \fieldpi^{s_{3} \times n} \times \fieldpi^{n} \times \fieldpi^{n}  \right\}|}=\frac{\Prime^{ns_{3}}}{\Prime^{ns_{3}+2n}}=\frac{1}{\Prime^{2n}}.
\nonumber
\end{equation}
\end{proof}
}\end{comment}
We therefore have
\begin{eqnarray}
 \label{Eqn:3To1ICPCCProofNumeratorOfFirstStepOfMarkovLemma}
 P\left(\substack{ U_{j}^{n}(a^{s_{j}})=u_{j}^{n}, X_{j}^{n}(M_{jX},b_{jX})=x_{j}^{n},\\I_{j}(a^{s_{j}})=M_{j1}:j=2,3,X_{1}^{n}(M_{1})=x_{1}^{n}  } \right) = \frac{\prod_{j=1}^{3}\prod_{t=1}^{n}p_{X_{j}|Q}(x_{jt}|q_{t}) }{\Prime^{2n+t_{2}+t_{3}}} \leq \frac{\prod_{t=1}^{n}p_{X_{1}|Q}(x_{1t}|q_{t})\exp \left\{ -nH(X_{2}|Q)\right\}}{\exp\left\{-8n\eta+nH(X_{3}|Q) \right\}\Prime^{2n+t_{2}+t_{3}}}
\end{eqnarray}
Encoders $2$ and $3$ choose one among the jointly typical pairs uniformly at random. Hence,
\begin{eqnarray}
 \label{Eqn:3To1ICPCCProofKickingUpTheDenomInMarkovLemmaUsingList}
\prod_{j=2}^{3}P\left(\substack{A^{s_{j}}=a^{s_{j}}\\B_{jX}=b_{jX}}\middle|\substack{I_{j}(a^{s_{j}})=M_{j1}\\\phi_{j}(q^{n},M_{j})\geq \frac{1}{2}\Expectation\left\{ \phi_{j}(q^{n},M_{j}) \right\} } \right) \leq \frac{4}{\Expectation\left\{ \phi_{2}(q^{n},M_{2}) \right\}\Expectation\left\{ \phi_{3}(q^{n},M_{3}) \right\}}.
\end{eqnarray}
It maybe verified from (\ref{Eqn:3to1ICUpperBoundOnEncoderErrorEventPhiSquared}) that
\begin{eqnarray}
 \label{Eqn:3To1ICPCCProofLowerBoundOnExpecttaionOfPhi}
 2\mathscr{L}_{j}(n)=\Expectation\left\{ \phi_{j}(q^{n},M_{j}) \right\} \geq \Prime^{s_{j}-t_{j}-n}|c_{jX}|\exp\left\{ -n(H(X_{j}|Q)+4\eta)  \right\}|T_{2\eta}(U_{j},X_{j}|q^{n})|.
\end{eqnarray}
Substituting (\ref{Eqn:3To1ICPCCProofLowerBoundOnExpecttaionOfPhi}), (\ref{Eqn:3To1ICPCCProofKickingUpTheDenomInMarkovLemmaUsingList}) and (\ref{Eqn:3To1ICPCCProofNumeratorOfFirstStepOfMarkovLemma}) in (\ref{Eqn:3to1ICMarkovLemmaAfterApplyingTheUnionBound}), we have
\begin{eqnarray}
 P(\tilde{\epsilon}_{1}^{c}\cap \epsilon_{2}) &\leq& \sum_{\substack{({u}_{2}^{n},{u}_{3}^{n},\underline{x}^{n}) \\\in \Theta(q^{n})}} \frac{\exp\{n16\eta\}\prod_{t=1}^{n}p_{X_{1}|Q}(x_{1t}|q_{t})}{|T_{2\eta}(U_{2},X_{2}|q^{n})||T_{2\eta}(U_{3},X_{3}|q^{n})|} \nonumber\\
 \label{Eqn:3to1ICMarkovLemmaUsingTheTightnessOfTheSumOfBinningRates}
&\leq& \sum_{\substack{({u}_{2}^{n},{u}_{3}^{n},\underline{x}^{n}) \\\in \Theta(q^{n})}} \prod_{t=1}^{n}p_{X_{1}|Q}(x_{1t}|q_{t}) \frac{\exp \left\{ 24n\eta-n H(U_{3},X_{3}|Q) \right\}}{\exp \left\{ n H(U_{2},X_{2}|Q)\right\}}
\end{eqnarray}
where the last inequality follows from lower bound on size of the conditional typical set\ifThesis{ (lemma \ref{Lem:BoundsOnSizeOfConditionalTypicalSet})}\fi. We now employ the lower bound for conditional probability of jointly typical vectors. In particular, \begin{comment}{there exists $N_{8}(\eta_{2}) \in \naturals$, such that for all $n \geq N_{8}(\eta_{2})$, $j=2,3$, we have}\end{comment}
\begin{equation}
 \label{Eqn:3to1ICMarkovLemmaReplacingLowerBoundForConditionalProbability}
\exp\left\{ -nH(U_{j},X_{j}|Q)-4n\eta \right\} \leq  \prod_{t=1}^{n}p_{U_{j},X_{j}|Q}(u_{jt},x_{jt}|q_{t})  \leq \exp\left\{ -nH(U_{j},X_{j}|Q)+4n\eta \right\}
\end{equation}
for any $(u_{2}^{n},u_{3}^{n},\underline{x}^{n}) \in \Theta (q^{n})$. Substituting lower bound (\ref{Eqn:3to1ICMarkovLemmaReplacingLowerBoundForConditionalProbability}) in (\ref{Eqn:3to1ICMarkovLemmaUsingTheTightnessOfTheSumOfBinningRates}), for $n$ sufficiently large, we have
\begin{eqnarray}
P(\tilde{\epsilon}_{1}^{c}\cap \epsilon_{2}) &\leq& \left[\sum_{\substack{({u}_{2}^{n},{u}_{3}^{n},\underline{x}^{n}) \\\in \Theta(q^{n})}} \prod_{t=1}^{n}p_{X_{1}|Q}(x_{1t}|q_{t}) \prod_{j=2}^{3}\prod_{t=1}^{n}p_{U_{j}X_{j}|Q}(u_{jt}x_{jt}|q_{t}) \right]\exp\left\{ 32n\eta\right\}
 \nonumber\\
\label{Eqn:3to1ICMarkovLemmaEmployingTheConditionalIndependenceGivenQ}
  &\leq& \left[\sum_{\substack{({u}_{2}^{n},{u}_{3}^{n},\underline{x}^{n}) \\\in \Theta(q^{n})}} \prod_{t=1}^{n}p_{X_{1}U_{2}X_{2}U_{3}X_{3}|Q}(x_{1t},u_{2t},x_{2t},u_{3t},x_{3t}|q_{t})  \right]\exp\left\{ 32n\eta\right\},
\end{eqnarray}
where (\ref{Eqn:3to1ICMarkovLemmaEmployingTheConditionalIndependenceGivenQ}) follows from conditional mutual independence of the triple $X_{1},(U_{2},X_{2})$ and $(U_{3},X_{3})$ given $Q$. We now employ the exponential upper bound due to Hoeffding \cite{1965MMAMS_Hoe}, Sanov \cite{1957MMMS_San}. Under the condition $\eta_{1} \geq 4\eta$, a `conditional version' of Sanov's lemma \cite{1957MMMS_San} guarantees
\begin{equation}
 \label{Eqn:3to1ICUsingTheExponentialUpperBoundDueToHoeffdingSanovinMarkovLemma}
\sum_{\substack{({u}_{2}^{n},{u}_{3}^{n},\underline{x}^{n}) \\\in \Theta(q^{n})}} \prod_{t=1}^{n}p_{X_{1}U_{2}X_{2}U_{3}X_{3}|Q}(x_{1t},u_{2t},x_{2t},u_{3t},x_{3t}|q_{t}) \leq 2\exp \{ -n^{3}\mu\eta_{1}^{2} \}
\end{equation}
for sufficiently large $n$. Thus we conclude
\begin{equation}
 \label{Eqn:3to1ICUpperBoundOnMarkovLemmaEvent}
P(\tilde{\epsilon}_{1}^{c}\cap \epsilon_{2}) \leq 2\exp \{ -n(n^{2}\mu\eta_{1}^{2}-32\eta) \}
\end{equation}
for such an $n$.

This gets us to the second step where we seek an upper bound on $P((\tilde{\epsilon}_{1} \cup \epsilon_{2})^{c}\cap \epsilon_{3})$, where 
\begin{equation}
 \label{Eqn:3to1ICMarkovLemmaSecondStepCharacterizingASuperSet}
\epsilon_{3}=\left\{ (q^{n},U_{2}^{n}(A^{s_{2}}),U_{3}^{n}(A^{s_{3}}),X_{1}^{n}(M_{1}),X_{2}^{n}(M_{2X},B_{2X}),X_{3}^{n}(M_{3X},B_{3X}),\underline{Y}^{n}) \notin T_{2\eta_{1}}(Q,X_{1},U_{2},U_{3},\underline{X},\underline{Y}) \right\}
\end{equation}
was defined in (\ref{Eqn:3To1ICPCCProofEpsilon3DefinitionInMainBody}). Deriving an upper bound on $P((\tilde{\epsilon}_{1} \cup \epsilon_{2})^{c}\cap \epsilon_{3})$ employs conditional frequency typicality \cite[Lemma 4 and 5]{201301arXivMACDSTx_PadPra} and the Markov chain $(Q,U_{2},U_{3})-\underline{X}-\underline{Y}$. In the sequel, we prove $P(\epsilon_{2}^{c} \cap \epsilon_{3}) \leq \frac{\eta}{32}$ for sufficiently large $n$.

If
\begin{eqnarray}
 \label{Eqn:3to1ICSecondStepOfMarkovLemmaInputVectorsAreTypicalButNotOutput}
\overline{\Theta}(q^{n}) \define \left\{ 
\begin{array}{rl}
(u_{2}^{n},u_{3}^{n},\underline{x}^{n},\underline{y}^{n}) \in \SemiPrivateRVSet_{2}^{n} \times \SemiPrivateRVSet_{3}^{n} \times \underline{\InputAlphabet}^{n} \times \underline{\OutputAlphabet}^{n}:& (u_{2}^{n},u_{3}^{n},\underline{x}^{n}) \in T_{\eta_{1}}(U_{2},U_{3},\underline{X}|q^{n}),\\ &(u_{2}^{n},u_{3}^{n},\underline{x}^{n},\underline{y}^{n}) \notin T_{2\eta_{1}}(U_{2},U_{3},\underline{X},\underline{Y}|q^{n})
\end{array}
\right\},\nonumber
\end{eqnarray}
then
\begin{eqnarray}
 \label{Eqn:3to1ICMarkovLemmaSecondStepCharacterizingEpsilon2ComplementIntersectEpsilon3}
\lefteqn{P(\epsilon_{2}^{c}\cap\epsilon_{3}) = \sum_{\substack{(u_{2}^{n},u_{3}^{n},\underline{x}^{n},\underline{y}^{n}) \\ \in \overline{\Theta}(q^{n})}} P \left( U_{j}^{n}(A^{s_{j}})=u_{j}^{n},X_{j}^{n}(M_{jX},B_{jX})=x_{j}^{n}:j=2,3,X_{1}^{n}(M_{1})=x_{1}^{n},\underline{Y}^{n}=\underline{y}^{n} \right)}
\nonumber\\
&=&\sum_{\substack{(u_{2}^{n},u_{3}^{n},\underline{x}^{n},\underline{y}^{n}) \\ \in \overline{\Theta}(q^{n})}} P \left( \substack{U_{j}^{n}(A^{s_{j}})=u_{j}^{n},X_{1}^{n}(M_{1})=x_{1}^{n},\\X_{j}^{n}(M_{jX},B_{jX})=x_{j}^{n}:j=2,3,} \right)\prod_{t=1}^{n}W_{\underline{Y}|\underline{X}}(\underline{y}_{t}|\underline{x}_{t}) 
\nonumber\\
\label{Eqn:3to1ICMarkovLemmaSecondStepUtilizingMarkovChain}
&=&\!\!\!\! \!\!\!\!\sum_{\substack{(u_{2}^{n},u_{3}^{n},\underline{x}^{n},\underline{y}^{n}) \\ \in \overline{\Theta}(q^{n})}} \!\!\!\!\!\!\!\!P \left( \substack{U_{j}^{n}(A^{s_{j}})=u_{j}^{n},X_{1}^{n}(M_{1})=x_{1}^{n},\\X_{j}^{n}(M_{jX},B_{jX})=x_{j}^{n}:j=2,3,} \right)\prod_{t=1}^{n}p_{\underline{Y}|\underline{X}U_{2}U_{3}}(\underline{y}_{t}|\underline{x}_{t},u_{2t},u_{3t})\\
\label{Eqn:MarkovLemmaLastStep}
&\leq &\sum_{\substack{(u_{2}^{n},u_{3}^{n},\underline{x}^{n}) \in\\ T_{\eta_{1}}(U_{2},U_{3},\underline{X}|q^{n})}}  P\left( \substack{U_{j}^{n}(A^{s_{j}})=u_{j}^{n},X_{1}^{n}(M_{1})=x_{1}^{n},\\X_{j}^{n}(M_{jX},B_{jX})=x_{j}^{n}:j=2,3,} \right)\sum_{\substack{y^{n}:y^{n} \notin \\T_{2\eta_{1}}(Y|u_{2}^{n},u_{3}^{n},\underline{x}^{n})}}\prod_{t=1}^{n}p_{\underline{Y}|\underline{X}U_{2}U_{3}}(\underline{y}_{t}|\underline{x}_{t},u_{2t},u_{3t}) \leq \frac{\eta}{32},
\end{eqnarray}
 for sufficiently large $n$, where (\ref{Eqn:3to1ICMarkovLemmaSecondStepUtilizingMarkovChain}) follows from the Markov chain $(Q,U_{2},U_{3})-\underline{X}-\underline{Y}$ and the last inequality in (\ref{Eqn:MarkovLemmaLastStep}) follows from conditional typicality.

\section{An upper bound on $P((\tilde{\epsilon}_{1}\cup\epsilon_{2}\cup \epsilon_{3})^{c}\cap\epsilon_{41})$}
\label{AppSec:3To1ICDecoder1PopularErrorEvent}
In this appendix, our objective is to derive an upper bound on $P((\tilde{\epsilon}_{1}\cup \epsilon_{2}\cup \epsilon_{3})^{c}\cap \epsilon_{41})$. Recall that $\tilde{\epsilon}_{1} = \epsilon_{1}\cup \epsilon_{l}$, 
\begin{eqnarray}
\label{Eqn:3to1ICDecoder1ErrorEventCharacterization}
(\epsilon_{1}\cup\epsilon_{2}\cup\epsilon_{3})^{c} \cap \epsilon_{41} = \bigcup_{a^{s_{3}} \in \SemiPrivateRVSet_{3}^{s_{3}}} \bigcup_{\hat{m}_{1} \neq M_{1}} \left\{ \left(\substack{ U_{j}^{n}(A^{s_{j}}):j=2,3,X_{1}^{n}(M_{1}),\\X_{j}^{n}(M_{jX},B_{jX}),:j=2,3,Y_{1}^{n}}\right) \substack{\in \hat{T}(q^{n}),
\left(\substack{U_{\oplus}^{n}(a^{s_{3}} ),Y_{1}^{n}\\X_{1}^{n}(\hat{m}_{1})} \right) \in T_{4\eta_{1}}(U_{2}\oplus U_{3},Y_{1},X_{1}|q^{n})}
  \right\}.
\nonumber
\end{eqnarray}
where
\begin{eqnarray}
 \label{Eqn:3to1ICTypicalSetContainingEntireBunchOfLegitimateCodewords}
  \hat{T}(q^{n})  \define \left\{ \substack{(u_{2}^{n},u_{3}^{n},\underline{x}^{n},y_{1}^{n}) \in\\ \SemiPrivateRVSet_{2}^{n}\times \SemiPrivateRVSet_{3}^{n}\times \underline{\InputAlphabet}^{n} \times \OutputAlphabet_{1}^{n}}  : \substack{(u_{2}^{n},u_{3}^{n},\underline{x}^{n},y_{1}^{n}) \in T_{2\eta_{1}}(U_{2},U_{3},\underline{X},Y_{1}|q^{n}),(u_{2}^{n},u_{3}^{n},\underline{x}^{n}) \in T_{\eta_{1}}(U_{2},U_{3},\underline{X}|q^{n}) \\(u_{j}^{n},x_{j}^{n}) \in T_{2\eta}(U_{j},X_{j}|q^{n}):j=2,3,x_{1}^{n} \in T_{2\eta}(X_{1}|q^{n})}\right\}.
 \nonumber
\end{eqnarray}
Employing the union bound, we have
\begin{eqnarray}
\label{Eqn:3to1ICDecoder1ErrorEventExplicitCharacterization}
P((\tilde{\epsilon}_{1}\cup \epsilon_{2}\cup \epsilon_{3})^{c}\cap \epsilon_{41}) \leq \sum_{\substack{\hat{a}^{s_{3}} \in \\ \SemiPrivateRVSet_{3}^{s_{3}} }} \sum_{ \substack{ m_{1},\hat{m}_{1}\\ \hat{m}_{1} \neq m_{1} } } \sum_{\substack{ (u_{2}^{n},u_{3}^{n},\underline{x}^{n},y_{1}^{n}) \in \\ \hat{T}(q^{n} ) } } \sum_{\substack{(\hat{u}^{n},\hat{x}_{1}^{n}) \in \\ T_{4\eta_{1}}(U_{2}\oplus U_{3},X_{1}|y_{1}^{n},q^{n}) } }
\!\!\!\!\!\!\!\!\!\!\!\!P \left( \left\{\substack{ X_{j}^{n}(M_{jX},B_{jX})=x_{j}^{n}, U_{j}^{n}(A^{s_{j}})=u_{j}^{n}\\I_{j}(A^{s_{j}})=M_{j1},X_{1}^{n}(M_{1})=x_{1}^{n},U_{\oplus}(\hat{a}^{s_{3}})=\hat{u}^{n}\\X_{1}^{n}(\hat{m}_{1})=\hat{x}_{1}^{n} ,Y_{1}^{n}=y_{1}^{n},M_{1}=m_{1}:j=2,3 } \right\}\cap\epsilon_{l}^{c}\right).
\end{eqnarray}
We evaluate a generic term in the above sum. Defining $\mathscr{S}(\hat{a}^{s_{3} }  ) \define \left\{  (a^{s_{2}},a^{s_{3}}) \in \mathcal{U}_{2}^{s_{2}} \times \mathcal{U}_{3}^{s_{3}} : a^{s_{2}}0^{s_{+}}\oplus a^{s_{3}}\neq \hat{a}^{s_{3}}\right\}$, where $s_{+}\define s_{3}-s_{2}$, $\mathscr{S}^{c}(\hat{a}^{s_{3} }  ) \define \left(\mathcal{U}_{2}^{s_{2}} \times \mathcal{U}_{3}^{s_{3}}\right) \setminus \mathscr{S}(\hat{a}^{s_{3} }  )$, and
\begin{eqnarray}
 \label{Eqn:3To1ICPCCProofDecoder1ErrorEventComplicatedEvent}
 E \define \left\{ \substack{ X_{j}^{n}(m_{jX},b_{jX})=x_{j}^{n}, U_{j}^{n}(a^{s_{j}})=u_{j}^{n},M_{j}=m_{j}\\I_{j}(a^{s_{j}})=m_{j1}X_{1}^{n}(m_{1})=x_{1}^{n},U_{\oplus}(\hat{a}^{s_{3}})=\hat{u}^{n},\\X_{1}^{n}(\hat{m}_{1})=\hat{x}_{1}^{n},M_{1}=m_{1}: j=2,3, }\right\}\nonumber
\end{eqnarray}
we have
\begin{eqnarray}
\label{Eqn:3To1ICDecoder1ErrorEventGenericTermInTheSum}
 P \left(\left\{\substack{ X_{j}^{n}(M_{jX},B_{jX})=x_{j}^{n}, U_{j}^{n}(A^{s_{j}})=u_{j}^{n}\\I_{j}(A^{s_{j}})=M_{j1},X_{1}^{n}(M_{1})=x_{1}^{n},U_{\oplus}(\hat{a}^{s_{3}})=\hat{u}^{n}\\X_{1}^{n}(\hat{m}_{1})=\hat{x}_{1}^{n} ,Y_{1}^{n}=y_{1}^{n},M_{1}=m_{1}:j=2,3 } \right\}\cap\epsilon_{l}^{c} \right) = \sum_{\substack{m_{2},m_{3} }}\sum_{\substack{b_{2X},b_{3X}}} \sum_{\substack{(a^{s_{2}},a^{s_{3}}   ) \\\in \mathscr{S}(\hat{a}^{s_{3}})}}
 P \left(E\cap\epsilon_{l}^{c}\cap\left\{\substack{Y_{1}^{n}=y_{1}^{n},A^{s_{j}}=a^{s_{j}}\\B_{jX}=b_{jX}:j=2,3} \right\}\right)
 \nonumber\\
 \label{Eqn:3To1ICDecoder1ErrorEventGenericTermInTheSumSecondTerm}
 +\sum_{\substack{m_{2},m_{3} }}\sum_{\substack{b_{2X},b_{3X}}} \sum_{\substack{(a^{s_{2}},a^{s_{3}}   ) \\\in \mathscr{S}^{c}(\hat{a}^{s_{3}})}} P \left(E\cap\epsilon_{l}^{c}\cap\left\{\substack{Y_{1}^{n}=y_{1}^{n},A^{s_{j}}=a^{s_{j}}\\B_{jX}=b_{jX}:j=2,3} \right\}\right)
 \end{eqnarray}
Note that
\begin{eqnarray}
 \label{Eqn:3to1ICDecoder1ErrorEventChannelLooksOnlyAtTransmittedCodewords}
 &P \left( Y_{1}^{n}=y_{1}^{n} \middle| E\cap\epsilon_{l}^{c}\cap\left\{\substack{A^{s_{j}}=a^{s_{j}}\\B_{jX}=b_{jX}:j=2,3} \right\} \right)
 = W^{n}_{Y_{1}|\underline{X}}(y_{1}^{n}|\underline{x}^{n}),\\
 \label{Eqn:3To1ICDecoder1ErrorEventJustExpandingTheGenericTermIAmBugged}
 &P\left(  E\cap\epsilon_{l}^{c}\cap\left\{\substack{A^{s_{j}}=a^{s_{j}}\\B_{jX}=b_{jX}:j=2,3} \right\}\right) = P(E)P\left( \substack{A^{s_{j}}=a^{s_{j}}\\B_{jX}=b_{jX}:j=2,3}\middle| E\cap\epsilon_{l}^{c} \right)=P(E)\frac{1}{\mathscr{L}_{2}(n)\mathscr{L}_{3}(n)}
 \end{eqnarray}
Moreover, for $ (u_{2}^{n},u_{3}^{n},x_{1}^{n},x_{2}^{n},x_{3}^{n},y_{1}^{n}) \in \hat{T}(q^{n})$, $(\hat{u}^{n},\hat{x}_{1}^{n}) \in T_{4\eta_{1}}(U_{2}\oplus U_{3},X_{1}|y_{1}^{n},q^{n})$, we have
 \begin{eqnarray}
  \label{Eqn:3To1ICDecoder1ErrorEventJustExpandingTheGenericTerm}
  P \left(E \right) \leq \left\{
  \begin{array}{lr}
\frac{P(M_{j}=m_{j}:j=2,3,M_{1}=m_{1})}{\Prime^{3n+t_{2}+t_{3}}\exp \left\{ n(H(X_{1}|Q)+\sum_{j=1}^{3}H(X_{j}|Q) -20\eta_{1}) \right\}}&\mbox{if }(a^{s_{2}},a^{s_{3}}) \in \mathscr{S}(\hat{a}^{s_{3}}),\\
\frac{P(M_{jX}=m_{jX}:j=2,3,M_{1}=m_{1})W^{n}_{Y_{1}|\underline{X}}(y_{1}^{n}|\underline{x}^{n})1_{\{\hat{u}^{n}=u_{2}^{n}\oplus u_{3}^{n}\}}}{\Prime^{2n+t_{2}+t_{3}}\exp \left\{ n(H(X_{1}|Q)+\sum_{j=1}^{3}H(X_{j}|Q) -20\eta_{1}) \right\}}&\mbox{if }(a^{s_{2}},a^{s_{3}}) \in \mathscr{S}^{c}(\hat{a}^{s_{3}})
  \end{array} \right.
 \end{eqnarray}
In deriving the above upper bounds, we have used the upper bound on conditional probability of jointly typical sequences\ifThesis{ proved in lemma \ref{Lem:BoundsOnProbabilityOfTypicalSequence}(iii)}\fi. We have also employed independence of (triple in the former and pair in the latter) codewords in the coset code. Substituting (\ref{Eqn:3to1ICDecoder1ErrorEventChannelLooksOnlyAtTransmittedCodewords}), (\ref{Eqn:3To1ICDecoder1ErrorEventJustExpandingTheGenericTermIAmBugged}) and (\ref{Eqn:3To1ICDecoder1ErrorEventJustExpandingTheGenericTerm}), in (\ref{Eqn:3To1ICDecoder1ErrorEventGenericTermInTheSumSecondTerm}), we have
 \begin{eqnarray}
  \label{Eqn:3to1ICDecoder1ErrorEventSubstitutingUpperBoundOnGenericTerm}
  P \left(\left\{\substack{ X_{j}^{n}(M_{jX},B_{jX})=x_{j}^{n}, U_{j}^{n}(A^{s_{j}})=u_{j}^{n}\\I_{j}(A^{s_{j}})=M_{j1},X_{1}^{n}(M_{1})=x_{1}^{n},U_{\oplus}(\hat{a}^{s_{3}})=\hat{u}^{n}\\X_{1}^{n}(\hat{m}_{1})=\hat{x}_{1}^{n} ,Y_{1}^{n}=y_{1}^{n},M_{1}=m_{1}:j=2,3 } \right\}\cap\epsilon_{l}^{c} \right) \leq \frac{\substack{\Prime^{s_{2}-t_{2}}P(M_{1}=m_{1})W^{n}_{Y_{1}|\underline{X}}(y_{1}^{n}|\underline{x}^{n})|c_{2X}||c_{3X}|}}{\substack{\Prime^{2n+t_{3}}\exp \left\{ n(H(X_{1}|Q)+\sum_{j=1}^{3}H(X_{j}|Q) -20\eta_{1}) \right\}}} \!\!\frac{\left[\frac{\Prime^{s_{3}}}{\Prime^{n}}+ 1_{\{\hat{u}^{n}=u_{2}^{n}\oplus u_{3}^{n}\}}\right]}{\mathscr{L}_{2}(n)\mathscr{L}_{3}(n)}\!.
 \end{eqnarray}
Our next step is to substitute (\ref{Eqn:3to1ICDecoder1ErrorEventSubstitutingUpperBoundOnGenericTerm}) in (\ref{Eqn:3to1ICDecoder1ErrorEventExplicitCharacterization}). Let us restate (\ref{Eqn:3to1ICDecoder1ErrorEventExplicitCharacterization}) below as (\ref{Eqn:3to1ICDecoder1ErrorEventExplicitCharacterizationRestated}) for ease of reference.
\begin{eqnarray}
\label{Eqn:3to1ICDecoder1ErrorEventExplicitCharacterizationRestated}
P((\tilde{\epsilon}_{1}\cup \epsilon_{2}\cup \epsilon_{3})^{c}\cap \epsilon_{41}) \leq \sum_{\substack{\hat{a}^{s_{3}} \in \\ \SemiPrivateRVSet_{3}^{s_{3}} }} \sum_{ \substack{ m_{1},\hat{m}_{1}\\ \hat{m}_{1} \neq m_{1} } } \sum_{\substack{ (u_{2}^{n},u_{3}^{n},\underline{x}^{n},y_{1}^{n}) \in \\ \hat{T}(q^{n} ) } } \sum_{\substack{(\hat{u}^{n},\hat{x}_{1}^{n}) \in \\ T_{4\eta_{1}}(U_{2}\oplus U_{3},X_{1}|y_{1}^{n},q^{n}) } }
\!\!\!\!\!\!\!\!\!\!\!\!P \left( \left\{\substack{ X_{j}^{n}(M_{jX},B_{jX})=x_{j}^{n}, U_{j}^{n}(A^{s_{j}})=u_{j}^{n}\\I_{j}(A^{s_{j}})=M_{j1},X_{1}^{n}(M_{1})=x_{1}^{n},U_{\oplus}(\hat{a}^{s_{3}})=\hat{u}^{n}\\X_{1}^{n}(\hat{m}_{1})=\hat{x}_{1}^{n} ,Y_{1}^{n}=y_{1}^{n},M_{1}=m_{1}:j=2,3 } \right\}\cap\epsilon_{l}^{c}\right).
\end{eqnarray}
We do some spade work before we substitute (\ref{Eqn:3to1ICDecoder1ErrorEventSubstitutingUpperBoundOnGenericTerm}) in (\ref{Eqn:3to1ICDecoder1ErrorEventExplicitCharacterizationRestated}). (\ref{Eqn:3to1ICDecoder1ErrorEventSubstitutingUpperBoundOnGenericTerm}) is a sum of two terms. The first term is not dependent on the arguments of the innermost summation in (\ref{Eqn:3to1ICDecoder1ErrorEventExplicitCharacterizationRestated}). By conditional frequency typicality lemma \cite[Lemma 5]{201301arXivMACDSTx_PadPra}, for sufficiently large $n$ we have $|T_{4\eta_{1}}(U_{2}\oplus U_{3},X_{1}|y_{1}^{n},q^{n})| \leq \exp \left\{ n(H(U_{2}\oplus U_{3},X_{1}|Y_{1},Q))+8\eta_{1}  \right\}$. Substituting this upper bound, the summation in (\ref{Eqn:3to1ICDecoder1ErrorEventExplicitCharacterizationRestated}) corresponding to the first term in (\ref{Eqn:3to1ICDecoder1ErrorEventSubstitutingUpperBoundOnGenericTerm}) is upper bounded by
\begin{equation}
 \label{Eqn:3to1ICDecoder1ErrorEventExplicitCharacterizationWithFirstTermSubstituted}
 \mathscr{T}_{1}\define \sum_{\hat{a}^{s_{3}} }   \sum_{ \substack{ m_{1},\hat{m}_{1}\\ \hat{m}_{1} \neq m_{1} } } \sum_{\substack{ (u_{2}^{n},u_{3}^{n},\underline{x}^{n},y_{1}^{n}) \in \\ \hat{T}(q^{n} ) } } \frac{W^{n}_{Y_{1}|\underline{X}}(y_{1}^{n}|\underline{x}^{n})}{\mathscr{L}_{2}(n)\mathscr{L}_{3}(n)}  \frac{\Prime^{s_{2}+s_{3}}|c_{2X}||c_{3X}|P(M_{1}=m_{1})\exp\{ n(H(U_{2}\oplus U_{3},X_{1}|Y_{1},Q))\}}{\Prime^{3n+t_{2}+t_{3}}\exp \left\{ n(H(X_{1}|Q)+\sum_{j=1}^{3}H(X_{j}|Q) -28\eta_{1}) \right\}}.\nonumber
\end{equation}
The indicator in the second term of (\ref{Eqn:3to1ICDecoder1ErrorEventSubstitutingUpperBoundOnGenericTerm}) restricts the outermost summation in (\ref{Eqn:3to1ICDecoder1ErrorEventExplicitCharacterizationRestated}) to $\hat{x}_{1}^{n} \in T_{4\eta_{1}}(X_{1}|u_{2}^{n}\oplus u_{3}^{n},y_{1}^{n},q^{n})$. As earlier, note that the second term is independent of $\hat{x}_{1}^{n}$. Once again, employing the conditional frequency typicality lemma \cite[Lemma 5]{201301arXivMACDSTx_PadPra}, for sufficiently large $n$, $|T_{4\eta_{1}}(X_{1}|u_{2}^{n}\oplus u_{3}^{n},y_{1}^{n},q^{n})| \leq \exp\left\{ n(H(X_{1}|U_{2}\oplus U_{3},Y_{1},Q)+8\eta_{1})  \right\}$. Substituting this upper bound, the summation in (\ref{Eqn:3to1ICDecoder1ErrorEventExplicitCharacterizationRestated}) corresponding to the second term in (\ref{Eqn:3to1ICDecoder1ErrorEventSubstitutingUpperBoundOnGenericTerm}) is upper bounded by
\begin{equation}
 \label{Eqn:Eqn:3to1ICDecoder1ErrorEventExplicitCharacterizationWithSecondTermSubstituted}
 \mathscr{T}_{2}\define\sum_{\hat{a}^{s_{3}} }   \sum_{ \substack{ m_{1},\hat{m}_{1}\\ \hat{m}_{1} \neq m_{1} } } \sum_{\substack{ (u_{2}^{n},u_{3}^{n},\underline{x}^{n},y_{1}^{n}) \in \\ \hat{T}(q^{n} ) } } \frac{W^{n}_{Y_{1}|\underline{X}}(y_{1}^{n}|\underline{x}^{n})}{\mathscr{L}_{2}(n)\mathscr{L}_{3}(n)}  \frac{\Prime^{s_{2}}|c_{2X}||c_{3X}|P(M_{1}=m_{1})\exp\{ n(H(X_{1}|U_{2}\oplus U_{3},Y_{1},Q))\}}{\Prime^{2n+t_{2}+t_{3}}\exp \left\{ n(H(X_{1}|Q)+\sum_{j=1}^{3}H(X_{j}|Q) -28\eta_{1}) \right\}}.\nonumber
\end{equation}
It can be verified that
\begin{eqnarray}
 \label{Eqn:3To1ICPCCProofDecoder1ErrorEventDecipheringASum}
 \sum_{\substack{ (u_{2}^{n},u_{3}^{n},\underline{x}^{n},y_{1}^{n}) \in \\ \hat{T}(q^{n} ) } } W^{n}_{Y_{1}|\underline{X}}(y_{1}^{n}|\underline{x}^{n}) \leq \min\{ |T_{2\eta}(U_{2},X_{2}|q^{n})||T_{2\eta}(U_{3},X_{3}|q^{n})||T_{2\eta}(X_{1}|q^{n})|, | T_{\eta_{1}}(U_{2},U_{3},\underline{X}|q^{n} )|\}.
\end{eqnarray}
Using (\ref{Eqn:3To1ICPCCProofDecoder1ErrorEventDecipheringASum}) and lower bounds $\mathscr{L}_{j}(n):j=2,3$ from (\ref{Eqn:3To1ICPCCProofLowerBoundOnExpecttaionOfPhi}), we have
\begin{eqnarray}
 \label{Eqn:3To1ICDecoder1ErrorEventUpperBoundOnTerm1}
 \mathscr{T}_{1} \leq 2\frac{\Prime^{s_{3}}\exp \{-n(2H(X_{1}|Q)-8\eta-R_{1})  \}|T_{2\eta}(X_{1}|q^{n})|}{\Prime^{n}\exp \{ -n(H(U_{2}\oplus U_{3},X_{1}|Y_{1},Q)+28\eta_{1}) \}}\leq 2\frac{\Prime^{s_{3}}\exp \{-n(H(X_{1}|Q)-12\eta-R_{1})  \}}{\Prime^{n}\exp \{ -n(H(U_{2}\oplus U_{3},X_{1}|Y_{1},Q)+28\eta_{1}) \}},\nonumber
\end{eqnarray}
where the last inequality above follows from upper bound on $|T_{2\eta}(X_{1}|q^{n})|$\ifThesis{ (Lemma \ref{Lem:BoundsOnSizeOfConditionalTypicalSet})}\fi. An identical sequence of steps yields
\begin{eqnarray}
 \label{Eqn:3To1ICDecoder1ErrorEventUpperBoundOnTerm2}
 \mathscr{T}_{2} \leq 2\frac{ \exp \{ -n(H(X_{1}|Q) -28\eta_{1}-R_{1}) \}}{\exp \{ -n(H(X_{1}|U_{2}\oplus U_{3},Y_{1},Q)+12\eta) \}}.\nonumber
\end{eqnarray}
for sufficiently large $n$. Substituting $\frac{s_{3}\log\Prime}{n} = S_{3}$, we have
\begin{eqnarray}
 \label{Eqn:3to1ICDecoder1ErrorEventFinalStages}
 P((\tilde{\epsilon}_{1}\cup \epsilon_{2}\cup \epsilon_{3})^{c}\cap \epsilon_{41}) \leq 2\exp \{ n(28\eta_{1}+12\eta+S_{3}  +R_{1} -\log \Prime -H(X_{1}|Q)+H(X_{1},U_{2}\oplus U_{3}|Y_{1},Q)  ) \} \nonumber\\+ 2\exp \{ n(28\eta_{1}+12\eta+R_{1}-I(X_{1};U_{2}\oplus U_{3},Y_{1}|Q)) \}  .
 \nonumber
\end{eqnarray}
Employing the definition of $\delta$, we have
\begin{eqnarray}
 \label{Eqn:3to1ICDecoder1ErrorEventResultFromAppendix}
 P((\tilde{\epsilon}_{1}\cup \epsilon_{2}\cup \epsilon_{3})^{c}\cap \epsilon_{41}) \leq 4\exp \left\{ -n\left[\delta-28\eta_{1}-12\eta \right]\right\}.
\end{eqnarray}
for sufficiently large $n$.

\section{An upper bound on $P((\tilde{\epsilon}_{1}\cup \epsilon_{2}\cup\epsilon_{3})^{c}\cap\epsilon_{4j})$}
\label{AppSec:AnUpperBoundOnProbabilityOfDecoderErrorEvent}
% \input{../PhDThesis/3To1ICErrorEventAtDecoder2And3Analysed}
% \begin{comment}{

While it seems that analysis of this event is similar to the error event over a
point-to-point channel, and is therefore straight forward, the structure of the code lends
this considerable complexity. A few remarks are in order. Firstly, the distribution
induced on the codebooks does not lend the bins $C_{j1}(m_{j1}):m_{j1} \in
\mathcal{M}_{j1}$ to be statistically independent. Secondly, since the cloud center and
satellite codebooks are binned, the error event needs to be carefully partitioned and
analyzed separately. 

In this appendix, we seek an upper bound on $P((\tilde{\epsilon}_{1}\cup  \epsilon_{3})^{c}\cap \epsilon_{4j})$ for $j=2,3$. Let $({\epsilon}_{1}\cup  \epsilon_{3})^{c}\cap \epsilon_{4j} = \epsilon_{4j}^{1}\cup\epsilon_{4j}^{2}\cup\epsilon_{4j}^{3}$, where
\begin{eqnarray}
 \label{Eqn:3to1ICEpsilon4jBrokenDown}
\epsilon_{4j}^{1} &\define &\bigcup_{\hatm_{j1} \neq M_{j1}}
\bigcup_{\substack{\hat{a}^{s_{j}} \in \SemiPrivateRVSet_{j}^{s_{j}} }}\bigcup_{\hat{b}_{jX} \in
c_{jX}} \left\{ \substack{ (q^{n},U_{j}(\hat{a}^{s_{j}}),X_{j}(M_{jX},\hat{b}_{jX}),Y_{j}^{n}) \in
T_{4\eta_{1}}(Q,U_{j},V_{j},Y_{j}) ,~(q^{n},U_{j}(A^{s_{j}}),X_{j}^{n}(M_{jX},B_{jX}))\in \\T_{2\eta}(Q,U_{j},X_{j}),~I_{j}(\hat{a}^{s_{j}})=\hat{m}_{j1} , ~ (q^{n}, U_{j}^{n}(A^{s_{j}}),
X_{j}^{n}(M_{jX},B_{jX}), Y_{j}^{n}) \in T_{2\eta_{1}}(Q,U_{j},X_{j},Y_{j})  } \right\},
\nonumber
\end{eqnarray}
\begin{eqnarray}
\epsilon_{4j}^{2}& \define&\bigcup_{\hatm_{jX} \neq
M_{jX}}
\bigcup_{\substack{a^{s_{j}} \in \SemiPrivateRVSet_{j}^{s_{j}} }}\bigcup_{b_{jX} \in
c_{jX}} \left\{ \substack{ (q^{n},U_{j}(a^{s_{j}}),X_{j}(\hatm_{jX},b_{jX}),Y_{j}^{n}) \in
T_{4\eta_{1}}(Q,U_{j},V_{j},Y_{j}),~(q^{n},U_{j}(A^{s_{j}}),X_{j}^{n}(M_{jX},B_{jX}))\in \\T_{2\eta}(Q,U_{j},X_{j}),I_{j}(a^{s_{j}})=M_{j1},~ (q^{n}, U_{j}^{n}(A^{s_{j}}),
X_{j}^{n}(M_{jX},B_{jX}), Y_{j}^{n}) \in T_{2\eta_{1}}(Q,U_{j},X_{j},Y_{j})  } \right\},
\nonumber\\
\epsilon_{4j}^{3} &\define&\bigcup_{\substack{\hatm_{j1} \neq\\
M_{j1}}}\bigcup_{\substack{\hatm_{jX} \neq\\ M_{jX}}}\bigcup_{\substack{a^{s_{j}} \in
\SemiPrivateRVSet_{j}^{s_{j}} }}\bigcup_{b_{jX} \in c_{jX}} \!\!\!\!\!\left\{\substack{
(q^{n},U_{j}(a^{s_{j}}),X_{j}(\hatm_{jX},b_{jX}),Y_{j}^{n}) \in
T_{4\eta_{1}}(Q,U_{j},V_{j},Y_{j}),~(q^{n},U_{j}(A^{s_{j}}),X_{j}^{n}(M_{jX},B_{jX}))\in \\T_{2\eta}(Q,U_{j},X_{j}),~I_{j}(a^{s_{j}})=\hat{m}_{j1} ,~ (q^{n}, U_{j}^{n}(A^{s_{j}}),
X_{j}^{n}(M_{jX},B_{jX}), Y_{j}^{n}) \in T_{2\eta_{1}}(Q,U_{j},X_{j},Y_{j}) }\right\}.
\nonumber
\end{eqnarray}
The event of interest is $\epsilon_{l}^{c}\cap(\epsilon_{4j}^{1}\cup\epsilon_{4j}^{2}\cup\epsilon_{4j}^{3})$. Since $\epsilon_{l_{j}}^{c}\cap (\epsilon_{4j}^{1}\cup\epsilon_{4j}^{2}\cup\epsilon_{4j}^{3})$ contains the above error event, it suffices to derive upper bounds on $P(\epsilon_{l_{j}}^{c}\cap \epsilon_{4j}^{1} ), P(\epsilon_{l_{j}}^{c}\cap \epsilon_{4j}^{2} ),P(\epsilon_{l_{j}}^{c}\cap \epsilon_{4j}^{3} )$. We begin by studying $P(\epsilon_{l_{j}}^{c}\cap \epsilon_{4j}^{1})$. Defining,
\begin{eqnarray}
\label{Eqn:3To1ICDecoderErrorEventFirstTermEpsilon4j1DefiningComplicatedSet}
\tilde{T}(q^{n}) \define \left\{ (u_{j}^{n},x_{j}^{n},y_{j}^{n}) \in T_{2\eta_{1}}(U_{j},X_{j},Y_{j}|q^{n}): (u_{j}^{n},x_{j}^{n}) \in  T_{2\eta}(U_{j},X_{j}|q^{n}) \right\},\mbox{ we have}
\\
 \label{Eqn:3to1ICDecoderErrorEventFirstTermEpsilon4j1}
 P(\epsilon_{l_{j}}^{c}\cap \epsilon_{4j}^{1} ) = P\left(
 \bigcup_{\substack{m_{j1},\hatm_{j1} \in \MessageSet_{j1} \\m_{j1}\neq\hatm_{j1} }} \bigcup_{\substack{\hat{a}^{s_{j}} \\ \in
~\SemiPrivateRVSet_{j}^{s_{j}}}} \bigcup_{\substack{\hat{b}_{jX} \\ \in ~c_{jX}}}
\bigcup_{\substack{ (u_{j}^{n},x_{j}^{n},y_{j}^{n}) \in \\\tilde{T}(q^{n})}}
\bigcup_{\substack{
(\hat{u}_{j}^{n},\hat{x}_{j}^{n}) \in \\ T_{4\eta_{1}}(U_{j},X_{j}|y_{j}^{n},q^{n})}}
\!\!\!\!\!\!\!\!\!\left\{  \substack{U_{j}(A^{s_{j}}
)=u_{j}^{n},U_{j}(\hat{a}^{s_{j}})=\hat{u}_{j}^{n}
,M_{j1}=m_{j1}\\ I_{j}(A^{s_{j}})=m_{j1},Y_{j}^{n}=y_{j}^{n},I_{j}(\hat{a}^{s_{j}})=\hat{m}_{j1},\\X_{j}^{n}(M_{jX},B_{jX})=x_{j}^{n},X_{j}^{
n }(M_{jX},\hat{b}_{jX})=\hat{x}_{j}^{n}} \right\}\cap\epsilon_{l_{j}}^{c}\right) 
 \nonumber\\
\label{Eqn:3to1ICDecoderErrorEventFirstTermEpsilon4j1AfterApplyingUnionBound}
\leq
\sum_{\substack{m_{j1},\hatm_{j1} \in \MessageSet_{j1} \\m_{j1}\neq\hatm_{j1} } } \sum_{\substack{\hat{a}^{s_{j}} \\ \in
~\SemiPrivateRVSet_{j}^{s_{j}}}} \sum_{\substack{\hat{b}_{jX} \\ \in ~c_{jX}}}
\sum_{\substack{ (u_{j}^{n},x_{j}^{n},y_{j}^{n}) \in \\\tilde{T}(q^{n})}}
\sum_{\substack{(\hat{u}_{j}^{n},\hat{x}_{j}^{n}) \in \\ T_{4\eta_{1}}(U_{j},X_{j}|y_{j}^{n},q^{n})}}P\left( \left\{  \substack{U_{j}(A^{s_{j}}
)=u_{j}^{n},U_{j}(\hat{a}^{s_{j}})=\hat{u}_{j}^{n}
,M_{j1}=m_{j1}\\ I_{j}(A^{s_{j}})=m_{j1},Y_{j}^{n}=y_{j}^{n},I_{j}(\hat{a}^{s_{j}})=\hat{m}_{j1},\\X_{j}^{n}(M_{jX},B_{jX})=x_{j}^{n},X_{j}^{
n }(M_{jX},\hat{b}_{jX})=\hat{x}_{j}^{n}}\right\}\cap\epsilon_{l_{j}}^{c}\right).
\end{eqnarray}
We now consider two factors of generic term in the above summation. Since $X_{1}^{n}(M_{1}),X_{\msout{j}}^{n}(M_{\msout{j}X},B_{\msout{j}X})$ is independent of the collection $U_{j}(A^{s_{j}}
),U_{j}(\hat{a}^{s_{j}}),M_{j1}, I_{j}(A^{s_{j}}),I_{j}(\hat{a}^{s_{j}}),X_{j}^{n}(M_{jX},B_{jX}),X_{j}^{
n }(M_{jX},\hat{b}_{jX})$ for any $(\hat{a}^{s_{j}},\hat{b}_{jX})$, and $Y_{1}^{n} - (X_{1}^{n}(M_{1}),X_{j}^{n}(M_{jX},B_{jX}):j=2,3 )- (U_{j}(A^{s_{j}}
),U_{j}(\hat{a}^{s_{j}}),M_{j1}, I_{j}(A^{s_{j}}),I_{j}(\hat{a}^{s_{j}}),X_{j}^{
n }(M_{jX},\hat{b}_{jX}))$ is a Markov chain, we have
\begin{eqnarray}
 \label{Eqn:3To1ICTheReceivedCodewordIsConditionallyIndependentOfAnillegitimateCodewordinCloudCenterCodebook}
P\left( Y_{j}^{n}=y_{j}^{n} \middle|\substack{U_{j}(A^{s_{j}}
)=u_{j}^{n},U_{j}(\hat{a}^{s_{j}})=\hat{u}_{j}^{n}
,M_{j1}=m_{j1}\\ \phi_{j}(q^{n},M_{j})\geq \mathscr{L}_{j}(n),I_{j}(A^{s_{j}})=m_{j1},I_{j}(\hat{a}^{s_{j}})=\hat{m}_{j1},\\X_{j}^{n}(M_{jX},B_{jX})=x_{j}^{n},X_{j}^{
n }(M_{jX},\hat{b}_{jX})=\hat{x}_{j}^{n}}\right) =P\left( Y_{j}^{n}=y_{j}^{n}
|X_{j}^{n}(M_{jX},B_{jX})=x_{j}^{n} \right) =: \hat{\theta}\left(y_{j}^{n}|x_{j}^{n}\right).
\nonumber
\nonumber\end{eqnarray}
By the law of total probability, we have
\begin{eqnarray}
P\left( \substack{U_{j}(A^{s_{j}}
)=u_{j}^{n},U_{j}(\hat{a}^{s_{j}})=\hat{u}_{j}^{n}
,M_{j1}=m_{j1}\\ \phi_{j}(q^{n},M_{j})\geq \mathscr{L}_{j}(n),I_{j}(A^{s_{j}})=m_{j1},I_{j}(\hat{a}^{s_{j}})=\hat{m}_{j1},\\X_{j}^{n}(M_{jX},B_{jX})=x_{j}^{n},X_{j}^{
n }(M_{jX},\hat{b}_{jX})=\hat{x}_{j}^{n}} \right) =\!\!\! \!\!\!\sum_{m_{jX} \in \MessageSet_{jX}}\sum_{a^{s_{j}} \in \SemiPrivateRVSet_{j}^{s_{j}}}\!\!\!\!P\left(\left\{  \substack{U_{j}(a^{s_{j}}
)=u_{j}^{n},U_{j}(\hat{a}^{s_{j}})=\hat{u}_{j}^{n}
,M_{j}=m_{j},B_{jX}=\hat{b}_{jX}\\ A^{s_{j}}=a^{s_{j}},I_{j}(a^{s_{j}})=m_{j1},I_{j}(\hat{a}^{s_{j}})=\hat{m}_{j1},\\X_{j}^{n}(m_{jX},\hat{b}_{jX})=x_{j}^{n},X_{j}^{
n }(m_{jX},\hat{b}_{jX})=\hat{x}_{j}^{n}} \right\}\cap\epsilon_{l_{j}}^{c}\right)  +
\nonumber\\
\label{Eqn:3to1ICDecoderErrorEventFirstTermEpsilon4j1TheSecondTermWhereTheTransmittedVectorIsDifferent}
+\sum_{m_{jX} \in \MessageSet_{jX}}\sum_{a^{s_{j}} \in \SemiPrivateRVSet_{j}^{s_{j}}} \sum_{  \substack{b_{jX} \in c_{jX}\\b_{jX}\neq \hat{b}_{jX}}} P\left( \left\{\substack{U_{j}(a^{s_{j}}
)=u_{j}^{n},U_{j}(\hat{a}^{s_{j}})=\hat{u}_{j}^{n}
,M_{j}=m_{j},B_{jX}=b_{jX}\\ A^{s_{j}}=a^{s_{j}},I_{j}(a^{s_{j}})=m_{j1},I_{j}(\hat{a}^{s_{j}})=\hat{m}_{j1},\\X_{j}^{n}(m_{jX},b_{jX})=x_{j}^{n},X_{j}^{
n }(m_{jX},\hat{b}_{jX})=\hat{x}_{j}^{n}}\right\}\cap\epsilon_{l_{j}}^{c} \right). \nonumber
\end{eqnarray}
Now recognize that a generic term of the sum in (\ref{Eqn:3to1ICDecoderErrorEventFirstTermEpsilon4j1AfterApplyingUnionBound}) is a product of the left hand sides of the above two identities.
Before we substitute the right hand sides of the above two identities in (\ref{Eqn:3to1ICDecoderErrorEventFirstTermEpsilon4j1AfterApplyingUnionBound}), we simplify the terms involved in the second identity (involving the two sums). Denoting
\begin{eqnarray}
 \label{Eqn:3To1ICPCCProofDecoders2And3ComplicatedEvent}
 &E^{1} \define \left\{\substack{U_{j}(a^{s_{j}} )=u_{j}^{n},U_{j}(\hat{a}^{s_{j}})=\hat{u}_{j}^{n} ,M_{j}=m_{j}\\ I_{j}(a^{s_{j}})=m_{j1},I_{j}(\hat{a}^{s_{j}})=\hat{m}_{j1},\\X_{j}^{n}(m_{jX},b_{jX})=x_{j}^{n},X_{j}^{ n }(m_{jX},\hat{b}_{jX})=\hat{x}_{j}^{n}}\right\},\mbox{ we have},\nonumber\\
 \label{Eqn:3to1ICDecoderErrorEventFirstTermEpsilon4j1SimplyfyingTheTermWhereTransmittedVectorIsDifferent}
&P\left(\left\{ \substack{U_{j}(a^{s_{j}}
)=u_{j}^{n},U_{j}(\hat{a}^{s_{j}})=\hat{u}_{j}^{n}
,M_{j}=m_{j},B_{jX}=b_{jX}\\ A^{s_{j}}=a^{s_{j}},I_{j}(a^{s_{j}})=m_{j1},I_{j}(\hat{a}^{s_{j}})=\hat{m}_{j1},\\X_{j}^{n}(m_{jX},b_{jX})=x_{j}^{n},X_{j}^{
n }(m_{jX},\hat{b}_{jX})=\hat{x}_{j}^{n}} \right\}\cap\epsilon_{l_{j}}^{c}\right) \leq P\left( E^{1} \right)P\left( \substack{A^{s_{j}}=a^{s_{j}}\\\\B_{jX}=b_{jX}}\middle| E^{1} \cap\epsilon_{l_{j}}^{c} \right) \mbox{ where},\nonumber\\
&P(E^{1})= P\left(\substack{M_{j}=m_{j}, I_{j}(a^{s_{j}})=m_{j1},
I_{j}(\hat{a}^{s_{j}})=\hat{m}_{j1},\\ X_{j}^{n}(m_{jX},b_{jX})=x_{j}^{n},X_{j}^{
n }(m_{jX},\hat{b}_{jX})=\hat{x}^{n} }\right)P\left(\substack{U_{j}^{n}(\hat{a}^{s_{j}})=\hat{u}_{j}^{n}\\U_{j}(a^{s_{j}}
)=u_{j}^{n}}\right),~~ P\left( \substack{A^{s_{j}}=a^{s_{j}}\\\\B_{jX}=b_{jX}}\middle| E^{1} \cap\epsilon_{l_{j}}^{c} \right) = \frac{1}{\mathscr{L}_{j}(n)} = \frac{2}{\Expectation\left\{ \phi_{j}(q^{n},M_{j}) \right\}}
\end{eqnarray}
Let us work with $P(E^{1})$. If $\hat{m}_{j1} \neq m_{j1}$ and $\hat{a}^{s_{j}}\neq a^{s_{j}}$, then
\begin{eqnarray}
 P\left(\substack{M_{j}=m_{j}, I_{j}(a^{s_{j}})=m_{j1},
I_{j}(\hat{a}^{s_{j}})=\hat{m}_{j1},\\ X_{j}^{n}(m_{jX},b_{jX})=x_{j}^{n},X_{j}^{
n }(m_{jX},\hat{b}_{jX})=\hat{x}^{n} }\right)P\left(\substack{U_{j}^{n}(\hat{a}^{s_{j}})=\hat{u}_{j}^{n}\\U_{j}(a^{s_{j}}
)=u_{j}^{n}}\right) \leq \left\{ 
\begin{array}{lr}
\frac{P(M_{j}=m_{j})\exp\left\{ -n(2H(X_{j}|Q) ) \right\}}{\Prime^{2n+2t_{j}}\exp\{ -n4\eta-n8\eta_{1} \}} & \mbox{ if }\hat{b}_{jX} \neq b_{jX} \\
\frac{P(M_{j}=m_{j})\exp\left\{ -n(H(X_{j}|Q) ) \right\}}{\Prime^{2n+2t_{j}}\exp\{ -n4\eta \}} &\mbox{ otherwise.}
\end{array}
  \right.
\end{eqnarray}
Substituting the above observations in (\ref{Eqn:3to1ICDecoderErrorEventFirstTermEpsilon4j1AfterApplyingUnionBound}), we have
\begin{eqnarray}
\label{Eqn:3to1ICFirstPartOfTheDecoderErrorEventSimplified}
 P(\epsilon_{l_{j}}^{c}\cap \epsilon_{4j}^{1} ) \leq \!\!\!\! \sum_{m_{j} \in \MessageSet_{j}} \sum_{\hat{m}_{j1} \neq m_{j1}} \sum_{\substack{a^{s_{j}},\hat{a}^{s_{j} } \\ a^{s_{j}}\neq\hat{a}^{s_{j} } }   } \sum_{\substack{ b_{jX},\hat{b}_{jX}\\ \hat{b}_{jX} \neq b_{jX}  }} \sum_{\substack{ (u_{j}^{n},x_{j}^{n},y_{j}^{n}) \in \\\tilde{T}(q^{n})}}\!\!\!\!\!\hat{\theta}\left(y_{j}^{n}|x_{j}^{n}\right)\!\!\!\!\!\!\!\!\!\!\!\!\!\!
\sum_{\substack{(\hat{u}_{j}^{n},\hat{x}_{j}^{n}) \in \\ T_{4\eta_{1}}(U_{j},X_{j}|y_{j}^{n},q^{n})}}\!\!\!\!\!\!\!\!\!\!\!\!\!\!\!   \frac{P(M_{j}=m_{j})\exp\left\{ -2nH(X_{j}|Q)  \right\}}{\Prime^{2n+2t_{j}}\exp\{ -n4\eta-n8\eta_{1}\}\mathscr{L}_{j}(n)}  +
\nonumber\\
+ \sum_{m_{j} \in \MessageSet_{j}} \sum_{\hat{m}_{j1} \neq m_{j1}} \sum_{\substack{a^{s_{j}},\hat{a}^{s_{j} } \\ a^{s_{j}}\neq\hat{a}^{s_{j} } }   } \sum_{\substack{ b_{jX} \in c_{jX}  }} \sum_{\substack{ (u_{j}^{n},x_{j}^{n},y_{j}^{n}) \in \\\tilde{T}(q^{n})}}\!\!\!\!\!\!\hat{\theta}\left(y_{j}^{n}|x_{j}^{n}\right)\!\!\!\!\!\!\!\!\!\!\!\!\!\!
\sum_{\substack{\hat{u}_{j}^{n} \in \\ T_{4\eta_{1}}(U_{j}|x_{j}^{n},y_{j}^{n},q^{n}) } }\!\!\!\!\!\!\!\frac{P(M_{j}=m_{j})\exp\left\{ -nH(X_{j}|Q) \right\}}{\Prime^{2n+2t_{j}}\exp\{- n4\eta \}\mathscr{L}_{j}(n)}.
\nonumber
\end{eqnarray}
Using the upper bounds on the size of the conditional frequency
typical sets $T_{4\eta_{1}}(U_{j},X_{j}|y_{j}^{n},q^{n})$ and
$T_{4\eta_{1}}(U_{j}|x_{j}^{n},y_{j}^{n},q^{n})$,
 for sufficiently large $n$ (\cite[Lemma 5]{201301arXivMACDSTx_PadPra}),  we have
\begin{eqnarray}
P(\epsilon_{l_{j}}^{c}\cap \epsilon_{4j}^{1}) 
\leq \sum_{m_{j} \in \MessageSet_{j}} \sum_{\hat{m}_{j1} \neq m_{j1}} \sum_{\substack{a^{s_{j}},\hat{a}^{s_{j} } \\ a^{s_{j}}\neq\hat{a}^{s_{j} } }   } \sum_{\substack{ b_{jX},\hat{b}_{jX}\\ \hat{b}_{jX} \neq b_{jX}  }} 
\sum_{\substack{(u_{j}^{n},x_{j}^{n})\in \\ T_{2\eta}(U_{j},X_{j}|q^{n})}}   \frac{P(M_{j}=m_{j})\exp\left\{  -2nH(X_{j}|Q)+n16\eta_{1} \right\}}{\Prime^{2n+2t_{j}}\exp\{-n4\eta -nH(U_{j},X_{j}|Y_{j},Q) \}\mathscr{L}_{j}(n)}  +
\nonumber\\
+ \sum_{m_{j} \in \MessageSet_{j}} \sum_{\hat{m}_{j1} \neq m_{j1}} \sum_{\substack{a^{s_{j}},\hat{a}^{s_{j} } \\ a^{s_{j}}\neq\hat{a}^{s_{j} } }   } \sum_{\substack{ b_{jX} \in c_{jX}  }} \sum_{\substack{(u_{j}^{n},x_{j}^{n})\in \\ T_{2\eta}(U_{j},X_{j}|q^{n})}} \frac{P(M_{j}=m_{j})\exp\left\{ -nH(X_{j}|Q)+8n\eta_{1} \right\}}{\Prime^{2n+2t_{j}}\exp\{ -n4\eta-nH(U_{j}|X_{j},Y_{j},Q) \}\mathscr{L}_{j}(n)}.
\nonumber
\end{eqnarray}
Substituting the lower bound for $\mathscr{L}_{j}(n)$ from (\ref{Eqn:3To1ICPCCProofLowerBoundOnExpecttaionOfPhi}) and noting that the terms in the summation do not depend on the arguments of the sum, for $n\geq N_{11}(\eta_{1})$, it can be verified that
\begin{eqnarray}
P(\epsilon_{l_{j}}^{c}\cap \epsilon_{4j}^{1} )\leq 2\frac{\Prime^{s_{j}}\exp\left\{ -nH(X_{j}|Q)+8n\eta_{1}+4n\eta \right\}}{\Prime^{n}\exp\{ -nH(U_{j}|X_{j},Y_{j},Q) \}} \left(\frac{\exp\{ -nH(X_{j}|Q)+8n\eta_{1}  \}}{\exp\{-nH(X_{j}|Y_{j},Q)-nK_{j} \}}+1\right).
\nonumber
\end{eqnarray}
Finally, substituting $\frac{s_{j}\log\Prime}{n}=S_{j}$, $\delta$, we
have, for sufficiently large $n$,
\begin{eqnarray}
\lefteqn{P(\epsilon_{l_{j}}^{c}\cap \epsilon_{4j}^{1})\leq 2\exp \{ -n\left[(\log \Prime -H(U_{j}|X_{j},Y_{j},Q))-S_{j}- \left(8\eta_{1}+4\eta\right) \right] \} +}\nonumber\\
  &+& 2\exp \{ -n\left[(\log \Prime +H(X_{j}|Q)-H(U_{j},X_{j}|Y_{j},Q))-(S_{j} +K_{j}) - \left(16\eta_{1}+4\eta\right) \right] \} \nonumber\\
 \label{Eqn:3to1ICDecoderErrorEventFirstTermEpsilon4j1AfterTediousStepsAndSubstitutingUpperBoundOnsjByn}
 &\leq& 4\exp \{ -n\left[ \delta -\left(16\eta_{1}+8\eta\right)  \right] \}.
\end{eqnarray}
We follow a similar sequence of steps to derive an upper bound on
$P(\epsilon_{4j}^{2})$. Defining $\tilde{T}(q^{n})$ as in
(\ref{Eqn:3To1ICDecoderErrorEventFirstTermEpsilon4j1DefiningComplicatedSet}),
we have 
\begin{eqnarray}
% \label{Eqn:3to1ICDecoderErrorSecondEventEpsilon4j2CharacterizedMoreExplicitly}
 \label{Eqn:3to1ICDecoderErrorSecondEventEpsilon4j2CharacterizedMoreExplicitlyAfterUnionBound}
P(\epsilon_{l_{j}}^{c}\cap\epsilon_{4j}^{2}) 
 \leq \sum_{\substack{m_{jX},\hat{m}_{jX} \in \MessageSet_{jX} \\ \hat{m}_{jX} \neq m_{jX} } } \sum_{\substack{ \hat{a}^{s_{j}} \\\in \SemiPrivateRV_{j}^{s_{j}} }} \sum_{\substack{ \hat{b}_{jX} \\ \in c_{jX}   }} 
\sum_{\substack{ (u_{j}^{n},x_{j}^{n},y_{j}^{n}) \in \\\tilde{T}(q^{n})}}
\sum_{\substack{
(\hat{u}_{j}^{n},\hat{x}_{j}^{n}) \in \\ T_{4\eta_{1}}(U_{j},X_{j}|y_{j}^{n},q^{n})}}\!\!\!\!\!\!\!\!\!  P\left(  
 \left\{\substack{X_{j}^{
n }(\hat{m}_{jX},\hat{b}_{jX})=\hat{x}_{j}^{n},U_{j}(\hat{a}^{s_{j}})=\hat{u}_{j}^{n}
,Y_{j}^{n}=y_{j}^{n}\\ I_{j}(A^{s_{j}})=I_{j}(\hat{a}^{s_{j}})=M_{j1},M_{jX}=m_{jX},\\X_{j}^{n}(M_{jX},B_{jX})=x_{j}^{n},U_{j}(A^{s_{j}}
)=u_{j}^{n}}\right\}\cap\epsilon_{l_{j}}^{c} \right)
\end{eqnarray}
We now consider two factors of a generic term in the above sum. Since $X_{1}^{n}(M_{1}),X_{\msout{j}}^{n}(M_{\msout{j}X},B_{\msout{j}X})$ is independent of the collection $X_{j}^{
n }(\hat{m}_{jX},\hat{b}_{jX}),U_{j}(\hat{a}^{s_{j}}), I_{j}(A^{s_{j}}),I_{j}(\hat{a}^{s_{j}}),M_{jX},X_{j}^{n}(M_{jX},B_{jX}),U_{j}(A^{s_{j}}
)$ for any $(\hat{a}^{s_{j}},\hat{b}_{jX})$ as long as $\hat{m}_{jX} \neq M_{jX}$, and $Y_{1}^{n} - (X_{1}^{n}(M_{1}),X_{j}^{n}(M_{jX},B_{jX}):j=2,3 )- (X_{j}^{
n }(\hat{m}_{jX},\hat{b}_{jX}),U_{j}(\hat{a}^{s_{j}}), I_{j}(A^{s_{j}}),I_{j}(\hat{a}^{s_{j}}),$ $M_{jX},X_{j}^{n}(M_{jX},B_{jX}),$ $U_{j}(A^{s_{j}}
))$ is a Markov chain, we have
\begin{eqnarray}
 \label{Eqn:3to1ICDecoder2ErrorEventEpsilon4j2GenericTerm}
 P\left(Y_{j}^{n}=y_{j}^{n}\middle|\left\{\substack{X_{j}^{
n }(\hat{m}_{jX},\hat{b}_{jX})=\hat{x}_{j}^{n},U_{j}(\hat{a}^{s_{j}})=\hat{u}_{j}^{n}
\\ I_{j}(A^{s_{j}})=I_{j}(\hat{a}^{s_{j}})=M_{j1},M_{jX}=m_{jX},\\X_{j}^{n}(M_{jX},B_{jX})=x_{j}^{n},U_{j}(A^{s_{j}}
)=u_{j}^{n}}\right\}\cap\epsilon_{l_{j}}^{c}\right) = P \left(Y_{j}^{n}=y_{j}^{n}|X_{j}^{n}(M_{jX},B_{jX})=x_{j}^{n}\right) =: \hat{\theta} \left(y_{j}^{n}|  x_{j}^{n} \right). 
\nonumber
\end{eqnarray}
By the law of total probability, we have
\begin{eqnarray}
 \label{Eqn:3to1ICDecoder2ErrorEventEpsilon4j2GenericTermProbabilityOfConditioningEvent}
 P\left(\left\{ \substack{X_{j}^{
n }(\hat{m}_{jX},\hat{b}_{jX})=\hat{x}_{j}^{n},U_{j}(\hat{a}^{s_{j}})=\hat{u}_{j}^{n}
\\ I_{j}(A^{s_{j}})=I_{j}(\hat{a}^{s_{j}})=M_{j1},M_{jX}=m_{jX},\\X_{j}^{n}(M_{jX},B_{jX})=x_{j}^{n},U_{j}(A^{s_{j}}
)=u_{j}^{n}} \right\}\cap\epsilon_{l_{j}}^{c}\right) = \sum_{\substack{ m_{j1} \in \MessageSet_{j1}} } \sum_{b_{jX} \in c_{jX}} P\left( \left\{\substack{X_{j}^{
n }(\hat{m}_{jX},\hat{b}_{jX})=\hat{x}_{j}^{n},U_{j}(\hat{a}^{s_{j}})=\hat{u}_{j}^{n}, A^{s_{j}}=\hat{a}^{s_{j}}
\\ I_{j}(\hat{a}^{s_{j}})=M_{j1},M_{j}=m_{j},B_{jX}=b_{jX}\\X_{j}^{n}(m_{jX},b_{jX})=x_{j}^{n},U_{j}(\hat{a}^{s_{j}}
)=u_{j}^{n}} \right\}\cap\epsilon_{l_{j}}^{c}\right) 
\nonumber
\end{eqnarray}
\begin{eqnarray}
+ \sum_{\substack{ m_{j1} \in \MessageSet_{j1}} } \sum_{b_{jX} \in c_{jX}} \sum_{ \substack{a^{s_{j}} \in \SemiPrivateRVSet_{j}^{s_{j}}\\a^{s_{j}}\neq \hat{a}^{s_{j} } }} P\left( \left\{\substack{X_{j}^{
n }(\hat{m}_{jX},\hat{b}_{jX})=\hat{x}_{j}^{n},U_{j}(\hat{a}^{s_{j}})=\hat{u}_{j}^{n}, A^{s_{j}}=a^{s_{j}}
\\ I_{j}(a^{s_{j}})=I_{j}(\hat{a}^{s_{j}})=M_{j1},M_{j}=m_{j},B_{jX}=b_{jX}\\X_{j}^{n}(m_{jX},b_{jX})=x_{j}^{n},U_{j}(a^{s_{j}}
)=u_{j}^{n}} \right\}\cap\epsilon_{l_{j}}^{c}\right).
\nonumber
\end{eqnarray}
Now recognize that a generic term of the sum in (\ref{Eqn:3to1ICDecoderErrorSecondEventEpsilon4j2CharacterizedMoreExplicitlyAfterUnionBound}) is a product of the left hand sides of the above two identities.
Before we substitute the right hand sides of the above two identities in (\ref{Eqn:3to1ICDecoderErrorSecondEventEpsilon4j2CharacterizedMoreExplicitlyAfterUnionBound}), we simplify the terms involved in the second identity (involving the two sums). Denoting
\begin{eqnarray}
 \label{Eqn:3To1ICPCCProofDecoder2And3ErrorEventComplicatedEvent}
 E^{2} \define \left\{ X_{j}^{ n }(\hat{m}_{jX},\hat{b}_{jX})=\hat{x}_{j}^{n},U_{j}(\hat{a}^{s_{j}})=\hat{u}_{j}^{n}, 
I_{j}(a^{s_{j}})=I_{j}(\hat{a}^{s_{j}})=m_{j1},M_{j}=m_{j},
X_{j}^{n}(m_{jX},b_{jX})=x_{j}^{n},U_{j}(a^{s_{j}})=u_{j}^{n} \right\}, 
\end{eqnarray}
and evaluating $P(E_2)$ (similiar to $P(E_1)$), and 
substituting this  in (\ref{Eqn:3to1ICDecoderErrorSecondEventEpsilon4j2CharacterizedMoreExplicitlyAfterUnionBound}), we get
\begin{eqnarray}
 \label{Eqn:3to1ICDecoder2ErrorEventEpsilon4j2BrokenDown}
 P(\epsilon_{l_{j}}^{c}\cap\epsilon_{4j}^{2}) \leq \sum_{m_{j} \in \MessageSet_{j}} \sum_{\hat{m}_{jX} \neq m_{jX}} \sum_{\substack{b_{jX},\hat{b}_{jX}\\\in c_{jX}}} \sum_{ \substack{ a^{s_{j}} \in \\ \SemiPrivateRVSet_{j}^{s_{j}} }}  \sum_{\substack{ (u_{j}^{n},x_{j}^{n},y_{j}^{n}) \in \\\tilde{T}(q^{n})}}\!\!\!\!\hat{\theta} \left(y_{j}^{n}|  x_{j}^{n} \right)\!\!\!\!\!\!\!\!\!\!\!\!\!\!
\sum_{\substack{\hat{x}_{j}^{n} \in \\ T_{4\eta_{1}}(X_{j}|u_{j}^{n},y_{j}^{n},q^{n})}}\!\!\!\!\!\!\!\!\!\!\!\!\!\!\!   \frac{P(M_{j}=m_{j})\exp\left\{ -2nH(X_{j}|Q)  \right\}}{\Prime^{n+t_{j}}\exp\{ -n4\eta-n8\eta_{1}\}\mathscr{L}_{j}(n)}  +
\nonumber\\
+\sum_{m_{j} \in \MessageSet_{j}} \sum_{\hat{m}_{jX} \neq m_{jX}} \sum_{\substack{b_{jX},\hat{b}_{jX}\\\in c_{jX}}} \sum_{ \substack{ a^{s_{j}},\hat{a}^{s_{j}} \in  \SemiPrivateRVSet_{j}^{s_{j}}  \\ a^{s_{j}} \neq \hat{a}^{s_{j} }  } }  \sum_{\substack{ (u_{j}^{n},x_{j}^{n},y_{j}^{n}) \in \\\tilde{T}(q^{n})}}\!\!\!\!\!\!\!\!\hat{\theta} \left(y_{j}^{n}|  x_{j}^{n} \right)\!\!\!\!\!\!\!\!\!\!\!\!\!\!
\sum_{\substack{ ( \hat{u}_{j}^{n}, \hat{x}_{j}^{n}) \in \\ T_{4\eta_{1}}(U_{j},X_{j}|y_{j}^{n},q^{n})}}\!\!\!\!\!\!\!\!\!\!\!\!\!\!\!   \frac{P(M_{j}=m_{j})\exp\left\{ -2nH(X_{j}|Q)  \right\}}{\Prime^{2n+2t_{j}}\exp\{ -n4\eta-n8\eta_{1}\}\mathscr{L}_{j}(n)}.
\nonumber
\end{eqnarray}
We now employ the upper bounds on $|T_{4\eta_{1}}(X_{j}|u_{j}^{n},y_{j}^{n},q^{n})|$ and $|T_{4\eta_{1}}(U_{j},X_{j}|y_{j}^{n},q^{n})|$. For sufficiently large $n$, $|T_{4\eta_{1}}(X_{j}|u_{j}^{n},y_{j}^{n},q^{n})| \leq \exp \left\{n(H(X_{j}|U_{j},Y_{j},Q)+8\eta_{1})  \right\}$ and $|T_{4\eta_{1}}(U_{j},X_{j}|y_{j}^{n},q^{n})| \leq \exp \left\{ n(H(U_{j},X_{j}|Y_{j},Q)+8\eta_{1})  \right\}$ for all $(u_{j}^{n},y_{j}^{n},q^{n}) \in T_{2\eta_{1}}(U_{j},Y_{j},Q)$. For such an $n$, we have
\begin{eqnarray}
P(\epsilon_{l_{j}}^{c}\cap\epsilon_{4j}^{2})&\leq& \sum_{m_{j} \in \MessageSet_{j}} \sum_{\hat{m}_{jX} \neq m_{jX}} \sum_{\substack{b_{jX},\hat{b}_{jX}\\\in c_{jX}}} \sum_{ \substack{ a^{s_{j}} \in \\ \SemiPrivateRVSet_{j}^{s_{j}} }}  \sum_{\substack{ (u_{j}^{n},x_{j}^{n},y_{j}^{n}) \in \\\tilde{T}(q^{n})}}\!\!\!\!\!\!\!\!\hat{\theta} \left(y_{j}^{n}|  x_{j}^{n} \right)\!
  \frac{\Prime^{-n-t_{j}}P(M_{j}=m_{j})\exp\left\{ -2nH(X_{j}|Q)  \right\}}{\exp\{ -n4\eta-n16\eta_{1} -nH(X_{j}|U_{j},Y_{j},Q)\}\mathscr{L}_{j}(n)}  +
\nonumber\\
&&\!\!\!\!\!\!\!\!\!+\sum_{m_{j} \in \MessageSet_{j}} \sum_{\hat{m}_{jX} \neq m_{jX}} \sum_{\substack{b_{jX},\hat{b}_{jX}\\\in c_{jX}}} \sum_{ \substack{ a^{s_{j}},\hat{a}^{s_{j}} \in  \SemiPrivateRVSet_{j}^{s_{j}}  \\ a^{s_{j}} \neq \hat{a}^{s_{j} }  } }  \sum_{\substack{ (u_{j}^{n},x_{j}^{n},y_{j}^{n}) \in \\\tilde{T}(q^{n})}}\!\!\!\!\!\!\!\!\!\hat{\theta} \left(y_{j}^{n}|  x_{j}^{n} \right)  \frac{\Prime^{-2n-2t_{j}}P(M_{j}=m_{j})\exp\left\{ -2nH(X_{j}|Q)  \right\}}{\exp\{ -n4\eta-n16\eta_{1} -nH(X_{j},U_{j}|Y_{j},Q)\}\mathscr{L}_{j}(n)}
\nonumber\\
&&\!\!\!\!\!\!\!\!\!\leq \sum_{m_{j} \in \MessageSet_{j}} \sum_{\hat{m}_{jX} \neq m_{jX}} \sum_{\substack{b_{jX},\hat{b}_{jX}\\\in c_{jX}}} \sum_{ \substack{ a^{s_{j}} \in \\ \SemiPrivateRVSet_{j}^{s_{j}} }}  \sum_{\substack{(u_{j}^{n},x_{j}^{n})\in \\ T_{2\eta}(U_{j},X_{j}|q^{n})}}
  \frac{\Prime^{-n-t_{j}}P(M_{j}=m_{j})\exp\left\{ -2nH(X_{j}|Q)  \right\}}{\exp\{ -n4\eta-n16\eta_{1} -nH(X_{j}|U_{j},Y_{j},Q)\}\mathscr{L}_{j}(n)}  +
\nonumber\\
&&\!\!\!\!\!\!\!\!\!+\sum_{m_{j} \in \MessageSet_{j}} \sum_{\hat{m}_{jX} \neq m_{jX}} \sum_{\substack{b_{jX},\hat{b}_{jX}\\\in c_{jX}}} \sum_{ \substack{ a^{s_{j}},\hat{a}^{s_{j}} \in  \SemiPrivateRVSet_{j}^{s_{j}}  \\ a^{s_{j}} \neq \hat{a}^{s_{j} }  } }  \sum_{\substack{(u_{j}^{n},x_{j}^{n})\in \\ T_{2\eta}(U_{j},X_{j}|q^{n})}} \frac{\Prime^{-2n-2t_{j}}P(M_{j}=m_{j})\exp\left\{ -2nH(X_{j}|Q)  \right\}}{\exp\{ -n4\eta-n16\eta_{1} -nH(X_{j},U_{j}|Y_{j},Q)\}\mathscr{L}_{j}(n)}.
\nonumber
\end{eqnarray}
Substituting the lower bound for $\mathscr{L}_{j}(n)$ from (\ref{Eqn:3To1ICPCCProofLowerBoundOnExpecttaionOfPhi}), we have
\begin{eqnarray}
P(\epsilon_{l_{j}}^{c}\cap\epsilon_{4j}^{2})&\leq& 2\sum_{m_{j} \in \MessageSet_{j}} \sum_{\hat{m}_{jX} \neq m_{jX}} \sum_{\substack{b_{jX},\hat{b}_{jX}\\\in c_{jX}}} \sum_{ \substack{ a^{s_{j}} \in \\ \SemiPrivateRVSet_{j}^{s_{j}} }}  
  \frac{P(M_{j}=m_{j})\exp\left\{ -nH(X_{j}|Q)  +n16\eta_{1}\right\}}{\Prime^{s_{j}}\exp\{ -n8\eta -nH(X_{j}|U_{j},Y_{j},Q)\}|c_{jX}|}  +
\nonumber\\
&&+2\sum_{m_{j} \in \MessageSet_{j}} \sum_{\hat{m}_{jX} \neq m_{jX}} \sum_{\substack{b_{jX},\hat{b}_{jX}\\\in c_{jX}}} \sum_{ \substack{ a^{s_{j}},\hat{a}^{s_{j}} \in  \SemiPrivateRVSet_{j}^{s_{j}}  \\ a^{s_{j}} \neq \hat{a}^{s_{j} }  } }  \ \frac{P(M_{j}=m_{j})\Prime^{-s_{j}}\exp\left\{ -nH(X_{j}|Q)+n16\eta_{1} \right\}}{\Prime^{n+t_{j}}\exp\{ -n8\eta -nH(X_{j},U_{j}|Y_{j},Q)\}|c_{jX}|}
\nonumber
\end{eqnarray}
\begin{eqnarray}
&&\leq 2\sum_{m_{j} \in \MessageSet_{j}} \sum_{\hat{m}_{jX} \neq m_{jX}} 
  \frac{P(M_{j}=m_{j})\exp\left\{ -nH(X_{j}|Q)  +n16\eta_{1}\right\}}{\exp\{ -n8\eta -nH(X_{j}|U_{j},Y_{j},Q)-nK_{j}\}}  +
\nonumber\\
&&+2\sum_{m_{j} \in \MessageSet_{j}} \sum_{\hat{m}_{jX} \neq m_{jX}}   \frac{P(M_{j}=m_{j})\Prime^{s_{j}}\exp\left\{ -nH(X_{j}|Q)+n16\eta_{1} \right\}}{\Prime^{n+t_{j}}\exp\{ -n8\eta -nH(X_{j},U_{j}|Y_{j},Q)-nK_{j}\}}
\nonumber\\
&&\leq 2\frac{\exp\left\{ -nH(X_{j}|Q)+nL_{j}+n\eta_{1}+n16\eta_{1}  \right\}}{\exp\{ -n8\eta -nH(X_{j}|U_{j},Y_{j},Q)-nK_{j}\}}\left[ 1+ \frac{\exp\{ nH(U_{j}|Y_{j},Q) \}}{\Prime^{n+t_{j}-s_{j} } } \right]
\nonumber
\end{eqnarray}
We have for sufficiently large $n$
\begin{eqnarray}
  P(\epsilon_{l_{j}}^{c}\cap\epsilon_{4j}^{2}) &\leq& 2\exp\left\{ -n(I(X_{j};U_{j},Y_{j}|Q)- K_{j}-L_{j} - \left[ 9\eta_{1}+16\eta_{1} \right] )    \right\}
 \nonumber\\\label{Eqn:3to1ICDecoder2ErrorEventEpsilon4j2UpperBound}
 &&+2\exp \left\{ -n \left[\left(\substack{\log\Prime + H(X_{j}|Q)-\\
         H(X_{j},U_{j}|Y_{j},Q)}\right)-\left(
       \substack{K_{j}+L_{j}+\\(S_{j}-T_{j})\log\Prime}\right)
     -\left[(9+16\eta_{1}\right] \right]  \right\} \leq 4\exp \left\{
   -n\left( \delta -\left(9\eta+16\eta_{1}\right)  \right)
 \right\}. \nonumber 
\end{eqnarray}
We are left to study $P(\epsilon_{4j}^{3})$. Defining $\tilde{T}(q^{n})$ as in (\ref{Eqn:3To1ICDecoderErrorEventFirstTermEpsilon4j1DefiningComplicatedSet}), and
\begin{eqnarray}
 \label{Eqn:3To1ICPCCProofDecoder2And3ErrorEventThirdTermComplicatedEvent}
 E^{3}\define \left\{\substack{X_{j}^{
n }(\hat{m}_{jX},\hat{b}_{jX})=\hat{x}_{j}^{n},U_{j}(\hat{a}^{s_{j}})=\hat{u}_{j}^{n}
\\ I_{j}(a^{s_{j}})=m_{j1}, I_{j}(\hat{a}^{s_{j}})=\hat{m}_{j1}\\X_{j}^{n}(m_{jX},b_{jX})=x_{j}^{n}, U_{j}(a^{s_{j}}
)=u_{j}^{n}  ,M_{j}=m_{j}}\right\}
\end{eqnarray}
the union bound yields
\begin{eqnarray}
 \label{Eqn:3to1ICDecoder2ErrorEventEpsilon4j3Description}
 P(\epsilon_{l_{j}}^{c}\cap\epsilon_{4j}^{3}) 
\leq \!\!\!\!\!\!\sum_{ \substack{ m_{j1},\hat{m}_{j1} \\m_{j1}\neq \hat{m}_{j1} } } \sum_{ \substack{   m_{jX},\hat{m}_{jX} \\m_{jX}\neq \hat{m}_{jX}  }  } \sum_{ \substack{ a^{s_{j}},\hat{a}^{s_{j}} \\ \hat{a}^{s_{j}}\neq a^{s_{j}} } } \sum_{ b_{jX},\hat{b}_{jX}   } \sum_{\substack{ (u_{j}^{n},x_{j}^{n},y_{j}^{n}) \in \\\tilde{T}(q^{n}) } }
\sum_{\substack{
(\hat{u}_{j}^{n},\hat{x}_{j}^{n}) \in \\ T_{4\eta_{1}}(U_{j},X_{j}|y_{j}^{n},q^{n}) } } \!\!\!\!\!\!\!\!\!\!\!\!\!\!P\left( \left\{\substack{A^{s_{j}}=a^{s_{j}}\\Y_{j}^{n}=y_{j}^{n},B_{jX}=b_{jX}}\right\}\cap E^{3}\cap\epsilon_{l_{j}}^{c} \right)
\end{eqnarray}
As earlier, we consider a generic term in the above sum and simplify the same. Observe that
\begin{eqnarray}
 \label{Eqn:3to1ICDecoder2ErrorEventEpsilon4j3Terms}
 P\left(  Y_{j}^{n}=y_{j}^{n} \middle| \left\{\substack{A^{s_{j}}=a^{s_{j}}\\B_{jX}=b_{jX}}\right\}\cap E^{3}\cap\epsilon_{l_{j}}^{c} \right) &=& P\left(  Y_{j}^{n}=y_{j}^{n} | X_{j}^{n}(M_{jX},B_{jX})=x_{j}^{n} \right) =: \hat{\theta}\left(  y_{j}^{n} | x_{j}^{n} \right),
\nonumber\\
 \label{Eqn:3to1ICDecoder2ErrorEventEpsilon4j3TermsConditioningEvent}
 P\left(
   \left\{\substack{A^{s_{j}}=a^{s_{j}}\\B_{jX}=b_{jX}}\right\}\cap
   E^{3}\cap\epsilon_{l_{j}}^{c}  \right) &\leq & 
\frac{P(M_{j}=m_{j}) \exp\{ -2nH(X_{j}|Q) \}   }{   \Prime^{2n+2t_{j}} \exp\{ -4n\eta-8n\eta_{1} \}  }\frac{1}{\mathscr{L}_{j}(n)}.
\nonumber
\end{eqnarray}
Substituting the above observations in (\ref{Eqn:3to1ICDecoder2ErrorEventEpsilon4j3Description}), we have
\begin{eqnarray}
 \label{Eqn:3to1ICDecoder2ErrorEventEpsilon4j3AfterUnionBoundAndSubstitutions}
 P(\epsilon_{l_{j}}^{c}\cap\epsilon_{4j}^{3}) 
\leq \!\!\!\!\!\!\sum_{ \substack{ m_{j1},\hat{m}_{j1} \\m_{j1}\neq \hat{m}_{j1} } } \sum_{ \substack{   m_{jX},\hat{m}_{jX} \\m_{jX}\neq \hat{m}_{jX}  }  } \sum_{ \substack{ a^{s_{j}},\hat{a}^{s_{j}} \\ \hat{a}^{s_{j}}\neq a^{s_{j}} } } \sum_{ b_{jX},\hat{b}_{jX}   } \sum_{\substack{ (u_{j}^{n},x_{j}^{n},y_{j}^{n}) \in \\\tilde{T}(q^{n}) } }\!\!\!\!\!\!\!\hat{\theta}\left(  y_{j}^{n} | x_{j}^{n} \right)\!\!\!\!\!\!\!\!\!\!\!\!\!
\sum_{\substack{
(\hat{u}_{j}^{n},\hat{x}_{j}^{n}) \in \\ T_{4\eta_{1}}(U_{j},X_{j}|y_{j}^{n},q^{n}) } } \!\!\!\!\!\!\!\!\!\!\!\frac{P(M_{j}=m_{j}) \exp\{ -2nH(X_{j}|Q) \}   }{   \Prime^{2n+2t_{j}} \exp\{ -4n\eta-8n\eta_{1} \}\mathscr{L}_{j}(n)  }.
\nonumber
\end{eqnarray}
There exists $N_{15}(\eta_{1})\in \naturals$ such that for all $n \geq \max \left\{ N_{12}(\eta), N_{15}(\eta_{1}) \right\}$, we have \[|T_{4\eta_{1}}(U_{j},X_{j}|y_{j}^{n},q^{n})| \leq \exp \left\{ n(H(U_{j},X_{j}|Y_{j},Q)+8\eta_{1})  \right\}\mbox{ for all }(y_{j}^{n},q^{n}) \in T_{2\eta_{1}}(Y_{j},Q)\] and hence
\begin{eqnarray}
P(\epsilon_{l_{j}}^{c}\cap\epsilon_{4j}^{3})& \leq& \!\!\!\!\!\!\sum_{ \substack{ m_{j1},\hat{m}_{j1} \\m_{j1}\neq \hat{m}_{j1} } } \sum_{ \substack{   m_{jX},\hat{m}_{jX} \\m_{jX}\neq \hat{m}_{jX}  }  } \sum_{ \substack{ a^{s_{j}},\hat{a}^{s_{j}} \\ \hat{a}^{s_{j}}\neq a^{s_{j}} } } \sum_{ b_{jX},\hat{b}_{jX}   } \sum_{\substack{(u_{j}^{n},x_{j}^{n})\in \\ T_{2\eta}(U_{j},X_{j}|q^{n})}} \frac{\Prime^{-2n-2t_{j}}P(M_{j}=m_{j})\exp\left\{ -2nH(X_{j}|Q)  \right\}}{\exp\{ -n4\eta-n16\eta_{1} -nH(X_{j},U_{j}|Y_{j},Q)\}\mathscr{L}_{j}(n)}
\nonumber\\
&\leq& 2\sum_{ \substack{ m_{j1},\hat{m}_{j1} \\m_{j1}\neq \hat{m}_{j1} } } \sum_{ \substack{   m_{jX},\hat{m}_{jX} \\m_{jX}\neq \hat{m}_{jX}  }  } \frac{\Prime^{s_{j}}P(M_{j}=m_{j})\exp\left\{ -nH(X_{j}|Q)+n16\eta_{1} \right\}}{\Prime^{n+t_{j}}\exp\{ -n8\eta -nH(X_{j},U_{j}|Y_{j},Q)-nK_{j}\}}
\nonumber\\
&\leq& 2\exp \left\{ -n \left[ \left( \substack{\log\Prime + H(X_{j}|Q)-\\ H(X_{j},U_{j}|Y_{j},Q)}\right) -\left( \substack{K_{j}+L_{j} +\\
\label{Eqn:3To1ICDecoder2ErrorEventFinalBoundOnEpsilon24j}
S_{j}\log\Prime}\right)-\left(\substack{9\eta_{1}+16\eta_{1}\\+\log\Prime\eta_{1}}\right) \right]  \right\} \leq 2\exp \left\{ -n\left( \delta -\left(9\eta+16\eta_{1}\right)  \right) \right\}.
\nonumber \end{eqnarray}
We now collect all the upper bounds derived in (\ref{Eqn:3to1ICDecoderErrorEventFirstTermEpsilon4j1AfterTediousStepsAndSubstitutingUpperBoundOnsjByn}), (\ref{Eqn:3to1ICDecoder2ErrorEventEpsilon4j2UpperBound}) and (\ref{Eqn:3To1ICDecoder2ErrorEventFinalBoundOnEpsilon24j}). For $n \geq \max \left\{ N_{14}(\eta),N_{16}(\eta) \right\}$, we have 
\begin{equation}
\label{Eqn:3To1ICDecoder2ErrorEventFinalBoundFromAppendix}
P((\tilde{\epsilon}_{1}\cup  \epsilon_{3})^{c}\cap \epsilon_{4j}) \leq 10 \exp \left\{ -n\left( \delta -\left(9\eta+16\eta_{1}\right)  \right) \right\}
\end{equation}
% }\end{comment}

\section{Proof of proposition \ref{Prop:ORICStrictSub-OptimalityOfHK}}
\label{AppSec:UpperBoundOnUSBRateRegionFor3To1ORIC}

We begin by stating the conditions for sub-optimality of $\UHK$technique.
\begin{lemma}
 \label{Lem:ConditionsForSub-OptimalityOfHanKobayashiTechnique}
Consider example \ref{Ex:A3To1-OR-IC} with $\delta \define \delta_{2}=\delta_{3} \in (0,\frac{1}{2})$ and $\tau \define \tau_{2}=\tau_{3} \in (0,\frac{1}{2})$. Let $\beta \define \delta_{1}*(2\tau -\tau^{2})$. The rate triple $(h_{b}(\tau_{1} * \delta_{1})-h_{b}(\delta_{1}),h_{b}(\tau * \delta)-h_{b}(\delta),h_{b}(\tau * \delta)-h_{b}(\delta)) \notin \alpha_{u}(\ulinecost)$ if
\begin{equation}
 \label{Eqn:3To1-OR-ICConditionForStrictSubOptimalityOfHanKobayashi}
h_{b}(\tau_{1} * \delta_{1})-h_{b}(\delta_{1})+2(h_{b}(\tau * \delta)-h_{b}(\delta))> h_{b}(\tau_{1}(1-\beta)+(1-\tau_{1})\beta)-h_{b}(\delta_{1})
\end{equation}
In particular, if (\ref{Eqn:3To1-OR-ICConditionForStrictSubOptimalityOfHanKobayashi}) is true, $\alpha_{u}(\ulinecost) \subsetneq \beta(\underline{\tau},\underline{\delta})$, where $\beta(\underline{\tau},\underline{\delta})$ is defined in (\ref{Eqn:StrictSubOptimalityOfUSBTechnique3To1ICOuterBound}).
\end{lemma}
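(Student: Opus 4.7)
The plan is to establish the non-membership $\underline{R}^{\star}\notin\alpha_{u}(\underline{\tau})$ by contradiction; the strict containment $\alpha_{u}(\underline{\tau})\subsetneq\beta(\underline{\tau},\underline{\delta})$ is then immediate since the triple $\underline{R}^{\star}\define(h_{b}(\tau_{1}*\delta_{1})-h_{b}(\delta_{1}),h_{b}(\tau*\delta)-h_{b}(\delta),h_{b}(\tau*\delta)-h_{b}(\delta))$ lies in $\beta(\underline{\tau},\underline{\delta})$ by direct evaluation. Suppose therefore that some $p_{QU_{2}U_{3}\underline{X}\underline{Y}}\in\SetOfDistributions_{u}(\underline{\tau})$ witnesses $\underline{R}^{\star}$ in the inequalities of Definition \ref{Defn:HanKobayashiTestChannelsFor3To1IC}. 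The skeleton of the argument parallels the proof of Proposition \ref{Prop:StrictSub-OptimalityOfHanKobayashi}, with the logical-OR pattern $X_{2}\vee X_{3}$ replacing the modulo-$2$ sum $X_{2}\oplus X_{3}$; this substitution leaves the topology of the proof intact but introduces one genuinely new technical difficulty, flagged in the final paragraph.

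The first step is to pin down the conditional marginals $p_{X_{j}|Q}$ for $j=2,3$. From $R_{j}\leq I(U_{j}X_{j};Y_{j}|Q)=H(Y_{j}|Q)-h_{b}(\delta)$, expanding $H(Y_{j}|Q)=\sum_{q}p_{Q}(q)h_{b}(p_{X_{j}|Q}(1|q)*\delta)$ and applying Jensen's inequality together with strict concavity of $h_{b}$ on $[0,\tfrac{1}{2}]$ (legitimate because $\tau*\delta<\tfrac{1}{2}$), the tightness $R_{j}=h_{b}(\tau*\delta)-h_{b}(\delta)$ forces $p_{X_{j}|Q}(1|q)=\tau$ for every $q$ of positive probability. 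Consequently $p_{X_{j}}(1)=\tau$ for $j=2,3$ unconditionally, and the conditional independence of $(U_{2},X_{2})$ and $(U_{3},X_{3})$ given $Q$ gives $P(X_{2}\vee X_{3}=1)=2\tau-\tau^{2}$.

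I then dichotomize on the information left in $X_{j}$ after conditioning on $(Q,U_{j})$. \textbf{Case A:} $H(X_{j}|Q,U_{j})=0$ for both $j=2,3$. Then $I(X_{j};Y_{j}|Q,U_{j})=0$ and the sum-rate bound (\ref{Eqn:SumRateBoundOn3To1IC}) collapses to $R_{1}+R_{2}+R_{3}\leq I(X_{1},X_{2},X_{3};Y_{1}|Q)=H(Y_{1}|Q)-h_{b}(\delta_{1})$. Writing $Y_{1}=X_{1}\oplus Z$ with $Z\define(X_{2}\vee X_{3})\oplus N_{1}\sim\mathrm{Bern}(\beta)$ independent of $(Q,X_{1})$, and using concavity of $h_{b}$ plus $p_{X_{1}}(1)\leq\tau_{1}<\tfrac{1}{2}$, I bound $H(Y_{1}|Q)\leq h_{b}(\tau_{1}(1-\beta)+(1-\tau_{1})\beta)$; the assumed equality $R_{1}=h_{b}(\tau_{1}*\delta_{1})-h_{b}(\delta_{1})$ then contradicts (\ref{Eqn:3To1-OR-ICConditionForStrictSubOptimalityOfHanKobayashi}) directly.

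\textbf{Case B:} $H(X_{j^{\ast}}|Q,U_{j^{\ast}})>0$ for some $j^{\ast}\in\{2,3\}$. The goal is to upgrade this to $H(X_{2}\vee X_{3}|Q,U_{2},U_{3})>0$ and then replicate the chain (\ref{Eqn:X2X3N1IndependentOfX1GivenQ})--(\ref{Eqn:Jensen'sInequalityAndCostConstraint}) with $X_{2}\vee X_{3}$ in place of $X_{2}\oplus X_{3}$, invoking Lemma \ref{Lem:AddingAnIndependentRandomVariableReducesDifferenceInEntropies} to conclude $R_{1}<h_{b}(\tau_{1}*\delta_{1})-h_{b}(\delta_{1})$ strictly---again a contradiction. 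The main obstacle is the upgrade, because $\vee$ is \emph{not} injective: the residual uncertainty in $X_{j^{\ast}}$ can be erased whenever the complementary input is $1$. To circumvent this I will select a triple $(q^{\ast},u_{2}^{\ast},u_{3}^{\ast})$ of positive $p_{QU_{2}U_{3}}$-mass at which simultaneously $p_{X_{j^{\ast}}|QU_{j^{\ast}}}(\cdot|q^{\ast},u_{j^{\ast}}^{\ast})\notin\{0,1\}$ and $p_{X_{\bar{j}^{\ast}}|QU_{\bar{j}^{\ast}}}(0|q^{\ast},u_{\bar{j}^{\ast}}^{\ast})>0$; the first is furnished by the hypothesis of Case B, while the second is guaranteed by the pinned marginal $p_{X_{\bar{j}^{\ast}}}(1)=\tau<1$ combined with the conditional independence structure in $\SetOfDistributions_{u}(\underline{\tau})$. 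At that conditioning $X_{2}\vee X_{3}$ inherits the non-degenerate distribution of $X_{j^{\ast}}$, which supplies the strict entropy gap needed to finish the contradiction.
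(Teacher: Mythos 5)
Your argument follows the paper's own proof (Appendix~\ref{AppSec:UpperBoundOnUSBRateRegionFor3To1ORIC}) almost step for step: the Jensen argument pinning $p_{X_{j}|Q}(1|q)=\tau$ for $j=2,3$ is the paper's first claim, your Case~A is the paper's third claim (your concavity upper bound on $H(Y_{1}|Q)$ is a mild economy over the paper's exact evaluation), your selection of $(q^{*},u_{2}^{*},u_{3}^{*})$ is the paper's fourth claim, and the final strict bound on $R_{1}$ via Lemma~\ref{Lem:AddingAnIndependentRandomVariableReducesDifferenceInEntropies} is the paper's fifth claim. So this is essentially the same route, not a new one.

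There is, however, one step in Case~B that you have glossed over and that the paper handles with a claim you dropped. The strict version of Lemma~\ref{Lem:AddingAnIndependentRandomVariableReducesDifferenceInEntropies}, applied at $\tilde{q}^{*}=(q^{*},u_{2}^{*},u_{3}^{*})$ with $Z_{1}=X_{1}\oplus N_{1}|\tilde{q}^{*}$, $Z_{2}=N_{1}|\tilde{q}^{*}$, $Z_{3}=(X_{2}\vee X_{3})|\tilde{q}^{*}$, requires $H(Z_{1})>H(Z_{2})$, i.e.\ $p_{X_{1}|\tilde{\TimeSharingRV}}(1|\tilde{q}^{*})>0$; if $X_{1}$ were degenerate at that conditioning, the lemma only yields equality there and your chain does not produce the strict inequality $R_{1}<h_{b}(\tau_{1}*\delta_{1})-h_{b}(\delta_{1})$. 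The paper secures this positivity through its second claim: assuming $R_{1}=h_{b}(\tau_{1}*\delta_{1})-h_{b}(\delta_{1})$, equality must hold in the Jensen step of the $R_{1}$ chain together with the cost constraint, which forces $p_{X_{1}|\tilde{\TimeSharingRV}}(1|\tilde{q})=\tau_{1}>0$ for every $\tilde{q}$ of positive probability. You can close the gap either by reinstating that claim, or by a short dichotomy inside Case~B: if $p_{X_{1}|\tilde{\TimeSharingRV}}(1|\cdot)$ is not identically $\tau_{1}$ then the Jensen/monotonicity steps are already strict (using $p_{X_{1}}(1)\leq\tau_{1}<\tfrac{1}{2}$), and otherwise positivity holds at $\tilde{q}^{*}$ and the strict form of the lemma applies. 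Without one of these, the key strictness assertion is unsupported as written.
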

\begin{proof}
 We prove this by contradiction. Suppose $(h_{b}(\tau_{1} * \delta_{1})-h_{b}(\delta_{1}),h_{b}(\tau * \delta)-h_{b}(\delta),h_{b}(\tau * \delta)-h_{b}(\delta)) \in $\newline$\cocl(\alpha^{\three-1}_{f}(p_{QU_{2}U_{3}\ulineInputRV\ulineOutputRV}))$ for some $p_{QU_{2}U_{3}\ulineInputRV\ulineOutputRV} \in \mathbb{D}_{3-1}(\tau_{1},\tau,\tau)$. In the sequel, we characterize such a $p_{QU_{2}U_{3}\ulineInputRV\ulineOutputRV}$ and employ the same to derive a contradiction. Our first claim is that $p_{X_{2}|Q}(1|q)=p_{X_{3}|Q}(1|q)=\tau$ for all $q \in \TimeSharingRVSet$.

From (\ref{Eqn:3To1ICUpperBoundOnR1}) we have 
\begin{eqnarray}
 \label{Eqn:PluggingInBoundOnRjFrom3To1ICHanKobayashiConditions}
&&R_{j} \leq I(U_{j}X_{j};Y_{j}|Q) =H(Y_{j}|Q)-H(Y_{j}|X_{j}U_{j}Q) = H(Y_{j}|Q)-h_{b}(\delta) = \sum_{q \in \TimeSharingRVSet}p_{Q}(q)H(Y_{j}|Q=q) - h_{b}(\delta)\nonumber\\
\label{Eqn:BoundOnRateOfUserjAsASum}
&&= \sum_{q \in \TimeSharingRVSet}p_{Q}(q)H(X_{j}\oplus N_{j}|Q=q) - h_{b}(\delta) \mbox{ for }j=2,3.
\end{eqnarray}
If $\tau_{q}\define p_{X_{j}|Q}(1|q)$, then independence of the pair
$N_{j}$ and $(X_{j},Q)$ implies $p_{X_{j}\oplus
  N_{j}|Q}(1|q)=\tau_{q}(1-\delta)+(1-\tau_{q})\delta=\tau_{q}(1-2\delta)+\delta$. Substituting
the same in (\ref{Eqn:BoundOnRateOfUserjAsASum}), we have 
\begin{eqnarray}
 \label{Eqn:PluggingInBoundOnRjFrom3To1ICHanKobayashiConditions}
R_{j} \leq \sum_{q \in \TimeSharingRVSet}p_{Q}(q)h_{b}(\tau_{q}(1-2\delta)+\delta) - h_{b}(\delta) \leq h_{b}([p_{X_{j}}(1)(1-2\delta)+\delta])-h_{b}(\delta)\nonumber
\end{eqnarray}
from Jensen's inequality. Since $p_{X_{j}}(1) \leq \tau <\frac{1}{2}$,
we have $p_{X_{j}}(1)(1-2\delta)+\delta \leq \tau(1-2\delta)+\delta <
\frac{1}{2}(1-2\delta)+\delta = \frac{1}{2}$.\footnote{Here we have
  used the positivity of $(1-2\delta)$, or equivalently $\delta$ being
  in the range $(0,\frac{1}{2})$.} The term
$h_{b}([p_{X_{j}}(1)(1-2\delta)+\delta])$ is therefore strictly
increasing in $p_{X_{j}}(1)$ and is at most $h_{b}(\tau *
\delta)$. Moreover, the condition for equality in Jensen's inequality
implies $R_{j} = h_{b}(\tau * \delta)-h_{b}(\delta)$ if and only if
$p_{X_{j}|Q}(1|q)=\tau$ for all $q \in \TimeSharingRVSet$ that
satisfies $p_{\TimeSharingRV}(\timeshare)>0$. We have therefore proved
our first claim. 

Our second claim is an analogous statement for $p_{X_{1}|Q}(1|q)$. In particular, our second claim is that 
$p_{X_{1}|Q}(1|q)=\tau_{1}$ for each $q \in \TimeSharingRVSet$ of positive probability. We begin with the upper bound on $R_{1}$ in (\ref{Eqn:3To1ICUpperBoundOnR1}). As in proof of proposition \ref{Prop:StrictSub-OptimalityOfHanKobayashi}, we let $\tilde{\TimeSharingRVSet} \define \TimeSharingRVSet \times \SemiPrivateRVSet_{2} \times \SemiPrivateRVSet_{3}$, $\tilde{q} = (q,u_{2},u_{3}) \in \tilde{\TimeSharingRVSet}$ denote a generic element and $\tilde{Q} \define (\TimeSharingRV,\SemiPrivateRV_{2},\SemiPrivateRV_{3})$. The steps we employ in proving the second claim borrows steps from proof of proposition \ref{Prop:StrictSub-OptimalityOfHanKobayashi} and the proof of the first claim presented above. Note that
\begin{eqnarray}
 \label{Eqn:3To1ORICProvingTheSecondClaimOnR1}
\lefteqn{R_{1} \leq I(X_{1};Y_{1}|\tilde{\TimeSharingRV}) = H(Y_{1}|\tilde{\TimeSharingRV}) -
H(Y_{1}|\tilde{\TimeSharingRV}X_{1})}\nonumber\\
\ifThesis{&=&\sum_{\tilde{\timeshare}}p_{\tilde{\TimeSharingRV}}(\tilde{\timeshare})H(Y_{1}|\tilde{\TimeSharingRV}=\tilde{\timeshare}
)-\sum_{x_{ 1 },\tilde{\timeshare}}p_{\tilde{\TimeSharingRV}X_{1}}(\tilde{\timeshare},x_{1})
H(Y_{1}|X_{1}=x_{1},\tilde{\TimeSharingRV}=\tilde{\timeshare})\nonumber\\
&=&\sum_{\tilde{\timeshare}}p_{\tilde{\TimeSharingRV}}(\tilde{\timeshare})H(X_{1}\oplus
N_{1}\oplus  (X_{2}\vee X_{3}) |\tilde{\TimeSharingRV}=\tilde{\timeshare}
)-\sum_{x_{ 1
},\tilde{\timeshare}}p_{X_{1}\tilde{\TimeSharingRV}}(x_{1,}\tilde{\timeshare})
H(x_{1}\oplus N_{1}\oplus  (X_{2}\vee X_{3})|X_{1}=x_{1},\tilde{\TimeSharingRV}=\tilde{\timeshare})\nonumber\\}\fi
\label{Eqn:3To1ORICX2X3N1IndependentOfX1GivenQ}
&=&\sum_{\tilde{\timeshare}}p_{\tilde{\TimeSharingRV}}(\tilde{\timeshare})H(X_{1}\oplus
N_{1}\oplus  (X_{2}\vee X_{3}) |\tilde{\TimeSharingRV}=\tilde{\timeshare})-\sum_{x_{ 1
},\tilde{\timeshare}}p_{X_{1}\tilde{\TimeSharingRV}}(x_{1,}\tilde{\timeshare})
H(N_{1}\oplus  (X_{2}\vee X_{3})
|\tilde{\TimeSharingRV}=\tilde{\timeshare})\\
&\leq&\sum_{\tilde{\timeshare}}p_{\tilde{\TimeSharingRV}}(\tilde{\timeshare})H(X_{1}\oplus
N_{1}|\tilde{\TimeSharingRV}=\tilde{\timeshare}
)\label{Eqn:3To1ORICInequalityDueToUncertainityInX2PlusX3GivenU2U3}-\sum_{\tilde{\timeshare}} p_ { \tilde{\TimeSharingRV} } (\tilde{\timeshare} )
H(N_{1}|\tilde{\TimeSharingRV}=\tilde{\timeshare})=\sum_{\timeshare}p_{\tilde{\TimeSharingRV}}(\tilde{\timeshare})H(X_{1}\oplus N_{1}|\tilde{\TimeSharingRV}=\tilde{\timeshare})-h_{b}(\delta_{1})\\
\label{Eqn:3To1ORICJensen'sInequalityAndCostConstraint}
&=&\sum_{\tilde{\timeshare}}p_{\tilde{\TimeSharingRV}}(\tilde{\timeshare})h_{b}(\tau_{1\tilde{q}}*\delta_{1})- h_{b}(\delta_{1})\leq h_{b}(\Expectation_{\tilde{\TimeSharingRV}}[ \tau_{1\tilde{q}}*\delta_{1} ])- h_{b}(\delta_{1})= h_{b}(p_{X_{1}}(1)*\delta_{1})-h_{b}(\delta_{1}),
\end{eqnarray}
where (i) (\ref{Eqn:3To1ORICInequalityDueToUncertainityInX2PlusX3GivenU2U3}) follows from substituting $p_{X_{1}\oplus N_{1}|\tilde{\TimeSharingRV}}(\cdot|\tilde{\timeshare})$ for $p_{Z_{1}}$,
$p_{N_{1}|\tilde{\TimeSharingRV}}(\cdot|\tilde{\timeshare})$ for $p_{Z_{2}}$ and $p_{X_{2}\vee
X_{3}|\tilde{\TimeSharingRV}}(\cdot|\tilde{\timeshare})$ for
$p_{Z_{3}}$ in lemma
\ref{Lem:AddingAnIndependentRandomVariableReducesDifferenceInEntropies},
(iii) the first inequality in
(\ref{Eqn:3To1ORICJensen'sInequalityAndCostConstraint}) follows from
Jensen's inequality. Since $p_{X_{1}}(1)\leq \tau_{1}< \frac{1}{2}$,
we have
$p_{X_{1}}(1)*\delta_{1}=p_{X_{1}}(1-\delta_{1})+(1-p_{X_{1}}(1))\delta_{1}
=p_{X_{1}}(1)(1-2\delta_{1})+\delta_{1} \leq
\tau_{1}(1-2\delta_{1})+\delta_{1} \leq
\frac{1}{2}(1-2\delta_{1})+\delta_{1}=\frac{1}{2}$. Therefore
$h_{b}(p_{X_{1}}(1)*\delta_{1})$ is increasing\footnote{This also
  employs the positivity of $1-2\delta_{1}$, or equivalently
  $\delta_{1}$ being in the range $(0,\frac{1}{2})$.} in
$p_{X_{1}}(1)$ and is bounded above by
$h_{b}(\tau_{1}*\delta_{1})$. Moreover, the condition for equality in
Jensen's inequality implies $R_{1} = h_{b}(\tau_{1} *
\delta_{1})-h_{b}(\delta_{1})$ if and only if
$p_{X_{1}|\tilde{\TimeSharingRV}}(1|\tilde{q})=\tau_{1}$ for all
$\tilde{q} \in \tilde{\TimeSharingRVSet}$. We have therefore proved
our second claim.\footnote{We have only proved
  $p_{X_{1}|QU_{2}U_{3}}(1|q,u_{2},u_{3}=\tau_{1})$ for all
  $(q,u_{2},u_{3}) \in \TimeSharingRVSet \times \SemiPrivateRVSet_{2}
  \times \SemiPrivateRVSet_{3}$ of positive probability. The claim now
  follows from conditional independence of $X_{1}$ and $U_{2},U_{3}$
  given $\TimeSharingRV$.} 

Our third claim is that either $H(X_{2}|\TimeSharingRV,\SemiPrivateRV_{2}) >0$ or $H(X_{3}|\TimeSharingRV,\SemiPrivateRV_{3}) >0$. Suppose not, i.e., $H(X_{2}|\TimeSharingRV,\SemiPrivateRV_{2})=H(X_{3}|\TimeSharingRV,\SemiPrivateRV_{3}) =0$. In this case, the upper bound on $R_{1}+R_{2}+R_{3}$ in (\ref{Eqn:SumRateBoundOn3To1IC}) is 
\begin{eqnarray}
\label{Eqn:3To1ORICUpperBoundOnSumOfTheUseRates}
R_{1}+R_{2}+R_{3} &\leq& I(X_{2},X_{3},X_{1};Y_{1}|Q)=H(Y_{1}|Q)-H(Y_{1}|Q,X_{1},X_{2},X_{3})\nonumber\\
&=&h_{b}(\tau_{1}(1-\beta)+(1-\tau_{1})\beta)-h_{b}(\delta_{1}),\nonumber
\end{eqnarray}
where the last equality follows from substituting $p_{X_{j}|Q}:j=1,2,3$ derived in the earlier two claims.\footnote{$\beta \define (1-\tau)^{2}\delta_{1}+(2\tau-\tau^{2})(1-\delta_{1})$ is as defined in the statement of the lemma.} The hypothesis (\ref{Eqn:3To1-OR-ICConditionForStrictSubOptimalityOfHanKobayashi}) therefore precludes $(h_{b}(\tau_{1} * \delta_{1})-h_{b}(\delta_{1}),h_{b}(\tau * \delta)-h_{b}(\delta),h_{b}(\tau * \delta)-h_{b}(\delta)) \in \alpha^{\three-1}_{f}(p_{QU_{2}U_{3}\ulineInputRV\ulineOutputRV})$ if $H(X_{2}|\TimeSharingRV,\SemiPrivateRV_{2})=H(X_{3}|\TimeSharingRV,\SemiPrivateRV_{3}) =0$. This proves our third claim.

Our fourth claim is $H(X_{2}\vee X_{3}|\TimeSharingRV,\SemiPrivateRV_{2},\SemiPrivateRV_{3})>0$. The proof of this claim rests on each of the earlier three claims. Note that we have either $H(X_{2}|\TimeSharingRV,\SemiPrivateRV_{2}) >0$ or $H(X_{3}|\TimeSharingRV,\SemiPrivateRV_{3}) >0$. Without loss of generality, we assume $H(X_{2}|\TimeSharingRV,\SemiPrivateRV_{2}) >0$. We therefore have a $u_{2}^{*} \in \SemiPrivateRVSet_{2}$ such that $p_{U_{2}|Q}(u_{2}^{*}|q^{*})>0$ and $H(X_{2}|U_{2}=u_{2}^{*},Q=q^{*})>0$. This implies $p_{X_{2}|U_{2}Q}(x_{2}|u_{2}^{*},q^{*}) \notin \{0,1\}$ for each $x_{2} \in \{0,1\}$.
Since $p_{Q}(q^{*})>0$, from the first claim we have $0<1-\tau=p_{X_{3}|Q}(0|q^{*})= \sum_{u_{3} \in \SemiPrivateRVSet_{3}}p_{X_{3}U_{3}|Q}(0,u_{3}|q^{*})$.
This guarantees existence of $u_{3}^{*} \in \SemiPrivateRVSet_{3}$ such that $p_{X_{3}U_{3}|Q}(0,u_{3}^{*}|q^{*})>0$. We therefore have $p_{U_{3}|Q}(u_{3}^{*}|q^{*})>0$ and $1\geq p_{X_{3}|U_{3}Q}(0|u_{3}^{*},q^{*})>0$.

We have therefore identified $(q^{*},u_{2}^{*},u_{3}^{*}) \in \TimeSharingRVSet \times \SemiPrivateRVSet_{2} \times \SemiPrivateRVSet_{3}$ such that $p_{Q}(q^{*})>0$, $p_{U_{2}|Q}(u_{2}^{*}|q^{*})>0$, $p_{U_{3}|Q}(u_{3}^{*}|q^{*})>0$, $p_{X_{2}|U_{2}Q}(x_{2}|u_{2}^{*},q^{*}) \notin \{0,1\}$ for each $x_{2} \in \{0,1\}$ and $1\geq p_{X_{3}|U_{3}Q}(0|u_{3}^{*},q^{*})>0$. By conditional independence of the pairs $(X_{2},U_{2})$ and $(X_{3},U_{3})$ given $Q$, we also have $p_{X_{2}|U_{2}U_{3}Q}(x_{2}|u_{2}^{*},u_{3}^{*},q^{*}) \notin \{0,1\}$ for each $x_{2} \in \{0,1\}$ and $1\geq p_{X_{3}|U_{2}U_{3}Q}(0|u_{2}^{*},u_{3}^{*},q^{*})>0$. The reader may now verify $p_{X_{2}\vee X_{3}|U_{2}U_{3}Q}(x|u_{2}^{*},u_{3}^{*},q^{*}) \notin \{0,1\}$ for each $x \in \{0,1\}$. Since $p_{QU_{2}U_{3}}(q^{*},u_{2}^{*},u_{3}^{*})=p_{Q}(q^{*})p_{U_{2}|Q}(u_{2}^{*}|q^{*})p_{U_{3}|Q}(u_{3}^{*}|q^{*})>0$, we have proved the fourth claim.

Our fifth and final claim is $R_{1}< h_{b}(\tau_{1}*\delta_{1})-h_{b}(\delta_{1})$. This follows from a sequence of steps employed in proof of the second claim herein, or in the proof of proposition \ref{Prop:StrictSub-OptimalityOfHanKobayashi}. Denoting $\tilde{\TimeSharingRV} \define
(\TimeSharingRV,\SemiPrivateRV_{2},\SemiPrivateRV_{3})$ and a generic element
$\tilde{\timeshare} \define (\timeshare,u_{2},u_{3}) \in \tilde{\TimeSharingRVSet}
\define \TimeSharingRVSet \times \SemiPrivateRVSet_{2} \times \SemiPrivateRVSet_{3}$, we
observe that
\begin{eqnarray}
 \label{Eqn:3To1ORICSecondProvingTheSecondClaimOnR1}
\lefteqn{R_{1} \leq I(X_{1};Y_{1}|\tilde{\TimeSharingRV}) =\sum_{\tilde{\timeshare}}p_{\tilde{\TimeSharingRV}}(\tilde{\timeshare})H(X_{1}\oplus
N_{1}\oplus  (X_{2}\vee X_{3}) |\tilde{\TimeSharingRV}=\tilde{\timeshare}
)-\sum_{\tilde{\timeshare}}p_{\tilde{\TimeSharingRV}}(\tilde{\timeshare})
H(N_{1}\oplus (X_{2}\vee X_{3})
|\tilde{\TimeSharingRV}=\tilde{\timeshare})} \nonumber\\
&<&\sum_{\tilde{\timeshare}}p_{\tilde{\TimeSharingRV}}(\tilde{\timeshare})H(X_{1}\oplus
N_{1}|\tilde{\TimeSharingRV}=\tilde{\timeshare}
)\label{Eqn:3To1ORICSecondInequalityDueToUncertainityInX2PlusX3GivenU2U3}-\sum_{\tilde{\timeshare}} p_ { \tilde{\TimeSharingRV} } (\tilde{\timeshare} )
H(N_{1}|\tilde{\TimeSharingRV}=\tilde{\timeshare})=\sum_{\timeshare}p_{\tilde{\TimeSharingRV}}(\tilde{\timeshare})H(X_{1}\oplus N_{1}|\tilde{\TimeSharingRV}=\tilde{\timeshare})-h_{b}(\delta_{1})\\
\label{Eqn:3To1ORICSecondJensen'sInequalityAndCostConstraint}
&=&\sum_{\tilde{\timeshare}}p_{\tilde{\TimeSharingRV}}(\tilde{\timeshare})h_{b}(\tau_{1\tilde{q}}*\delta_{1})- h_{b}(\delta_{1})\leq h_{b}(\Expectation_{\tilde{\TimeSharingRV}}\left\{ \tau_{1\tilde{q}}*\delta_{1} \right\})- h_{b}(\delta_{1})= h_{b}(p_{X_{1}}(1)*\delta_{1})-h_{b}(\delta_{1}),
\end{eqnarray}
where (i) (\ref{Eqn:3To1ORICSecondInequalityDueToUncertainityInX2PlusX3GivenU2U3}) follows from existence of a $\tilde{\timeshare}^{*} \in \tilde{\TimeSharingRVSet}$ for which $H(X_{2}\vee X_{3}|\tilde{\TimeSharingRV}=\tilde{\timeshare}^{*})>0$ and substituting $p_{X_{1}\oplus N_{1}|\tilde{\TimeSharingRV}}(\cdot|\tilde{\timeshare}^{*})$ for $p_{Z_{1}}$,
$p_{N_{1}|\tilde{\TimeSharingRV}}(\cdot|\tilde{\timeshare}^{*})$ for $p_{Z_{2}}$ and $p_{X_{2}\vee
X_{3}|\tilde{\TimeSharingRV}}(\cdot|\tilde{\timeshare}^{*})$ for $p_{Z_{3}}$ in lemma \ref{Lem:AddingAnIndependentRandomVariableReducesDifferenceInEntropies}, (iii) the first inequality in (\ref{Eqn:3To1ORICSecondJensen'sInequalityAndCostConstraint}) follows from Jensen's inequality. Since $p_{X_{1}}(1)*\delta_{1}=p_{X_{1}}(1-\delta_{1})+(1-p_{X_{1}}(1))\delta_{1} =p_{X_{1}}(1)(1-2\delta_{1})+\delta_{1} \leq \tau_{1}(1-2\delta_{1})+\delta_{1} \leq \frac{1}{2}(1-2\delta_{1})+\delta_{1}=\frac{1}{2}$. Therefore $h_{b}(p_{X_{1}}(1)*\delta_{1})$ is increasing\footnote{This also employs the positivity of $1-2\delta_{1}$, or equivalently $\delta_{1}$ being in the range $(0,\frac{1}{2})$.} in $p_{X_{1}}(1)$ and is bounded above by $h_{b}(\tau_{1}*\delta_{1})$. We therefore have $R_{1}< h_{b}(\tau_{1}*\delta_{1})-h_{b}(\delta_{1})$.
\end{proof}
We now derive conditions under which $\alpha^{3-1}_{f}(\tau_{1},\tau,\tau)=\mathbb{C}(\tau_{1},\tau,\tau)$. Clearly, $\mathbb{C}(\tau_{1},\tau,\tau) \subseteq \beta(\underline{\tau},\underline{\delta})$ where $\underline{\tau}=(\tau_{1},\tau,\tau)$ and $\underline{\delta}=(\delta_{1},\delta,\delta)$. It therefore suffices to derive conditions under which $(h_{b}(\tau_{1} * \delta_{1})-h_{b}(\delta_{1}),h_{b}(\tau * \delta)-h_{b}(\delta),h_{b}(\tau * \delta)-h_{b}(\delta)) \in \alpha^{\three-1}_{f}(\tau_{1},\tau,\tau)$.

\begin{lemma}
 \label{Lem:3To1ORICConditionsUnderWhichRateTripleIsAchievableUsingCosetCodes}
Consider example \ref{Ex:A3To1-OR-IC} with $\delta \define \delta_{2}=\delta_{3} \in (0,\frac{1}{2})$ and $\tau \define \tau_{2}=\tau_{3} \in (0,\frac{1}{2})$. Let $\beta \define \delta_{1}*(2\tau -\tau^{2})$. The rate triple $(h_{b}(\tau_{1} * \delta_{1})-h_{b}(\delta_{1}),h_{b}(\tau * \delta)-h_{b}(\delta),h_{b}(\tau * \delta)-h_{b}(\delta)) \in \alpha^{\three-1}_{f}(\tau_{1},\tau,\tau)$ i.e., achievable using coset codes, if,
\begin{eqnarray}
 \label{Eqn:3To1-OR-ICConditionForAchievabilityUsingCosetCodes}
h_{b}(\tau * \delta)-h_{b}(\delta)\leq  \theta ,
\end{eqnarray}
where $\theta =h_{b}(\tau)-h_{b}((1-\tau)^{2})-(2\tau-\tau^{2})h_{b}(\frac{\tau^{2}}{2\tau-\tau^{2}})-h_{b}(\tau_{1} * \delta_{1})+h_{b}(\tau_{1}*\beta)$. We therefore have $\alpha^{3-1}_{f}(\tau_{1},\tau,\tau)=\mathbb{C}(\tau_{1},\tau,\tau)$ if (\ref{Eqn:3To1-OR-ICConditionForAchievabilityUsingCosetCodes}) holds.
\end{lemma}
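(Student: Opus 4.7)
The plan is to invoke theorem \ref{Thm:AchievableRateRegionFor3To1ICUsingCosetCodes} with a specific test channel in $\SetOfDistributions_{f}(\tau_{1},\tau,\tau)$. Specifically, I would take $\TimeSharingRVSet$ trivial, $\SemiPrivateRVSet_{2}=\SemiPrivateRVSet_{3}=\mathcal{F}_{3}$ (the ternary field), and set $U_{j}=X_{j}$ under the natural embedding $\{0,1\}\hookrightarrow\mathcal{F}_{3}$. The inputs are mutually independent with $X_{1}\sim\text{Ber}(\tau_{1})$ and $X_{2},X_{3}\sim\text{Ber}(\tau)$. The cost, independence, and finite-field requirements defining $\SetOfDistributions_{f}$ are then satisfied by inspection.

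The crucial structural fact to exploit is that the OR interference is a deterministic function of the ternary sum: the map $(x_{2},x_{3})\mapsto x_{2}\oplus_{3} x_{3}$ takes the values $0,1,2$ precisely when $x_{2}\vee x_{3}$ equals $0,1,1$, so $x_{2}\vee x_{3}$ is recoverable from $u_{2}\oplus u_{3}\in\mathcal{F}_{3}$. Consequently, given $U_{2}\oplus U_{3}$, decoder $1$ effectively sees $X_{1}$ through a $BSC_{\delta_{1}}$, yielding $I(X_{1};Y_{1},U_{2}\oplus U_{3})=h_{b}(\tau_{1}*\delta_{1})-h_{b}(\delta_{1})$. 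The other relevant quantities follow from routine entropy calculations: $I(U_{j}X_{j};Y_{j})=h_{b}(\tau*\delta)-h_{b}(\delta)$, $H(U_{j})=h_{b}(\tau)$, $H(U_{2}\oplus_{3}U_{3})=h_{b}((1-\tau)^{2})+(2\tau-\tau^{2})h_{b}(\tfrac{\tau^{2}}{2\tau-\tau^{2}})$, and $H(Y_{1})=h_{b}(\tau_{1}*\beta)$ with $\beta=\delta_{1}*(2\tau-\tau^{2})$. Combining these, a direct calculation identifies $H(U_{j})-H(U_{2}\oplus U_{3}|Y_{1})$ with the quantity $\theta$ appearing in the lemma statement.

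Having assembled these, I would substitute into the characterization of $\alpha_{f}^{3-1}(p_{QU_{2}U_{3}\underline{X}\underline{Y}})$ from definition \ref{Eqn:TestChannelsCodingOver3To1ICUsingNestedCosetCodes} and check that the corner triple $(h_{b}(\tau_{1}*\delta_{1})-h_{b}(\delta_{1}),\,h_{b}(\tau*\delta)-h_{b}(\delta),\,h_{b}(\tau*\delta)-h_{b}(\delta))$ sits at the boundary of every inequality. The single-rate bound on $R_{j}$ becomes $h_{b}(\tau*\delta)-h_{b}(\delta)$ and is tight. Under hypothesis (\ref{Eqn:3To1-OR-ICConditionForAchievabilityUsingCosetCodes}), $\theta\ge h_{b}(\tau*\delta)-h_{b}(\delta)\ge 0$, so the $\min\{0,\,H(U_{j}|Q)-H(U_{2}\oplus U_{3}|QY_{1})\}$ appearing in the $R_{1}$ bound is $0$, collapsing that constraint to the desired $R_{1}<h_{b}(\tau_{1}*\delta_{1})-h_{b}(\delta_{1})$. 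Since $U_{j}=X_{j}$ gives $I(X_{j};Y_{j}|U_{j})=0$, the sum-rate bound reduces to $R_{1}+R_{j}<[h_{b}(\tau_{1}*\delta_{1})-h_{b}(\delta_{1})]+\theta$, which the corner triple satisfies exactly because of (\ref{Eqn:3To1-OR-ICConditionForAchievabilityUsingCosetCodes}). Taking the outer convex closure absorbs the strict inequalities, placing the corner in $\alpha_{f}^{3-1}(\tau_{1},\tau,\tau)$.

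The remaining dominated rate vectors in $\beta(\underline{\tau},\underline{\delta})$ are then achievable by monotonicity (each receiver can simply decode a lower-rate subcode). Chaining this with the outer bound $\mathbb{C}(\tau_{1},\tau,\tau)\subseteq\beta(\underline{\tau},\underline{\delta})$ and theorem \ref{Thm:AchievableRateRegionFor3To1ICUsingCosetCodes} yields the equality $\alpha_{f}^{3-1}(\tau_{1},\tau,\tau)=\mathbb{C}(\tau_{1},\tau,\tau)$. The only nontrivial content is the algebraic observation that the nonlinear OR interference is fully captured by the $\mathcal{F}_{3}$ sum of the inputs, together with the book-keeping identifying $H(U_{j})-H(U_{2}\oplus U_{3}|Y_{1})$ with $\theta$; no new proof tool beyond theorem \ref{Thm:AchievableRateRegionFor3To1ICUsingCosetCodes} is required.
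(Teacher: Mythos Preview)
Your proposal is correct and follows essentially the same approach as the paper: both take $\TimeSharingRVSet$ trivial, $\SemiPrivateRVSet_{2}=\SemiPrivateRVSet_{3}=\mathcal{F}_{3}$, set $U_{j}=X_{j}$ under the embedding $\{0,1\}\hookrightarrow\mathcal{F}_{3}$, and exploit the key observation that $X_{2}\vee X_{3}$ is a deterministic function of $U_{2}\oplus_{3}U_{3}$. Your proposal is in fact more explicit than the paper's published proof, which simply names the test channel and states that the verification is routine; your identification of $\theta$ with $H(U_{j})-H(U_{2}\oplus U_{3}|Y_{1})$ and your check of each inequality in the definition of $\alpha_{f}^{3-1}$ spell out exactly what the paper leaves to the reader.
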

\begin{proof}
The proof only involves identifying the appropriate test channel $p_{QU_{2}U_{3}\ulineInputRV\ulineOutputRV} \in \mathbb{D}_{f}^{3-1}(\tau_{1},\tau,\tau)$. Let $\TimeSharingRVSet = \phi$ be empty, $\SemiPrivateRVSet_{2}=\SemiPrivateRVSet_{3}=\{ 0,1,2 \}$. Let $p_{X_{1}}(1)=1-p_{X_{1}}(0)=\tau_{1}$. Let $p_{U_{j}X_{j}}(0,0)=1-p_{U_{j}X_{j}}(1,1)=1-\tau$ and therefore $P(U_{j}=2)=P(X_{j} \neq U_{j})=0$ for $j=2,3$. It is easily verified that $p_{QU_{2}U_{3}\ulineInputRV\ulineOutputRV} \in \mathbb{D}_{f}^{3-1}(\tau_{1},\tau,\tau)$, i.e, in particular respects the cost constraints. The choice of this test channel, particularly the ternary field, is motivated by $H(X_{2}\vee X_{3}|U_{2}\oplus_{3} U_{3})=0$. The decoder $1$ can reconstruct the interfering pattern after having decoded the ternary sum of the codewords.
\end{proof}

\section{Proof of proposition \ref{Prop:ResultForOrExampleWithMACAsAProp}}
\label{AppSec:3To1ORNonAddICStrictSubOptimalityOfUnstructuredCodes}
We prove proposition \ref{Prop:ResultForOrExampleWithMACAsAProp} by splitting the same into the two following lemmas.
\begin{lemma}
 \label{Lem:StrictSubOptimalityOfUnstructuredCodesForNonAdditice3To1IC}
Consider example \ref{Ex:3To1ORICCoupledThroughNonAdditiveMAC} and let $\underline{C}^{*}, C_{1}, \mathcal{D}(\ulinecost),p^{*}_{\ulineInputRV\ulineOutputRV}$ be defined as above. If
\begin{eqnarray}
 \label{Eqn:3To1OrNonAddICUnstructuredCodesSubOptimal}
C_{1}+2(h_{b}(\tau*\delta)-h_{b}(\delta))=I(X_{1};Y_{1}|X_{2}\vee X_{3})+2(h_{b}(\tau*\delta)-h_{b}(\delta)) > I(\ulineInputRV ;Y_{1}),
\end{eqnarray}
where the mutual information terms $I(X_{1};Y_{1}|X_{2}\vee X_{3}), I(\ulineInputRV ;Y_{1})$ are evaluated with respect to $p^{*}_{\ulineInputRV\ulineOutputRV}$, then $\underline{C}^{*} \notin \alpha_{u}(\ulinecost)$.
\end{lemma}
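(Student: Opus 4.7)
The plan is to argue by contradiction, mirroring the structure of the proof of lemma \ref{Lem:ConditionsForSub-OptimalityOfHanKobayashiTechnique} but now accommodating the non-additive MAC. Suppose $\underline{C}^{*} \in \alpha_{u}(\underline{\kappa})$; by a standard continuity argument on a minimizing sequence of test channels it suffices to derive a contradiction from the assumption that $\underline{C}^{*} \in \alpha_{u}(p)$ for some $p \in \SetOfDistributions_{u}(\underline{\kappa})$. Throughout, write $Z := X_{2} \vee X_{3}$ and $V := (Q,U_{2},U_{3})$.

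First I would pin down the single-letter marginals. Since $U_{j}-X_{j}-Y_{j}$ is Markov for $j=2,3$, the bound $R_{j} < I(U_{j}X_{j};Y_{j}|Q)$ reduces to $R_{j} < I(X_{j};Y_{j}|Q)$, and the strict-concavity-of-$h_{b}$ and Jensen argument used in Claim 1 of the additive-case proof carries over verbatim to force $p_{X_{j}|Q}(1|q) = \tau$ for every $q$ with $p_{Q}(q) > 0$. For user 1 the conditional-independence hypothesis in $\SetOfDistributions_{u}$ gives $X_{1} \perp Z \mid V$, so
\begin{equation*}
I(X_{1};Y_{1}|V) \;=\; I(X_{1};Y_{1}|Z,V) - I(X_{1};Z|Y_{1},V) \;\leq\; I(X_{1};Y_{1}|Z,V).
\end{equation*}
The Bern$(\tau)$ marginals just obtained, together with conditional independence of $X_{2},X_{3}$ given $Q$, yield $p_{Z|Q}(1|q)=\sigma:=2\tau-\tau^{2}$ for every admissible $q$; averaging out $(U_{2},U_{3})$ then produces
\begin{equation*}
I(X_{1};Y_{1}|Z,V) \;=\; \sum_{q} p_{Q}(q)\,\tilde g\bigl(p_{X_{1}|Q}(1|q)\bigr),\quad \tilde g(t):=(1-\sigma)\,I(X_{1};Y_{1}|Z{=}0)_{\mathrm{Bern}(t)}+\sigma\,I(X_{1};Y_{1}|Z{=}1)_{\mathrm{Bern}(t)}.
\end{equation*}
Chaining these with Jensen applied to the strictly concave $\tilde g$ and the uniqueness of the maximizer $\tau_{1}^{*}:=p^{*}_{X_{1}}(1)$ stated before proposition \ref{Prop:ResultForOrExampleWithMACAsAProp} gives $R_{1} \leq I(X_{1};Y_{1}|V) \leq \tilde g(\tau_{1}^{*}) = C_{1}$, and the assumed equality $R_{1}=C_{1}$ forces (i) $I(X_{1};Z|Y_{1},V)=0$, (ii) $p_{X_{1}|Q}(1|q)=\tau_{1}^{*}$ a.s., and hence (iii) the conditional joint $(X_{1},X_{2},X_{3})|Q=q$ is distributed exactly as $p^{*}_{\underline{X}}$ for every $q$ of positive probability.

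With these structural conclusions I would split into two cases. In \emph{Case 1}, $H(X_{j}|Q,U_{j})=0$ for both $j=2,3$: then $X_{j}$ is determined by $(Q,U_{j})$, so $I(X_{j};Y_{j}|Q,U_{j})=0$ and $I(U_{2}U_{3}X_{1};Y_{1}|Q)=I(X_{1}X_{2}X_{3};Y_{1}|Q)=I_{p^{*}}(\underline{X};Y_{1})$ by (iii). The sum-rate bound therefore collapses to $C_{1}+2(h_{b}(\tau*\delta)-h_{b}(\delta)) \leq I_{p^{*}}(\underline{X};Y_{1})$, directly contradicting the hypothesis. In \emph{Case 2}, $H(X_{j}|Q,U_{j})>0$ for some $j$: a short computation using Claim 1 shows $H(Z|V)>0$, so $\mu_{v}:=p_{Z|V=v}(1)\in(0,1)$ for some $v$ with $p_{V}(v)>0$; conditioned on such a $v$, $(X_{1},Z)$ is a product of two genuinely binary random variables, and the fact that the four MAC output distributions $MAC(\cdot|x_{1},z)$ in example \ref{Ex:3To1ORICCoupledThroughNonAdditiveMAC} do not factor as $f(x_{1})g(z)$ implies $I(X_{1};Z|Y_{1},V{=}v)>0$, contradicting conclusion (i).

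The main obstacle is the non-additive analog of lemma \ref{Lem:AddingAnIndependentRandomVariableReducesDifferenceInEntropies}: whereas in the additive case one exploited a clean binary-entropy-convolution inequality to obtain strictness, here one must verify positivity of $I(X_{1};Z|Y_{1},V{=}v)$ directly from the non-separability of the particular MAC transition matrix, which is immediate by inspection but has to be stated and used with care. A secondary bookkeeping issue is that $\underline{C}^{*}$ may only belong to the closure $\mathrm{cl}\bigl(\bigcup_{p}\alpha_{u}(p)\bigr)$; since every inequality in the argument above is a closed condition preserved under convergence in distribution on the finite alphabets involved, a routine compactness argument handles this limiting step.
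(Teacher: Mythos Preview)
Your proof is correct and follows the same five-claim skeleton as the paper's argument in appendix~\ref{AppSec:3To1ORNonAddICStrictSubOptimalityOfUnstructuredCodes}: pin down $p_{X_j|Q}$ for $j=2,3$, then $p_{X_1|Q}$, split on whether $H(X_j|Q,U_j)=0$ for both $j$, and in the remaining case use $H(X_2\vee X_3|Q,U_2,U_3)>0$ to force a strict gap between $I(X_1;Y_1|V)$ and $C_1$.

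The only substantive difference is in how you certify that strict gap. The paper writes the gap as the slack in the convexity of $I(p_{X_1};W)$ in the channel $W$, invoking ``strict convexity'' together with distinctness of $MAC(\cdot|\cdot,0)$ and $MAC(\cdot|\cdot,1)$. You instead use the exact identity $I(X_1;Y_1|Z,V)-I(X_1;Y_1|V)=I(X_1;Z|Y_1,V)$ (valid since $X_1\perp Z\mid V$) and argue $I(X_1;Z|Y_1,V{=}v)>0$ directly. These are equivalent reformulations, but yours is the sharper one: mutual information is convex but not \emph{strictly} convex in the channel in general (it is affine whenever the induced output marginals coincide), whereas your condition---that for some $y_1$ the $2\times 2$ array $[MAC(y_1|x_1,z)]$ is not rank one---is exactly the necessary and sufficient condition for $I(X_1;Z|Y_1,V{=}v)>0$ when $X_1,Z$ are independent and nondegenerate. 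For the specific MAC in example~\ref{Ex:3To1ORICCoupledThroughNonAdditiveMAC} both justifications go through by inspection, so there is no error either way; your phrasing ``the four MAC output distributions do not factor as $f(x_1)g(z)$'' should just be tightened to the rank-one statement above. The closure/compactness remark at the end is handled the same way in both proofs.
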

The reader will recognize that above lemma is the counterpart of lemma \ref{Lem:ConditionsForSub-OptimalityOfHanKobayashiTechnique} for example \ref{Ex:3To1ORICCoupledThroughNonAdditiveMAC}.
\begin{proof}
The proof here closely mimics proof of lemma \ref{Lem:ConditionsForSub-OptimalityOfHanKobayashiTechnique}. In fact, we allude to appendix \ref{AppSec:UpperBoundOnUSBRateRegionFor3To1ORIC} to avoid restating certain elements.

We assume $\underline{C}^{*} \in \alpha_{u}(\ulinecost)$, and derive a contradiction. Suppose $\underline{C}^{*} \in \cocl(\alpha_{u}(p_{QU_{2}U_{3}\ulineInputRV\ulineOutputRV}))$ for some $p_{QU_{2}U_{3}\ulineInputRV\ulineOutputRV} \in \mathbb{D}_{u}(\ulinecost)$\footnote{Recall $\ulinecost \define (\tau_{1},\tau,\tau)$.}. In the sequel, we characterize such a $p_{QU_{2}U_{3}\ulineInputRV\ulineOutputRV}$ and employ the same to derive a contradiction. Our first claim, as in appendix \ref{AppSec:UpperBoundOnUSBRateRegionFor3To1ORIC}, is $p_{X_{j}|Q}(1|q)=\tau$ for $j=2,3$ and every $q \in \mathcal{Q}$. Since the corresponding arguments in appendix \ref{AppSec:UpperBoundOnUSBRateRegionFor3To1ORIC} hold verbatim, we allude to the same for a proof of this claim. We conclude the triplet $(Q,X_{1}),X_{2},X_{3}$ to be mutually independent, and in particular $X_{1},X_{2},X_{3}$ to be mutually independent. We conclude that for any $p_{QU_{2}U_{3}\ulineInputRV\ulineOutputRV} \in \mathbb{D}_{u}(\ulinecost)$ for which $\underline{C}^{*} \in \cocl(\alpha_{u}(p_{QU_{2}U_{3}\ulineInputRV\ulineOutputRV}))$, we have its corresponding marginal $p_{\ulineInputRV\ulineOutputRV} \in \mathcal{D}(\ulinecost)$.

Our second claim is $p_{X_{1}|Q}(1|q)=p^{*}_{X_{1}}(1)$ for every $q \in \mathcal{Q}$ for which $p_{Q}(q)>0$. We begin with the upper bound on $R_{1}$ in (\ref{Eqn:3To1ICUpperBoundOnR1}). Denoting
\begin{eqnarray}
 \label{Eqn:3To1ORAndNonAdditiveICAlternateNotationForMutInf}
 I(p_{A|C}(\cdot|c);p_{B|A,C}(\cdot|\cdot,c)) \define I(A;B|C=c)\mbox{ for any random variables $A,B,C$, we have,}\nonumber
\end{eqnarray}
\begin{eqnarray}
\label{Eqn:3To1ORNonAddICConsOfX2OrX3EnteringMAC}
\lefteqn{I(X_{1};Y_{1}|Q,U_{2},U_{3}) \leq I(X_{1};Y_{1}|Q,X_{2}\vee X_{3})}\\
&=&\sum_{s}p_{X_{2}\vee X_{3}}(s) \sum_{q}p_{Q|X_{2}\vee X_{3}}(q|s)I\left( p_{X_{1}|Q,X_{2}\vee X_{3}}(\cdot|q,s);p_{Y_{1}|X_{1}Q,X_{2}\vee X_{3}}(\cdot|\cdot,q,s) \right) \nonumber\\
\label{Eqn:3To1ICOrNonAddPX1NatureOfChannel}
&=&\sum_{s}p_{X_{2}\vee X_{3}}(s) \sum_{q}p_{Q|X_{2}\vee X_{3}}(q|s)I\left( p_{X_{1}|Q,X_{2}\vee X_{3}}(\cdot|q,s);p_{Y_{1}|X_{1},X_{2}\vee X_{3}}(\cdot|\cdot,s) \right) \\
\label{Eqn:3To1OrNonAddICJensenIneqInDerivingPX1}
&\leq&\sum_{s}p_{X_{2}\vee X_{3}}(s) I\left( \sum_{q}p_{Q|X_{2}\vee X_{3}}(q|s)p_{X_{1}|Q,X_{2}\vee X_{3}}(\cdot|q,s);p_{Y_{1}|X_{1},X_{2}\vee X_{3}}(\cdot|\cdot,s) \right) \\
\label{Eqn:3To1OrNonAddICPx1DefnOfC1}
&=&\sum_{s}p_{X_{2}\vee X_{3}}(s)I\left( p_{X_{1}|X_{2}\vee X_{3}}(\cdot|s);p_{Y_{1}|X_{1},X_{2}\vee X_{3}}(\cdot|\cdot,s) \right)=I(X_{1};Y_{1}|X_{2}\vee X_{3}) \leq C_{1}
\end{eqnarray}
where (i) (\ref{Eqn:3To1ORNonAddICConsOfX2OrX3EnteringMAC}) follows from the Markov chains $(U_{2},U_{3})-(X_{2}\vee X_{3})-Y_{1}$ and $(U_{2},U_{3})-(X_{1},X_{2}\vee X_{3})-Y_{1}$, (ii) (\ref{Eqn:3To1ICOrNonAddPX1NatureOfChannel}) follows from the Markov chain $Q-X_{1},X_{2}\vee X_{3} - Y_{1}$ resulting from the nature of the channel from the inputs to $Y_{1}$, (iii) (\ref{Eqn:3To1OrNonAddICJensenIneqInDerivingPX1}) follows from Jensen's inequality, and (iv) (\ref{Eqn:3To1OrNonAddICPx1DefnOfC1}) follows from $p_{\ulineInputRV\ulineOutputRV} \in \mathcal{D}(\ulinecost)$ and definition of $C_{1}$. The strict concavity of $I(p_{A}(\cdot);p_{B|A}(\cdot|\cdot))$ in $p_{A}(\cdot)$ implies equality holds in (\ref{Eqn:3To1OrNonAddICJensenIneqInDerivingPX1}) if and only if $p_{X_{1}|Q,X_{2}\vee X_{3}}(1|q,s)=p_{X_{1}|Q}(1|q)$ is invariant with $q$ for every $q \in \mathcal{Q}$ for which $p_{Q|X_{2}\vee X_{3}}(q|s) = p_{Q}(q) > 0$.\footnote{We have proved in our first claim $Q$ and $(X_{2},X_{3})$ are independent.} By the uniqueness of $p^{*}_{\ulineInputRV\ulineOutputRV}$, and in particular $p_{X_{1}}^{*}$, we conclude $p_{X_{1}|Q}(1|q)=p^{*}_{X_{1}}(1)$ for every $q \in \mathcal{Q}$ for which $p_{Q}(q)>0$.

Our first and second claims imply that if $\underline{C}^{*} \in \cocl(\alpha_{u}(p_{QU_{2}U_{3}\ulineInputRV\ulineOutputRV}))$ for some $p_{QU_{2}U_{3}\ulineInputRV\ulineOutputRV} \in \mathbb{D}_{u}(\ulinecost)$, then $\underset{q,u_{2},u_{3}}{\sum}p_{QU_{2}U_{3}\ulineInputRV,\ulineOutputRV}(q,u_{2},u_{3},\ulineinput,\ulineoutput)=p^{*}_{\ulineInputRV\ulineOutputRV}(\ulineinput,\ulineoutput) \in \mathcal{D}(\ulinecost)$, and furthermore, $Q$ is independent of $\ulineInputRV$. We therefore reiterate that any entropy or mutual information terms involving random variables in $\ulineInputRV,\ulineOutputRV$, stated in the sequel, is evaluated with respect to $p^{*}_{\ulineInputRV\ulineOutputRV}$. 

Our third claim is that either $H(X_{2}|Q,U_{2})>0$ or $H(X_{3}|Q,U_{3})>0$. Suppose not, i.e., $H(X_{2}|\TimeSharingRV,\SemiPrivateRV_{2})=H(X_{3}|\TimeSharingRV,\SemiPrivateRV_{3}) =0$. In this case, the upper bound on $R_{1}+R_{2}+R_{3} = C_{1}+2(h_{b}(\tau * \delta) - h_{b}(\delta))$ in (\ref{Eqn:SumRateBoundOn3To1IC}) is 
\begin{eqnarray}
\label{Eqn:3To1ORNonAddICUpperBoundOnSumOfTheUseRates}
R_{1}+R_{2}+R_{3} = C_{1}+2(h_{b}(\tau * \delta) - h_{b}(\delta)) \leq I(X_{2},X_{3},X_{1};Y_{1}|Q)=I(\ulineInputRV;Y_{1})
\end{eqnarray}
where the last equality follows from independence of $Q$ and $\ulineInputRV$ and thereby implying independence of $Q$ and $(\ulineInputRV,\ulineOutputRV)$. (\ref{Eqn:3To1ORNonAddICUpperBoundOnSumOfTheUseRates}) contradicts the hypothesis (\ref{Eqn:3To1OrNonAddICUnstructuredCodesSubOptimal}) of the lemma.

Our fourth claim is $H(X_{2}\vee X_{3}|Q,U_{2},U_{3})>0$. The proof of this claim is identical to the proof of the corresponding claim in appendix \ref{AppSec:UpperBoundOnUSBRateRegionFor3To1ORIC} and the reader is alluded to the same. As a consequence of $H(X_{2}\vee X_{3}|\tilde{Q})>0$, where $\tilde{Q} \define (Q,U_{2},U_{3})$, there exists $\tilde{q}^{*} \define (q^{*},u_{2}^{*},u_{3}^{*}) \in \tilde{\mathcal{Q}} \define \mathcal{Q} \times \mathcal{U}_{2}\times \mathcal{U}_{3}$ for which $p_{\tilde{Q}}(\tilde{q}^{*})>0$ and $H(X_{2}\vee X_{3}|\tilde{Q}=\tilde{q}^{*})>0$. 

Our fifth claim and final claim is that $H(X_{2}\vee X_{3}|Q,U_{2},U_{3})>0$ implies $C_{1} < I(X_{1};Y_{1}|X_{2}\vee X_{3})$ thereby contradicting the definition of $C_{1}$ (\ref{Eqn:TestChannelsThatEnsureUsers2And3AchieveCapacity}). The reader will recognize that our proof for the fifth claim in appendix \ref{AppSec:UpperBoundOnUSBRateRegionFor3To1ORIC} \textit{cannot} be employed here. We employ a more powerful technique that we will have opportunity to use in our study of example \ref{Ex:3To1ICGroupAdditiveExample}. The upper bound (\ref{Eqn:3To1ICUpperBoundOnR1}) on $R_{1}$ implies
\begin{eqnarray}
\label{Eqn:3To1ORNonAddICUppeBoundOnUnstructuredCodesFifthClaimUsingBound}
\lefteqn{C_{1}=R_{1} \leq I(X_{1};Y_{1}|\tilde{Q})= \sum_{\tilde{q}}p_{\tilde{Q}}(\tilde{q})I(p_{X_{1}|\tilde{Q}}(\cdot|\tilde{q});p_{Y_{1}|X_{1}\tilde{Q}}(\cdot|\cdot,\tilde{q}))} \nonumber\\
\label{Eqn:3To1ORNonAddICUppeBoundOnUnstructuredCodesFifthClaimUsingBound1}
&=& \sum_{\tilde{q}}p_{\tilde{Q}}(\tilde{q})I\left(p_{X_{1}|\tilde{Q}}(\cdot|\tilde{q});\sum_{s}p_{Y_{1}|X_{1},X_{2}\vee X_{3}\tilde{Q}}(\cdot|\cdot,s,\tilde{q})p_{X_{2}\vee X_{3}|\tilde{Q}}(s|\tilde{q}) \right) \nonumber\\
\label{Eqn:3To1ORNonAddICUppeBoundOnUnstructuredCodesFifthClaimUsingBound2}
&<&\sum_{\tilde{q}}p_{\tilde{Q}}(\tilde{q})\sum_{s}p_{X_{2}\vee X_{3}|\tilde{Q}}(s|\tilde{q})I\left(p_{X_{1}|\tilde{Q}}(\cdot|\tilde{q});p_{Y_{1}|X_{1},X_{2}\vee X_{3}\tilde{Q}}(\cdot|\cdot,s,\tilde{q}) \right)\\
\label{Eqn:3To1ORNonAddICUppeBoundOnUnstructuredCodesFifthClaimUsingBound3}
&=&\sum_{s,\tilde{q}}p_{\tilde{Q},X_{2}\vee X_{3}}(\tilde{q},s)I(p_{X_{1}}(\cdot);p_{Y_{1}|X_{1},X_{2}\vee X_{3}}(\cdot|\cdot,s))\\
\label{Eqn:3To1ORNonAddICUppeBoundOnUnstructuredCodesFifthClaimUsingBound4}
&=&\sum_{s,\tilde{q}}p_{\tilde{Q},X_{2}\vee X_{3}}(\tilde{q},s)I(p_{X_{1}|X_{2}\vee X_{3}}(\cdot|s);p_{Y_{1}|X_{1},X_{2}\vee X_{3}}(\cdot|\cdot,s))=I(X_{1};Y_{1}|X_{2}\vee X_{3}) \leq C_{1},
\end{eqnarray}
where (i) (\ref{Eqn:3To1ORNonAddICUppeBoundOnUnstructuredCodesFifthClaimUsingBound2}) follows from strict convexity of the mutual information in the conditional distribution (channel transition probabilities), the presence of $\tilde{q}^{*} \in \tilde{\mathcal{Q}}$ for which $p_{X_{2}\vee X_{3}|\tilde{Q}}(\cdot|\tilde{q}^{*})$ is non-degenerate and $p_{Y_{1}|X_{1},X_{2}\vee X_{3},\tilde{Q}}(\cdot|\cdot,s,\tilde{q}^{*})$ distinct, (ii) (\ref{Eqn:3To1ORNonAddICUppeBoundOnUnstructuredCodesFifthClaimUsingBound3}) follows from conditional independence of $X_{1}$ and $(U_{2},U_{3})$ given $Q$, the second claim above, and the Markov chain $\tilde{Q}-X_{1},X_{2}\vee X_{3} - Y_{1}$ induced by the nature of the channel, and (iii) (\ref{Eqn:3To1ORNonAddICUppeBoundOnUnstructuredCodesFifthClaimUsingBound4}) follows from $X_{1},X_{2},X_{3}$ being mutually independent, $p_{\ulineInputRV\ulineOutputRV} \in \mathcal{D}(\ulinecost)$ and the definition of $C_{1}$. We have thus derived a contradiction $C_{1}<C_{1}$.
\end{proof}

\begin{lemma}
 \label{Lem:OptimalityOfPCCForNonAdditive3To1OrNonAddIC}
Consider example \ref{Ex:3To1ORICCoupledThroughNonAdditiveMAC}. Let $C_{1}, p^{*}_{\ulineInputRV\ulineOutputRV}$ be as defined above. If $h_{b}(\tau^{2})+(1-\tau^{2})h_{b}(\frac{(1-\tau)^{2}}{1-\tau^{2}})+H(Y_{1}|X_{2}\vee X_{3})-H(Y_{1}) \leq \min\{ H(X_{2}|Y_{2})H(X_{3}|Y_{3})\}$, where the entropies are evaluated with respect to $p^{*}_{\ulineInputRV\ulineOutputRV}$, then $(C_{1},h_{b}(\delta * \tau)-h_{b}(\delta),h_{b}(\delta * \tau)-h_{b}(\delta)) \in \alpha_{f}^{3-1}(\ulinecost)$.
\end{lemma}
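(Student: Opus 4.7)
The plan is to invoke Theorem~\ref{Thm:AchievableRateRegionFor3To1ICUsingCosetCodes} with a carefully chosen test channel in $\mathbb{D}_f(\ulinecost)$ for which $\underline{C}^*$ sits on the boundary of the corresponding rate region, so that $\underline{C}^* \in \alpha_f^{3-1}(\ulinecost)$ follows after the convex closure in~(\ref{Eqn:AchievableRateRegion3to1ICUsingNestedCosetCodes}). Specifically, I would take $Q$ trivial, $\SemiPrivateRVSet_{2}=\SemiPrivateRVSet_{3}=\fieldpi$ with $\Prime=3$, embed the binary input alphabets $\{0,1\}$ into $\mathbb{F}_3$, and set $U_j=X_j$ with $P(X_j=1)=\tau$ for $j=2,3$, while taking $p_{X_1}$ equal to the unique capacity-achieving distribution $p_{X_1}^*$ from~(\ref{Eqn:TestChannelsThatEnsureUsers2And3AchieveCapacity}). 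Conditional independence of $X_1,(U_2,X_2),(U_3,X_3)$ and the cost constraints are immediate.

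The key structural observation --- motivating the choice of $\mathbb{F}_3$ rather than a larger field --- is that $X_2 \vee X_3$ is a deterministic function of the ternary sum $U_2\oplus_3 U_3$: this sum vanishes exactly when $X_2=X_3=0$ and is nonzero otherwise. Since the channel to $Y_1$ depends on the inputs only through $(X_1,X_2\vee X_3)$, the Markov chain $(U_2,U_3)\leftrightarrow (X_1,X_2\vee X_3)\leftrightarrow Y_1$ holds. Combined with the independence of $X_1$ from $(X_2,X_3)$, this gives $I(X_1;U_2\oplus_3 U_3,Y_1) = I(X_1;Y_1\mid X_2\vee X_3) = C_1$. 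The choice $U_j=X_j$ further yields $I(U_j,X_j;Y_j)=I(X_j;Y_j)=h_b(\tau*\delta)-h_b(\delta)$ and $I(X_j;Y_j|U_j)=0$, saturating each single-user bound.

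The pivotal computation is
\[
H(U_2\oplus_3 U_3\mid Y_1) \;=\; H(U_2\oplus_3 U_3) + H(Y_1\mid X_2\vee X_3) - H(Y_1) \;=\; h_b(\tau^2) + (1-\tau^2)\,h_b\!\bigl(\tfrac{(1-\tau)^2}{1-\tau^2}\bigr) + H(Y_1\mid X_2\vee X_3) - H(Y_1),
\]
where the first equality uses that $X_2\vee X_3$ is a function of $U_2\oplus_3 U_3$ together with the Markov chain above, and the second substitutes the ternary distribution of $U_2\oplus_3 U_3$ induced by $P(X_j=1)=\tau$. Substituting this along with $H(U_j)=h_b(\tau)$ and $I(X_j;Y_j|U_j)=0$ into the sum-rate bound $R_1+R_j < I(X_j;Y_j|U_j)+I(X_1;U_2\oplus U_3,Y_1)+H(U_j)-H(U_2\oplus U_3|Y_1)$ and cancelling $C_1$ on both sides reduces it precisely to $H(U_2\oplus_3 U_3|Y_1)\le H(X_j|Y_j)=h_b(\tau)+h_b(\delta)-h_b(\tau*\delta)$ for $j=2,3$ --- exactly the hypothesis of the lemma. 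The univariate $R_1$ constraint requires only $H(U_j)\ge H(U_2\oplus U_3|Y_1)$, and this follows \emph{a fortiori} since $H(X_j|Y_j)\le H(X_j)=H(U_j)$.

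The main subtlety will be that at $\underline{C}^*$ the single-user bounds for $R_2,R_3$ (and, under the hypothesis taken with equality, the sum-rate bounds) are saturated rather than strict, so $\underline{C}^*$ lies on the boundary of $\alpha_f^{3-1}(p_{QU_2U_3\underline{X}\underline{Y}})$ and is captured by $\alpha_f^{3-1}(\ulinecost)$ only through the $\cocl$ operation --- this is handled by a routine perturbation-and-continuity argument in which $P(X_j=1)$ is shifted infinitesimally below $\tau$ to produce interior achievable triples that converge to $\underline{C}^*$. A secondary technical check is that the simplification $I(X_1;U_2\oplus_3 U_3,Y_1)=C_1$ is not spoiled by $U_2\oplus_3 U_3$ strictly refining $X_2\vee X_3$ on the event $\{X_2\vee X_3=1\}$; here the independence of $X_1$ from $(X_2,X_3)$ together with the fact that the conditional law of $Y_1$ is a function of $X_2\vee X_3$ alone ensures the refinement contributes no additional information about $X_1$.
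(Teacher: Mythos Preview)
Your proposal is correct and follows essentially the same route as the paper: the same test channel ($Q$ trivial, $\SemiPrivateRVSet_{2}=\SemiPrivateRVSet_{3}=\mathbb{F}_3$, $U_j=X_j$ with $P(X_j=1)=\tau$, $p_{X_1}=p^*_{X_1}$), the same structural observation that $X_2\vee X_3$ is determined by $U_2\oplus_3 U_3$, and the same identification of the lemma's hypothesis with $H(U_2\oplus_3 U_3\mid Y_1)\le \min\{H(X_2|Y_2),H(X_3|Y_3)\}$. You are more explicit than the paper about the Markov-chain justification for $H(Y_1\mid U_2\oplus_3 U_3)=H(Y_1\mid X_2\vee X_3)$ and about the boundary issue handled by $\cocl$, both of which the paper leaves implicit.
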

\begin{proof}
As in proof of lemma \ref{Lem:3To1ORICConditionsUnderWhichRateTripleIsAchievableUsingCosetCodes}, we identify an appropriate test channel $p_{QU_{2}U_{3}\ulineInputRV\ulineOutputRV} \in \mathbb{D}_{f}(\ulinecost)$ for which $(C_{1},h_{b}(\delta * \tau)-h_{b}(\delta),h_{b}(\delta * \tau)-h_{b}(\delta)) \in \alpha_{f}^{3-1}(p_{QU_{2}U_{3}\ulineInputRV\ulineOutputRV})$. Let $\TimeSharingRVSet = \phi$ be empty, $\SemiPrivateRVSet_{2}=\SemiPrivateRVSet_{3}=\{ 0,1,2 \}$. Let $p_{\ulineInputRV}=p^{*}_{\ulineInputRV}$. Let $p_{U_{j}X_{j}}(0,0)=1-p_{U_{j}X_{j}}(1,1)=1-\tau$ and therefore $P(U_{j}=2)=P(X_{j} \neq U_{j})=0$ for $j=2,3$. It is easily verified that $p_{QU_{2}U_{3}\ulineInputRV\ulineOutputRV} \in \mathbb{D}_{f}^{3-1}(\ulinecost)$, i.e, in particular respects the cost constraints.

It maybe verified that the hypothesis $h_{b}(\tau^{2})+(1-\tau^{2})h_{b}(\frac{(1-\tau)^{2}}{1-\tau^{2}})+H(Y_{1}|X_{2}\vee X_{3})-H(Y_{1}) = H(U_{2}\oplus_{3}U_{3}) + H(Y_{1}|X_{2}\vee X_{3})-H(Y_{1}) = H(U_{2}\oplus_{3}U_{3}) +H(Y_{1}|U_{2}\oplus_{3}U_{3})-H(Y_{1})=H(U_{2}\oplus_{3}U_{3}|Y_{1})$. we therefore have $H(U_{2}\oplus_{3}U_{3}|Y_{1}) \leq \min\{ H(X_{2}|Y_{2})H(X_{3}|Y_{3})\}$. This implies (i) $H(U_{j}) \geq H(U_{2}\oplus U_{3}|Y_{1})$ and (ii) $H(U_{j})-H(U_{2}\oplus U_{3}|Y_{1}) \geq H(U_{j})-H(U_{j}|Y_{j})=I(U_{j};Y_{j})=I(X_{j};Y_{j})=h_{b}$. Employing these in bounds characterizing $\alpha_{f}^{3-1}(p_{QU_{2}U_{3}\ulineInputRV\ulineOutputRV})$ and the marginal $p_{\ulineInputRV\ulineOutputRV}=p^{*}_{\ulineInputRV\ulineOutputRV}$, it can be verified that $(C_{1},h_{b}(\delta * \tau)-h_{b}(\delta),h_{b}(\delta * \tau)-h_{b}(\delta)) \in \alpha_{f}^{3-1}(p_{QU_{2}U_{3}\ulineInputRV\ulineOutputRV})$.

\begin{comment}{
 The proof only involves identifying the appropriate test channel $p_{QU_{2}U_{3}\ulineInputRV\ulineOutputRV} \in \mathbb{D}_{f}^{3-1}(\tau_{1},\tau,\tau)$. 

The choice of this test channel, particularly the ternary field, is motivated by $H(X_{2}\vee X_{3}|U_{2}\oplus_{3} U_{3})=0$. The decoder $1$ can reconstruct the interfering pattern after having decoded the ternary sum of the codewords. It maybe verified that for this test channel $p_{QU_{2}U_{3}\ulineInputRV\ulineOutputRV}$, $\alpha_{f}^{\three-1}(p_{QU_{2}U_{3}\ulineInputRV\ulineOutputRV})$ is defined as the set of rate triples $(R_{1},R_{2},R_{3}) \in [0,\infty)^{3}$ that satisfy
\begin{eqnarray}
 \label{Eqn:3To1ORICDescriptionOfTheRateRegionForTernaryTestChannelChosenForCosetCodes}
&R_{1} < \min\left\{0,\theta \right\}+h_{b}(\tau_{1}*\delta_{1})-h_{b}(\delta_{1}), ~~~
R_{j} < h_{b}(\tau * \delta)-h_{b}(\delta):j=2,3\nonumber\\
&R_{1}+R_{j} < h_{b}(\tau_{1}*\delta_{1})-h_{b}(\delta_{1})+\theta,
\end{eqnarray}
where $\theta=h_{b}(\tau)-h_{b}((1-\tau)^{2})-(2\tau-\tau^{2})h_{b}(\frac{\tau^{2}}{2\tau-\tau^{2}})-h_{b}(\tau_{1} * \delta_{1})+h_{b}(\tau_{1}(1-\beta)+(1-\tau_{1})\beta)$ is as defined in the statement of the lemma. Clearly, $(h_{b}(\tau_{1} * \delta_{1})-h_{b}(\delta_{1}),h_{b}(\tau * \delta)-h_{b}(\delta),h_{b}(\tau * \delta)-h_{b}(\delta)) \in \cocl(\alpha^{\three-1}_{f}(p_{U_{2}U_{3}\ulineInputRV\ulineOutputRV}))$ if (\ref{Eqn:3To1-OR-ICConditionForAchievabilityUsingCosetCodes}) is satisfied.
}\end{comment}

\end{proof}

\section{Proof of Theorem \ref{Thm:AchievableRateRegionFor3To1ICUsingGroupCosetCodes}}
\label{AppSec:abeliangroupsachievability}

We provide an illustration of the main arguments of the proof without
giving complete details. In view of our detailed proof of
theorem \ref{Thm:AchievableRateRegionFor3To1ICUsingCosetCodes}, the
interested reader can fill in the details. We begin with an alternate
characterization of
$\alpha_{g}^{\three-1}(p_{\ulineInputRV\ulineOutputRV})$ in terms of
the parameters of the code. 

\begin{definition}
 \label{Defn:3To1ICAchRegionSimplifiedPreFourierMotzkinBounds}
 Consider $(p_{\TimeSharingRV\SemiPrivateRV_{2}\SemiPrivateRV_{3}\ulineInputRV\ulineOutputRV},w) \in
\SetOfDistributions_{g}(\ulinecost)$ and let
 $G\define \SemiPrivateRVSet_{2}=\SemiPrivateRVSet_{3}$. Let 
$\tilde{\alpha}^{\three-1}_{g}(p_{\TimeSharingRV\SemiPrivateRV_{2}\SemiPrivateRV_{3}
\ulineInputRV\ulineOutputRV},w)$ be defined as the set of rate triples $(R_{1},R_{2},R_{3})
\in [0,\infty)^{3}$ for which $\underset{\delta > 0}{\cup}\tilde{\mathcal{S}}(\underlineR,p_{\TimeSharingRV\SemiPrivateRV_{2}\SemiPrivateRV_{3}
\ulineInputRV\ulineOutputRV},w,\delta)$ is non-empty, where $\tilde{\mathcal{S}}(\underlineR,p_{\TimeSharingRV\SemiPrivateRV_{2}\SemiPrivateRV_{3}
\ulineInputRV\ulineOutputRV},w,\delta)$ is defined as the collection
of  vectors $(S_{2},T_{2},L_{2},S_{3},T_{3},L_{3},R_g) \in [0,\infty )^{9}$ that satisfy
for $j=2,3$, with $Z=U_2 \oplus U_3$.
\begin{eqnarray}
 \label{Eqn:3-to-1ICBoundsInTermsOfCodeParameters}
&R_{j}=T_{j} +L_{j},~~(S_{j}-T_{j}) > \log |G|-
H(U_{j}|Q)+\delta,
~R_g > S_j +\delta \nonumber\\ 
&S_j > S_w^G(U_j;0|Q) + \log |G| -H(U_j|Q)+\delta, 
~~T_{j}>\delta,~L_{j}>\delta, ~L_{j} <
I(X_{j};Y_{j}|U_{j}\TimeSharingRV)-\delta, \nonumber \\
&S_{j} +L_{j} < \log |G|
+I(X_j;Y_j|U_j Q)+C_w^G(U_j;Y_j|Q) -H(U_j|Q) 
-\delta, ~~
R_{1}  < I(X_{1};Y_{1}|Z \TimeSharingRV)-\delta\nonumber\\
&R_{1}+R_{g}< \log |G| + I(X_1;Y_1|ZQ)+C_w^G(Z;Y_1|Q)-H(Z|Q) 
-\delta\nonumber
 \end{eqnarray}
\end{definition}
\begin{lemma}
 \label{Lem:3To1ICSimplifiedPreFourierMotzkinBoundEquality}
$\tilde{{\alpha}}^{\three-1}_{g}(p_{\TimeSharingRV\SemiPrivateRV_{2}\SemiPrivateRV_{3}
\ulineInputRV\ulineOutputRV},w)=\alpha^{\three-1}_{g}(p_{\TimeSharingRV\SemiPrivateRV_{2}\SemiPrivateRV_{3}
\ulineInputRV\ulineOutputRV},w)$.
\end{lemma}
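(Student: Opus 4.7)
The proof is a Fourier--Motzkin elimination exactly in the spirit of Lemma~\ref{Lem:3To1ICSimplifiedPreFourierMotzkinBoundEquality} in the finite field setting. I would first substitute $R_j = T_j + L_j$ for $j=2,3$ in the constraints defining $\tilde{\mathcal{S}}$, and then eliminate the code parameters in the order $R_g$, then $S_2,S_3$, then $L_2,L_3$. The elimination of $R_g$ is trivial: since $R_g$ appears only in $R_g > S_j + \delta$ and $R_1 + R_g < \log|G| + I(X_1;Y_1|ZQ) + C_w^G(Z;Y_1|Q) - H(Z|Q) - \delta$, its elimination produces the coupled bound
\begin{equation*}
R_1 + S_j < \log|G| + I(X_1;Y_1|ZQ) + C_w^G(Z;Y_1|Q) - H(Z|Q) - 2\delta, \quad j=2,3.
\end{equation*}

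Next I would eliminate $S_j$. The two upper bounds on $S_j$ are (U1) $S_j < \log|G| + I(X_j;Y_j|U_jQ) + C_w^G(U_j;Y_j|Q) - H(U_j|Q) - L_j - \delta$, and (U2) the coupled bound above; the two lower bounds are (L1) $S_j > S_w^G(U_j;0|Q) + \log|G| - H(U_j|Q) + \delta$, and (L2) $S_j > R_j - L_j + \log|G| - H(U_j|Q) + \delta$ coming from $S_j - T_j > \log|G| - H(U_j|Q) + \delta$ together with $T_j = R_j - L_j$. Pairing the four combinations:
(U1,L1) yields $L_j < I(X_j;Y_j|U_jQ) + C_w^G(U_j;Y_j|Q) - S_w^G(U_j;0|Q) - 2\delta$;
(U1,L2) yields $R_j < I(X_j;Y_j|U_jQ) + C_w^G(U_j;Y_j|Q) - 2\delta$, which is exactly the single-user bound in $\alpha^{3-1}_g$;
(U2,L1) yields $R_1 < I(X_1;Y_1|ZQ) + C_w^G(Z;Y_1|Q) - H(Z|Q) + H(U_j|Q) - S_w^G(U_j;0|Q) - 3\delta$, matching the second argument of the min in the $R_1$ bound; the first argument of the min follows directly from the remaining constraint $R_1 < I(X_1;Y_1|ZQ) - \delta$ (since $I(X_1;Y_1|QZ) = I(X_1;Y_1|QZ) - H(Z|Q) + H(Z|Q)$);
(U2,L2) yields $R_1 + R_j < I(X_1;Y_1|ZQ) + C_w^G(Z;Y_1|Q) - H(Z|Q) + H(U_j|Q) + L_j - 3\delta$.

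Finally I would eliminate $L_j$ from this last coupled inequality, pairing with the two relevant upper bounds on $L_j$, namely $L_j < I(X_j;Y_j|U_jQ) - \delta$ and the bound from (U1,L1) above. The first pairing yields
\begin{equation*}
R_1 + R_j < I(X_1;Y_1|ZQ) + C_w^G(Z;Y_1|Q) - H(Z|Q) + H(U_j|Q) + I(X_j;Y_j|U_jQ) - O(\delta),
\end{equation*}
corresponding to the $0$ branch of the $\min$, while the second yields the $C_w^G(U_j;Y_j|Q) - S_w^G(U_j;0|Q)$ branch; together they produce the $\min\{0, C_w^G(U_j;Y_j|Q)-S_w^G(U_j;0|Q)\}$ term exactly as in the definition of $\alpha^{3-1}_g$. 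The remaining couplings (e.g.\ $L_j < R_j - \delta$ from $T_j > \delta$) produce only redundant or looser inequalities. The bounds $T_j, L_j, K_j > \delta$ together with all the slack $O(\delta)$ terms vanish when one takes the union over $\delta > 0$ and then the co-closure; the presence of strict inequalities on both sides ensures this limit preserves membership in both regions.

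The main obstacle is bookkeeping rather than conceptual: one must carefully track which pairs of upper and lower bounds on $S_j$ and then $L_j$ produce which face of $\alpha^{3-1}_g$, and verify that the remaining Fourier--Motzkin pairings are either redundant or implied. The $\min$ in the single-rate bound on $R_1$ and the $\min\{0,\cdot\}$ in the sum-rate bound arise precisely because two distinct lower bounds on $S_j$ (the source-coding bound (L1) and the message-decomposition bound (L2)) compete in the elimination, and neither dominates unconditionally. Equality $\tilde\alpha^{3-1}_g = \alpha^{3-1}_g$ then follows by standard arguments exactly as in the proof of Lemma~\ref{Lem:3To1ICSimplifiedPreFourierMotzkinBoundEquality}.
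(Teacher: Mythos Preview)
Your proposal is correct and takes the same approach as the paper, which simply asserts that the result follows from Fourier--Motzkin elimination; you have supplied the details of that elimination in the natural order $R_g \to S_j \to L_j$ and correctly identified how each face of $\alpha^{3-1}_g$ arises from a particular pairing of upper and lower bounds. One small slip: in your closing sentence you mention ``$T_j, L_j, K_j > \delta$'', but there is no $K_j$ parameter in the Abelian group version of $\tilde{\mathcal{S}}$ (it appears only in the finite-field version); this does not affect the argument.
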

\begin{proof}
 The proof follows from Fourier-Motzkin elimination.
\end{proof}
Choose the parameters  $(R_1,S_{2},T_{2},L_{2},S_{3},T_{3},L_{3},R_g) \in [0,\infty
)^{10}$. The coding technique is exactly the same as that considered in the case of  finite fields
and is given in the proof of Theorem \ref{Thm:AchievableRateRegionFor3To1ICUsingCosetCodes}, The main exception is that the PCCs are built on the
abelian group $G$. Instead of constructing vector spaces of
$\mathcal{F}^n$, we construct subgroups of $G^n$. The cloud center
codebook $\lambda_j$ of user $j$ is characterized as follows. Let
\begin{align*}
J_j=\bigoplus_{(p,r)\in\mathcal{Q}(G)} \mathbb{Z}_{p^r}^{s_jw_{p,r}}, \ \ \ 
J=\bigoplus_{(p,r)\in\mathcal{Q}(G)} \mathbb{Z}_{p^r}^{sw_{p,r}}, \ \ \ 
\end{align*}
for $j=2,3$ with $s=\max\{s_2,s_3\}$, where $s_j$ will be specified shortly.
Note that $J_j \leq J$ for $j=2,3$. Let $\phi$ be a homomorphism from $J$ into $G^n$. 
Let $\phi_j$ be the restriction of $\phi$ to $J_j$ for $j=2,3$. 
It is shown in \cite[Equation 11]{201305TITarXiv_SahPra} that $\phi$ has the following representation
\begin{align*}
\phi(a)=\bigoplus_{(p,r,m)\in\mathcal{G}(G^n)} \overbrace{\sum}^{(\mathbb{Z}_{p^r})}_{(q,t,l)\in\mathcal{G}(J)}
a_{(q,t,l)} g_{(q,t,l)\rightarrow (p,r,m)} 
\end{align*}
where $g_{(q,t,l)\rightarrow (p,r,m)}=0$ for $p\ne q$ and
$g_{(q,t,l)\rightarrow (p,r,m)}$ is uniformly distributed over
$p^{|r-t|^+}\mathbb{Z}_{p^r}$ for $p=q$. The code $\lambda_j$ is given
by $\phi_j(J_j) \oplus b_j^n$, where $b_j^n$ is a bias vector in $G^n$. 
Choose $s_2$, $s_3$ and $s$ such that 
\[
s_2= \frac{nS_2}{\sum_{(p,r) \in \mathcal{Q}(G)} r
w_{p,r} \log p }   \ \ \ 
s_3= \frac{nS_3 }{\sum_{(p,r) \in \mathcal{Q}(G)} r
w_{p,r} \log p }, \ \ \ 
s= \frac{nR_g}{\sum_{(p,r) \in \mathcal{Q}(G)} r
w_{p,r} \log p }  
\]
Note that 
\[
\frac{1}{n} \log |J|= \frac{s}{n} \sum_{(p,r) \in \mathcal{Q}(G)} r
w_{p,r} \log p = R_g, \ \ \ 
\frac{1}{n} \log |J_j| =  S_j: j=2,3.
\]
The binning
functions $i_j$ are defined analogously: $i_j: J_j \rightarrow
|G|^{t_j}$, where $t_j \log |G|=nT_j$, for $j=2,3$.
The encoding and decoding operations are defined analogously. 
This implies that $|\mathcal{M}_1|=2^{nR_1}$,
$|\mathcal{M}_{j1}| = |G|^{t_j}$ for $j=2,3$. 
The homomorphism and the bias vectors are chosen independently and
with uniform probability over their ranges.

For any  $a,\tilde{a} \in J$, and
$(q,s,l) \in \mathcal{G}(J)$ , let
$\hat{\theta}_{q,s,l} \in \{1,2,\ldots,s\}$ be such that 
$\tilde{a}_{q,s,l}-a_{q,s,l} \in
q^{\hat{\theta}_{q,s,l}} \mathbb{Z}_{q^s} \backslash
q^{\hat{\theta}_{q,s,l}+1} \mathbb{Z}_{q^s}$.
and any $(p,r) \in \mathcal{Q}(G)$, define
\[
\boldsymbol{\theta}_{p,r}(a,\tilde{a})= \min_{(p,s,l) \in \mathcal{G}(J)} |r-s|^+
+\hat{\theta}_{q,s,l}. 
\]
Define for any $a \in J$, and any 
$\theta=(\theta_{p,r})_{(p,r) \in \mathcal{Q}(G)}$, the set
$T_{J,\theta}(a)=\{\tilde{a} \in J:  \forall
(p,r) \in \mathcal{Q}(G), \boldsymbol{\theta}_{p,r}(a,\tilde{a})=\theta_{p,r} \}.$

It can be shown  that the expected value of the probability
of all the error events over the ensemble approach zero as the block
length increases if the parameters of the code  belong to 
$\tilde{{\alpha}}^{\three-1}_{g}(p_{\TimeSharingRV\SemiPrivateRV_{2}\SemiPrivateRV_{3}
\ulineInputRV\ulineOutputRV},w)$.
For conciseness, we give proofs of the elements in
this argument that
are new as compared to the analysis done in the case of fields. 

\textit{Upper bound on $P(\epsilon_{l_2})$}:- Given a message $m_2$
that indexes the bin in the cloud center codebook, 
define
\begin{align*}
\psi_2(m_{21})=\sum_{a\in J_2} \sum_{u_2\in T_{2\eta}(U_2)} \mathds{1}_{\{\phi(a)+b_2=u_2,i_2(a)=m_{21}\}}
\end{align*}
We have
\begin{align*}
\mathbb{E}\left\{\psi_2(m_{21})\right\} 
&=\sum_{a\in J_2} \sum_{u_2\in
T_{2\eta}(U_2|q^n)} \frac{1}{|G|^n}\cdot \frac{1}{|G|^{t_2}}
=\frac{|J_2| \cdot |T_{2\eta}(U_2|q^n)|}{|G|^n \cdot |G|^{t_2}}
\end{align*}
and let $\pmb{r}$, $\pmb{0}$ be vectors whose components are indexed by
$(p,r)\in\mathcal{Q}(G)$, and whose $(p,r)\textsuperscript{th}$
component is equal to $r$ and $0$, respectively. Then,
\begin{align*}
\mathbb{E}\left\{\psi_2(m_{21})^2\right\} 
&=\sum_{\theta\in\Theta} \sum_{a\in J_2} \sum_{\tilde{a}\in
T_{J_2,\theta}(a)} \sum_{u_2\in
T_{2\eta}^n(U_2|q^n)}  \sum_{\substack{\tilde{u}_2\in
T_{2\eta}^n(U_2|q^n)\\ \tilde{u}_2\in
u_2+H_{\theta}^n}} \frac{1}{|G|^n}\cdot \frac{1}{|H_{\theta}|^n}\cdot
P\left(I_2(a)=m_{21},I_2(\tilde{a})=m_{21}\right)\\ 
&=\sum_{a\in J_2} \sum_{u_2\in T_{2\eta}^n(U_2|q^n)} \frac{1}{|G|^n} \cdot \frac{1}{|G|^{t_2}}\\
&+\sum_{\substack{\theta\in\Theta\\\theta\ne \pmb{r}}} \sum_{a\in
J_2} \sum_{\tilde{a}\in T_{J_2,\theta}(a)} \sum_{u_2\in
T_{2\eta}^n(U_2|q^n)}  \sum_{\substack{\tilde{u}_2\in
T_{2\eta}^n(U_2|q^n)\\ \tilde{u}_2\in
u_2+H_{\theta}^n}} \frac{1}{|G|^n}\cdot \frac{1}{|H_{\theta}|^n}\cdot \frac{1}{|G|^{2t_2}}\\ 
&\le \frac{|J_2| \cdot |T_{2\eta}^n(U_2|q^n)|}{|G|^n \cdot |G|^{t_2}}
+ \sum_{\substack{\theta\in\Theta\\\theta\ne \pmb{r}}} \sum_{a \in J_2}\frac{
|T_{J_2,\theta}(a)|\cdot
|T_{2\eta}^n(U_2|q^n)| \cdot \left|T_{2\eta}^n(U_2|q^n)\cap
(u_2+H_{\theta}^n)\right| }{|G|^n \cdot |H_{\theta}|^n\cdot
|G|^{2 t_2}}\\ 
\end{align*}
Using \cite[Lemma IX.2]{201305TITarXiv_SahPra}, we
get 
\begin{align*}
\frac{\Var\left\{\psi_2(m_{21})^2\right\}}{\mathbb{E}^2(\psi_2(m_{21}))}
 &\le \frac{|G|^n \cdot |G|^{t_2}}{|J_2|\cdot
2^{n[H(U_2|Q)-\eta]}}
+ \sum_{\substack{\theta\in\Theta\\\theta\ne \pmb{0}, \theta\ne \pmb{r}}} \sum_{a \in
J_2} \frac{|G|^n \cdot
|T_{J_2,\theta}(a)| \cdot 2^{n[H(U_2|[U_2]_{\theta}Q)+\eta]} }{|J_2|^2
|H_{\theta}|^n \cdot 2^{n[H(U_2|Q)-\eta]}}\\ 
\end{align*}
Using \cite[Lemma IX.2]{201305TITarXiv_SahPra} we have 
$|T_{J_2,\theta}(a)| \leq 2^{n(1-w_{\theta})(S_2 +\eta_3)}$, and hence
for the probability of error to go to zero, we require 
\begin{align*}
&(S_2-T_2) > \log|G| - H(U_2|Q), \ \ \  
S_2 > \max_{\substack{\theta\in\Theta\\\theta\ne \pmb{0}}} \frac{1}{\omega_{\theta}} [\log |G:H_{\theta}|-H([U_2]_{\theta}|Q)].
\end{align*}

\textit{Upper bound on
$P \left( (\epsilon_{11} \cup \epsilon_{l_2} \cup \epsilon_{l_3}
\cup \epsilon_{2} \cup \epsilon_3 )^c \cap \epsilon_{41} \right)$}:
This probability can be decomposed into two parts: (i) the first,  $P_1$,
is the probability of the event that $X_1^n$ and $U_2^n+U_3^n$ are both
decoded incorrectly and (ii) the second, $P_2$, is the probability of the event
that $X_1^n$ is decoded incorrectly but $U_2^n+U_3^n$ is decoded
correctly.  In the following we provide an upper bound only on the
first part. For a fixed code we have,
\begin{align*}
P_1
&\!\le\! \frac{1}{|\mathcal{M}_1|} \! \sum_{m_1} \sum_{x_1,u_2,u_3 \in
T_{2\eta_2}(X_1,U_2,U_3|q^n)} \!\!\!\!\!\!\!\!\mathds{1}_{\{X_1^n(m_1)=x_1\}} 
\frac{1}{|G|^{t_2}}\sum_{m_{21},m_{31}} \frac{1}{|G|^{t_3}}\frac{2}{\mathbb{E}\{\psi_2(m_{21})\}} \sum_{u_3 \in
T_{2\eta_2}^n(U_3)} \frac{2}{\mathbb{E}\{\psi_3(m_{31})\}}  \nonumber \\
&\qquad \sum_{y_1\in\mathcal{Y}_1^n}
p_{Y_1|X_1,U_2,U_3}^n(y_1|x_1,u_2,u_3,q^n) 2^{-2
n \eta_4}   \sum_{\tilde{m}_1 \ne m_1} \sum_{a \in
J_2, b \in
J_3}\mathds{1}_{\{\phi_2(a)+b_2=u_2,\phi_3(b)+b_3=u_3,i_2(a)=m_{21},i_3(b)=m_{31}\}} \nonumber \\
&\qquad \sum_{\substack{(\tilde{x_1},\tilde{z})\in
T_{4\eta_4}(X_1,Z|y_1,q^n)\\\tilde{z}\ne z}}  \mathds{1}_{\{\tilde{x}_1=X_1^n(\tilde{m}_1)\}} 
\mathds{1}_{\{\exists \tilde{c}\in
J: \phi(\tilde{c})+b_2+b_3=\tilde{z},\tilde{c} \neq a+b\}} 
\end{align*}
Taking expectation and using the union
bound we get
\begin{align*}
\mathbb{E}\{P_1\}
&\le \sum_{\substack{\theta\in\Theta\\\theta\ne \pmb{r}}}  \sum_{a \in
J_2} \sum_{b \in J_3} \frac{2^{-2 \eta_4}
2^{nR_1}\cdot 2^{n[H(X_1|Z,Y_1Q)+\eta]}\cdot 2^{n[H(Z|[Z]_{\theta}Y_1Q)+\eta]} \cdot 
|T_{J,\theta}(a+b)|}{ |J_2||J_3|2^{n[H(X_1|Q)-\eta]}\cdot |H_{\theta}^n|}
\end{align*}
Using \cite[Lemma IX.2]{201305TITarXiv_SahPra},  note that
$|T_{J,\theta}(a+b)| \leq 2^{n(1-\omega_{\theta})R_g}$. Therefore, it suffices to have
\begin{align*}
R_1+(1-\omega_{\theta}) R_g <I(X_1;Y_1|ZQ)+\log |H_{\theta}|-H(Z|[Z]_{\theta}Y_1Q)
\end{align*}
for $\theta\ne \pmb{r}$. For optimum weights
$\{w_{p,r}\}_{(p,r)\in \mathcal{Q}(G)}$, the condition $R_1+R_g < I(X_1;Y_1|ZQ)+C^G_w(Z;Y_1|Q)+ \log |G|  -H(Z|Q)$ implies
\begin{align*}
R_g 
&\stackrel{(a)}{<} \min_{\substack{\theta\in\Theta\\\theta\ne \pmb{r}}} \frac{1}{1-\omega_{\theta}}\left[I(X_1;Y_1|ZQ)-R_1\right]
+ \min_{\substack{\theta\in\Theta\\\theta\ne \pmb{r}}} \frac{1}{1-\omega_{\theta}}\left[\log
|H_{\theta}|-H(Z|[Z]_{\theta}Y_1Q)\right]\\ 
&\le \min_{\substack{\theta\in\Theta\\\theta\ne \pmb{r}}} \frac{1}{1-\omega_{\theta}}\left[ 
I(X_1;Y_1|ZQ)-R_1+\log |H_{\theta}|-H(Z|[Z]_{\theta}Y_1Q)\right]
%&\stackrel{(a)}{=} \min_{\substack{\theta\in\Theta\\\theta\ne \pmb{r}}}
%(I(X_1;Y_1)-R_1) \min_{\substack{\theta\in\Theta\\\theta\ne \pmb{r}}} \left\(\frac{1}{1-\omega_{\theta}}\left[\log
%|H_{\theta}|-H(Z|[Z]_{\theta}X_1Y_1)\right]\right)\\ 
%&\le \min_{\substack{\theta\in\Theta\\\theta\ne \pmb{r}}} \left\( I(X_1;Y_1)-R_1+\frac{1}{1-\omega_{\theta}}\left[\log |H_{\theta}|-H(Z|[Z]_{\theta}X_1Y_1)\right]\right)\\
\end{align*}
which is the desired condition. In the above equations, $(a)$ follows
since the maximum of $1-\omega_{\theta}$ is attained for
$\theta=\pmb{0}$ and is equal to $1$. We have thus proved the bounds provided in definition
(\ref{Defn:3To1ICAchRegionSimplifiedPreFourierMotzkinBounds}) suffice to drive the
probability of incorrect decoding exponentially down to $0$.

\section{Proof of proposition \ref{thm:3To1ICGroupAddition}}
\label{AppSec:abeliangroupoptimality}

 We first note that for any
 $p_{QU_{2}U_{3}\ulineInputRV\ulineOutputRV} \in
 \mathbb{D}_{u}(\tau,0,0)$ with $H(X_{j}|Q,U_{j})=0$ for $j=2,3$, we
 have $R_{1}+R_{2}+R_{3} < I(\ulineInputRV;Y_{1}) \leq
 \underset{p_{X_{1}}p_{X_{2}}p_{X_{3}}}{\sup}
 I(\ulineInputRV;Y_{1})$. This follows from substituting the corresponding quantities in
 (\ref{Eqn:SumRateBoundOn3To1IC}). It can be easily verified that
 $\underset{p_{X_{1}}p_{X_{2}}p_{X_{3}}}{\sup} I(\ulineInputRV;Y_{1})
 = 2-h_{b}(\delta_{1})-\delta_{1}\log_{2}3$ which is achieved for all
 those distributions $p_{X_{1}}p_{X_{2}}p_{X_{3}}$ that ensure $Y_{1}$
 is uniformly distributed. Condition
 (\ref{Eqn:3To1GroupICUpperBoundUnstructuredCodes}) therefore implies
 $(C^{*},2-h_{b}(\delta)-\delta\log_{2}3,2-h_{b}(\delta)-\delta\log_{2}3)
 \notin \alpha_{u}(p_{QU_{2}U_{3}\ulineInputRV\ulineOutputRV})$  
 % \begin{equation}
%   \label{Eqn:3To1GroupICrateTripleNotAchievableWithparTestCh}
%   (C^{*},2-h_{b}(\delta)-\delta\log_{2}3,2-h_{b}(\delta)-\delta\log_{2}3) \notin \alpha_{u}(p_{QU_{2}U_{3}\ulineInputRV\ulineOutputRV}) \nonumber
%  \end{equation}
if $H(X_{j}|Q,U_{j})=0$ for $j=2,3$. Hence either $H(X_{2}|Q,U_{2})>0$
or $H(X_{3}|Q,U_{3})>0$. Assume $H(X_{j}|Q,U_{j})>0$ and
$\{j,\msout{j}\}=\{2,3\}$. By the conditional independence of
$(U_{2},X_{2})$ and $(U_{3},X_{3})$ given $Q$, we have
$0<H(X_{j}|Q,U_{j}) = H(X_{j}|Q,U_{j},U_{\msout{j}},X_{\msout{j}}) =
H(X_{j}\GroupSum
X_{\msout{j}}|Q,U_{j},U_{\msout{j}},X_{\msout{j}})=H(X_{2}\GroupSum
X_{3}|Q,U_{2},U_{3},X_{\msout{j}}) \leq H(X_{2}\GroupSum
X_{3}|Q,U_{2},U_{3})$. 
We only need to prove $H(X_{2}\GroupSum X_{3}|Q,U_{2},U_{3}) > 0$ implies
 $I(X_{1};Y_{1}|Q,U_{2},U_{3}) < C^{*}$. For this, we allude to the proof of fifth claim in appendix \ref{AppSec:3To1ORNonAddICStrictSubOptimalityOfUnstructuredCodes}. Therein, we have proved an analogous statement for example \ref{Ex:3To1ORICCoupledThroughNonAdditiveMAC}. The statement herein can be proved through an analogous sequence of steps and we let the reader fill in these details.

We now show that user 1 can achieve rate equal to
$C^{*}$ exploiting the fact that user 2 and 3 use  group codes.
We also derive the condition (\ref{Eqn:3To1GroupICUpperBoundUnstructuredCodes}) in terms of
parameters $\delta_{1},\tau,\delta$. 
Note that the channel between $X_2 \oplus_{4} X_3$ and $Y_1$ is additive with
noise given by $X_1 \oplus_{4} N_1$. Let us choose $p_{X_{1}}(x_{1})=\frac{\tau}{3}$ for $x_{1} \in \{1,2,3
\}$. The resulting distribution of $X_1\oplus_{4} N_1$ is given by
$p_{X_1\oplus_{4} N_1}(a)= \beta/3$ for $a \in \{1,2,3\}$.
Using concavity of entropy once again, we get
\begin{align}
C_w^G(X_2 \oplus_{4} X_3 ;Y)=\min\{2-h_b(\beta)-\beta \log_2(3),
2+2h_b(2\beta/3)-2h_b(\beta)-2\beta \log_2(3) \}=
2-h_b(\beta)-\beta \log_2(3). \nonumber
\end{align}
Note that for $\delta_1 \in (0, \frac{1}{4})$ and $\tau <
\frac{3}{4}$, using the fact that $X_1$ and $N_1$ are independent, we get
 $\beta \in (0, \frac{3}{4})$. Note also that 
$2-h_b(\beta)-\beta \log_2(3)$ is monotone
decreasing for $\beta \in (0,3/4)$. Hence if 
$\beta \leq  \delta$, the signal $X_2 \oplus_{4} X_3$ can be decoded
at decoder 1, and user 1 can communicate at the rate $C^*$. 
A simple calculation yields  $C^{*}=h_{b}(\beta)+\beta\log_{2}3-h_{b}(\delta_{1})-\delta_{1}\log_{2}3$.

\bibliographystyle{../sty/IEEEtran}
{
\bibliography{IEEEabrv,wisl}
}
\end{document}